\newtheorem{definition}{Definition}
\newtheorem{proposition}{Proposition}
\newtheorem{remark}{Remark}
\newtheorem{claim}{Claim}
\newtheorem{theorem}{Theorem}
\newtheorem{lemma}{Lemma}
\newenvironment{customprop}[1]
{\innercustomprop}
{\endinnercustomprop}
\newenvironment{customthm}[1]
{\innercustomthm}
{\endinnercustomthm}
\DeclareMathOperator{\Ex}{\mathbb{E}}
\newcommand{\explain}[1]{\tag{\textcolor{gray}{#1}}}
\newcommand{\bob}{\texttt{Bob}}
\definecolor{ForestGreen}{rgb}{.05,.50,.05}
\definecolor{darkmagenta}{rgb}{0.55, 0.0, 0.55}
\newcommand{\ie}{{\em i.e.,~\xspace}}
\renewcommand{\eqref}[1]{(\ref{#1})}
\newcommand{\eps}{\varepsilon}
\DeclareMathOperator*{\argmin}{argmin}
\title{Dueling Over Dessert, Mastering the Art of  Repeated Cake Cutting\footnote{Alphabetical author ordering.  S. Br\^anzei was supported in part by US National Science Foundation
		CAREER grant CCF-2238372. This material is based upon work supported by the National Science Foundation under Grant No. DMS-1928930 and by the Alfred P. Sloan Foundation under grant G-2021-16778, while S. Br\^anzei  was in residence at the Simons Laufer Mathematical Sciences Institute (formerly MSRI) in Berkeley, California, during the Fall 2023 semester. This work is partially supported by DARPA QuICC, NSF AF:Small \#2218678, NSF AF:Small \#2114269 and Army-Research Laboratory (ARL) \#W911NF2410052.
		
}}
\author{Simina Br\^anzei\footnote{Purdue University. E-mail: simina.branzei@gmail.com.}
	\and 
	MohammadTaghi Hajiaghayi\footnote{University of Maryland. E-mail: hajiaghayi@gmail.com.}
	\and 
	Reed Phillips\footnote{Purdue University. E-mail: phill289@purdue.edu.} 
	\and
	Suho Shin\footnote{University of Maryland. E-mail: suhoshin@umd.edu.}
	\and 
	Kun Wang\footnote{Purdue University. E-mail: wang5675@purdue.edu.}
}
\date{\today}
\begin{document}

	\maketitle

	\begin{abstract}
		We consider the setting of repeated fair division between two players, denoted Alice and Bob, with private valuations over a cake. In each round, a new cake arrives, which is identical to the ones in previous rounds. Alice  cuts the cake at a point of her choice, while  Bob  chooses the left piece or the right piece, leaving the remainder for Alice. 
		We consider two versions: \emph{sequential}, where Bob observes Alice's cut point  before choosing left/right, and \emph{simultaneous}, where he only observes her cut point after making his choice. The simultaneous version was first considered by Aumann and Maschler \cite{AumannMaschler}.

		We observe that if Bob is almost myopic and chooses his favorite piece too often, then  he can be systematically exploited by Alice through a strategy akin to a binary search. This strategy allows Alice to approximate Bob's preferences with increasing precision, thereby securing a disproportionate share of the resource over time.
		
		We analyze the limits of how much a player can exploit the other one and show that fair utility profiles are in fact achievable. Specifically, the players can enforce  the equitable utility profile  of $(1/2, 1/2)$  in the limit  on every trajectory of play, by keeping the other player's utility to approximately $1/2$ on average while guaranteeing they themselves get at least approximately $1/2$ on average. We show this theorem using  a connection with Blackwell approachability \cite{blackwell1956}.
		
		Finally, we analyze a natural dynamic known as fictitious play, where players best respond to the empirical distribution of the other player. We show that
		fictitious play converges to the equitable utility profile of $(1/2, 1/2)$ at a rate of $O(1/\sqrt{T})$.
	\end{abstract}

	
	\section{Introduction}
	Cake cutting is a model of fair division~\cite{Steinhaus48}, where the cake is a metaphor for a heterogeneous divisible resource such as land, time, memory in shared computing systems, clean water, greenhouse gas emissions, fossil fuels, or other natural deposits~\cite{Pro13}. The problem is to divide the resource among multiple participants so that everyone believes the allocation is fair. There is an extensive literature on cake cutting in mathematics, political science, economics ~\cite{RW98,BT96,Moulin03} and  computer science \cite{socialchoice_book}, with a number of protocols implemented \cite{GP14}.

	Traditional approaches to cake cutting often consider single instances of division. However, many real-world scenarios require  a repeated division of resources. For instance, consider the  recurring task of allocating classroom space in educational institutions each quarter or that of repeatedly dividing computational resources (such as CPU and memory) among the members of an organization. 
	These settings reflect the reality of many social and economic interactions, necessitating a model 
	that not only addresses the fairness of a single division, but also the dynamics and strategies that emerge among  participants over repeated interactions.

	Repeated fair division is a classic problem first considered by Aumann and Maschler  \cite{AumannMaschler}, where two  players---denoted Alice and Bob---have private valuations over the cake and interact in the following environment. Every day a new cake arrives, which is the same as the ones in previous days. Alice cuts the cake at a point of her choosing, while  Bob chooses either the left piece or the right piece, leaving the remainder to Alice. \cite{AumannMaschler} considered the simultaneous setting, where both players take their actions at the same time each day, and analyzed the payoffs achievable by Bob when he can have one of two types of valuations.

	
	In this paper, we provide the first substantial progress in this classic setting. 
	We further analyze the simultaneous version from \cite{AumannMaschler} and also go beyond it, by considering the sequential version where  Bob has the advantage of observing Alice's chosen cut point before making his selection. Tactical  considerations remain pivotal in the sequential version, which is none other than the  repeated  \emph{Cut-and-choose} protocol with strategic players.

	A key observation in our study is the strategic vulnerability inherent in  repeated Cut-and-choose. At a high level, if Bob consistently chooses his preferred piece, then he can be systematically exploited by Alice through a strategy akin to a binary search. This strategy allows Alice to approximate Bob's preferences with increasing precision, thereby securing a disproportionate share of the resource over time. To fight back Alice's attempt to exploit him, Bob could deceive her  by being unpredictable, thus  hiding his preferences. While this  behavior has the potential to reduce  Alice's share of the cake, it   could also come at the price of  affecting Bob's own payoff guarantees in the long term. 
	

	Our analysis of the repeated cake cutting game formalizes the  intuition that Alice can exploit a (nearly) myopic Bob that often chooses his favorite piece.  However, the outcome where Alice gets more value is not necessarily  fair, as she is happier than Bob.  The fairness notion of \emph{equitability} embodies the idea that players should be equally happy, formally requiring  that Alice's value for her allocation should equal Bob's value for his allocation.   Achieving equitability is particularly important  in scenarios with potential  for conflict, such as splitting an inheritance. 
	
	
	
	We show that achieving  equitable outcomes  in the repeated interaction is in fact possible. Specifically, each player has a strategy that  ensures that the other player gets no more than approximately $1/2$ on average, while they themselves also get  approximately $ 1/2$ on average. This approaches the equitable utility profile of $(1/2, 1/2)$ in the limit. We obtain this result by  using a connection with Blackwell approachability~\cite{blackwell1956}. Finally, we consider a natural dynamic known as  fictitious play~\cite{fictitious_play_brown}, where the players best respond to the empirical frequency given by the past actions of the other player. We show that fictitious play   converges to the equitable utility profile of $(1/2, 1/2)$ at a rate of $O(1/\sqrt{T})$.

	\subsection{Model} \label{sec:model}
	
	\subsection*{Cake cutting model for two players} We have a cake, represented by the interval $[0,1]$, and  two players $N = \{A, B\}$, where $A$ stands for Alice and $B$ for Bob. Each player $i$  has a private value density function $v_i : [0,1] \to \mathbb{R}_+$. 
	
	The value of player $i$ for each interval $[c,d] \subseteq [0,1]$ is denoted  $V_i([c,d]) = \int_{c}^{d} v_i(x) \; \textit{dx}$. The valuations are additive: $V_i\left( \bigcup_{j=1}^{m} X_j\right) = \sum_{j=1}^{m} V_i(X_j)$ for all disjoint intervals $X_1, \ldots, X_m \subseteq [0,1]$. By definition, atoms are
	worth zero and the valuations are normalized so that $V_i([0, 1]) = 1$ for each player $i$.
	We assume the densities are bounded: 
	\begin{itemize}
		\item There exist $\delta, \Delta > 0$ such that $\delta \leq v_i(x) \leq \Delta$ for all $x \in [0,1]$.
	\end{itemize}
	
	
	A \emph{piece} of cake is a finite union of disjoint intervals. A piece is connected if it’s a single
	interval. An allocation $Z = (Z_A, Z_B)$ is a partition of the cake among the players such that each player $i$ receives piece $Z_i$, the pieces are disjoint, and $\bigcup_{i \in N} Z_i = [0,1]$.
	The valuation (aka utility or payoff) of player $i$ at an allocation $Z$ is $V_i(Z_i)$.
	An allocation $Z$ is \emph{equitable} if the players are equally happy with their pieces, meaning $V_A(Z_A) = V_B(Z_B)$. 

	Let $m_A$ be Alice's midpoint of the cake and $m_B$ Bob's midpoint. Alice's \emph{Stackelberg value}, denoted $u_A^*$, is the utility she receives when she cuts the cake at  $m_B$ and Bob chooses his favorite piece, breaking ties in Alice's favor. 
	The midpoints and Alice's Stackelberg value are depicted in Figure~\ref{fig:midpoints_stackelberg}.

	\begin{figure}[h!]
		\centering
		\includegraphics[scale=0.5]{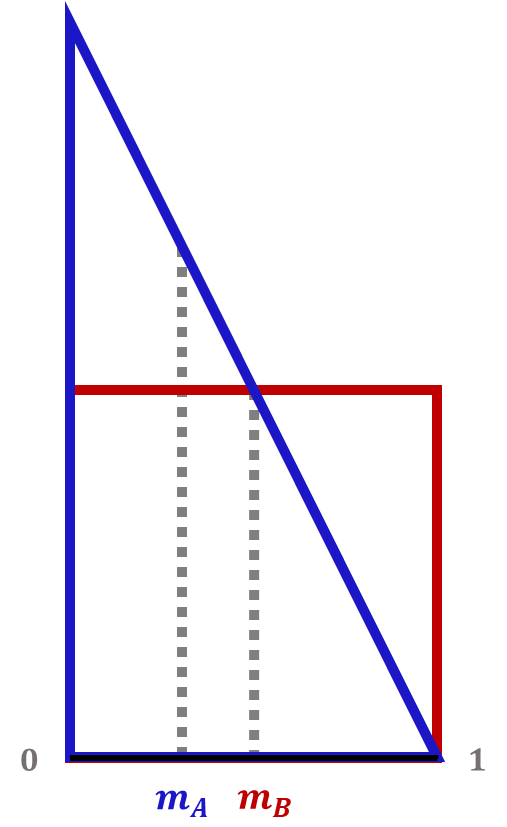} \qquad   \;    
		\includegraphics[scale=0.5]{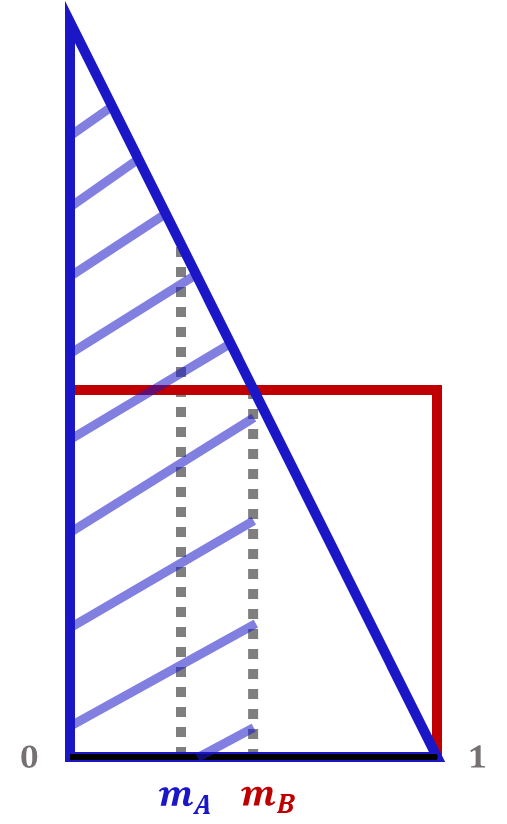}
		\caption{Illustration of densities for Alice and Bob, with blue and red, respectively. Figure (a) shows Alice's midpoint at $m_A$ and Bob's midpoint at $m_B$. Figure (b) shows Alice's Stackelberg value, captured by the blue shaded area.}
		\label{fig:midpoints_stackelberg}
	\end{figure}

	\subsection*{Repeated cake cutting} Each round $t = 1, 2, \ldots, T$, the next steps take place:
	\begin{itemize}
		\item A new cake arrives, which is identical to the ones in previous rounds.
		\item Alice cuts the cake at a point $a_t \in [0,1]$ of her choice.
		\item Bob chooses  either the left piece  or the right piece, then Alice takes the remainder.
	\end{itemize}
	
	We consider two versions: \emph{sequential}, where Bob observes Alice's cut point $a_t$ before choosing left/right \footnote{The sequential version is the repeated \emph{Cut-and-choose} protocol, where the players may not necessarily be honest.}, and \emph{simultaneous}, where he only observes her cut point after making his choice. 
	
	
	A pure strategy is a map from the history observed by a player to the next action to play. A mixed strategy is a probability distribution over pure strategies.

	\subsection{Our Results} \label{sec:our_results}
	
	Our results will examine how  players fare in the repeated game over $T$ rounds.
	Given a history $H$, Alice's Stackelberg regret is 
	\begin{align} \label{eq:regret_stackelberg_informal}
		\text{Reg}_{A}(H) = \sum_{t=1}^{T} \left[ u_A^* -  u_A^t(H) \right],
	\end{align}
	where $u_A^*$ is Alice's Stackelberg value and $u_A^t(H)$ is Alice's utility in round $t$ under history $H$.
	
	Suppose Alice uses a mixed strategy $S_A$ and Bob uses a mixed strategy $S_B$. Then $S_A$ ensures Alice's Stackelberg regret is at most $\gamma$ against $S_B$ if 
	$
	\text{Reg}_{A}(H) \leq \gamma 
	$ for all $T$-round histories $H$ that could have arisen under the strategies $(S_A, S_B)$.
	Precise definitions for strategies and regret  can be found in Section \ref{sec:prelim}.

	\subsubsection*{\textbf{Alice exploiting Bob}}
	We start with the following observation about the sequential setting. If Bob chooses his favorite piece in each round, then Alice can exploit him in the long run by running binary search on his midpoint until identifying it within a small error and then cutting near it for the rest of time. This will lead to Alice getting essentially her Stackelberg value in all but $O(\log{T})$ rounds, while Bob will  get  $1/2$ in all but $O(\log{T})$ rounds.
	\begin{proposition}\label{prop:myopic-bob}
		If Bob plays myopically in the sequential setting, then Alice has a strategy that ensures her  Stackelberg regret is $O(\log{T})$.
	\end{proposition}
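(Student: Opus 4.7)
The key observation is that, in the sequential setting, Bob's myopic choice in each round leaks a bit of information about his midpoint $m_B$: if Alice cuts at $a_t$, then Bob takes the left piece iff $V_B([0,a_t]) \geq 1/2$ iff $a_t \geq m_B$, and symmetrically for the right piece. Thus every round acts like a comparison query in a binary search on $m_B$. The plan is to let Alice spend the first $k := \lceil \log_2 T \rceil$ rounds running binary search on $m_B$ to localize it to an interval $[\ell, r] \subseteq [0,1]$ with $r-\ell \leq 2^{-k} \leq 1/T$, and then exploit this knowledge in the remaining $T - k$ rounds.

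\textbf{Exploitation phase.} Once Alice holds an interval $[\ell, r]$ with $m_B \in [\ell, r]$, she computes the two candidate values $V_A([0, \ell])$ and $V_A([r, 1])$ and cuts at whichever endpoint maximizes the corresponding payoff. If she cuts at $\ell$, then $a_t \leq m_B$ forces Bob to take the right piece and Alice receives $V_A([0, \ell])$; symmetrically, cutting at $r$ yields $V_A([r, 1])$. Using the upper density bound $v_A(x) \leq \Delta$, we have
\[
V_A([0, \ell]) \geq V_A([0, m_B]) - \Delta(m_B - \ell) \geq V_A([0, m_B]) - \Delta/T,
\]
and analogously $V_A([r, 1]) \geq V_A([m_B, 1]) - \Delta/T$. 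Taking the maximum of the two, Alice's payoff in each exploitation round is at least $\max\{V_A([0, m_B]), V_A([m_B, 1])\} - \Delta/T = u_A^* - \Delta/T$, so the per-round regret in this phase is $O(1/T)$.

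\textbf{Regret accounting.} During the binary search phase, Alice's per-round regret is trivially at most $u_A^* \leq 1$, contributing at most $k = O(\log T)$ to the total regret. During the exploitation phase, each of the remaining $T - k$ rounds contributes at most $\Delta/T$, for an aggregate of $O(1)$. Summing gives $\mathrm{Reg}_A(H) = O(\log T) + O(1) = O(\log T)$.

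\textbf{Minor obstacle.} The only subtle point is the edge case where Alice's cut happens to coincide with $m_B$ (or where $m_B$ lies exactly at a queried endpoint), since then Bob is indifferent and the direction of the binary-search update is ambiguous. This is handled by noting that the query points used during binary search are dyadic rationals depending only on Alice's history, and either response by Bob is consistent with $m_B$ lying on the appropriate side of the cut, so the invariant $m_B \in [\ell, r]$ is preserved. The Lipschitz-type estimate above then still applies verbatim, completing the proof.
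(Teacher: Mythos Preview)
Your approach---binary search to localize $m_B$ in $O(\log T)$ rounds, then exploit---is the same as the paper's, and the regret accounting is correct in spirit. However, there is a gap in the exploitation phase that your ``minor obstacle'' paragraph does not cover.

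You assert that cutting at $\ell$ ``forces Bob to take the right piece'' because $a_t \leq m_B$. This is only true when $a_t < m_B$ strictly. The invariant you maintain is $m_B \in [\ell, r]$, and equality at an endpoint is possible: if a binary-search query happens to land exactly on $m_B$ (e.g.\ $m_B$ is dyadic) and Bob, being indifferent, answers $R$, then $\ell = m_B$ from that round on. In exploitation, cutting at $\ell = m_B$ leaves Bob indifferent again; a myopic Bob may take $L$, handing Alice $V_A([m_B,1])$ rather than $V_A([0,m_B])$. When $u_A^* = V_A([0,m_B])$ this is a per-round loss of $2u_A^* - 1 = \Theta(1)$, repeated for all $T - k$ exploitation rounds, so the argument as written does not rule out $\Theta(T)$ Stackelberg regret. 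Your Lipschitz estimate presupposes Alice receives the piece she targeted and therefore does not apply ``verbatim'' once Bob's choice is no longer forced.

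The fix is exactly what the paper does: in exploitation, cut strictly \emph{outside} the bracketing interval---at $\ell - 1/T$ when targeting the left piece, or at $r + 1/T$ when targeting the right---so that the cut lies strictly on one side of $m_B$ and a myopic Bob has a unique best response. This costs only an extra $O(\Delta/T)$ per round and the $O(\log T)$ bound goes through. (A secondary, purely stylistic difference: the paper bisects by Alice's valuation rather than by interval length, tracking $V_A([\ell,r]) = 2^{-\tau+1}$ directly; your length-based bisection combined with the density bound $\Delta$ achieves the same end.)
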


	This  exploitation phenomenon holds more generally: if Bob's strategy has bounded regret with respect to the standard of selecting his preferred piece in every round {in hindsight}, then Alice can almost get her Stackelberg value in each round. Her Stackelberg regret is a function of Bob's regret guarantee, as quantified in the next theorem. 
	\begin{theorem}[Exploiting a nearly myopic Bob] \label{thm:response_bounded_regret_Bob}
		Let $\alpha \in [0,1)$. Suppose Bob plays a strategy that ensures  his regret 
		is 
		$O(T^{\alpha})$ in the sequential setting. Let $\mathcal{B}^{\alpha}$ denote the set of all such Bob strategies. 
		\begin{description}
			\item[$\bullet$]  If Alice knows $\alpha$, she has a strategy $S_A = S_A(\alpha)$ that ensures her Stackelberg regret   is ${O}\Bigl(T^{\frac{\alpha+1}{2}} \log T\Bigr)$. Moreover, Alice's Stackelberg regret is  $\Omega\left(T^{\frac{\alpha+1}{2}}\right)$ for some Bob strategy in $\mathcal{B}^{\alpha}$.
			\item[$\bullet$]  If Alice does not know $\alpha$, she has a strategy $S_A$ that ensures her Stackelberg regret is $O\Bigl(\frac{T}{\log{T}}\Bigr)$. 
			{This is essentially optimal: if $S_A$ guarantees Alice Stackelberg regret 
				$O(T^{\beta})$ against all Bob strategies in $\mathcal{B}^{\alpha}$ for some $\beta \in [0,1)$, then $S_A$ has Stackelberg regret $\Omega(T)$ for some Bob strategy in $\mathcal{B}^{\beta}$.}
		\end{description}
	\end{theorem}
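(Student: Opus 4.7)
The plan is to organize the proof around four claims, all built on one reusable Alice strategy: a \emph{noisy binary search} on Bob's midpoint $m_B$ with many repetitions per query, followed by an exploitation phase in which Alice cuts just inside her estimate on the side that forces Bob to take the piece Alice values less. The key primitive estimate, which I would isolate as a standalone lemma, is the following: if at some binary-search step Alice cuts at $x_k$ and repeats $n_k$ times, then for Bob to flip Alice's majority vote he must choose his non-preferred piece in at least $n_k/2$ of those rounds, and each such pick costs him at least $2\delta|x_k - m_B|$ of his own regret by the density lower bound $\delta$. So corrupting step $k$ drains $\Omega(n_k \delta |x_k - m_B|)$ from Bob's budget.

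For the upper bound when $\alpha$ is known, I would run $\ell = \Theta(\log T)$ binary-search steps with $n = \Theta(T^{(\alpha+1)/2}/\log T)$ repetitions per step, using majority voting to refine the uncertainty interval $[l_k, r_k]$, and then spend the remaining rounds cutting within $O(T^{-1})$ of the final estimate, on the correct side. Bob's budget $O(T^\alpha)$ forces corruptible steps to be the late (narrow-interval) ones, where both the per-lie cost and the damage to Alice are small: an incorrect commitment at a small scale $|x_k-m_B|$ still leaves Alice's ultimate cut within $O(|x_k - m_B|)$ of $m_B$. The main obstacle is the pathological regime $|x_k - m_B| \approx 0$, in which Bob can corrupt for free; the way out is the symmetric observation that in that regime Alice's per-round loss at $x_k$ and her post-corruption exploitation loss are both $O(\Delta |x_k - m_B|)$, so corruptions self-absorb. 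Balancing the exploration budget against the exploitation loss $O(\Delta T \cdot 2^{-\ell})$ yields the claimed $O(T^{(\alpha+1)/2}\log T)$.

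For the matching $\Omega(T^{(\alpha+1)/2})$ lower bound, I would set up a two-world information-theoretic argument. Fix two candidate midpoints $m, m'$ separated by $\Delta_T = \Theta(T^{(\alpha-1)/2})$, and define Bob strategies $B, B' \in \mathcal{B}^\alpha$ that play myopically outside $[m, m']$ and masquerade (take the opposite piece) inside $[m, m']$, where the per-round masquerade cost is $O(\Delta_T)$; a budget calculation shows one can afford $O(T^\alpha/\Delta_T)$ masquerade events while staying in $\mathcal{B}^\alpha$. A coupling argument then shows that Alice's observed history has the same distribution in both worlds up to bounded total variation, so in at least one of them she cuts $\Omega(\Delta_T)$ off the true $m_B$ for $\Omega(T)$ rounds, incurring $\Omega(T\,\Delta_T) = \Omega(T^{(\alpha+1)/2})$ Stackelberg regret.

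For the unknown-$\alpha$ upper bound, I would run the same binary search with an $\alpha$-agnostic schedule of $\ell = \log T$ steps and $n = \Theta(T/\log^2 T)$ repetitions, for an exploration budget of $O(T/\log T)$; since $n = \omega(T^\alpha)$ for every fixed $\alpha < 1$, no Bob in any $\mathcal{B}^\alpha$ can flip an early (non-trivial-scale) step for large $T$, and late-step corruptions are self-absorbing as before, so total regret is $O(T/\log T)$. For the unknown-$\alpha$ lower bound, suppose $S_A$ guarantees $O(T^\beta)$ against every $B \in \mathcal{B}^\alpha$ for some $\beta < 1$. Then against a myopic Bob, $S_A$ must essentially finish exploring and commit to its exploitation cut within the first $O(T^\beta)$ rounds, since otherwise even the cost of exploration exceeds $T^\beta$. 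I would then construct a Bob in $\mathcal{B}^\beta$ that spends its full budget $O(T^\beta)$ to corrupt precisely those initial rounds, flipping Alice's commitment to the wrong side of $m_B$ with a constant separation; for the remaining $T - O(T^\beta) = \Theta(T)$ rounds Alice is pinned to the wrong side and incurs the wrong-side penalty $\gamma = 2u_A^* - 1 = \Omega(1)$ per round, totaling $\Omega(T)$. The technically delicate step is converting ``$O(T^\beta)$ regret against myopic Bob'' into a concrete bound on Alice's commitment time, which I would handle via a stopping-time argument on Alice's running Stackelberg regret.
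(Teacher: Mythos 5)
Your upper-bound template (binary search on $m_B$ with many repeated queries per step, then exploit) matches the paper's in outline, but there is a consequential structural difference. You propose plain midpoint bisection, while the paper's Algorithm A.2 divides the current interval into six equal-$V_A$ blocks, queries the five interior cut points $\eta$ times each, and recurses on a three-block window around the $R\to L$ transition. This design, together with Lemma~\ref{lm:atmost-one-deviation} (at most one of the five majority votes per iteration can be corrupted, and any corrupted vote sits within one block of $m_B$), guarantees $m_B \in [x_i,y_i]$ at every stage. With plain bisection a single corrupted vote pushes $m_B$ outside your interval, and your ``self-absorbing corruptions'' claim---that even after a wrong turn at $x_k$ the final estimate lands within $O(|x_k-m_B|)$ of $m_B$---would need its own inductive argument handling repeated corruptions inside the wrong sub-interval at different scales. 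The six-block construction replaces that argument with a clean invariant and is worth adopting.

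The unknown-$\alpha$ lower bound is where you have a genuine gap. The step ``$S_A$ must essentially finish exploring and commit within the first $O(T^\beta)$ rounds, since otherwise even the cost of exploration exceeds $T^\beta$'' does not follow: against a myopic Bob, Alice can probe within $\varepsilon$ of $m_B$ indefinitely at cost only $O(\varepsilon)$ per round, so low Stackelberg regret imposes no commitment deadline. The mechanism in Lemma~\ref{lem:unknown-alpha-alice-regret-lower-bound} is different and sharper: it places one Bob's midpoint at $x$ with $V_A([0,x])=2/3$ and another's at $y$ with $V_A([0,y])=5/6$. Cutting in $(x,y]$ against the truthful first Bob costs Alice at least $1/3$ per such round, so her $O(T^\beta)$ guarantee against him caps the number of probes in $(x,y]$ at $O(T^\beta)$; the second Bob, who masquerades in $(x,y]$ exactly until that budget is exhausted, therefore stays in disguise for the entire game and sits in $\mathcal{B}^\beta$. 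Against him Alice's Stackelberg value is $5/6$ but her payoff remains at most $2/3$ per round, giving regret $\ge T/6$. You need this ``the distinguishing interval is expensive to probe'' argument rather than a stopping-time bound on commitment. Your known-$\alpha$ lower bound is likely salvageable but is also a heavier route than needed: Proposition~\ref{prop:alice-known-alpha-lower-bound} constructs a single explicit Bob who lies at cuts in $[m_B - T^{(\alpha-1)/2}, m_B]$ until he has done so $T^{(\alpha+1)/2}$ times, and a direct two-case split on whether Alice triggers the switch gives $\Omega(T^{(\alpha+1)/2})$ without any coupling or total-variation machinery---which is in any case awkward in a fully deterministic sequential game.
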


	In contrast, in the simultaneous setting, Alice may not approach her Stackelberg value on \emph{every} trajectory of play. In order to get her Stackelberg value in any given round, Alice needs to cut near Bob's midpoint and Bob needs to pick the piece he prefers, say $R$. 
	However, if Bob deterministically commits to picking $R$, he will be completely exploited by an Alice who cuts at $1$, breaking any reasonable regret guarantee he might have.
	Indeed, any Bob with a deterministic strategy (possibly using different actions over the rounds) has a corresponding Alice who can completely exploit him.
	Therefore, any Bob strategy with a good regret guarantee would behave randomly, making it impossible for Alice to reliably get her Stackelberg value on every trajectory. For this reason, we focus on the sequential setting when studying how Alice can exploit Bob.\footnote{One may explore a weaker regret benchmark for Bob of always picking the better of left or right in hindsight, which we believe would be an interesting direction for future work.}
	

	
	\subsubsection*{\textbf{Equitable payoffs.}}
	Motivated by   Theorem~\ref{thm:response_bounded_regret_Bob}, we examine the general limits of how much a player can exploit the other one and whether fair  outcomes are achievable,  in both the sequential and simultaneous settings. 
	
	Given a history $H$, player $i$ is said to get an average payoff of $\gamma$  if
	\begin{align} \label{eq:intro_average_payoff_gamma_def}
		\left(\frac{1}{T} \right)\sum_{t=1}^{T} {u_i^t(H)} = \gamma
		\,.
	\end{align}
	The left hand side of \eqref{eq:intro_average_payoff_gamma_def} is not expected utility, but rather the observed total utility averaged over $T$ rounds.
	
	We say a utility profile $(u_A, u_B)$ is \emph{equitable} if $u_A = u_B$. In the single round setting, $u_A$ and $u_B$ will naturally represent the utilities of the players at an allocation. In the repeated setting, $u_A$ and $u_B$ will represent the time-average utilities of the players.
	
	The next theorems show that the  equitable utility profile of $(1/2, 1/2)$ can be approached on every trajectory of play, by having a player keep  the other's utility to approximately $1/2$ on average while ensuring they themselves get at least approximately $1/2$ on average. 

	\begin{theorem}[Alice enforcing equitable payoffs; informal] \label{thm:Alice_safety_payoffs_summary}
		In both the sequential and simultaneous settings, Alice has a pure strategy $S_A$, such that for every  Bob strategy $S_B$:
		\begin{itemize}
			\item   on every trajectory of play, Alice's  average payoff is  at least  $1/2 - o(1)$, while Bob's average payoff is at most $1/2 + o(1)$. More precisely, 
			\begin{align}
				\frac{u_A}{T} \geq \frac{1}{2} -  \Theta \left( \frac{1}{\sqrt{T}}\right)  \qquad \mbox{and} \qquad  
				\frac{u_B}{T} \leq \frac{1}{2} + \Theta \left( \frac{1}{\ln{T}}\right), 
				\notag
			\end{align}
			where $u_i$ is the cumulative payoff of player $i$ over the time horizon $T$.
		\end{itemize}
	\end{theorem}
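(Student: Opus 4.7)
The plan is to cast the theorem as a Blackwell approachability statement. Define the per-round vector payoff $v_t = (u_A^t - 1/2,\; 1/2 - u_B^t) \in [-1/2,1/2]^2$ and the target set $C = \mathbb{R}_{\geq 0}^2$; the two guarantees in the theorem are exactly $d(\bar v_T, C) = o(1)$, where $\bar v_T = \frac{1}{T}\sum_{t=1}^{T} v_t$. The crux is the following one-shot lemma: for every $(\mu,\nu) \in \mathbb{R}_{\geq 0}^2 \setminus \{0\}$, there is a cut point $a^*(\mu,\nu)$ such that $\mu v_{t,1} + \nu v_{t,2} = 0$ no matter what Bob does. To prove it, set $g(a) = \mu(F_A(a) - 1/2) + \nu(F_B(a) - 1/2)$ where $F_i(a) = V_i([0,a])$; by continuity and $g(0) = -(\mu+\nu)/2 < 0 < g(1)$, the IVT gives $a^*$ with $\mu F_A(a^*) + \nu F_B(a^*) = (\mu+\nu)/2$. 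A direct calculation handles both of Bob's responses: if Bob picks $R$, then $v_t = (F_A(a^*) - 1/2,\, F_B(a^*) - 1/2)$, so $\mu v_{t,1} + \nu v_{t,2} = g(a^*) = 0$; if Bob picks $L$, then $v_t = -(F_A(a^*) - 1/2,\, F_B(a^*) - 1/2)$, giving the same conclusion. Crucially, the identity is deterministic, not in expectation, so no averaging over Bob's randomization is needed.

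Given the lemma, Alice's strategy is the standard Blackwell approachability update. At round $t$, set $\pi_t = \Pi_C(\bar v_{t-1})$ and $d_t = \bar v_{t-1} - \pi_t$; since $\pi_t = (\max(0,\bar v_{t-1,1}),\, \max(0,\bar v_{t-1,2}))$, the coordinates of $d_t$ are non-positive. Let $(\mu_t,\nu_t) := -d_t \geq 0$ and cut at $a_t = a^*(\mu_t,\nu_t)$ (with any default cut if $d_t=0$). Because $\pi_t$ lies on the face of $C$ orthogonal to $d_t$, we have $\langle \pi_t, d_t\rangle = 0$, and combining this with the single-round identity gives $\langle v_t - \pi_t, d_t\rangle = -(\mu_t v_{t,1} + \nu_t v_{t,2}) = 0$. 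Substituting $\bar v_t = \frac{t-1}{t}\bar v_{t-1} + \frac{1}{t}v_t$ into $\|\bar v_t - \pi_t\|^2$, the cross term vanishes and one obtains the Pythagorean recursion
\begin{align*}
t^2 \, d(\bar v_t, C)^2 \;\leq\; (t-1)^2 \, d(\bar v_{t-1}, C)^2 \;+\; \|v_t - \pi_t\|^2.
\end{align*}
Since $v_t$ and $\pi_t$ are bounded by absolute constants, telescoping this recursion gives $d(\bar v_T, C) = O(1/\sqrt T)$ on every trajectory, which unpacks to the two advertised bounds.

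The argument uses only that $a_t$ is a function of $H_{t-1}$, so it applies identically to the sequential and simultaneous variants and delivers a pure strategy. The main obstacle I anticipate is that $a^*(\mu_t,\nu_t)$ depends on $F_B$, which is Bob's private valuation, and so Alice cannot compute it in closed form but must learn it online. In the sequential setting she can interleave the approachability updates with a bisection on $g$ driven by Bob's greedy responses to past cuts, at the cost of a polylogarithmic overhead in the bound on $\sum v_{t,2}$; this is a plausible source of the weaker $\Theta(1/\ln T)$ rate stated for Bob. In the simultaneous setting the estimator must instead be fed by Bob's past left/right choices rather than by probes, and arguing that the approachability recursion survives this coarser feedback is the other subtlety I expect to be delicate. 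The $\Theta(1/\sqrt T)$ guarantee on Alice's side is untouched by these issues because the component of the strategy that enforces it reduces to cutting at Alice's own midpoint $m_A$, which is computable from her private valuation alone.
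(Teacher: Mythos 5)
Your proposal correctly identifies Blackwell approachability as the right tool and your one-shot lemma (that for any halfspace direction $(\mu,\nu)\geq 0$ there is a cut point making $\mu v_{t,1}+\nu v_{t,2}=0$ against both of Bob's responses) is a clean, correct observation in the \emph{full-information} version of the problem, where Alice knows $V_B$. But that version is not what the theorem asserts: $V_B$ is Bob's private information, so Alice simply cannot compute $a^*(\mu_t,\nu_t)$, and the step ``cut at $a_t = a^*(\mu_t,\nu_t)$'' is not an available move. This is the crux of the theorem, and you flag it yourself, but the repair you sketch (learn $F_B$ online by bisecting on Bob's greedy responses; in the simultaneous setting, feed the estimator from his past $L/R$ choices) is not a proof and, more importantly, is doomed: the regret benchmark on Bob is not that he is greedy, and a strategic Bob can deliberately answer so as to corrupt any such estimator (this is exactly the adversarial misdirection that your Theorem~\ref{thm:response_bounded_regret_Bob} lower bound and Bob's equitable-payoff strategy are built around). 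You cannot assume the approachability recursion ``survives coarser feedback'' --- establishing that it does, without knowing $V_B$ and against an adversarial Bob, is the entire content of the theorem.

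The paper sidesteps the knowledge-of-$V_B$ obstacle entirely rather than trying to estimate $V_B$. It fixes in advance a \emph{countable} set $\overline{\mathcal{V}}$ of surrogate cumulative valuation functions (piecewise-linear over finer and finer grids, plus Alice's own $V_A$), shows this set contains uniform approximations to every admissible $V_B$ (Lemmas~\ref{Vn-arbitrary-function-approximation} and \ref{Vn-approximation-tightness}), and runs a Blackwell-style argument in a countably-infinite-dimensional space with inner product $P$. Alice's cut point in each round is chosen so that \emph{simultaneously}, in the $P$-geometry, the average payoffs of all surrogate Bobs move toward the region $\mathcal{S}$ where they are $\leq 1/2$; this requires no knowledge of the real $V_B$, only of the fixed surrogate set. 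Since the surrogates approximate the real Bob arbitrarily well, capping all of them caps him too, with a quantitative loss that translates to the $\Theta(1/\ln T)$ rate (see the explicit-bounds calculation at the end of Appendix~\ref{app:Alice_enforcing_safety_payoffs}). Alice's $\Theta(1/\sqrt T)$ rate does come out differently --- but not for the reason you give (your strategy never cuts at $m_A$); it is because $V_A$ itself is included verbatim in $\overline{\mathcal{V}}$, so there is zero approximation error for that coordinate and the Blackwell $1/t$ distance bound applies directly. So: same high-level slogan, but your route collapses at the step where Alice must act without $V_B$, and the paper's surrogate-set construction is the genuinely new idea that your sketch is missing.
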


	A key ingredient in the proof of Theorem~\ref{thm:Alice_safety_payoffs_summary} is  a connection with Blackwell's approachability theorem~\cite{blackwell1956}.  Generally  speaking, Blackwell approachability can be used by a player to limit the payoff of the other player in a certain region of the utility profile. However, the main challenge is that there are uncountably many types of Bob and so Alice cannot apply the strategy from Blackwell directly. Instead, Alice's strategy constructs a countably infinite set of representatives, which allows us to adapt Blackwell's argument to this setting.

	\medskip 
	We show a symmetric theorem for Bob in the sequential setting, while in the simultaneous setting Bob's guarantee  only holds in expectation.
	
	\begin{theorem}[Bob enforcing equitable payoffs; informal] \label{thm:safety_payoffs_summary}
		\phantom{s}
		\begin{itemize}
			\item \emph{In the sequential setting:}  Bob has a pure strategy $S_B$, such that for every Alice strategy $S_A$, on every trajectory of play, Bob's average payoff is at least $1/2 - o(1)$, while Alice's average payoff is at most  $1/2+o(1)$. More precisely,
			\begin{align}
				\frac{u_B}{T} \geq \frac{1}{2} - \frac{1}{\sqrt{T}} \qquad \mbox{and} \qquad  \frac{u_A}{T} \leq \frac{1}{2} + \Theta\left(\frac{1}{\sqrt{T}}\right) \,. \notag 
			\end{align}
			\item \emph{In the simultaneous setting:}  Bob has a mixed strategy $S_B$, such that for every Alice strategy $S_A$,  both players have average payoff $1/2$ in expectation. 
		\end{itemize}  
	\end{theorem}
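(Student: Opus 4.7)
The proposal is the simplest possible mixed strategy: in each round Bob picks left or right independently with probability $1/2$. Since the coin flip is independent of Alice's cut $a_t$ (which depends only on the history, on which Bob's current coin is independent),
\[
\Ex[u_A^t] \;=\; \tfrac{1}{2}\, V_A([a_t,1]) + \tfrac{1}{2}\, V_A([0,a_t]) \;=\; \tfrac{1}{2},
\]
and symmetrically $\Ex[u_B^t] = \tfrac12 V_B([0,a_t]) + \tfrac12 V_B([a_t,1]) = \tfrac12$, for every Alice strategy $S_A$. Averaging over the $T$ rounds gives the equitable profile $(1/2,1/2)$ exactly in expectation.

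\textbf{Sequential setting: the strategy.} Partition $[0,1]$ into $N := \lceil \sqrt{T} \rceil$ bins of width $1/N$, and for each bin $j$ maintain counters $L_j, R_j$ (initially $0$) for how often Bob has picked left vs.\ right among past rounds whose cut landed in bin $j$. In round $t$, let $j(t)$ be the bin containing $a_t$: Bob picks left if $L_{j(t)} \le R_{j(t)}$ and right otherwise, then increments the corresponding counter. A one-step induction shows $|L_j - R_j| \le 1$ at all times, so within any bin Bob's left- and right-picks alternate up to parity.

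\textbf{Sequential setting: the analysis.} Fix an arbitrary trajectory. Let $T_j = L_j + R_j$ at the end of the horizon and let $\bar a_j$ denote the midpoint of bin $j$. The density bound $v_A \le \Delta$ gives $|V_A([0,a_t]) - V_A([0,\bar a_j])| \le \Delta/N$ for every $a_t$ in bin $j$. Writing $\alpha_j := V_A([0,\bar a_j])$, the Alice-utility accumulated during bin-$j$ rounds satisfies
\[
\sum_{t:\, a_t\in\text{bin }j} u_A^t \;=\; L_j(1-\alpha_j) + R_j\,\alpha_j + O\!\left(\tfrac{T_j \Delta}{N}\right) \;=\; \tfrac{T_j}{2} + O(1) + O\!\left(\tfrac{T_j\Delta}{N}\right),
\]
where the second equality uses $|L_j - R_j| \le 1$. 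Summing over all $N$ bins and using $\sum_j T_j = T$ with $N = \lceil\sqrt{T}\rceil$ yields $u_A \le T/2 + O(N + T\Delta/N) = T/2 + O(\sqrt T)$, hence $u_A/T \le 1/2 + O(1/\sqrt T)$. The symmetric computation with $V_B$ in place of $V_A$ (and left/right swapped) gives $\sum_j u_B = T/2 - O(\sqrt T)$, i.e.\ $u_B/T \ge 1/2 - O(1/\sqrt T)$.

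\textbf{The main obstacle.} The crux is that Bob has to control Alice's payoff despite never observing $V_A$. The per-bin pairing sidesteps this: balancing left- and right-picks \emph{locally} inside a bin of width $1/N$ automatically averages Alice's two possible per-round payoffs to within $O(\Delta/N)$, by the Lipschitz bound on $V_A$ inherited from the density cap $\Delta$. The choice $N \asymp \sqrt{T}$ is then forced because it optimally trades the $\Theta(N)$ parity slack accumulated across bins against the $\Theta(T/N)$ discretization error, yielding the claimed $O(1/\sqrt{T})$ bounds for both players on every trajectory.
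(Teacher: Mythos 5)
Your simultaneous-setting argument is identical to the paper's: Bob flips a fair coin each round, so each player's expected per-round payoff is exactly $1/2$ regardless of Alice's strategy.

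Your sequential-setting argument is correct and follows the same template as the paper's — partition the cake into $\Theta(\sqrt{T})$ cells and have Bob alternate left/right within each cell, so that paired rounds inside a cell nearly cancel — but there is a genuinely different choice in how the cells are built. The paper's Bob partitions $[0,1]$ into $P = \lceil\sqrt{T}\rceil$ intervals of equal \emph{value to Bob}, i.e.\ $V_B(I_j) = 1/P$ exactly. You partition into $N = \lceil\sqrt{T}\rceil$ intervals of equal \emph{width} $1/N$. This trade is real and cuts both ways. The paper's equal-$V_B$ partition makes Bob's per-pair loss exactly bounded by $1/P$ with no density hypothesis at all, yielding the clean constant $\frac{u_B}{T} \geq \frac{1}{2} - \frac{1}{\sqrt{T}}$; but then $V_A(I_j)$ can only be controlled via $V_A(I_j) \leq \frac{\Delta}{\delta} V_B(I_j)$, so the paper's Alice bound picks up a $\frac{\Delta}{\delta}$ factor. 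Your equal-width partition controls both $V_A(\text{bin}_j)$ and $V_B(\text{bin}_j)$ by $\Delta/N$ directly, so your Alice bound needs only the upper density bound $\Delta$ (no dependence on $\delta$), but your Bob bound now carries a $\Delta$-dependent constant rather than the paper's $\Delta$-free $1/\sqrt{T}$. Both give the claimed $O(1/\sqrt{T})$ rate on every trajectory; the paper's version is the one whose constants match the formal statement of the theorem, so if you want to reproduce the exact $\frac{1}{2}-\frac{1}{\sqrt{T}}$ bound for Bob you should switch to Bob's-value-equal cells, but as a proof of the $O(1/\sqrt{T})$ informal statement your version stands.

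One small point worth making explicit: your ``symmetric computation'' for Bob really does need the bound $v_B \leq \Delta$ to control the discretization error $|V_B([0,a_t]) - V_B([0,\bar a_j])| \leq \Delta/N$, since in your parametrization $V_B(\text{bin}_j)$ is not uniform across bins. In the paper's version that error term is replaced by the exact quantity $V_B(I_j) = 1/P$, which is why their Bob bound is density-free.
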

	

	\subsubsection*{\textbf{Fictitious play}}
	Fictitious play is a classic learning rule where 
	at each round, each player  best responds to the empirical frequency of play of the other player. Fictitious play was introduced in \cite{fictitious_play_brown}.  Convergence to Nash equilibria has been shown for zero-sum games \cite{robinson51} and special cases of general-sum games~\cite{Nachbar90,MS96a,MS96b}.
	
	In the cake cutting model, learning rules such as  fictitious play are  more meaningful in the simultaneous setting, where there is uncertainty for both players due to the simultaneous actions. 
	The precise definition of the fictitious play dynamic is in Section~\ref{sec:fictitious_play}, while an example of trajectories for an instance  with random valuations and uniform random tie-breaking can be found in Figure~\ref{fig:fictitious_play}.
	
	\begin{figure}[h!]
		\centering
		\subfigure[Alice's utility.]{
			\includegraphics[scale=0.6]{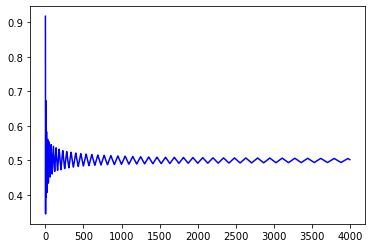}
		}  
		\subfigure[Bob's utility.]{
			\includegraphics[scale=0.6]{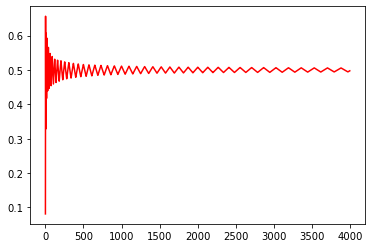}
		}
		\caption{Illustration of Alice's and Bob's average payoff in a randomly generated instance of valuations. The X axis shows the time and the Y axis shows the average payoff up to that round.}
		\label{fig:fictitious_play}
	\end{figure} 
	
	The  convergence properties of fictitious play can be characterized as follows. 
	
	\begin{theorem}[Fictitious Play; informal] 
		\label{thm_dynamic}
		When both Alice and Bob run fictitious play, the  average payoff  of each player converges to $1/2$ at  a rate of $O\left(\frac{1}{\sqrt{T}}\right)$.
	\end{theorem}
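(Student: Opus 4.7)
My plan is to reduce fictitious play in this cake-cutting game to fictitious play in the classical $2\times 2$ matching-pennies game, for which the $O(1/\sqrt T)$ convergence rate follows from a direct combinatorial tracking of imbalance counters.

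First I characterize the best responses. Given Bob's empirical $(p_L^{t-1}, p_R^{t-1})$, Alice's expected utility from cutting at $x$ is $p_L^{t-1} V_A([x,1]) + p_R^{t-1} V_A([0,x]) = p_L^{t-1} + (p_R^{t-1}-p_L^{t-1})\, V_A([0,x])$, which is strictly monotone in $V_A([0,x])\in[0,1]$ whenever $p_L^{t-1}\neq p_R^{t-1}$. Hence Alice's best response is $x=0$ if $p_L^{t-1}>p_R^{t-1}$, $x=1$ if $p_R^{t-1}>p_L^{t-1}$, and a fixed tie-break (say $m_A$) otherwise. Bob's best response, given Alice's empirical cut distribution $\bar F_{t-1}$, is $L$ iff $\mathbb{E}_{\bar F_{t-1}}[V_B([0,x])] > 1/2$; restricted to Alice's extreme plays $\{0,1\}$, this reduces to $n_1^{t-1}/(t-1) > 1/2$, since $V_B([0,0])=0$ and $V_B([0,1])=1$. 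Direct calculation of the induced $2\times 2$ payoff table yields $(0,L)\mapsto(1,0)$, $(0,R)\mapsto(0,1)$, $(1,L)\mapsto(0,1)$, $(1,R)\mapsto(1,0)$: classical matching pennies with value $1/2$ for each player.

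Next I analyze the induced dynamics by introducing imbalance counters $E^t = n_1^t - n_0^t$ and $D^t = m_L^t - m_R^t$. Alice's rule plays $0$ exactly when $D^t>0$ and $1$ when $D^t<0$, while Bob's rule plays $L$ when $E^t>0$ and $R$ when $E^t<0$. A case analysis across the four sign-quadrants of $(E^t, D^t)$ shows the joint trajectory rotates around the origin in a fixed direction; the $\ell_1$-norm $|E^t|+|D^t|$ is exactly conserved inside each open quadrant and jumps by at most $2$ at each axis-crossing, while each quadrant visit has length $\Theta(|E^t|+|D^t|)$. An inductive argument on successive half-cycles then gives that after $T$ rounds the trajectory has completed $\Theta(\sqrt T)$ axis-crossings and $|E^T|,|D^T|=O(\sqrt T)$, whence $n_1^T/T, m_L^T/T = 1/2 \pm O(1/\sqrt T)$. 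Alice realizes payoff $1$ exactly in the match-quadrants (Q1 and Q3) and $0$ in the mismatch-quadrants (Q2 and Q4); since consecutive quadrants have nearly equal length up to an $O(1)$ discrepancy per cycle, summing match and mismatch rounds yields $u_A/T, u_B/T = 1/2 \pm O(1/\sqrt T)$.

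I expect the main obstacle to be cleanly handling the exceptional tie rounds and ensuring the reduction is lossless. Whenever $p_L^{t-1}=p_R^{t-1}$, Alice's tie-breaking cut at $m_A$ contributes $V_B([0,m_A])\in(0,1)$ to Bob's empirical average rather than the atoms $\{0,1\}$ produced by extreme cuts, slightly shifting Bob's threshold in subsequent rounds. The key is to show that these tie rounds coincide precisely with axis-crossings of $(E^t, D^t)$ and therefore number at most $O(\sqrt T)$ in total, so their cumulative perturbation is absorbed into the stated rate. A secondary detail is verifying robustness to the initial round and to the boundedness parameters $\delta,\Delta$ of the densities, neither of which should affect the asymptotic rate.
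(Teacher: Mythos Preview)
Your high-level plan matches the paper's: track two imbalance counters, show their joint evolution spirals around the origin with $\ell_1$-radius growing like $O(\sqrt T)$, and read off the payoffs. The paper's counters are $\alpha_t=r_t-\ell_t$ (your $-D^t$) and $\beta_t=\sum_{\tau\le t}(2V_B([0,a_\tau])-1)$, and the argument proceeds essentially as you sketch: the radius $\rho_t=|\alpha_t|+|\beta_t|$ is non-decreasing, can increase only at axis-crossings, and successive axis-crossings are $\Theta(\rho_t)$ apart, forcing $\rho_T=O(\sqrt T)$.

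There is, however, a genuine gap at precisely the point you flag. You take $E^t=n_1^t-n_0^t$, but Bob's best response is governed by the sign of $\beta_t$, not of $E^t$, and these coincide only while every past cut has been at $0$ or $1$. After a single interior cut the two differ by a real offset that persists through all subsequent extreme-point rounds, so Bob may switch when $E^t$ has not crossed zero or fail to switch when it has; your spiral on $(E^t,D^t)$ then no longer describes the actual dynamics. The circularity you try to close --- ties are bounded by axis-crossings, which needs the spiral, which needs the matching-pennies reduction, which needs ties to be negligible --- therefore does not close as stated. This is compounded by the fact that the theorem is for \emph{arbitrary} tie-breaking (Alice may cut anywhere whenever $\alpha_{t-1}=0$, and Bob may pick either side whenever $\beta_{t-1}=0$), not the fixed rule at $m_A$ you assume. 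The paper's remedy is simply to replace your integer $E^t$ by the real-valued $\beta_t$: since $\beta_t$ is by construction exactly Bob's decision variable regardless of where Alice cuts, the spiral analysis on $(\alpha_t,\beta_t)$ is valid for arbitrary cut locations and arbitrary tie-breaking, after which your outline and the paper's proof coincide.
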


	\subsubsection*{\bf Roadmap to the paper.} 
	Related work is surveyed in Section~\ref{sec:related_work}. Preliminaries, which state the necessary  notation, can be found in Section \ref{sec:prelim}. 
	
	An overview of how Alice  can exploit a nearly myopic Bob can be found in Section~\ref{sec:alice_exploiting_bob_main}, with formal proofs in Appendix~\ref{app:alice_exploiting_bob}. An overview of how  players can enforce equitable payoffs can be found in Section~\ref{sec:equitable_payoffs_main}, with formal proofs in Appendix~\ref{app:safety_payoffs}. Fictitious play  can be found in Section~\ref{sec:fictitious_play}, with formal proofs in Appendix~\ref{app:fictitious_play}.
	
	\section{Related work} \label{sec:related_work}
	
	We overview related work in several areas, including fair division,  repeated games and their connections with learning. 
	\paragraph{Cake cutting and fairness notions.}
	The cake cutting model was introduced in~\cite{Steinhaus48} to capture  the allocation of a heterogeneous resource among agents with complex preferences. There are $n$ players, each with a private value density over the cake, which is represented as the interval $[0,1]$. The goal is to find a ``fair'' allocation of the cake among the $n$ players, where the fairness notion could include   proportionality, envy-freeness \footnote{An allocation is envy-free if no player prefers the piece of another player to their own.}, and equitability. The related task of  necklace splitting~\cite{Alon_necklace} requires dividing the cake in $t$ pieces (not necessarily contiguous)  such that each player has value $1/t$ for each of the pieces. Special cases include the  consensus halving problem, where $t=2$,  and finding perfect allocations, where $t=n$.  For surveys, see \cite{RW98,BT96,Moulin03,socialchoice_book,Pro13}. 

	\paragraph{Existence of fair allocations.} Proportional allocations with contiguous pieces can be computed efficiently for  any number of players~\cite{EP84,DS61}. However, the existence of more stringent fairness notions such as envy-freenes is typically ensured by a fixed point theorem. An envy-free allocation with contiguous pieces exists for any instance with $n$ of players via an application of Sperner's lemma or Brouwer's fixed point theorem~\cite{edward1999rental,Stromquist80}, while the existence of perfect partitions is shown via  Borsuk-Ulam~\cite{Alon_necklace}. 
	
	\paragraph{Complexity of cake cutting.}	 There is a standard query model for cake cutting, called the Robertson-Webb (RW) query model~\cite{woeginger2007complexity}, where  a mediator (algorithm) helps the players find a fair division by asking them enough queries about their preferences  until  it has  sufficient information to output  a fair allocation.
	
	Proportional allocations with contiguous pieces can be computed with  $O(n \log{n})$ queries
	\cite{EP84}, with matching lower bounds for finding proportional (not necessarily contiguous) allocations  were given in  \cite{woeginger2007complexity,EP06b}. 
	
	For the query complexity of exact envy-free cake cutting (possibly with disconnected pieces), a lower bound of $\Omega(n^2)$ was given by \cite{Pro09} and an upper bound of $O\bigl(n^{n^{n^{n^{n^{n}}}}}\bigr)$ by \cite{AM16}. ~\cite{AFMPV18} designed a simpler algorithm  for $4$ agents. The problem of computing envy-free allocations with high probability was solved by \cite{Cheze_whp_ef}.

	Equitable, envy-free, or perfect allocations with \emph{minimum number of cuts} cannot be computed exactly by  RW protocols \cite{Stromquist08}, thus $\varepsilon$-fairness becomes the goal. The query complexity 
	of finding $\varepsilon$-fair (envy-free, equitable, perfect) allocations was analyzed in \cite{CDP13,PW17,BN17,BN19,HR23}. 
	\cite{DQS12} studied the  complexity of cake cutting in the white box model, showing that  $\varepsilon$-envy-free cake cutting with contiguous pieces  is PPAD-complete when the valuations  are  unrestricted (i.e.  not necessarily induced by value densities,  allowing externalities). See  \cite{GoldbergHS20,FHHH21,Halevi18} for more on the complexity of cake cutting. 

	The complexity of consensus halving was studied, e.g., in  ~\cite{DeligkasFH21,GoldbergHIMS20,FHSZ20} and   the complexity of necklace splitting in~\cite{AlonGraur,filos2021topological}.
	
	\paragraph{Incentives in cake cutting.} A body of work studied truthful cake cutting  in the RW query model~\cite{mossel_tamuz,BM15} and in the direct revelation model~\cite{truth_justice_and_cake_cutting,BU2023103904,bei2022truthful,Tao22}. The equilibria of cake cutting protocols with strategic agents were considered in~\cite{NY08,branzei2013equilibrium}, with an algorithmic model of generalized cut-and-choose  protocols proposed  in \cite{BranzeiCKP16}. \cite{IaruGoldberg21} designed branch-choice protocols   as a simpler yet expressive alternative.
	
	\paragraph{Multiple divisible/indivisible  goods and chores. New models.}

	The complexity of finding fair allocations in settings with multiple  indivisible  was considered in \cite{OPS21_few_queries,plaut2020almost,PlautR19,MS21_closing_gaps,ChaudhuryKMS21,BiloCFIMPVZ19} among many others. 
	For more on the fair allocation of indivisible goods and applicable solution concepts, see, e.g., \cite{AmanatidisBFV22,Procaccia_EFX_survey,chaudhury2020efx,chaudhury2021improving,procaccia2014fair}.    For the allocation of divisible or indivisible bads, see, e.g., \cite{kulkarni2021indivisible,chaudhury2021competitive}. 
	
	\cite{ghodsi2011dominant} studied fairness in settings inspired by cloud computing, where there are  multiple divisible goods (e.g. CPU and memory) and the users have to run jobs with different resource requirements. \cite{ghodsi2011dominant} introduced the dominant resource fairness (DRF) mechanism and showed it has strong fairness and incentive properties. \cite{parkes2015beyond} further studied the properties of the DRF mechanism  for indivisible goods. 
	\cite{kandasamy2020online} studied the allocation of multiple goods in settings where the players do not know their own resource requirements, with the goal of designing mechanisms that guarantee efficiency, fairness, and strategy-proofness.

	Cake cutting with separation was studied in~\cite{ElkindSS21}, fair division of a graph or graphical cake cutting in \cite{graph_fair_division,Bei2019DividingAG,deligkas2022complexity},  multi-layered cakes in ~\cite{IgarashiM21},   cake cutting where some parts are good and others bad  in ~\cite{cake_burned}, and  when the whole cake is a bad  (e.g. chore) in \cite{ijcai2018p31,DehghaniFHY18,hajiaghayi2023almost}.  Cake cutting in two dimensions was studied in \cite{segal2017fair} and cake cutting in practice in \cite{KyropoulouOS22}.
	The allocation of multiple homogeneous divisible goods was studied in \cite{CaragiannisGP022}.

	\paragraph{Dynamic fair division.} Closest to our setting is the analysis in the book of \cite{AumannMaschler} (page 243), where two players are dividing a cake with a cherry. Alice (the cutter)  has a uniform density and so she does not care for the cherry, while  Bob (the chooser) may or may not like the cherry. Alice and Bob declare their actions simultaneously and Alice is only allowed to cut in one of two locations. Additionally, Alice has a prior over the type of Bob she is facing. The question is how the players should behave in the repeated game.  \cite{AumannMaschler} analyze the set of payoffs approachable for Bob using Blackwell approachability. The difference from our setting is that we allow arbitrary value densities for the players and  do not assume priors. Additionally, we also consider the sequential version. 
	
	Online cake cutting was  considered by \cite{walsh2011online}, in the setting where agents can arrive and depart over time and the goal is to still ensure some form of fairness. Dynamic fair division with multiple divisible goods was studied in \cite{kash2014no}, with  agents arriving over time but  not departing and  the allocation algorithm taking irrevocable decisions (i.e., the resources can never be taken back once given to an agent). 
	
	\cite{friedman2015dynamic} considered the allocation of a single unit of a divisible resource when the arrivals and departures of the agents are uncertain. The question  is how to maintain  fairness and Pareto efficiency in a way that minimizes the disruptions to existing allocations.  \cite{benade2022dynamic} consider dynamic fair division with partial information, where  $T$ goods become available over a sequence of rounds,  every good  must be allocated immediately and irrevocably before the next one arrives, and the valuations are drawn from an underlying distribution. 
	
	\paragraph{Learning in repeated Stackelberg games.}
	The Stackelberg game (competition) was first 
	introduced by~\cite{stackelberg1934marktform} to understand the first  mover advantage of firms when entering a market. The Stackelberg equilibrium concept has  received significant interest in economics and computer science,  with  real world applications such as security games  \cite{tambe2011security,balcan2015commitment}, online strategic classification \cite{dong2018strategic}, and online principal agent problems  \cite{hajiaghayi2023regret}. Our model can be seen  as each player facing an  online learning version of a repeated Stackelberg game.  
	
	\cite{kleinberg2003value} considered a seller's problem of designing an efficient repeated posted price mechanism to buy identical goods when it interacts with a sequence of myopic buyers.
	\cite{gan2019imitative,birmpas2020optimally,zhao2023online} considered a repeated Stackelberg game to study how the follower or leader can exploit the opponent in a general game with arbitrary payoffs.
	Their techniques, however, do not apply to our model as they typically consider the setting of one player knowing the entire payoff matrix trying to deceive the other player given various behavioral assumptions.

	\paragraph{Exploiting no-regret agents.} Several works have considered the extent to which one player can exploit the knowledge that the other player has a strategy with sublinear regret. The goal is often to approach the Stackelberg value, the maximum payoff that the exploiter could get by selecting an action first and allowing the opponent to best-respond. In simultaneous games, \cite{deng2019strategizing} showed that it is possible for the exploiter to get arbitrarily close to their Stackelberg value, assuming knowledge of the other player's payoff function. \cite{haghtalab2022learning} showed that, for certain types of sequential games, an exploiting leader can approach their Stackelberg value in the limit. Our Theorem \ref{thm:response_bounded_regret_Bob} is a similar statement in our setting, but we bound the exploited agent's behavior with an explicit regret guarantee rather than using discounted future payoffs. Moreover, our setting is not captured by the types of games they consider.
	
	\paragraph{Fictitious play.} Fictitious play was introduced in \cite{fictitious_play_brown}. Convergence to Nash equilibria has been shown for zero-sum games \cite{robinson51} and special cases of general-sum games~\cite{Nachbar90,MS96a,MS96b}.

	None of these results directly apply to our setting, but the most relevant is \cite{Berger05}, which covered non-degenerate  $2 \times n$ games (i.e. where every action has a unique best response). Our ``$2 \times \infty$" game is degenerate, as Bob does not have a unique best response to Alice cutting at $m_B$. Few existing works apply fictitious play to settings where the players have continuous action spaces. An example is \cite{Perkins14}, which showed that a variant (stochastic fictitious play) does converge in two-player zero-sum games with continuous action spaces. 
	
	\cite{karlin59} conjectured that fictitious play converges at a rate of $O(T^{-1/2})$. \cite{brandt2013rate} found small games where the convergence rate is $O(T^{-1/2})$, but with very large constants in the $O()$.  \cite{Daskalakis14} disproved Karlin's conjecture, showing that there exist  games in which convergence takes place   at a rate of $\Omega(T^{-1/n})$ using adversarial tie-breaking rules. \cite{Panageas23} found more examples of games in which fictitious play  converges exponentially slowly in the number $n$ of actions that each player has. 
	
	\cite{Harris98} showed that fictitious play converges at a rate of $O(T^{-1})$ in $2 \times 2$ zero-sum games. \cite{Abernethy21} considered settings with diagonal payoff matrices and non-adversarial tie-breaking rules and showed convergence rates of $O(T^{-1/2})$. The result in \cite{Abernethy21} does not imply a rate of  convergence in our setting because  requiring the payoff matrix to be diagonal would correspond to Alice only being allowed to cut at $0$ or $1$. This assumption is not as natural in our setting. In fact, if Alice can only cut at $0$ or $1$ the game becomes zero-sum. Furthermore, we allow arbitrary tie-breaking rules.
	
	\paragraph{Strategic experimentation}
	In the general strategic experimentation model, there are $n$ players and $k$ arms. In every round, each player pulls an arm,  receives the resulting reward,  and obtains    feedback about the other player. There are two types of feedback models: perfect monitoring, where the feedback 
	consists of both the choice of the other player \emph{and} their reward; and imperfect monitoring, where the feedback consists of  the choice of the other player \emph{but not} their reward.
	
	
	\cite{bolton1999strategic}  study strategic experimentation with two players, two arms,  and perfect monitoring  in continuous time, where one of the arms is ``safe'' and emits steady rewards, while the other arm is ``risky'' and is  governed by a stochastic process. The main effects observed in symmetric equilibria are a free
	rider effect and an encouragement effect, where a player may explore more in order to
	encourage further exploration from others.
	\cite{keller2005strategic} considered the same problem for exponential bandits, and obtained a unique symmetric Markov equilibria followed by various asymmetric ones.
	We refer to~\cite{horner2017learning} for a  survey.

	\cite{aoyagi1998mutual,aoyagi2011corrigendum} study strategic experimentation with imperfect monitoring when there are two players and two arms with discrete priors,  showing that  the players eventually settle on the same arm in any equilibrium. This is a version of the agreement theorem by \cite{aumann2016agreeing} for the multi-player multi-armed bandit model,  stating  that rational players cannot agree to disagree.
	\cite{rosenberg2007social,rosenberg2013games}  also study the model with imperfect monitoring, but where  the decision to switch from the risky arm to the safe one is irreversible.
	
	The model with imperfect monitoring has similarities to our setting.
	Our model also involves two players engaged in a strategic repeated game such that each player observes the action of the other player but not their reward. The main difference is that the payoffs of the players are misaligned, the action spaces are orthogonal, and the payoffs are deterministic.
	It is also an interesting open direction for future work to explore what happens if the payoffs in our model are stochastic.
	
	\paragraph{Social learning}
	Initiated by~\cite{banerjee1992simple,welch1992sequential}, a long line of literature studied social learning (or herd behavior) to investigate agents who learn over time in a shared environment.
	Unlike the strategic experimentation model, the agents do not strategize against each other, but only observe the past actions and possibly rewards therein.
	The objective is to identify whether the social learning succeeds or fails if the agents receive private signals~\cite{smith2000pathological}, have behavioral biases~\cite{banihashem2023bandit}, or have a certain  network structure~\cite{bala1998learning}.
	Incentivized exploration by~\cite{kremer2014implementing,che2018recommender} considers mechanism design  in a similar problem, to induce the society to behave in a desired manner. 
	
	
	
	
	

	\section{Preliminaries} \label{sec:prelim}
	
	In this section we formally define the notation needed for the proofs. All our notation applies to both the sequential and simultaneous settings, unless otherwise stated.
	
	\subsection*{History} Recall $T$ is the number of rounds. For each round $t \in [T]$, 
	\begin{itemize}
		\item let $a_t \in [0,1]$ be Alice's cut at time $t$ and $b_t \in \{L, R\}$ be Bob's choice at time $t$, where $L$ stands for the left piece $[0, a_t]$ and $R$ for the right piece $[a_t, 1]$. 
		\item let  $A_t = (a_1, \ldots, a_t)$ be the history of cuts until the end of round $t$ and $B_t = (b_1, \ldots, b_t)$ the history of choices made by Bob until the end of round $t$.
	\end{itemize}
	A history $H = (A_T, B_T)$ will denote an entire trajectory of play.
	
	\subsection*{Strategies} 
	Let $\mathcal{P}$ be the space of integrable value densities over $[0,1]$. 
	A pure strategy for Alice  at time $t$ is a function 
	\[ S_A^t : [0,1]^{t-1} \times \{L, R\}^{t-1} \times \mathcal{P} \times \mathbb{N} \to [0,1],
	\] such that $S_A^t(A_{t-1}, B_{t-1}, v_A, T)$ is the next cut point   made by Alice as a function of the history $A_{t-1}$ of Alice's cuts, the history $B_{t-1}$ of Bob's choices, Alice's valuation $v_A$, and the  horizon $T$.
	
	For Bob, we define pure strategies separately for the sequential and simultaneous settings due to the different feedback that he gets:
	\begin{itemize}
		\item \textit{Sequential setting.}
		A pure strategy for Bob at time $t$ is a function 
		\[ S_B^t:  [0,1]^{t} \times \{L, R\}^{t-1} \times \mathcal{P} \times \mathbb{N} \to \{L, R\}\,.
		\]
		That  is, Bob observes Alice's cut point  and then responds.
		\item \textit{Simultaneous setting.}  
		A pure strategy for Bob at time $t$ is a function 
		\[
		S_B^t:  [0,1]^{t-1} \times \{L, R\}^{t-1} \times \mathcal{P} \times \mathbb{N} \to \{L, R\}\,.
		\]
		Thus here Bob chooses $L$/$R$ before observing Alice's cut point at time $t$.
	\end{itemize}
	
	A pure strategy for Alice over the entire time horizon $T$ is denoted $S_A = (S_A^1, \ldots, S_A^T)$ and tells Alice what cut to make at each time $t$. 
	A pure strategy for Bob over the entire time horizon $T$ is denoted $S_B = (S_B^1, \ldots, S_B^T)$ and tells Bob whether to play $L$/$R$ at each time $t$.
	
	A mixed strategy is a probability distribution over the set of pure strategies.
	\footnote{In fact, this is equivalent to the behavior strategy in which the player assigns a probability distribution given a history, thanks to Kuhn's theorem~\cite{kuhn1950extensive,kuhn11953extensive}. The original version of Kuhn's theorem is restricted to games with finite action space, but can be extended to any action space that is isomorphic to unit interval by~\cite{aumann1961mixed,dresher2016advances}, which contains our setting.}
	
	\subsection*{Rewards and utilities} 
	Suppose Alice has mixed strategy $S_A$ and Bob has mixed strategy $S_B$. 
	Let $u_A^t$ and $u_B^t$ be the random variables for the utility (payoff) experienced by Alice  and Bob, respectively, at round $t$.  The utility of player $i \in \{A,B\}$ is denoted \[
	u_i = u_i(S_A, S_B) =  \sum_{t=1}^{T} u_i^t\,.
	\]
	The utility of player $i$ from round $t_1$ to $t_2$ is $u_i(t_1, t_2) = \sum_{t=t_1}^{t_2} u_i^t$.
	
	The expected utility of player $i$  is 
	$\Ex[u_i] = \sum_{t=1}^{T} \Ex[u_i^t], 
	$
	where the expectation is taken over the randomness of the strategies $S_A$ and $S_B$.
	
	Given a history $H$, let $u_i^t(H)$ be player $i$'s utility in round $t$ under  $H$ and let
	$u_i(H)= \sum_{t=1}^T u_i^t(H)$ be player $i$'s cumulative utility  under $H$.
	
	\subsection*{Midpoints and Stackelberg value} Let $m_A \in [0,1]$ be Alice's midpoint of the cake, with $
	V_A([0,m_A]) =  1 /2$,  and $m_B \in [0,1]$ be Bob's midpoint, with $V_B([0,m_B])  =1 /2$. Since the densities are bounded from below, the midpoint of each player is uniquely defined. 
	
	Alice's Stackelberg value, denoted $u_A^*$, is the utility Alice gets when she cuts at  $m_B$ and Bob chooses his favorite piece breaking ties in favor of Alice (i.e. taking the piece she prefers less).
	

	\section{Alice exploiting Bob} \label{sec:alice_exploiting_bob_main}
	
	In this section we give an overview of Theorem~\ref{thm:response_bounded_regret_Bob}, which considers the sequential setting and quantifies the extent to which Alice can exploit a Bob that has sub-linear regret  with respect to the benchmark of choosing the best piece in each round. The formal proof of Theorem~\ref{thm:response_bounded_regret_Bob} can be found in Appendix~\ref{app:alice_exploiting_bob}. The proof of Proposition~\ref{prop:myopic-bob} is included in Appendix~\ref{app:alice_exploiting_bob} as well.
	
	\medskip 
	
	We start by defining the  notion of Stackelberg regret~\cite{dong2018strategic,haghtalab2022learning}.
	
	\begin{definition}[Stackelberg regret]
		Given a history $H$, Alice's Stackelberg regret is  
		\begin{align} \label{eq:regret_stackelberg}
			\text{Reg}_{A}(H) = \sum_{t=1}^{T} \left[ u_A^* -  u_A^t(H) \right],
		\end{align}
		recalling that $u_A^*$ is Alice's Stackelberg value and $u_A^t(H)$ is Alice's utility in round $t$ under the \\history $H$.
	\end{definition}
	

	\medskip 
	
	For Bob, we consider the basic notion of static regret, where Bob compares his payoff to what would have happened if Alice's actions remained the same but he chose  the best piece in each round. 
	
	\begin{definition}[Regret] 
		Given a history $H$, Bob's  regret is 
		\[
		\textit{Reg}_B(H) = \sum_{t=1}^{T} \Bigl[ \max 
		\Bigl\{V_B([0, a_t]), V_B([a_t, 1])\Bigr\} - u_B^t(H)\Bigr],
		\]
		recalling that $u_B^t(H)$ is Bob's utility in round $t$ under the history $H$.  
	\end{definition}
	
	\paragraph{Regret guarantees.} Suppose Alice uses a mixed strategy $S_A$ and Bob uses a mixed strategy $S_B$. We say that 
	\begin{itemize} 
		\item Alice's strategy $S_A$ ensures Alice's Stackelberg regret is at most $\gamma $ against $S_B$ if 
		$
		\text{Reg}_{A}(H) \leq \gamma 
		$ for all $T$-round histories $H$ that could have arisen under the strategy pair $(S_A, S_B)$.
		\item Bob's strategy $S_B$ ensures Bob's regret is at most $ \gamma $ if $\text{Reg}_B(H) \leq \gamma $ for all $T$-round histories $H$ that could have arisen under the strategy pair $(S_A, S_B)$.
	\end{itemize}
	More broadly, a Bob strategy $S_B$ has regret $\gamma$ if 
	\begin{itemize} 
		\item $\text{Reg}_B(H) \leq \gamma$ for all $T$-round histories $H$ that could have arisen under strategy pairs $(S_A, S_B)$, for all  Alice strategies $S_A$.
	\end{itemize}
	
	\medskip 
	
	Next we provide a proof sketch for Theorem~\ref{thm:response_bounded_regret_Bob}, which is divided in the next two propositions, corresponding to the cases where Alice knows and does not know $\alpha$.
	
	\begin{proposition}  \label{prop:exploiting_Bob_part1}
		Let $\alpha \in [0,1)$. Suppose  Bob plays a strategy that ensures  his regret is   $O(T^{\alpha})$ and let  $\mathcal{B}^{\alpha}$ denote the set of all such Bob strategies. Assume Alice knows $\alpha$. Then she  has a strategy $S_A = S_A(\alpha)$ that ensures her Stackelberg regret   is  ${O}\Bigl(T^{\frac{\alpha+1}{2}} \log T\Bigr)$. This is essentially optimal: Alice's Stackelberg regret is $\Omega\Bigl(T^{\frac{\alpha+1}{2}}\Bigr)$ for some Bob strategy in $\mathcal{B}^{\alpha}$. 
	\end{proposition}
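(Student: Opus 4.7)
The plan is to construct an Alice strategy that interleaves an adaptive noisy binary search on $m_B$ with an exploitation phase, and to match its guarantee with an adversarial lower-bound construction. The key structural fact is that, since $v_B \geq \delta$, the cut-point-dependent gap $\Delta_a := |V_B([0,a]) - V_B([a,1])|$ satisfies $\Delta_a \geq 2\delta |a - m_B|$, so every round in which Bob foregoes his favourite piece costs him at least $\Delta_a$; combined with $\text{Reg}_B(H) \leq C T^{\alpha}$ for $S_B \in \mathcal{B}^\alpha$, this sharply restricts how often Bob can mislead Alice when she replicates a cut near $m_B$.

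Alice's strategy consists of $I = \lceil \log_2 T \rceil$ binary-search phases of length $\tau = \Theta(T^{(1+\alpha)/2})$ each, followed by an exploit phase. She maintains an interval $[l_i, r_i]$ (initialised to $[0,1]$) meant to contain $m_B$, and in phase $i$ she cuts at the midpoint $a_i = (l_i + r_i)/2$ for $\tau$ rounds. Letting $R_i$ denote the number of times Bob chose the right piece, she updates to $[a_i, r_i]$ if $R_i > 2\tau/3$, to $[l_i, a_i]$ if $R_i < \tau/3$, and aborts the search otherwise. After the final phase (or upon aborting at some phase $i$), Alice uses her known density $v_A$ to decide which side of $m_B$ she prefers and cuts for all remaining rounds on that side of $m_B$: either $l_I$ or $r_I$ if the search completed, or $a_i$ if it aborted, each shifted by a safety margin $\eta = \Theta(T^{(\alpha-1)/2})$.

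The upper-bound analysis hinges on a per-phase dichotomy. A clear outcome at phase $i$ is either consistent with Bob's true preference (so the interval update is correct), or Bob lied in more than $2\tau/3$ of the rounds and paid at least $\frac{2\tau}{3}\Delta_{a_i}$ regret. An ambiguous outcome implies Bob lied at least $\tau/3$ times and paid at least $\frac{\tau}{3}\Delta_{a_i}$. Summing across all phases with wrong or ambiguous outcomes, the global budget $\text{Reg}_B \leq C T^\alpha$ forces $\sum \Delta_{a_i} \leq O(T^\alpha/\tau) = O(T^{(\alpha-1)/2})$, hence $|a_i - m_B| = O(T^{(\alpha-1)/2})$ for each such phase. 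A careful propagation argument then shows that Alice's final reference point (interval endpoint or aborted $a_i$) lies within $O(T^{(\alpha-1)/2})$ of $m_B$, so the $\eta$-shifted exploit cut is on her preferred side and incurs per-round regret $O(T^{(\alpha-1)/2})$. Since each search round contributes at most $1$ to regret, the total is $I\tau + T \cdot O(T^{(\alpha-1)/2}) = O(T^{(1+\alpha)/2}\log T)$.

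For the matching $\Omega(T^{(1+\alpha)/2})$ lower bound I would use an adversarial construction of two Bob strategies in $\mathcal{B}^\alpha$ whose midpoints differ by $\eta = \Theta(T^{(\alpha-1)/2})$ and whose responses coincide outside a window of width $\Theta(\eta)$ around their midpoints; inside the window, each uses its regret budget to randomise in a way that makes the two strategies statistically indistinguishable from $o(T)$ queries. A Yao / Le Cam minimax argument then shows that any Alice strategy suffers expected regret $\Omega(T\eta) = \Omega(T^{(1+\alpha)/2})$ against the uniform mixture, since no single fixed cut can simultaneously achieve $u_A^*$ for both midpoints, and hence at least this much regret against one of the two Bobs. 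The main technical obstacle in the upper bound is the potential compounding of phase errors: a single phase whose majority points to the wrong half shifts the maintained interval off $m_B$, and later phases could in principle continue to drift. The summed-gap inequality above is precisely what controls the cumulative drift, and converting it into correct preferred-side selection (from $v_A$) and safety-margin choice is the central technical step.
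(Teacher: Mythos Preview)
Your upper-bound strategy is a genuine alternative to the paper's. The paper does not do plain bisection: in each phase it splits the current interval into \emph{six} equal-Alice-value pieces, queries the five internal cut points $\eta$ times each, and recurses on \emph{three} of the six pieces (the one where the majority flips from $R$ to $L$ plus one buffer on each side). The point of the buffer is that Bob's regret budget allows him to flip at most one of the five majorities, and even then only when $m_B$ is in an adjacent piece; taking three pieces therefore guarantees $m_B$ never leaves the maintained interval. Your plain bisection lets $m_B$ escape, and you patch this with the ``careful propagation argument.'' That argument does go through (truthful phases keep the distance $d_i$ from $m_B$ to the interval non-increasing; a wrong phase resets $d_{i+1}=|a_i-m_B|$, which is controlled by the summed-gap bound), but you have not written it out, and two further points are missing. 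First, in exploitation Bob can still lie: each lie at a cut $\Theta(T^{(\alpha-1)/2})$ from $m_B$ costs him $\Theta(T^{(\alpha-1)/2})$, so he can afford $O(T^{(\alpha+1)/2})$ lies, each costing Alice $\Theta(1)$; this contributes another $O(T^{(\alpha+1)/2})$ to her regret and must be accounted for. Second, your safety margin $\eta$ must dominate $d_I$, but $d_I\le \tfrac{3C}{4\delta}T^{(\alpha-1)/2}$ depends on the unknown constant $C$ in Bob's regret bound; the paper's six-way construction makes this issue disappear because the relevant ratios of regret costs are independent of $C$, whereas your scheme needs an extra $\log T$ factor (in $\tau$ or in $\eta$) to absorb it.

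For the lower bound, the paper's construction is both simpler and stronger than your two-Bob Le~Cam argument. It uses a \emph{single} deterministic Bob (density fixed, midpoint $m_B>m_A$) who behaves as if his midpoint were $m_B-T^{(\alpha-1)/2}$, then reverts to honesty after Alice has cut in the deception window $[m_B-T^{(\alpha-1)/2},m_B]$ a total of $T^{(\alpha+1)/2}$ times. This Bob is in $\mathcal{B}^\alpha$ and forces $\Omega(T^{(\alpha+1)/2})$ Stackelberg regret on \emph{every} Alice strategy: if she triggers the switch she has already lost $\Theta(1)$ per triggering round; if not, she never cuts close enough to $m_B$ and loses $\Theta(T^{(\alpha-1)/2})$ per round. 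Your indistinguishability route would only yield ``for each Alice there is some Bob,'' and the randomisation step is unnecessary in the sequential model.
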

	\begin{proof}[Proof sketch]
		We sketch both the upper and lower bounds.
		
		\paragraph{Sketch for the upper bound.} 
		Let $S_B$ denote Bob's strategy, which guarantees his regret is  $O(T^{\alpha})$. Suppose Alice knows $\alpha$. Then Alice initializes an interval $I = [0,1]$ and uses the next strategy.
		Iteratively, for $i = 0, 1, \ldots, $:
		\begin{enumerate}
			\item  Alice discretizes the interval $I = [u,w]$ in a constant number of  sub-intervals (set to $6$) of equal value to her, by cutting at  points $a_{i,j}$ for $j \in [5]$ such that $u < a_{i,1} < a_{i,2} < \ldots < a_{i,5} < w$. Denote $a_{i,0} = u$ and $a_{i,6} =w$. 
			\begin{figure}[h!] 
				\centering 
				\includegraphics[scale=0.9]{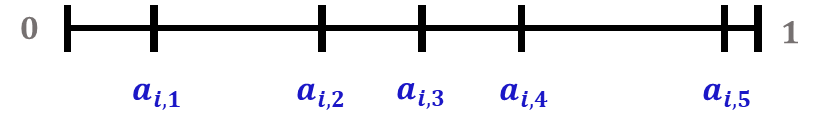}
				\caption{Illustration of step $(1)$ for $i=0$. Alice divides the interval $[0,1]$ in $6$ disjoint intervals that have equal value to her, demarcated by points $a_{0,0} = 0 < a_{0,1} < a_{0,2} < a_{0,3} < a_{0,4} < a_{0,5} < 1  = a_{0,6}$.}
				\label{fig:illustration_Alice_cuts_in_6_pieces_step1}
			\end{figure}
			\item Alice selects a number $\eta$, which will be set ``large enough'' as a function of  $T$ and $\alpha$.  In the next $5 \eta$ rounds, Alice cuts an equal number of times at each  point $a_{i,j}$ for $j \in [5]$. That is: 
			\begin{itemize}
				\item In each of the next $\eta$ rounds, Alice cuts at $a_{i,1}$ and observes Bob's choices there, computing the majority  answer as $c_{i,1} = L$ if Bob picked the left piece more times than the right piece, and $c_{i,1} = R$ otherwise.
				\item The next $\eta$ rounds Alice switches to cutting at $a_{0,2}$, and so on.  
			\end{itemize} 
			
			In this fashion, Alice computes $c_{0,j}$ as Bob's majority answer corresponding to cut point $a_{0,j}$ for all $j \in [5]$. Also, by default  $c_{0,0} = R$ and $c_{0,6} = L$. Step 2 is illustrated in Figure~\ref{fig:illustration_Alice_cuts_in_6_pieces_step2}.
			\begin{figure}[h!] 
				\centering 
				\includegraphics[scale=0.9]{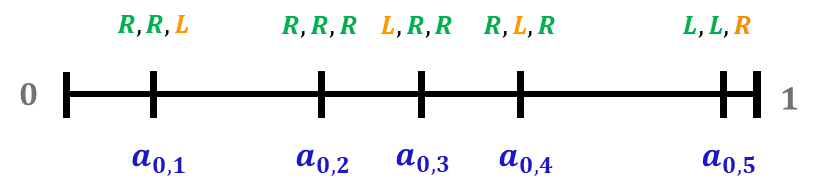}
				\caption{Illustration of step $(2)$ for $i=0$. Suppose $\eta = 3$. Alice cuts $3$ times at each of the points $a_{0,j}$ and observes Bob's choices, which are marked near each such cut point. By default, Alice knows what the answer would be if she cut at $0$ or $1$, so those are set to $R$ and $L$, respectively. The truthful answers (reflecting Bob's favorite piece  according to his actual valuation) are marked with green, while the lying answers are marked with orange.}
				\label{fig:illustration_Alice_cuts_in_6_pieces_step2}
			\end{figure}
			\item The points $a_{0,j}$ for $j \in \{0,\ldots, 6\}$ are arranged on a line  and each is labelled $L$ or $R$, with the leftmost point $a_{0,0} = 0$ labelled $R$ and the rightmost point $a_{0,6} = 1$ labelled $L$. Then there is an index $j \in \{0, \ldots, 5\}$ such that $c_{0,j} = R$ and $c_{0,j+1} = L$. 
			
			Alice computes a smaller interval $I_1$,  essentially consisting of $[a_{0,j}, a_{0, j+1}]$ and some extra space around it to make sure that $I_1$ contains Bob's midpoint:
			\begin{itemize}
				\item If $j \in \{1, \ldots, 4\}$, then set $I_1 = [a_{0, j-1}, a_{0, j+2}]$.
				\item If $j = 0$, then set $I_1 = [a_{0,0}, a_{0,3}]$. 
				\item If $j = 5$, then set $I_1 = [a_{0,3}, a_{0,6}]$.
			\end{itemize}
			\medskip 
			
			Then Alice iterates steps $(1-3)$ on the interval $I_1$.  Step 3 is illustrated in Figure~\ref{fig:illustration_Alice_cuts_in_6_pieces_step3}.
			
			\begin{figure}[h!] 
				\centering 
				\includegraphics[scale=0.9]{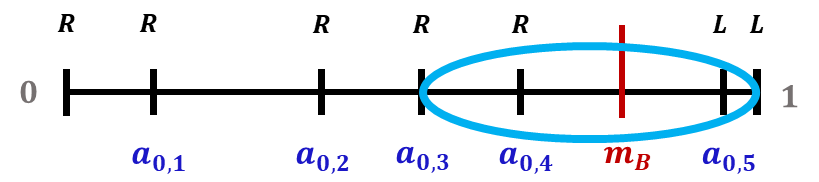}
				\caption{Illustration of step $(3)$ for $i=0$. Alice labels each point $a_{0,j}$ with the majority answer there. Then she identifies the index $j$ such that the point $a_{0,j}$ is labelled  $R$ and the point $a_{0,j+1}$ is labelled $L$. At this point she is assured that either the interval $[a_{0,j}, a_{0, j+1}]$ or one of the adjacent ones contains Bob's midpoint. Alice sets $I_1 = [a_{0,3}, a_{0,6}]$ and recurses on it.}
				\label{fig:illustration_Alice_cuts_in_6_pieces_step3}
			\end{figure}
		\end{enumerate} 
		The full proof explains why the  index $j$ from step $3$ is unique and why it is in fact necessary to include a slightly larger interval than $[a_{i,j}, a_{i,j+1}]$ in the recursion step, due to Bob potentially having lied if his midpoint was very close to a boundary of $[a_{i,j}, a_{i,j+1}]$ but on the other side.
		
		\paragraph{Sketch for the lower bound.} The  lower bound of $\Omega\left(T^{\frac{\alpha+1}{2}}\right)$ relies on the observation that rounds where Alice cuts near $m_B$ and Bob picks his less-preferred piece cost Bob very little but cost Alice a lot. More precisely, suppose $m_A < m_B$ and Alice cuts at $m_B - \varepsilon$. Then compared to his regret bound, Bob loses $\Theta(\varepsilon)$ if he picks the wrong piece. On the other hand, Alice loses $\Theta(m_B - m_A) = \Theta(1)$ compared to her Stackelberg value.
		
		Bob can use this asymmetry by acting as if his midpoint were $\Theta\left(T^{\frac{\alpha-1}{2}}\right)$ closer to $m_A$ than it really is. Lying $\Theta\left(T^{\frac{\alpha+1}{2}}\right)$ times costs Bob only $\Theta(T^{\alpha})$ regret, but costs Alice $\Theta\left(T^{\frac{\alpha+1}{2}}\right)$ regret. To avoid accumulating more regret than this, Bob can afterwards revert to picking his truly preferred piece; the damage to Alice's payoff has already been done.
	\end{proof}

	\begin{proposition} \label{prop:exploiting_Bob_part2}
		Let $\alpha \in [0,1)$. Suppose Bob plays a strategy that ensures  his regret is   $O(T^{\alpha})$. Let $\mathcal{B}^{\alpha}$ denote the set of all such Bob strategies.  
		If Alice does not know $\alpha$, she has a strategy $S_A$ that ensures her Stackelberg regret is $O\left(\frac{T}{\log{T}}\right)$.
		
		\noindent {This is essentially optimal: if $S_A$ guarantees Alice Stackelberg regret 
			$O(T^{\beta})$ against all Bob strategies in $\mathcal{B}^{\alpha}$ for some $\beta \in [0,1)$, then $S_A$ has Stackelberg regret $\Omega(T)$ for some Bob strategy in $\mathcal{B}^{\beta}$.}
	\end{proposition}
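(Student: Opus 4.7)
The plan for the upper bound is to reuse the recursive binary-search strategy from Proposition~\ref{prop:exploiting_Bob_part1} but with parameters chosen without reference to $\alpha$. I would set $k=\lfloor \log_2\log T\rfloor$ recursion levels and $\eta=\lceil T/(\log T\log\log T)\rceil$ queries per cut point at each level. The total exploration is $5k\eta=O(T/\log T)$, and the final interval containing $m_B$ has Alice-value $2^{-k}=O(1/\log T)$, so cutting just inside it on her preferred side gives a post-exploration per-round regret of $O(1/\log T)$ and hence another $O(T/\log T)$ contribution. The correctness constraint at level $i$ is $\eta>c\cdot T^\alpha\cdot 2^i$ (so that, at any cut point not adjacent to $m_B$, Bob cannot afford enough lies to flip the majority); at the deepest level this becomes $T^{1-\alpha}\geq C(\log T)^2\log\log T$, which holds for every fixed $\alpha\in[0,1)$ once $T$ is large enough, with the small-$T$ rounds absorbed into the $O$-constant.

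For the lower bound I would assume $S_A$ guarantees $O(T^\beta)$ regret against every $S_B\in\mathcal{B}^\alpha$ and construct an offender $S_B^\star\in\mathcal{B}^\beta$. Fix valuations with $m_A<m_B-\varepsilon$ for a constant $\varepsilon\in(0,m_B)$, so that $m_A$, $m_B-\varepsilon$, and $m_B$ are well-separated. The strategy $S_B^\star$ mimics a truthful Bob of midpoint $m_B-\varepsilon$: outside the deception window $(m_B-\varepsilon,m_B)$ the mimicked and true preferences agree and no lie is incurred, while inside the window Bob pays $\Theta(\varepsilon)=\Theta(1)$ regret per cut to play the fake preference, subject to a lying counter of size $\Theta(T^\beta)$ after which he reverts to the truth. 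By design $S_B^\star$ has regret $O(T^\beta)$ against any Alice strategy, so $S_B^\star\in\mathcal{B}^\beta$.

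The crux is then a coupling argument. Apply the hypothesized guarantee to the truthful $(m_B-\varepsilon)$-midpoint Bob, which lies in $\mathcal{B}^0\subset\mathcal{B}^\alpha$: any Alice cut $a\in(m_B-\varepsilon,m_B)$ incurs per-round Stackelberg regret bounded below by a positive constant against this truthful counterpart (Alice is structurally forced into her dispreferred right piece because $m_A<m_B-\varepsilon$), so Alice can make only $O(T^\beta)$ such cuts. Choosing the lying-counter constant a little larger than this bound ensures $S_B^\star$ never exhausts its counter, so Alice's observed transcript is identical to the one produced by the truthful mimicked Bob, and $S_A$ commits to cutting at $m_B-\varepsilon-\tau$ for all but $O(T^\beta)$ rounds. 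These commitment cuts lie outside the deception window, so Bob is never forced into additional lies, yet on each such round Alice's realized value $V_A([0,m_B-\varepsilon-\tau])$ falls short of her true Stackelberg value $V_A([0,m_B])$ by at least $\delta\varepsilon=\Theta(1)$, summing to $\Omega(T)$ regret. The main obstacle is exactly this coupling: one must simultaneously verify that Alice's regret bound caps her window-cuts below the counter, that her exploitation cuts lie outside the window so $S_B^\star$ incurs no further cost, and that $\varepsilon,\tau,m_A,m_B$ can be chosen compatibly with the bounded-density constants $\delta$ and $\Delta$.
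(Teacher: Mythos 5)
Your proposal is correct and follows the same approach as the paper, for both directions. For the upper bound, the paper invokes its general Proposition~\ref{alice-beat-no-regret-tighter-update} with $f(T) = T/(\log T)^4$, which amounts to running the same binary-search with $\eta \approx T/(\log T)^2$ per cut point over $\Theta(\log\log T)$ levels; you pick $\eta \approx T/(\log T\log\log T)$ and $k = \lfloor\log_2\log T\rfloor$ levels directly, but both parameterizations give $O(T/\log T)$ regret and both rest on the identical observation that $T^{1-\alpha}$ eventually dominates any polylogarithmic threshold for every fixed $\alpha < 1$, so the $\alpha$-independent choice is feasible for $T$ large enough. For the lower bound, your construction is exactly the one in the paper's Lemma~\ref{lem:unknown-alpha-alice-regret-lower-bound}: a truthful $\bob_1$ with a lower midpoint and a $\bob_2$ with a higher midpoint that mimics $\bob_1$ up to a $\Theta(T^\beta)$ lying counter. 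Your deception window $(m_B-\varepsilon, m_B)$ corresponds to the paper's $(x, y]$ with $V_A([0,x]) = 2/3$ and $V_A([0,y]) = 5/6$, and your coupling argument (Alice's $O(T^\beta)$ guarantee against $\bob_1$ caps her window-cuts below the counter, so the transcripts against $\bob_1$ and $\bob_2$ coincide, leaving $\Omega(1)$ per-round regret against $\bob_2$) is the paper's argument almost verbatim.
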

	\begin{proof}[Proof sketch]
		Alice's strategy that achieves $O(T/\log T)$ regret follows the same template as her strategy from Proposition \ref{prop:exploiting_Bob_part1}. The only difference is that she sets $\eta$ differently (and much larger) to cover any possible regret bound Bob could have.
		
		The main idea of the lower bound is that, if Alice does not know the value of $\alpha$ in Bob's regret bound, she cannot know when she has true information about Bob's preferences. We exploit this by having a Bob with $O(T^{\beta})$ regret behave exactly like one with $O(T^{\alpha})$ regret but a different midpoint. Then Bob can hide his deception from an Alice with $O(T^{\beta})$ regret because he can tolerate more regret than Alice.
	\end{proof}
	
	Theorem~\ref{thm:response_bounded_regret_Bob} is implied by 
	Proposition~\ref{prop:exploiting_Bob_part1} and Proposition~\ref{prop:exploiting_Bob_part2}.
	
	\medskip 
	
	We briefly remark that the players' value densities must be bounded for Theorem \ref{thm:response_bounded_regret_Bob} to hold; see Remark \ref{no-bound-bob-beat-alice} in Appendix \ref{app:response_to_bounded_regret_bob} for a counterexample with unbounded value densities.
	
	\section{Equitable payoffs} \label{sec:equitable_payoffs_main}
	
	In this section, we give an overview of the proofs for Theorem \ref{thm:Alice_safety_payoffs_summary}, which shows how Alice can enforce equitable payoffs, and Theorem \ref{thm:safety_payoffs_summary}, which shows how Bob can enforce equitable payoffs. The formal proofs can be found in Appendix~\ref{app:safety_payoffs}.
	
	\subsection{Alice enforcing equitable payoffs} \label{sec:alice_enforcing_equitable_payoffs}
	
	We include a formal statement of  Theorem~\ref{thm:Alice_safety_payoffs_summary} next. It contains   the precise constants and  shows that, in fact, the guarantee that Alice can enforce holds at every point in time $t$, as opposed to just at the end.
	
	\begin{customthm}{\ref{thm:Alice_safety_payoffs_summary}}[Alice enforcing equitable payoffs; formal]
		In both the sequential and simultaneous settings, Alice has a pure strategy $S_A$, such that for every  Bob strategy $S_B$:
		\begin{itemize}
			\item   on every trajectory of play, Alice's  average payoff is  at least  $1/2 - o(1)$, while Bob's average payoff is at most $1/2 + o(1)$. More precisely, for all $t \in \{3, \ldots, T\}$:
			\begin{align}
				\frac{u_B(1, t)}{t} \leq \frac{1}{2} + \frac{5\Delta+11}{\ln(2t/5)} \qquad \mbox{and} \qquad
				\frac{u_A(1, t)}{t} \geq \frac{1}{2} - \frac{4}{\sqrt{t-1}}, \notag
			\end{align} 
			recalling that $\Delta$ is the upper bound on the players' value densities. 
		\end{itemize}
		%
		%
		Moreover, even if Bob's value density is unbounded, his average payoff will still converge to ${1}/{2}$.
	\end{customthm}
	
	\begin{proof}[Proof sketch]
		Alice's strategy is based on Blackwell approachability   \cite{blackwell1956}. The challenge is that Alice has to be prepared for an uncountably infinite variety of Bob's valuation functions, while the original construction in \cite{blackwell1956} works when the number of player types is finite. Another difference is that Alice's action space is also infinite, which turns out to be necessary.
		
		We get around the infinite-Bob issue in two steps. First, Alice's strategy defines a countably infinite set $\overline{\mathcal{V}}$ as a stand-in for the full variety of Bobs. We design $\overline{\mathcal{V}}$ to include arbitrarily good approximations to any valuation function.
		
		Second, we replace Blackwell's original finite-dimensional space with a countably-infinite-dimensional one, where the elements of $\overline{\mathcal{V}}$ are the axes. We define an inner product on this space and use it to adopt Blackwell's argument. Briefly, Alice's strategy tracks the average payoff to each type of Bob in $\overline{\mathcal{V}}$ and defines $\mathcal{S}$ to be the region of the space where all of them have payoffs at most $1/2$. In each round, she constructs a cut point which moves the Bobs' average payoff closer to $\mathcal{S}$, and in the limit traps them in $\mathcal{S}$.
		
		Under this strategy, Alice's payoff guarantee is mostly a byproduct of Bob's. If Bob and Alice have the same value density, then $u_A + u_B = 1$, so bounding Bob's payoff to $1/2$ also bounds Alice's to $1/2$. We achieve the substantially better bound on Alice's payoff by explicitly including her value density $v_A$ in the set  $\overline{\mathcal{V}}$ of Bobs, thus eliminating any approximation error.
	\end{proof}
	
	\subsection{Bob enforcing equitable payoffs} \label{sec:bob_enforcing_equitable_payoffs}
	
	We include a formal statement of  Theorem~\ref{thm:safety_payoffs_summary} next, with the precise constants.
	
	\begin{customthm}{\ref{thm:safety_payoffs_summary}}[Bob enforcing equitable payoffs; formal]
		\phantom{s}
		\phantom{s}
		\begin{itemize}
			\item \emph{In the sequential setting:}  Bob has a pure strategy $S_B$, such that for every Alice strategy $S_A$, on every trajectory of play, Bob's average payoff is at least $1/2 - o(1)$, while Alice's average payoff is at most  $1/2+o(1)$. More precisely,
			\begin{align}
				\frac{u_B}{T} \geq \frac{1}{2} - \frac{1}{\sqrt{T}} \qquad \mbox{and} \qquad  \frac{u_A}{T} \leq \frac{1}{2} + \left(\frac{\Delta}{2\delta}+2\right) \frac{1}{\sqrt{T}}, \notag 
			\end{align}
			recalling that $\delta$ and $\Delta$ are, respectively, the lower and upper bounds on the players' value densities. 
			\item \emph{In the simultaneous setting:}  Bob has a mixed strategy $S_B$, such that for every Alice strategy $S_A$,  both players have average payoff $1/2$ in expectation. 
		\end{itemize}  
	\end{customthm}

	\begin{proof}[Proof sketch]
		We cover the simultaneous setting first because it informs the sequential setting.
		\paragraph{Simultaneous setting.} Bob's algorithm is extremely simple: in each round, randomly select $L$ or $R$ with equal probability. The expected payoffs to each player follow  immediately.
		\paragraph{Sequential setting.} This strategy can be thought of as a derandomized version of the simultaneous-setting strategy. The simplest way to derandomize it would be to strictly alternate between $L$ and $R$, but if Bob runs that strategy Alice can easily exploit it.
		
		Instead, Bob mentally partitions the cake into $\sqrt{T}$ intervals $I_1, \ldots, I_{\sqrt{T}}$ of equal value to him. He then treats each interval $I_i$ as a separate cake, alternating between $L$ and $R$ for the rounds Alice cuts in $I_i$. Alice can still exploit this strategy on a single interval $I_i$, but doing so can only give her an average payoff of $1/2 + O(V_A(I_i)) = 1/2 + O(1/\sqrt{T})$. The proof shows that this bound applies for any Alice strategy. 
	\end{proof}
	
	\section{Fictitious play} \label{sec:fictitious_play}
	
	In this section we include a proof sketch of Theorem~\ref{thm_dynamic}, which analyzes the  fictitious play dynamic. The formal proof can be found in Appendix~\ref{app:fictitious_play}.
	
	\medskip 
	
	To define fictitious play, we introduce the  empirical frequency  and  empirical distribution of play:
	\begin{itemize} 
		\item The empirical frequency of Alice’s play
		up to (but not including) time $t$ is: 
		\[
		\phi_A^t(x) = \sum_{\tau = 1}^{t-1} \mathbbm{1}_{\{a_{\tau} = x\}} \; \;  \forall x \in [0,1]\,.
		\] 
		\item The empirical frequency of Bob's play   up to (but not including) time $t$ is:
		\[ \phi_B^t(x) = \sum_{\tau = 1}^{t-1} \mathbbm{1}_{\{b_{\tau} = x\}} \; \;  \forall x \in \{L,R\}\,.
		\]
	\end{itemize}
	The \emph{empirical distribution} of player $i$'s play up
	to (but not including) time $t$ is: $ 
	\mathfrak{p}_i^t(x) = \frac{\phi_i^t(x)}{(t-1)}
	$, where $x \in [0,1]$ for Alice and $x \in \{L,R\}$ for Bob.
	
	\begin{definition}(Fictitious play) \label{def:fictitious_play_dynamic}
		In round $t=1$, each player simultaneously selects an arbitrary action. 
		In every round $t=2,\ldots, T$, each player simultaneously best responds to the empirical distribution of the other player up to time $t$.
		If there are multiple best responses, the player chooses one arbitrarily. 
	\end{definition}

	Our main result in this section is  proving that the average payoffs under fictitious play converge to $(1/2, 1/2)$ and   quantifying the  rate of convergence. The precise statement is included next.
	\begin{customthm}{\ref{thm_dynamic}}
		When both Alice and Bob run fictitious play, regardless of tie-breaking rules, their average payoff will converge to $1/2$ at a rate of $O(1/\sqrt{T})$. Formally:
		\begin{align}
			& \left| \frac{u_A}{T} - \frac{1}{2} \right| \leq \frac{2\sqrt{10}}{\sqrt{T}} \qquad \mbox{and} \qquad  \left| \frac{u_B}{T} - \frac{1}{2} \right| \leq \frac{\sqrt{10}}{\sqrt{T}} \qquad  \forall T  \geq 5\,. \notag
		\end{align}
	\end{customthm}
	
	\begin{proof}[Proof sketch]
		To analyze the fictitious play dynamic, we define for each $t=0, \ldots, T$:
		\begin{itemize}
			\item $\alpha_t = r_t - \ell_t$, where $r_t$ is the number of times Bob picked $R$ up to round $t$ and $\ell_t$ is the number of times he picked $L$
			\item $\beta_t = \sum_{\tau=1}^t \bigl(2V_B([0, a_{\tau}]) - 1\bigr)$.
		\end{itemize}

		\begin{figure}[h!]
			\centering
			\subfigure[The sequence $\alpha_t$.]{
				\includegraphics[scale=0.6]{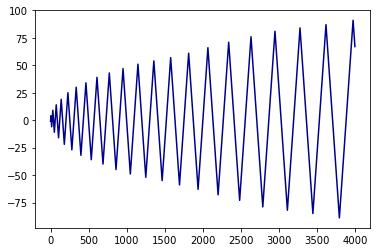}
			}\;  
			\subfigure[The sequence $\beta_t$.]{
				\includegraphics[scale=0.6]{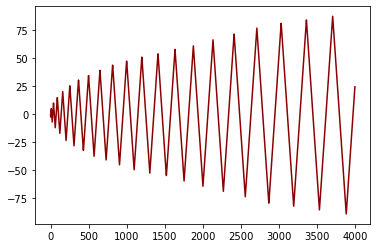}
			}\;
			\subfigure[The sequence $\rho_t$.]{
				\includegraphics[scale=0.6]{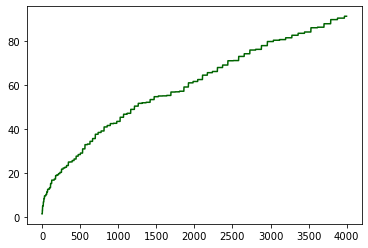}
			}
			\caption{Illustration of the  sequences $\{\alpha_t\}_{t=1}^{\infty}$, $\{\beta_t\}_{t=1}^{\infty}$, and $\{\rho_t\}_{t=1}^{\infty}$ for the instance with trajectories shown in  Figure~\ref{fig:fictitious_play}. The X axis shows the round number $t = 1, \ldots T$ and the Y axis shows the value of the variable being plotted.}
			\label{fig:fictitious_play_variables}
		\end{figure}

		\begin{figure}[h!]
			\centering
			\subfigure[The spiral over the first $100$ rounds.]{
				\includegraphics[scale=0.58]{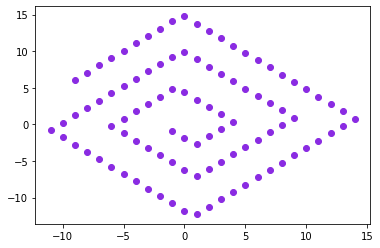}
			}\;  
			\subfigure[The spiral over the first $1000$ rounds.]{
				\includegraphics[scale=0.58]{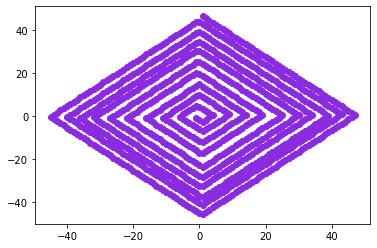}
			}
			\caption{Scatter plot  of the sequence $(\alpha_t, \beta_t)_{t \geq 1}$, illustrating the spiral for the instance with trajectories shown in  Figure~\ref{fig:fictitious_play}, where the sequences $\alpha_t$ and $\beta_t$ are illustrated separately  in Figure~\ref{fig:fictitious_play_variables}.}
			\label{fig:fictitious_play_spiral}
		\end{figure} 
		
		The quantities $\alpha_t$ and $\beta_t$ control what happens under fictitious play: Alice's decision in round $t+1$ is based on $\alpha_t$ and Bob's decision in round $t+1$ is based on $\beta_t$. These decisions in turn affect $\alpha_{t+1}$ and $\beta_{t+1}$, forming a dynamical system. In general, these dynamics result in a counterclockwise spiral through $\alpha$-$\beta$ space. 
		Figure~\ref{fig:fictitious_play_variables} illustrates the sequences $\alpha_t$ and  $\beta_t$ for the instance in  Figure~\ref{fig:fictitious_play}.

		We define  $\rho_t = |\alpha_t| + |\beta_t|$  and  formalize this spiral,  by showing that the sequence $\{\rho\}_{t=0}^{T}$ is non-decreasing and analyzing the change in $(\alpha_t, \beta_t)$ from round to round. Figure~\ref{fig:fictitious_play_variables} illustrates the parameter $\rho_t$ over time. Figure~\ref{fig:fictitious_play_spiral} illustrates the spiral (associated with  the same trajectory as in Figure~\ref{fig:fictitious_play} and \ref{fig:fictitious_play_variables}), where the spiral is visualized as a scatter plot of the sequence $(\alpha_t, \beta_t)_{t \geq 1}$.

		We first use these dynamics to bound Bob's payoff. Bob's payoff can be almost directly read off due to changes in $\beta_t$ closely matching changes in Bob's payoff. Bob's total payoff to round $t$ turns out to be of the order $t/2 \pm \rho_t$, so bounding the rate at which the spiral expands also bounds Bob's payoff.
		
		We then use the dynamics to bound the total payoff to Alice and Bob. Alice can only cut in the interior of the cake when $\alpha_t = 0$, which happens less and less often as the spiral expands. The players' total payoff when Alice cuts at one end of the cake is $1$, so across $T$ rounds we show the sum of cumulative payoffs of the players is of the order $T \pm \Theta(\sqrt{T})$.  
		
		Combining the bound on the total payoff with the bound on Bob's payoff gives a bound for Alice's payoff.
	\end{proof}
	
	\section{Concluding remarks} \label{sec:concluding_remarks_main}
	
	There are several directions for future work. One direction is to consider a wider class of regret benchmarks and understand how the choice of benchmark influences the outcomes reached. Moreover,  what payoff profiles are attained when the players use randomized algorithms  such as exponential weights to update their strategies? Finally, it would also make sense to consider  settings where the cake has both good  and bad parts.
	
	%
	
	
	
	\bibliographystyle{alpha}

	\bibliography{ref}

	\appendix 
	
	\section{Appendix: Alice exploiting Bob} \label{app:alice_exploiting_bob}
	
	In this section we present the proofs for Proposition~\ref{prop:myopic-bob}, showing how Alice can exploit a myopic Bob that  chooses his favorite piece in each round, and  for Theorem~\ref{thm:response_bounded_regret_Bob}, where Bob is nearly myopic. 
	
	\subsection{Appendix: Exploiting a Myopic Bob}
	
	\begin{customprop}{\ref{prop:myopic-bob}}\label{prop:myopic-bob-2}
		If Bob plays myopically in the sequential setting, then Alice has a strategy that ensures her  Stackelberg regret is $O(\log{T})$.
	\end{customprop}

	\begin{proof}
		We consider an explore-then-commit type of algorithm for Alice.
		In the exploration phase, Alice does binary search to find  Bob's midpoint (within accuracy of ${1}/{T}$). In the commitment (exploitation) phase, Alice repeatedly cuts at Bob's approximate midpoint. This leads to Alice getting nearly her Stackelberg value in nearly every round.
		Figure~\ref{fig:Alice_searching_midpoint_example_with_myopic_Bob}  shows a visualization of a cake instance with Alice and Bob's midpoints, respectively, with Alice's search process.

		\bigskip 
		\bigskip 
		
		\begin{figure}[h!]
			\centering 
			\includegraphics[scale=0.6]{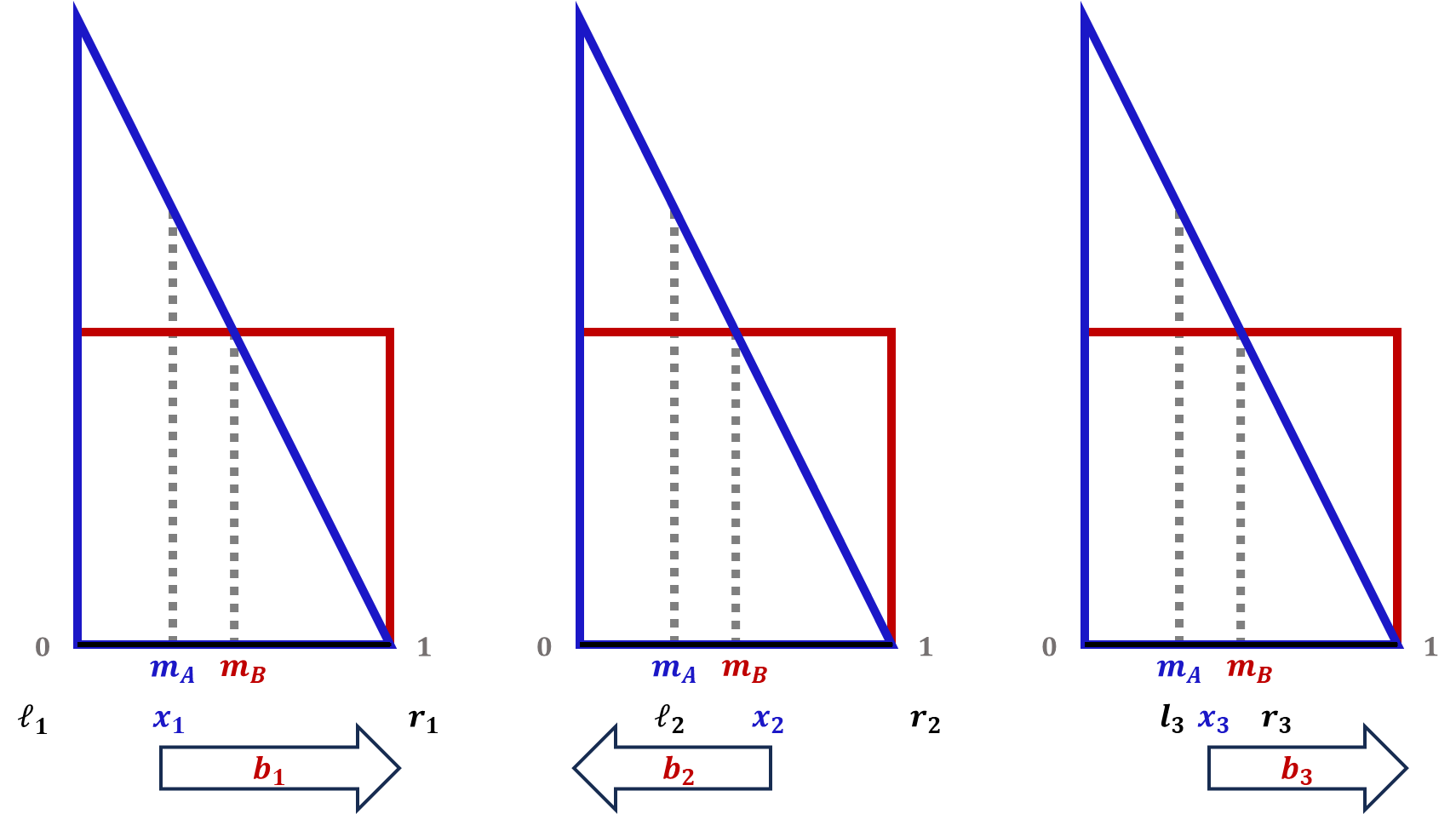}
			\caption{Alice's algorithm against myopic Bob in the exploration phase. Alice's density is shown with blue and her midpoint is $m_A$, while  Bob's density is shown with red and his midpoint is $m_B$. The algorithm initialized $\ell_1 = 0$ and $r_1 =1$ and then re-computes them iteratively depending on Bob's answers. The constructed interval $[\ell_t,r_t]$ shrinks exponentially and becomes closer to $m_B$ as the time $t$ increases.}
			\label{fig:Alice_searching_midpoint_example_with_myopic_Bob}
		\end{figure}

		Alice's algorithm is described precisely in Figure~\ref{alg:alice_strategy_myopic_bob}.

		\definecolor{mycolor}{rgb}{0.96,0.96,0.96}
		\begin{tcolorbox}[colback=mycolor,colframe=black]
			{\bf Alice's algorithm when Bob is myopic:}
			
			\paragraph{Initialization} Set $\ell_1 = 0,r_1 = 1$ and $\tau = \ln(T)$.
			\paragraph{Exploration} For $t=1,2,\ldots, \tau:$
			\begin{itemize}
				\item Cut at a point $x_t \in [\ell_t, r_t]$ such that $V_A([\ell_t, x_t]) = V_A([x_t, r_t])$. Then observe Bob's action $b_t$.
				\item If $b_t = L$, then set $(\ell_{t+1},r_{t+1}) = (\ell_t, x_t)$.
				\item Else if $b_t = R$, then set $(\ell_{t+1},r_{t+1}) = (x_t, r_t)$.
			\end{itemize}
			\paragraph{Exploitation} For $ t = \tau+1, \ldots, T$:
			\begin{itemize}
				\item If $m_A \leq \ell_\tau$, then cut  at $\ell_\tau - {1}/{T}$.
				\item If $m_A \geq r_\tau$, then cut  at $r_\tau + {1}/{T}$.
			\end{itemize}
			\captionof{figure}{Algorithm A.1}\label{alg:alice_strategy_myopic_bob}
		\end{tcolorbox}
		
		We show that Algorithm A.1 in Figure~\ref{alg:alice_strategy_myopic_bob} with $\tau = \Theta(\ln T)$ gives the desired regret bound in several steps.
		\smallskip 
		
		\paragraph{Bob's midpoint lies in the interval $[r_t, \ell_t]$ for all $t$}
		To this end, we first claim that Bob's midpoint lies in $[\ell_t,r_t]$ at each round $t$. We proceed by induction on $t$.
		The base case is $t=1$ clearly holds since $\ell_1 = 0$ and $r_1 = 1$, so $m_B \in [\ell_1, r_1]$.
		Suppose that $m_B \in [\ell_t,r_t]$ for some $t \geq 1$.
		Given that Alice cuts at $x_t \in (\ell_t, r_t)$, if $b_t = L$, this implies that $m_B \in [\ell_t, x_t]$.
		Hence $m_B \in [\ell_t, x_t] = [\ell_{t+1}, r_{t+1}]$.
		This argument also holds  when Bob chooses $R$.
		Thus by induction, we conclude that $m_B \in [\ell_t,r_t]$ for every $t \in [T]$.
		
		\paragraph{Alice's midpoint satisfies $m_A \notin (\ell_t,r_t)$ for every $t \ge 2$}
		Now, during the execution of the algorithm, we will next show that Alice's midpoint $m_A$ satisfies either of $m_A \le \ell_t$ and $m_A \ge r_t$, \ie $m_A \notin (\ell_t,r_t)$ for every $t \ge 2$.
		To see this, recall that in the first round, Alice cuts $x_1 = m_A$.
		If $b_1 = L$, then $\ell_2 = 0$ and $r_2 = m_A$.
		In this case, $[\ell_2,r_2] = [0,m_A]$, so $m_A \notin (\ell_2,r_2)$.
		Afterwards, it still holds since the intervals only shrink, \ie $(\ell_{t+1},r_{t+1}) \subset (\ell_t,r_t)$.
		Similarly, consider the case that $b_1 = R$.
		Then, we have $\ell_2 =m_A$ and $r_2 = 1$.
		Thus $[\ell_2,r_2] = [m_A, 1]$, which implies that $m_A \notin (\ell_2,r_2)$.
		Again since the intervals $(\ell_t,r_t)$ only shrink, we conclude that $m_A \notin (\ell_t,r_t)$ for every $t \ge 2$.
		
		\paragraph{Interval exponentially shrinks}
		We have $V_A([\ell_t, r_t]) = V_A([\ell_{t-1}, r_{t-1}])/2$ for every $t \in [T]$, as we shrink the interval by cutting a point that equalizes Alice's value for both parts within the interval.
		This implies that $V_A([\ell_\tau, r_\tau]) = 2^{-\tau +1}$.
		
		\paragraph{Bounding exploitation phase regret}
		In the exploitation phase, due to the observation above, we have two cases: (i) $m_A \le \ell_\tau$ and (ii) $m_A \ge r_\tau$.
		We will prove that in either case, Alice's single-round regret in the exploitation phase is at most $2^{-\tau+1} + {\Delta}/{T}$.
		\begin{itemize}
			\item For the first case of $m_A \le \ell_\tau$, Alice keeps cutting at $\ell_\tau - {1}/{T}$ for the rest of rounds as per the algorithm's description.
			Then, Bob will myopically choose $R$ and Alice will obtain $V_A([0,\ell_\tau - 1/T])$.
			In this case, we have that $m_A \le m_B$ since $m_B \in [\ell_\tau, r_\tau]$.
			Then, Alice's single-round regret in the exploitation phase is bounded by
			\begin{align*}
				V_A([0,m_B]) - V_A([0,\ell_\tau-1/T]) &=
				V_A([\ell_\tau, m_B]) + V_A([\ell_{\tau}-1/T, \ell_{\tau}]) \\
				&\le
				V_A([\ell_\tau, r_\tau]) + \frac{\Delta}{T} \\
				&= 2^{-\tau +1} + \frac{\Delta}{T}.
			\end{align*}
			\item Otherwise suppose $m_A \ge r_\tau$.
			According to the algorithm, Alice keeps cutting $r_\tau+{1}/{T}$ for all the rest of the rounds, and Bob will respond with $L$. 
			In this case we have $m_A \ge m_B$ since $m_B \in [\ell_\tau, r_\tau]$.
			Similarly, Alice's single-round regret can be upper-bounded by
			\begin{align*}
				V_A([m_B,1]) - V_A([r_\tau+1/T,1])
				&=
				V_A([m_B, r_\tau]) + V_A([r_\tau, r_{\tau} + 1/T]) \\
				&\le
				V_A([\ell_\tau, r_\tau]) + \frac{\Delta}{T} \\
				&= 2^{-\tau +1} + \frac{\Delta}{T}.
			\end{align*}
		\end{itemize}
		Hence in both cases, Alice's single-round regret in the exploitation phase is at most  $2^{-\tau +1} + {\Delta}/{T}$.
		
		\paragraph{Final regret bound}
		Overall, by simply upper-bounding Alice's single-round regret in the exploration phase by $1$, we obtain the following upper bound for the total regret:
		\begin{align*}
			\tau \cdot 1 + (T- \tau) \cdot \left(2^{-\tau + 1} + \frac{\Delta}{T}\right).
		\end{align*}
		Plugging $\tau = \ln(T)$, we obtain the regret bound of $O(\ln T)$, which completes the proof.\footnote{We do not optimize over $\tau$.}
	\end{proof}

	\subsection{Appendix: Exploiting a Nearly Myopic Bob} \label{app:response_to_bounded_regret_bob}
	
	In this section we prove Theorem~\ref{thm:response_bounded_regret_Bob}, which explains the payoffs achievable by Alice when Bob has a strategy with sub-linear regret with. We restate it here for reference.
	\begin{customthm}{\ref{thm:response_bounded_regret_Bob}}[Exploiting a nearly myopic Bob] 
		Let $\alpha \in [0,1)$. Suppose Bob plays a strategy that ensures  his regret 
		is 
		$O(T^{\alpha})$. Let $\mathcal{B}^{\alpha}$ denote the set of all such Bob strategies. 
		\begin{description}
			\item[$\; \;\; \bullet$]  If Alice knows $\alpha$, she has a strategy $S_A = S_A(\alpha)$ that ensures her Stackelberg regret   is ${O}\Bigl(T^{\frac{\alpha+1}{2}} \log T\Bigr)$. Moreover, Alice's Stackelberg regret is  $\Omega\left(T^{\frac{\alpha+1}{2}}\right)$ for some Bob strategy in $\mathcal{B}^{\alpha}$.
			\item[$\; \;\; \bullet$]  If Alice does not know $\alpha$, she has a strategy $S_A$ that ensures her Stackelberg regret is $O\Bigl(\frac{T}{\log{T}}\Bigr)$. 
			{This is essentially optimal: if $S_A$ guarantees Alice Stackelberg regret 
				$O(T^{\beta})$ against all Bob strategies in $\mathcal{B}^{\alpha}$ for some $\beta \in [0,1)$, then $S_A$ has Stackelberg regret $\Omega(T)$ for some Bob strategy in $\mathcal{B}^{\beta}$.}
		\end{description}
	\end{customthm}
	
	\begin{proof}[Proof of Theorem~\ref{thm:response_bounded_regret_Bob}]
		The known-$\alpha$ upper bound of ${O}\left(T^{\frac{\alpha+1}{2}} \log T\right)$ follows from invoking Proposition~\ref{alice-beat-no-regret-tighter-update} with $f(T) = T^{\alpha}$. The $\Omega\left(T^{\frac{\alpha+1}{2}}\right)$ lower bound is Proposition~\ref{prop:alice-known-alpha-lower-bound}.
		The  lower bound for the case where $\alpha$ is unknown follows by Lemma~\ref{lem:unknown-alpha-alice-regret-lower-bound}. The upper bound follows by Proposition~\ref{alice-beat-no-regret-tighter-update} with $f(T) = \frac{T}{(\log T)^4}$.
	\end{proof}
	
	Both  upper bounds follow the same template, which is captured by the following proposition.
	
	\begin{proposition}
		\label{alice-beat-no-regret-tighter-update}
		Suppose Bob's strategy has regret $O(f(T))$, for $f(T) \in o\left(\frac{T}{(\log T)^2}\right)$ and $f(T) \geq 1$. If Alice knows $f$, then she has a strategy that guarantees her  Stackelberg regret is $O\left(\sqrt{T \cdot f(T)} \log T\right)$. In particular, if 
		\begin{itemize}
			\item Bob's strategy has regret at most $r f(T)$, for some $r > 0$; and  
			\item $T$ is large enough so that $T >  \exp\left(\frac{4r\Delta}{\delta}\right)$ and $f(T) < \frac{T}{(\ln T)^2}$; 
		\end{itemize}
		then Alice's payoff satisfies:
		\[
		u_A \geq T \cdot u_A^* - \left(\frac{5}{\ln 2} + 6\right) \sqrt{f(T) \cdot T} \ln T\,.
		\]
	\end{proposition}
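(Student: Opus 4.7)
The plan is to instantiate the six-way search from Proposition~\ref{prop:exploiting_Bob_part1} with a batch size $\eta$ tuned to the (now arbitrary) regret function $f$, namely $\eta \asymp \sqrt{T f(T)}$, and to recurse until the current interval has $V_A$-width on the order of $\sqrt{f(T)/T}$. The argument then splits cleanly into an exploration cost and an exploitation cost, which $\eta$ is chosen to balance.

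First, I would fix Alice's algorithm. She maintains an interval $I_i = [u_i, w_i]$ guaranteed to contain Bob's midpoint $m_B$, divides it into six sub-intervals of equal $V_A$-value by cuts $a_{i,1} < \cdots < a_{i,5}$, and in the next $5\eta$ rounds cuts $\eta$ times at each $a_{i,j}$, recording Bob's majority answer $c_{i,j}$ (with $c_{i,0} = R$, $c_{i,6} = L$). She identifies the unique flip index $j$ with $c_{i,j} = R$ and $c_{i,j+1} = L$, then recurses on the widened interval $[a_{i,\max(j-1,0)}, a_{i,\min(j+2,6)}]$. She iterates until the $V_A$-width of $I_i$ drops below the target $\varepsilon = \sqrt{f(T)/T}$, and spends the remaining rounds cutting at a fixed point inside $I_i$ (say its $V_A$-midpoint).

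Second, I would show that the majority vote at $a_{i,j}$ is reliable whenever $|V_B([0, a_{i,j}]) - 1/2| \geq \Delta^*$: each deviation of Bob from his preferred piece at such a cut point costs him at least $2\Delta^*$ in regret, so a majority of wrong answers over $\eta$ rounds would require Bob's cumulative regret to exceed $\eta \Delta^*$, contradicting his budget $r f(T)$ as long as $\eta \Delta^* > r f(T)$. Converting $V_A$-widths to $V_B$-distances via the density bounds $\delta \leq v_i \leq \Delta$ shows that the minimum $\Delta^*$ one needs to handle, across all iterations, is of order $\sqrt{f(T)/T}$, achieved at the final iteration; with $\eta = c \sqrt{T f(T)}$ for a constant $c$ absorbing $r, \delta, \Delta$ the threshold holds at every scale. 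Cut points where the majority may be unreliable lie within a single sub-interval of $m_B$, and the one-sub-interval buffer on each side of the flip index ensures $m_B \in I_{i+1}$ regardless.

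Third, I would bound the two components of regret. The $V_A$-width halves each iteration (the recursion keeps exactly three of six equal sub-intervals), so the number of iterations needed to drive the width below $\varepsilon$ is $K = \log_2(1/\varepsilon) = O(\log T)$, and the hypothesis $f(T) = o(T/(\log T)^2)$ together with $T > \exp(4r\Delta/\delta)$ makes $\eta$ small enough that $5 \eta K < T$. Exploration therefore contributes at most $5 \eta K \leq (5/\ln 2)\sqrt{T f(T)} \ln T$ rounds of unit regret. During exploitation, Alice's cut point lies within $V_A$-distance $\varepsilon$ of $m_B$, and by Step~2 the majority answers also tell her which side of $m_B$ to favor, so her per-round Stackelberg regret is $O(\varepsilon)$; summed over $T$ rounds this is $O(\sqrt{T f(T)})$, accounting for the additive $6\sqrt{T f(T)} \ln T$ term after one absorbs low-order constants. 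The two bounds combine to $\left(5/\ln 2 + 6\right) \sqrt{T f(T)} \ln T$, as claimed.

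The main obstacle will be formalizing Step~2 at the boundary: when $m_B$ sits within one sub-interval of a cut point $a_{i,j}$, Bob's lies there are essentially free, so the corresponding $c_{i,j}$ carries no information and the flip index $j$ is not uniquely determined. The widened recursion $[a_{i,j-1}, a_{i,j+2}]$ is what rescues the argument, since the ambiguity can shift the flip by at most one index, and the buffer sub-interval absorbs it. Once this is in place, the rest of the proof is bookkeeping: a geometric-sum bound on the number of iterations, a density-based conversion between $V_A$- and $V_B$-gaps, and a balance between exploration and exploitation that is forced by the single free parameter $\eta$.
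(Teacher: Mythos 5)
Your template matches the paper's: explore-then-commit with a six-way batch-$\eta$ binary search and a widened recursion that absorbs a single unreliable majority. That part is sound. But the commit phase has a genuine gap, and it is not the "bookkeeping" you describe. Cutting at the $V_A$-midpoint of $I_n$ is not a valid commit action: if $m_A < m_B$ and the midpoint lands to the right of $m_B$, honest Bob takes $L$ and Alice receives $V_A([a,1]) < 1 - u_A^*$, a $\Theta(1)$ per-round loss rather than $O(\varepsilon)$. The paper's algorithm commits to $x_n$, $y_n$, or $m_A$ depending on where $m_A$ lies relative to $[x_n, y_n]$, which is what keeps Alice on the correct side of $m_B$. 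Moreover, even with a correct cut, you never bound Bob's lies \emph{during} exploitation; each such lie still costs Alice $\Theta(1)$. Controlling those rounds requires knowing $V_B(\textsc{Intv}[\chi, m_B]) > r\sqrt{f(T)/T}$ whenever $\chi \in \{x_n, y_n\}\setminus\{0,1\}$, which is Properties 4--5 of Lemma~\ref{lm:exploration-construction} and is proved by an induction that interleaves with the binary-search correctness, case-splitting on whether Bob's one lie was at the flip index, one past it, or absent.

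There is also a quantitative mismatch with the "In particular" clause. Stopping the recursion at $V_A$-width $\sqrt{f(T)/T}$ makes the smallest sub-interval have $V_B$-value only about $\frac{\delta}{6\Delta}\sqrt{f(T)/T}$, so reliability of the majority vote forces $\eta = c\sqrt{Tf(T)}$ with $c$ of order $r\Delta/\delta$; the exploration term then carries $\frac{5c}{2\ln 2}$, not $\frac{5}{\ln 2}$. The paper instead keeps $\eta = \lceil\sqrt{Tf(T)}\rceil$ and stops at $V_A$-width roughly $3\sqrt{f(T)/T}\ln T$, using the hypothesis $T > \exp(4r\Delta/\delta)$ so that the extra $\ln T$ in the interval width absorbs $r\Delta/\delta$ for free, yielding the clean constant $\frac{5}{\ln 2} + 6$.
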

	
	Before proving the proposition, we present the algorithm that Alice will run  to beat a Bob with a regret guarantee of $O(f(T))$.
	\definecolor{mycolor}{rgb}{0.96,0.96,0.96}
	\begin{tcolorbox}[colback=mycolor,colframe=black]
		{\bf \em Alice's strategy when Bob's regret is at most $r \cdot f(T)$, for some constant $r > 0$:} \\
		
		\textit{Informational assumption: Alice needs to know $f$ and $T$, but not $r$.} 
		
		\paragraph{\bf Initialization} Set $x_0 = 0, y_0 = 1$,  $\eta = \left\lceil \sqrt{f(T) \cdot T}\right\rceil$, and $n = \left\lfloor -\log_2 \left(3\sqrt{f(T)/T} \ln T\right)\right\rfloor$.
		
		\paragraph{\bf Exploration} For  $i = 0, 1, \ldots, n-1$: 
		\begin{itemize}
			\item  {\em \bf Step 1:}  Set $a_{i,0} = x_i$ and $a_{i,6} = y_i$. For  $j \in [5]$, let $a_{i, j}$ be the point with \[ V_A([x_i, a_{i, j}]) = \frac{j}{6} V_A([x_i, y_i])\,.
			\] 
			\item {\em \bf Step 2:} For  $j \in [5]$:
			\begin{itemize} 
				\item  Cut at $a_{i, j}$ for $\eta$ rounds. Define $c_{i, j} = L$ if the majority of Bob's answers were left when the cut point was   $a_{i, j}$, and  $c_{i, j} = R$ otherwise.
			\end{itemize}
			\item {\em \bf Step 3:} 
			\begin{description}
				\item[$\;\;$ \em (3.a):] If $c_{i, j} = L$ $\forall j \in [5]$, then set $x_{i+1} = x_i$ and $y_{i+1} = a_{i, 3}$.
				\item[$\;\;$ \em (3.b):]  If $c_{i, j}=R$  $\forall j \in [5]$, then set $x_{i+1} = a_{i, 3}$ and $y_{i+1} = y_i$.
				\item[$\;\;$ \em (3.c):] Else, there exists a unique $k \in [4]$ such that $c_{i, j} = R$ for all $j \leq k$ and $c_{i, j} = L$ for all $j > k$. Set $x_{i+1} = a_{i, k-1}$ and $y_{i+1} = a_{i, k+2}$.
			\end{description}
		\end{itemize}
		\paragraph{\bf Exploitation} For the rest of the rounds, Alice cuts at $\chi$ based on the following cases:
		\begin{align*}
			\chi = 
			\begin{cases}
				x_n & m_A < x_n \\
				y_n & m_A > y_n \\
				m_A & m_A \in [x_n, y_n].
			\end{cases}    
		\end{align*}
		\captionof{figure}{Algorithm A.2}\label{alg:alice_strategy_nearly_myopic_bob}
	\end{tcolorbox}
	
	\begin{proof}[Proof of Proposition~\ref{alice-beat-no-regret-tighter-update}]
		Overall, Alice will use an explore-then-commit style of algorithm:
		\begin{itemize}
			\item     In the exploration phase, Alice will conduct a variant of binary search to locate Bob's midpoint $m_B$ within an accuracy of $O\left(\sqrt{f(T)/T} \log T\right)$.
			\item     In the exploitation phase, Alice will cut near the estimated midpoint for the rest of the rounds.
		\end{itemize}
		The main difficulty that Alice encounters is to precisely locate Bob's midpoint in the exploration phase, since Bob can fool Alice if she cuts sufficiently close to his midpoint.
		We overcome this challenge by having Alice's algorithm stay  far enough  from $m_B$ so that Bob is forced to answer  truthfully most of the time.
		
		\paragraph{Notation.}    Let $w = \sqrt{f(T)/T} \ln T$. Then $n = \lfloor -\log_2 (3w) \rfloor$. Since  $f(T) < \frac{T}{(\ln T)^2}$ by the assumption in the proposition statement, we have $w < 1$ and thereby $n \geq 0$. Also recall the proposition statement assumes that  Bob's strategy  guarantees him a regret of at most $r f(T)$ for some $r > 0$. Moreover, $T$ was chosen such that $T > \exp \left(\frac{4r\Delta}{\delta}\right)$.
		
		Consider the Alice strategy described in Algorithm A.2 (Fig. \ref{alg:alice_strategy_nearly_myopic_bob}). 
		By Lemma~\ref{lm:atmost-one-deviation}, the exploration phase in Alice's strategy is well-defined. 
		
		Next we derive some useful observations and then combine them to upper-bound Alice's regret.

		\paragraph{Useful observations.}  By Lemma~\ref{lm:exploration-construction}, we have $m_B \in [x_n, y_n]$.  Consider the cut point $\chi$ in the exploitation phase.
		We write $\textsc{Intv}[x,y]$ to denote the interval $[x,y]$ if $y \ge x$ and $[y,x]$ if $x > y$.
		Then, we obtain
		\begin{align}
			V_A(\textsc{Intv}[\chi, m_B]) 
			&\le V_A([x_n,y_n])
			\explain{By definition of $\chi$}
			\\
			&= 2^{-n} \explain{By property 2 of Lemma \ref{lm:exploration-construction}}
			\\
			&\leq 
			2^{\log_2 (3w) + 1} 
			\explain{Plugging in $n$ and using $-\lfloor -x\rfloor \le x+1$ }
			\notag
			\\
			&= 
			6\sqrt{\frac{f(T)}{T}} \ln T, 
			\label{ineq:11041541}
		\end{align}
		where the last identity in \eqref{ineq:11041541} holds by definition of $w$.
		
		To upper-bound the number of times that Bob chooses the piece he likes less  in the exploitation phase, we consider the following three cases with respect to $\chi$:
		\begin{description}
			\item[$\;\;$(a)] If $\chi=x_n$ then $m_A < x_n$. Thus $x_n \neq 0$, so $V_B([\chi, m_B]) > r\sqrt{f(T)/T}$ by Lemma~\ref{lm:exploration-construction}. Since Bob's regret is at most  $r f(T)$, it follows that Bob  takes the wrong piece at most $\frac{1}{2}\sqrt{f(T) \cdot T}$ times.
			\item[$\;\;$(b)] If $\chi=y_n$ then $m_A > y_n$. Thus $y_n \neq 1$, so $V_B([m_B, \chi]) > r\sqrt{f(T)/T}$ by Lemma~\ref{lm:exploration-construction}. Since Bob's regret is at most  $r f(T)$, it follows that Bob  takes the wrong piece at most $\frac{1}{2} \sqrt{f(T) \cdot T}$ times.
			\item[$\;\;$(c)] If $\chi = m_A$, then there is no wrong piece because Alice values both equally. Thus this case does not increase the count of incorrect decisions.
		\end{description}
		
		\paragraph{Putting it all together.}   
		In the exploration phase, Alice accumulates regret at most $n \cdot 5\left\lceil \sqrt{f(T) \cdot T} \right\rceil$, since that is the length of the exploration phase.
		In the exploitation phase, the regret comes from two sources:
		\begin{itemize}
			\item The gap between $\chi$ and $m_B$, which is bounded in equation \eqref{ineq:11041541}.
			\item The rounds in the exploitation phase in which  Bob chooses his least favorite piece. There are at most  $\frac{1}{2}\sqrt{f(T) \cdot T}$ such rounds by cases (a-c). Thus Alice's cumulative regret due to these rounds is also at most $\frac{1}{2}\sqrt{f(T) \cdot T}$.
		\end{itemize}
		
		Then Alice's overall regret  is at most:
		\begin{align*}
			n \cdot 5\left\lceil \sqrt{f(T) \cdot T} \right\rceil 
			+ T \cdot &6\sqrt{f(T)/T} \ln T + \frac{1}{2} \sqrt{f(T) \cdot T} 
			\\
			\leq& 
			n \cdot 10\sqrt{f(T) \cdot T} + T \cdot 6\sqrt{f(T)/T} \ln T + \frac{1}{2} \sqrt{f(T) \cdot T} 
			\explain{Since $\lceil x \rceil \le 2x$ $\; \forall x \geq 1$}
			\\
			\le& 
			\left(\frac{5}{\ln 2} + 6\right) \sqrt{f(T) \cdot T} \ln T,
			\explain{Plugging in $n$ and rearranging}
		\end{align*}
		which is $O\left(\sqrt{f(T) \cdot T} \ln T\right)$. This completes the proof. 
	\end{proof}

	The following lemma shows that the exploration phase is well-defined.
	\begin{lemma}\label{lm:atmost-one-deviation}
		Alice's strategy from Algorithm A.2 (Fig. \ref{alg:alice_strategy_nearly_myopic_bob}) has the following properties:
		\begin{description}
			\item[$\;\;\;(i)$] If step $(3.c)$ is executed in   Alice's exploration phase, then there is a unique index $k \in [4]$ such that $c_{i,j} = R$   $\forall j \le k$ and $c_{i,j} = L$ $\forall j > k$. 
			\item[$\;\;\;(ii)$] For each $j \in [5]$, define $\tilde{c}_{i,j} = L$ if Bob prefers  $[0, a_{i,j}]$ to $[a_{i,j},1]$ and $\tilde{c}_{i,j} = R$ otherwise. 
			If there exists $j \in [5]$ such that $\tilde{c}_{i,j} \neq c_{i,j}$, then  
			$m_B \in (a_{i,j-1}, a_{i,j+1})$. 
			\item[$\;\;\;(iii)$] For all $i \in \{0, \dots, n-1\}$ and $j \in \{0, 1, \ldots, 5\}$, we have  $V_B([a_{i, j}, a_{i, j+1}]) > 2r\sqrt{f(T)/T}$.
		\end{description}
	\end{lemma}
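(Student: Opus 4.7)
My plan is to prove the three parts in the order $(iii)$, $(ii)$, $(i)$, using each as a lemma for the next. The density-ratio bound in $(iii)$ is the workhorse that powers the regret-based contradictions in $(ii)$ and $(i)$.

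I will first establish $(iii)$ by a direct calculation. Each of the three update rules in step $(3)$ of Algorithm A.2 halves $V_A([x_i, y_i])$: rules $(3.a)$ and $(3.b)$ cut at $a_{i,3}$, which is the $V_A$-midpoint of $[x_i, y_i]$, and rule $(3.c)$ picks out three of the six equal $V_A$-pieces. Consequently $V_A([x_i, y_i]) = 2^{-i}$ and $V_A([a_{i,j}, a_{i,j+1}]) = 2^{-i}/6$, and translating into $V_B$ via $v_A \le \Delta$, $v_B \ge \delta$ gives $V_B([a_{i,j}, a_{i,j+1}]) \ge (\delta/(6\Delta)) \cdot 2^{-i}$. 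The definition of $n$ yields $2^{-i} \ge 6w$ for $i \le n-1$, where $w = \sqrt{f(T)/T}\,\ln T$, and the hypothesis $T > \exp(4r\Delta/\delta)$ then converts this into the desired $V_B([a_{i,j}, a_{i,j+1}]) > 2r\sqrt{f(T)/T}$.

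For $(ii)$ I will argue by contradiction. Suppose $c_{i,j} \ne \tilde{c}_{i,j}$ but $m_B \notin (a_{i,j-1}, a_{i,j+1})$. The condition $c_{i,j} \ne \tilde{c}_{i,j}$ means that in strictly more than $\eta/2$ of the $\eta$ rounds in which Alice cut at $a_{i,j}$, Bob picked his less-preferred piece. The single-round regret from each such pick equals $2 V_B(\textsc{Intv}[m_B, a_{i,j}])$, and the location of $m_B$ outside $(a_{i,j-1}, a_{i,j+1})$ forces this quantity to dominate either $2V_B([a_{i,j-1}, a_{i,j}])$ or $2V_B([a_{i,j}, a_{i,j+1}])$, which by $(iii)$ exceeds $4r\sqrt{f(T)/T}$. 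Aggregating over the more than $\eta/2$ deviating rounds and using $\eta \ge \sqrt{f(T)T}$ yields total regret strictly greater than $2rf(T)$, contradicting Bob's $rf(T)$ regret guarantee.

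For $(i)$ I will combine $(ii)$ with a second, telescoping regret argument. The truthful answers satisfy $\tilde{c}_{i,j} = R$ iff $a_{i,j} < m_B$, so $\tilde{c}_{i,\cdot}$ is automatically of the form $R^{k^*} L^{5-k^*}$. A non-monotone $c_{i,\cdot}$ would therefore require Bob to lie at two positions $j_1 < j_2$, and $(ii)$ forces the two deviation intervals $(a_{i,j_1-1}, a_{i,j_1+1})$ and $(a_{i,j_2-1}, a_{i,j_2+1})$ to overlap, pinning $j_2 = j_1 + 1$ and $m_B \in (a_{i,j_1}, a_{i,j_1+1})$. In this scenario the per-round lie costs at $j_1$ and $j_1 + 1$ sum to $2V_B([a_{i,j_1}, a_{i,j_1+1}]) > 4r\sqrt{f(T)/T}$ by $(iii)$; summing over the more than $\eta/2$ lies at each position once again gives total regret exceeding $2rf(T)$, again contradicting the regret bound. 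Hence $c_{i,\cdot}$ is monotone, and since step $(3.c)$ is entered only when $c_{i,\cdot}$ contains both letters, the switch index $k$ is unique and lies in $[4]$. The main obstacle will be precisely this two-position step: individually, the per-round lie cost at either $j_1$ or $j_1+1$ can be made arbitrarily small by placing $m_B$ very close to that point, so the contradiction only materializes after summing the two costs, at which point the $V_B$ pieces telescope into $V_B([a_{i,j_1}, a_{i,j_1+1}])$—exactly the quantity controlled by $(iii)$.
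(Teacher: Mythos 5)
Your proof of parts $(iii)$ and $(ii)$ follows the same route as the paper: $(iii)$ by the density-ratio calculation, $(ii)$ by a regret contradiction powered by $(iii)$. You carry the exact per-round regret $2V_B(\textsc{Intv}[m_B,a_{i,j}])$ where the paper uses the weaker $V_B(\textsc{Intv}[m_B,a_{i,j}])$; both give the contradiction. The orderings differ slightly (the paper first establishes a cardinality bound on the set $S_i$ of positions where the majority answer is wrong and then extracts $(ii)$ from the same counting inequality; you prove $(ii)$ as a standalone statement), but both are valid and yours is arguably cleaner.

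The genuine gap is in part $(i)$, in the sentence ``A non-monotone $c_{i,\cdot}$ would therefore require Bob to lie at two positions $j_1 < j_2$.'' As written, this is claimed to follow merely from $\tilde{c}_{i,\cdot}$ being of the form $R^{k^*}L^{5-k^*}$, but that implication is false: a single flip at a position not adjacent to $k^*$ already produces a non-monotone sequence --- for example $\tilde{c}=(R,R,L,L,L)$ flipped at position $5$ gives $(R,R,L,L,R)$. What actually rules out the one-lie-breaks-monotonicity scenario is part $(ii)$: if Bob's majority answer is wrong at exactly one position $j$, then $(ii)$ forces $m_B\in(a_{i,j-1},a_{i,j+1})$, which in turn forces $\tilde{c}_{i,\ell}=R$ for $\ell\le j-1$ and $\tilde{c}_{i,\ell}=L$ for $\ell\ge j+1$, so the observed sequence is $R^{j-1}\{R\text{ or }L\}L^{5-j}$ and stays monotone no matter what $c_{i,j}$ is. This case needs to be argued explicitly before you can reduce to the two-lie scenario; it is exactly the case analysis ($|S_i|=0$ vs.\ $|S_i|=1$) that the paper spells out in its proof of $(i)$. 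Once the one-lie case is handled, the rest of your argument --- the intersection of the two deviation intervals forcing $j_2=j_1+1$ and $m_B\in(a_{i,j_1},a_{i,j_1+1})$, the telescoping $V_B([a_{i,j_1},m_B])+V_B([m_B,a_{i,j_1+1}])=V_B([a_{i,j_1},a_{i,j_1+1}])$, and the cumulative regret lower bound exceeding $rf(T)$ --- is correct, although note that the number of wrong picks at a position where the majority vote is wrong is \emph{at least} $\eta/2$, not necessarily strictly more, which is what you want for both directions of the majority inequality.
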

	\begin{proof}
		We prove each of the parts $(i-iii)$ required by the lemma. 
		
		\paragraph{Proof of  part  $(iii)$.} Let $j \in \{0, \ldots, 5\}$. 
		Bob's valuation for the interval $[a_{i,j}, a_{i,j+1}]$ can be lower bounded as follows:
		\begin{align}
			V_B([a_{i, j}, a_{i, j+1}]) 
			&\geq 
			\delta \cdot (a_{i, j+1} - a_{i, j}) 
			\explain{Since $v_B(x) \ge \delta \; \forall x \in [0,1]$}
			\\
			&\geq 
			\frac{\delta}{\Delta} V_A([a_{i, j}, a_{i, j+1}])
			\explain{Since $v_A(x) \le \Delta \; \forall x \in [0,1]$}
			\\
			&= 
			\frac{\delta}{\Delta} \cdot \frac{1}{6} V_A([x_i, y_i]) \,. \label{eq:1983732}
		\end{align} 
		
		By definition, Alice's strategy halves the cake interval considered with each iteration \\ $i \in \{0, \ldots, n-1\}$, that is:  $V_A([x_i, y_i]) = 1/2 \cdot V_A([x_{i-1}, y_{i-1}])$. Thus  $V_A([x_i,y_i]) = 2^{-i}$ and $2^{-i} \ge 2^{-n}$ for all $ i \in \{0, \ldots, n\}$. Combining these observations with inequality \eqref{eq:1983732}, we obtain 
		\begin{align} \label{eq:intermediate_inequality_step_iii}
			V_B([a_{i, j}, a_{i, j+1}])    &\geq 
			2 \cdot \frac{\delta}{12\Delta} 2^{-n}  \,.   
		\end{align}
		Let $w = \sqrt{f(T)/T} \ln T$. Then $n = \lfloor -\log_2 (3w)\rfloor$. We have 
		\begin{align}
			\frac{\delta}{12\Delta} 2^{-n}
			&= 
			\frac{\delta}{12\Delta} 2^{-\lfloor -\log_2 (3w)\rfloor} 
			\explain{By definition of $n$}
			\\
			&\geq 
			\frac{\delta}{12\Delta} 2^{\log_2(3w)} 
			\explain{Since $-\lfloor -x \rfloor \ge x$}
			\\
			&> 
			\frac{\delta}{4\Delta} \cdot \sqrt{\frac{f(T)}{T}} \cdot \frac{4r\Delta}{\delta}
			\explain{By definition of $w$ and since $T > \exp \left(\frac{4r\Delta}{\delta}\right)$}
			\\
			&= 
			r\sqrt{f(T)/T}\,. \label{eq:almost_final_inequality_step_iii}
		\end{align}
		Combining inequalities \eqref{eq:intermediate_inequality_step_iii} and 
		\eqref{eq:almost_final_inequality_step_iii}, we conclude that 
		\begin{align}
			V_B([a_{i, j}, a_{i, j+1}])  > 2 r\sqrt{\frac{f(T)}{T}}.\label{ineq:11041412}
		\end{align}
		This concludes the proof of part  $(iii)$. 
		
		\paragraph{Proof of parts $(i)$ and $(ii)$.}
		
		For each $i=0, \dots, n-1$, we will show that at most one of the majority answers $c_{i,j}$, for $j \in [5]$, is different from Bob's truthful response.
		
		To be precise, recall from the lemma statement that   $\tilde{c}_{i,j} \in \{L,R\}$ is  Bob's truthful response that maximizes his value when Alice cut at $a_{i,j}$. 
		
		Define 
		\begin{align}  \label{eq:set_S_i_in_Lemma_1}
			S_{i} = \Bigl\{j \in [5]: c_{i,j} \neq \tilde{c}_{i,j}\Bigr\}\,.
		\end{align}
		Let $\textsc{Intv}[x,y]$ denote the interval $[x,y]$ if $y \ge x$ and $[y,x]$ if $x > y$. If  $j \in S_i$, it must be the case that Bob picked the wrong piece in at least  $\frac{1}{2}\sqrt{f(T) \cdot T}$ rounds in which the cut was $a_{i,j}$. Then Bob accumulated  at least $V_B(\textsc{Intv}[m_B, a_{i, j}])$ regret in each such round. 
		Let $\ell \in [4]$. We have 
		\begin{align}
			\frac{1}{2}\sqrt{f(T) \cdot T} \sum_{j \in S_i} V_B(\textsc{Intv}[m_B, a_{i, j}]) &\leq r \cdot f(T) \explain{Since Bob's total regret is at most $r \cdot f(T)$} \\
			&< \frac{1}{2}V_B([a_{i, \ell}, a_{i, \ell+1}])\sqrt{f(T) \cdot T}\,. \explain{By \eqref{ineq:11041412}}
		\end{align}
		
		Dividing both sides by $\frac{1}{2}\sqrt{f(T) \cdot T}$ gives:
		\begin{align}
			\label{ineq:si-bob-value-small}
			\sum_{j \in S_i} V_B(\textsc{Intv}[m_B, a_{i, j}]) < V_B([a_{i, \ell}, a_{i, \ell+1}]) \quad \forall \text{$\ell \in [4]$}\,.
		\end{align}
		
		We show that $|S_i| \leq 1$. Suppose towards a contradiction that $|S_i| > 1$, meaning there exist indices $j, \ell \in S_i$ with $j \neq \ell$. Then 
		\[ 
		\textsc{Intv}[a_{i, j}, a_{i, \ell}] \subseteq \left( \textsc{Intv}[m_B, a_{i, j}] \cup \textsc{Intv}[m_B, a_{i, \ell }] \right) \,.
		\]
		This implies that 
		\begin{align}
			V_B(\textsc{Intv}[a_{i, j}, a_{i, \ell}]) \leq V_B(\textsc{Intv}[m_B, a_{i, j}]) +  V_B(\textsc{Intv}[m_B, a_{i, \ell }]) \leq  \sum_{j \in S_i} V_B(\textsc{Intv}[m_B, a_{i, j}]),
		\end{align}
		which contradicts \eqref{ineq:si-bob-value-small}. Thus the assumption was false and $|S_i| \leq 1$.

		
		
		For any $j \in S_i$, we must have either $m_B \in (a_{i, j-1}, a_{i, j}]$ or $m_B \in [a_{i, j}, a_{i, j+1})$, as otherwise \eqref{ineq:si-bob-value-small} would be violated. Therefore, if $\tilde{c}_{i, j} \neq c_{i, j}$ for some $j \in [5]$, then $m_B \in (a_{i, j-1}, a_{i, j+1})$. This is  part  $(ii)$ required by the lemma. 
		
		Finally, we prove part 
		$(i)$. We will show  there exists a unique $k \in [4]$ such that $c_{i,j} = R$  $\forall j \le k$ and $c_{i,j} = L$ $\forall j > k$. The proof considers two cases:
		\begin{itemize}
			\item Case $|S_i| = 0$. Then every $c_{i, j}$ truthfully reflects Bob's preferences: $c_{i, j} = R$ if $a_{i, j} < m_B$, and $c_{i, j} = L$ if $a_{i, j} > m_B$; 
			$c_{i,j} \in \{L,R\}$ if $a_{i,j} = m_B$. An illustration can be seen in Figure~\ref{fig:S_i_zero_illustration_Alice_cuts_in_6_pieces}.
			In all cases, there is a single switch from $R$ to $L$, and so the index $k$ is unique. 
			\begin{figure}[h!] 
				\centering 
				\begin{subfigure}[]{
						\includegraphics[scale=0.7]{Alice_cuts_in_6_pieces_step2.png}
					}
				\end{subfigure}
				\begin{subfigure}[]{
						\includegraphics[scale=0.7]{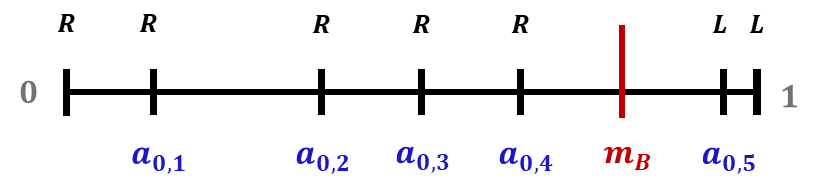}
					}
				\end{subfigure}
				\caption{Illustration of Alice's initial cuts for $i=0$. In this example, she cuts  $3$ times at each of the points $a_{0,j}$ and observes Bob's choices, which are marked with $L$/$R$ near each such cut point. By default, Alice knows what the answer would be if she cut at $0$ or $1$, so those are set to $R$ and $L$, respectively. In Figure (a), the truthful answers (reflecting Bob's favorite piece  according to his actual valuation) are marked with green, while the lying answers are marked with orange. The majority answer at each cut point $a_{0,j}$, denoted $c_{0,j}$, is illustrated in Figure (b). In this example, the majority  answer at each cut point is consistent with Bob's true preference.}
				\label{fig:S_i_zero_illustration_Alice_cuts_in_6_pieces}
			\end{figure}
			\item Case $|S_i| = 1$. Then all but one of the $c_{i,j}$'s   truthfully reflect Bob's preferences.  An illustration can be seen in Figure~\ref{fig:S_i_one_illustration_Alice_cuts_in_6_pieces}.
			\begin{figure}[h!] 
				\centering 
				\begin{subfigure}[]{
						\includegraphics[scale=0.7]{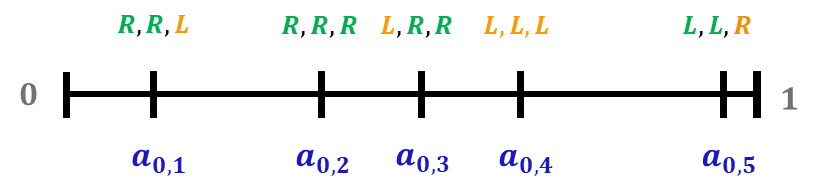}
					}
				\end{subfigure}
				\begin{subfigure}[]{
						\includegraphics[scale=0.7]{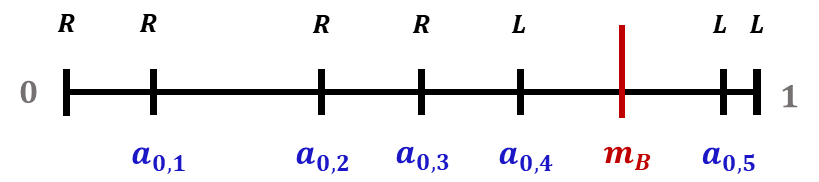}
					}
				\end{subfigure}
				\caption{Illustration of Alice's initial cuts for $i=0$. In this example, she cuts  $3$ times at each of the points $a_{0,j}$ and observes Bob's choices, which are marked with $L$/$R$ near each such cut point. By default, Alice knows what the answer would be if she cut at $0$ or $1$, so those are set to $R$ and $L$, respectively. In Figure (a), the truthful answers (reflecting Bob's favorite piece  according to his actual valuation) are marked with green, while the lying answers are marked with orange. The majority answer at each cut point $a_{0,j}$, denoted $c_{0,j}$, is illustrated in Figure (b). In this example, the majority  answer at each cut point is consistent with Bob's true preference \emph{except} for cut point $a_{0,4}$ where Bob lied every time and so the majority is incorrect as well.}
				\label{fig:S_i_one_illustration_Alice_cuts_in_6_pieces}
			\end{figure}
			\bigskip 
			
			By part $(ii)$ of the lemma, the only exception occurs at an index $j$ with   $m_B \in (a_{i, j-1}, a_{i, j+1})$. But then $c_{i, \ell} = \tilde{c}_{i, \ell} = R$ for $\ell \leq j-1$ and $c_{i, \ell} = \tilde{c}_{i, \ell} = L$ for $\ell \geq j+1$, so regardless of $c_{i, j}$ there will be a single switch from $R$ to $L$. 
		\end{itemize}
		This concludes that the conditions for $c_{i,j}$ in Step 3.c hold if the conditions in steps 3.a and 3.b do not. This concludes the proof of part $(i)$.
	\end{proof}
	
	The following lemma further reveals several properties of the constructed intervals during the execution of the algorithm. 
	\begin{lemma}\label{lm:exploration-construction}
		In the exploration phase of Algorithm A.2 (Fig. \ref{alg:alice_strategy_nearly_myopic_bob}), Alice constructs a sequence of intervals 
		\[ [x_0,y_0], [x_1,y_1], \ldots, [x_n,y_n]
		\] 
		such that the following properties hold:
		\begin{itemize}
			\item \emph{Property 1:} $x_0 = 0$ and $y_0 = 1$,
			\item \emph{Property 2:} $V_A([x_{i+1}, y_{i+1}]) = \frac{1}{2}V_A([x_i, y_i])$ for $i=0, \dots, n-1$,
			\item  \emph{Property 3:} $m_B \in [x_i, y_i]$, for all $i$,
			\item \emph{Property 4:} If $x_i \neq 0$, then $V_B([x_i, m_B]) > r\sqrt{f(T)/T}$.
			\item \emph{Property 5:} {If $y_i \neq 1$, then $V_B([m_B, y_i]) > r\sqrt{f(T)/T}$.}
		\end{itemize}
	\end{lemma}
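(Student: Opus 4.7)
The plan is to prove the five properties roughly in order, treating Properties~1 and 2 as essentially structural and locating the substance in Properties~3, 4, and 5. First, Property~1 will follow directly from the initialization step of Algorithm~A.2. For Property~2, I would note that by Step~1 the points $a_{i,0}<a_{i,1}<\cdots<a_{i,6}$ partition $[x_i,y_i]$ into six sub-intervals of equal Alice-value, each worth $\tfrac{1}{6}V_A([x_i,y_i])$. Then I would inspect Step~3 case by case: in (3.a) and (3.b) the new interval $[x_{i+1},y_{i+1}]$ consists of exactly the three sub-intervals on one side of $a_{i,3}$, while in (3.c) it equals $[a_{i,k-1},a_{i,k+2}]$, the union of three consecutive sub-intervals. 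Either way the Alice-value is $\tfrac{3}{6}V_A([x_i,y_i])=\tfrac{1}{2}V_A([x_i,y_i])$.

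For Property~3 I would proceed by induction on $i$, with a trivial base case. The two main tools will be Lemma~\ref{lm:atmost-one-deviation}, which tells us that $|S_i|\le 1$ where $S_i=\{j:c_{i,j}\neq\tilde c_{i,j}\}$ and that any $j^*\in S_i$ satisfies $m_B\in(a_{i,j^*-1},a_{i,j^*+1})$, and the observation that $\tilde c_{i,j}=R$ iff $a_{i,j}<m_B$, so the truthful sequence $(\tilde c_{i,j})_{j=1}^{5}$ is monotone of the form $R\cdots R L\cdots L$. Combining $|S_i|\le 1$ with this monotonicity sharply constrains any flipped position $j^*$: in (3.a) one must have $j^*=1$ (any larger $j^*$ would force multiple disagreements), which yields $\tilde c_{i,2}=L$ and hence $m_B\le a_{i,2}\le a_{i,3}$; case (3.b) is symmetric; and in (3.c) a similar monotonicity argument forces $j^*\in\{k,k+1\}$, and each subcase is easily seen to give $m_B\in[a_{i,k-1},a_{i,k+2}]$. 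Combined with the inductive hypothesis $m_B\in[x_i,y_i]$, this closes the induction step.

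Properties~4 and 5 are the main obstacle, and I would again argue by induction on $i$, treating only Property~4 since Property~5 is symmetric. If $x_{i+1}=x_i$ (cases (3.a) and (3.c) with $k=1$) the bound is preserved trivially. Otherwise $x_{i+1}$ becomes a fresh cut point, either $a_{i,3}$ in (3.b) or $a_{i,k-1}$ in (3.c) with $k\ge 2$. In most subcases the case analysis from Property~3 forces $m_B\ge a_{i,k}$, placing $m_B$ a full Bob-sub-interval away from $x_{i+1}$, so Lemma~\ref{lm:atmost-one-deviation}(iii) gives $V_B([x_{i+1},m_B])>2r\sqrt{f(T)/T}$ directly. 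The delicate subcase is (3.c) with $k\ge 2$ and $j^*=k$, where $m_B\in(a_{i,k-1},a_{i,k}]$ could a priori lie arbitrarily close to $x_{i+1}=a_{i,k-1}$. Here I plan to use a \emph{lying-budget} argument: the flip at $a_{i,k}$ means Bob played $R$ in at least $\lceil\eta/2\rceil\ge\tfrac{1}{2}\sqrt{f(T)T}$ of the rounds Alice cut at $a_{i,k}$ despite his truthful preference being $L$, and each such lie costs him $2V_B([m_B,a_{i,k}])$ in regret. Since Bob's total regret is at most $rf(T)$, this forces $V_B([m_B,a_{i,k}])\le r\sqrt{f(T)/T}$; combining with $V_B([a_{i,k-1},a_{i,k}])>2r\sqrt{f(T)/T}$ from Lemma~\ref{lm:atmost-one-deviation}(iii) then yields $V_B([a_{i,k-1},m_B])=V_B([a_{i,k-1},a_{i,k}])-V_B([m_B,a_{i,k}])>r\sqrt{f(T)/T}$, as needed. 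This quantitative trade-off between Bob-width of a sub-interval and Bob's regret cost per lie is precisely what the three-sub-interval slack in the recursion was designed to exploit.
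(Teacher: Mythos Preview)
Your proposal is correct and follows essentially the same approach as the paper's proof: induction on $i$ with a case analysis over (3.a)/(3.b)/(3.c), using Lemma~\ref{lm:atmost-one-deviation} to pin down the unique possible flipped index, and the same lying-budget argument (combining the $\tfrac{1}{2}\sqrt{f(T)T}$ lies with Bob's regret cap to bound $V_B([m_B,a_{i,k}])\le r\sqrt{f(T)/T}$, then subtracting from part~(iii)) in the delicate subcase $j^*=k$. The only cosmetic difference is that the paper proves Properties~3--5 together within each case rather than establishing Property~3 first and then Properties~4--5, but the substance is identical.
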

	\begin{proof}
		Property $1$ holds since  $[x_0, y_0] = [0,1]$ by definition of the algorithm.
		
		Property 2 follows from our choice of $x_{i+1}$ and $y_{i+1}$ always ensuring that $[x_{i+1}, y_{i+1}]$ contains $3$ of the $6$ intervals of equal value the $a_{i, j}$ divide $[x_i, y_i]$ into.
		
		\medskip 
		We will show   Properties 3-5 by induction. 
		The base case is  $i=0$. Then  $[x_0, y_0] = [0, 1]$. Properties 3-5 are vacuously true for this interval.
		
		Assume that  Properties 3-5 hold for  $i \in \{0,1,\ldots, n-1\}$. 
		For each $j \in [5]$, let $\tilde{c}_{i,j}$ represent Bob's truthful answer when the cut point is $a_{i,j}$. Formally, we have  $\tilde{c}_{i,j} = L$ if Bob prefers  $[0, a_{i,j}]$ to $[a_{i,j},1]$ and $\tilde{c}_{i,j} = R$ otherwise.
		
		We show Properties 3-5 also hold for $i+1$ by considering the next three cases:
		
		\bigskip 
		
		\begin{description}
			\item[\bf Case $c_{i, j} = L$ for all $ j$.] In this case, the majority of Bob's answers is $L$ at each cut point used by Alice. An illustration can be seen in Figure~\ref{fig:lemma_2_all_left}.
			
			\medskip 
			\begin{figure}[h!] 
				\centering 
				\begin{subfigure}[]{
						\includegraphics[scale=0.7]{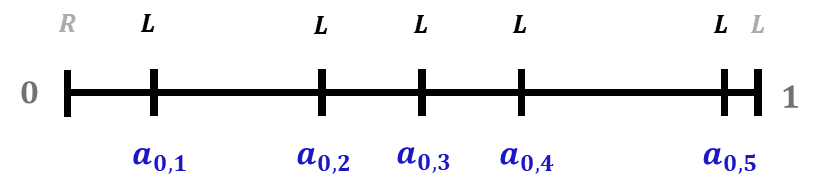}
					}
				\end{subfigure}
				\begin{subfigure}[]{
						\includegraphics[scale=0.7]{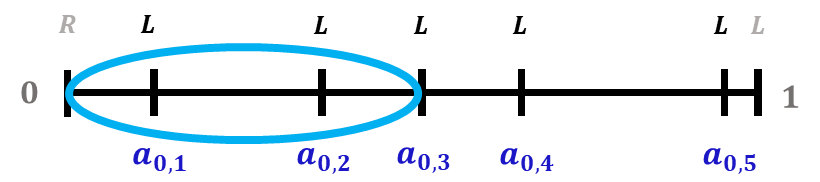}
					}
				\end{subfigure}
				\caption{Illustration of Alice's initial cuts for $i=0$. At  each cut point $a_{0,j}$, the majority of Bob's answers is $L$ (i.e. $c_{0,j} = L$). Then the algorithm recurses in the interval $[0, a_{0,3}]$.}
				\label{fig:lemma_2_all_left}
			\end{figure}
			Then the algorithm recurses in the interval $[x_{i+1}, y_{i+1}]$ given by   $x_{i+1}=x_i$ and $y_{i+1} = a_{i, 3}$.
			By Lemma~\ref{lm:atmost-one-deviation}, we have $\tilde{c}_{i,j} = c_{i,j}$ for each of $j \in \{2,3,4,5\}$, as otherwise Bob's ``true" preferences would alternate between $R$ and $L$ more than once. 
			
			\medskip 
			
			We claim that $m_B \in [x_i, a_{i, 2})$. To see this, consider two cases:
			\begin{description}
				\item[$\; \; \bullet$]  Case $\tilde{c}_{i, 1} = c_{i, 1}$: then $m_B \in [x_i, a_{i, 1}]$ since the majority answers are consistent with Bob's true preference.
				\item[$\; \; \bullet$]   Case $\tilde{c}_{i, 1} \neq c_{i, 1}$: then $m_B \in (x_i, a_{i, 2})$ by Lemma \ref{lm:atmost-one-deviation}.
			\end{description}
			
			Then $m_B \in [x_i, a_{i, 2}) \subset [x_{i+1}, y_{i+1}]$, which proves Property 3 for $i+1$.
			
			\medskip 
			
			If $x_{i+1} = 0$, then Property 4 vacuously follows.
			Otherwise, we have $x_{i+1} = x_i$.
			Then by the inductive hypothesis we obtain $V_B([x_{i+1}, m_B]) = V_B([x_i, m_B]) > r\sqrt{f(T)/T}$. Thus Property 4 holds for $i+1$ as well. 
			
			\medskip

			Since $m_B \in [x_i, a_{i, 2})$, we have 
			\begin{align} 
				V_B([m_B, y_{i+1}]) \geq V_B([a_{i, 2}, a_{i, 3}]) > 2r\sqrt{f(T)/T}, \notag 
			\end{align}
			and so Property 5 holds for $i+1$.  
			
			\bigskip 
			\item[\bf Case $c_{i, j} = R$ for all $ j$.] In this case, the majority of Bob's answers is $R$ at each cut point used by Alice. An illustration can be seen in Figure~\ref{fig:lemma_2_all_right}.
			\begin{figure}[h!] 
				\centering 
				\begin{subfigure}[]{
						\includegraphics[scale=0.7]{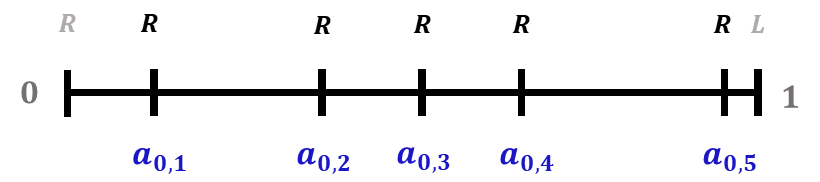}
					}
				\end{subfigure}
				\begin{subfigure}[]{
						\includegraphics[scale=0.7]{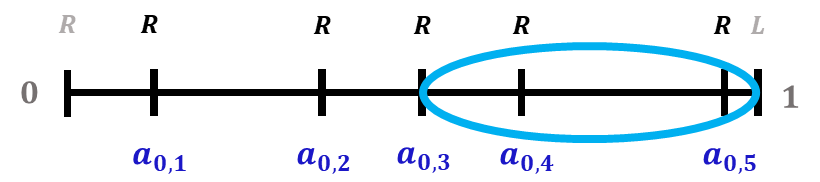}
					}
				\end{subfigure}
				\caption{Illustration of Alice's initial cuts for $i=0$. At  each cut point $a_{0,j}$, the majority of Bob's answers is $R$ (i.e. $c_{0,j} = R$). Then the algorithm recurses in the interval $[a_{0, 3}, 1]$.}
				\label{fig:lemma_2_all_right}
			\end{figure}

			Then the algorithm recurses in the interval $[x_{i+1}, y_{i+1}]$ given by $x_{i+1} = a_{i,3}$ and $y_{i+1} = y_i$. By Lemma~\ref{lm:atmost-one-deviation},  we have $\tilde{c}_{i,j} = c_{i,j}$ for each of $j \in \{1, 2, 3, 4\}$.
			
			\medskip
			
			We claim that $m_B \in (a_{i, 4}, y_i]$. To see this, consider two cases:
			\begin{description}
				\item[$\; \; \bullet$] Case $\tilde{c}_{i, 5} = c_{i, 5}$: then $m_B \in [a_{i, 5}, y_i]$ since the majority answers are consistent with Bob's true preference.
				\item[$\; \; \bullet$] Case $\tilde{c}_{i, 5} \neq c_{i, 5}$: then $m_B \in (a_{i, 4}, y_i)$ by Lemma \ref{lm:atmost-one-deviation}.
			\end{description}
			
			\medskip
			
			Then $m_B \in (a_{i, 4}, y_i] \subset [x_{i+1},y_{i+1}]$, which proves Property 3 for $i+1$.
			
			\medskip
			
			Since $m_B \in (a_{i,4}, y_i]$, we have
			\begin{align}
				V_B([x_{i+1}, m_B]) \geq V_B([a_{i, 3}, a_{i, 4}]) > 2r\sqrt{f(T)/T}, \notag
			\end{align}
			and so Property 4 holds for $i+1$.
			
			\medskip
			
			If $y_{i+1} = 1$, then Property 5 vacuously follows. Otherwise, we have $y_{i+1} = y_i$. Then by the inductive hypothesis we obtain $V_B([m_B, y_{i+1}]) = V_B([m_B, y_i]) > r\sqrt{f(T)/T}$. Thus Property 5 holds for $i+1$ as well.  
			
			\bigskip
			\item[\bf Case where there is a transition from $R$ to $L$ and the last $R$ is $c_{i, k}$ for some $k \in \{1, \ldots, 4\}$.]
			An illustration can be seen in Figure~\ref{fig:lemma_2_crossing}.
			\begin{figure}[h!] 
				\centering 
				\begin{subfigure}[]{
						\includegraphics[scale=0.7]{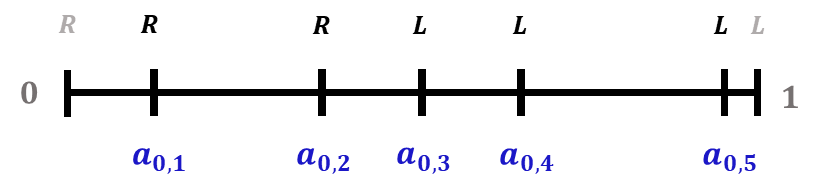}
					}
				\end{subfigure}
				\begin{subfigure}[]{
						\includegraphics[scale=0.7]{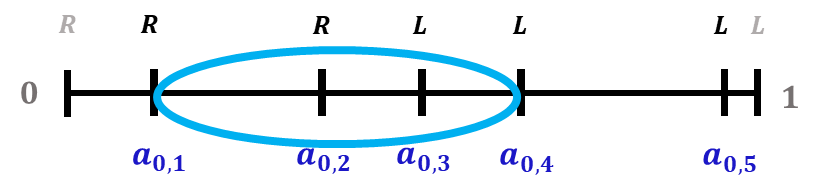}
					}
				\end{subfigure}
				\caption{Illustration of Alice's initial cuts for $i=0$. In this example, there is a transition from $R$ to $L$ in the interval $[a_{0,2}, a_{0,3}]$. Then the algorithm recurses in the interval $[a_{0, 1}, a_{0,4}]$, which is guaranteed to contain Bob's midpoint.}
				\label{fig:lemma_2_crossing}
			\end{figure}
			
			In this case, Alice recurses on the interval $[x_{i+1}, y_{i+1}]$ given by $x_{i+1} = a_{i, k-1}$ and $y_{i+1} = a_{i, k+2}$.

			\bigskip 
			
			If any of the $c_{i, j}$ differ from the $\tilde{c}_{i, j}$, it must be $c_{i, k}$ or $c_{i, k+1}$ because otherwise Bob's true preferences would alternate between $R$ and $L$ more than once, which is impossible. 
			We consider three sub-cases:
			\begin{enumerate}            
				\item \textbf{Case $\tilde{c}_{i, k} \neq c_{i, k}$.} 
				Each time Bob picked his less-preferred piece when Alice cut at $a_{i, k}$, he lost $2V_B(\textsc{Intv}[a_{i, k}, m_B])$ in value compared to his regret benchmark. Since $\tilde{c}_{i, k} \neq c_{i, k}$, he did so at least $\frac{1}{2}\sqrt{f(T) \cdot T}$ times. In order to have regret at most $r f(T)$, he must have
				\begin{align}
					\label{eq:vb-case-k-not-equal-small}
					V_B(\textsc{Intv}[a_{i,k}, m_B]) \le r\sqrt{\frac{f(T)}{T}}.
				\end{align}
				By Lemma \ref{lm:atmost-one-deviation}, we have 
				\begin{align} 
					m_B \in (a_{i, k-1}, a_{i, k+1}) \label{eq:case_a_lie_m_b_inside_a_i_k_minus_1_a_i_k_plus_1}
				\end{align}
				and 
				\begin{align}
					\label{eq:vb-case-k-not-equal-large}
					V_B([a_{i,k-1}, a_{i,k}]) > 2r\sqrt{\frac{f(T)}{T}} \,.
				\end{align}
				Combining equations \eqref{eq:vb-case-k-not-equal-small}, \eqref{eq:case_a_lie_m_b_inside_a_i_k_minus_1_a_i_k_plus_1}, and \eqref{eq:vb-case-k-not-equal-large} we obtain 
				\begin{align}
					V_B([x_{i+1}, m_B]) &= V_B([a_{i, k-1}, m_B]) \explain{Since $x_{i+1} = a_{i,k-1}$} \\
					&\geq V_B([a_{i, k-1}, a_{i, k}]) - V_B(\textsc{Intv}[a_{i, k}, m_B]) \explain{Since $m_B > a_{i, k-1}$ by \eqref{eq:case_a_lie_m_b_inside_a_i_k_minus_1_a_i_k_plus_1}} \\
					&> r\sqrt{f(T)/T} \,. \explain{By \eqref{eq:vb-case-k-not-equal-small} and  \eqref{eq:vb-case-k-not-equal-large}}
				\end{align}
				Thus $V_B([x_{i+1}, m_B]) > r\sqrt{f(T)/T}$, so Property 4 holds for $i+1$.
				
				Since $m_B \in (a_{i, k-1}, a_{i, k+1})$, we have $V_B([m_B, y_{i+1}]) \geq V_B([a_{i, k+1}, a_{i, k+2}]) > 2r\sqrt{f(T)/T}$, proving Property 5 for $i+1$.
				\bigskip 
				
				\item \textbf{Case  $\tilde{c}_{i, k+1} \neq c_{i, k+1}$.} 
				Each time Bob picked his less-preferred piece when Alice cut at $a_{i, k+1}$, he lost $2V_B(\textsc{Intv}[a_{i, k+1}, m_B])$ in value compared to his regret benchmark. Since $\tilde{c}_{i, k+1} \neq c_{i, k+1}$, he did so at least $\frac{1}{2}\sqrt{f(T) \cdot T}$ times. In order to have regret at most $r f(T)$, he must have
				\begin{align}
					\label{eq:vb-case-k-plus-one-not-equal-small}
					V_B(\textsc{Intv}[a_{i,k+1}, m_B]) \le r\sqrt{\frac{f(T)}{T}}.
				\end{align}
				By Lemma \ref{lm:atmost-one-deviation}, we have 
				\begin{align} 
					m_B \in (a_{i, k}, a_{i, k+2}) \label{eq:case_a_lie_m_b_inside_a_i_k_a_i_k_plus_2}
				\end{align}
				and 
				\begin{align}
					\label{eq:vb-case-k-plus-one-not-equal-large}
					V_B([a_{i,k}, a_{i,k+1}]) > 2r\sqrt{\frac{f(T)}{T}} \,.
				\end{align}
				Combining equations \eqref{eq:vb-case-k-plus-one-not-equal-small}, \eqref{eq:case_a_lie_m_b_inside_a_i_k_a_i_k_plus_2}, and \eqref{eq:vb-case-k-plus-one-not-equal-large} we obtain 
				\begin{align}
					V_B([m_B, y_{i+1}]) &= V_B([m_B, a_{i, k+2}]) \explain{Since $y_{i+1} = a_{i,k+2}$} \\
					&\geq V_B([a_{i, k+1}, a_{i, k+2}]) - V_B(\textsc{Intv}[a_{i, k+1}, m_B]) \explain{Since $m_B < a_{i, k+2}$ by \eqref{eq:case_a_lie_m_b_inside_a_i_k_a_i_k_plus_2}} \\
					&> r\sqrt{f(T)/T} \,. \explain{By \eqref{eq:vb-case-k-plus-one-not-equal-small} and  \eqref{eq:vb-case-k-plus-one-not-equal-large}}
				\end{align}
				Thus $V_B([m_B, y_{i+1}]) > r\sqrt{f(T)/T}$, so Property 5 holds for $i+1$.
				
				\medskip 
				
				Since $m_B \in (a_{i, k}, a_{i, k+2})$, we have $ 
				V_B([x_{i+1}, m_B]) \geq V_B([a_{i, k-1}, a_{i, k}]) > 2r\sqrt{f(T)/T}$, which proves         
				Property 4   for $i+1$.
				\bigskip 
				
				\item \textbf{Case $\tilde{c}_{i,j} = c_{i,j}$ for all $j \in [5]$.} Then $m_B \in [a_{i, k}, a_{i, k+1}]$. By Lemma \ref{lm:atmost-one-deviation}, we have 
				\begin{align} \label{eq:prop4_i_plus_1}
					V_B([x_i, m_B]) \geq V_B([x_i, a_{i, k}]) = V_B([a_{i, k-1}, a_{i, k}]) > 2r\sqrt{f(T)/T}
				\end{align} and 
				\begin{align} \label{eq:prop5_i_plus_1}
					V_B([m_B, y_i]) \geq V_B([a_{i, k+1}, y_i]) = V_B([a_{i, k+1}, a_{i, k+2}]) > 2r\sqrt{f(T)/T}\,.
				\end{align} 
				Inequality \eqref{eq:prop4_i_plus_1} implies Property 4 for $i+1$, while inequality \eqref{eq:prop5_i_plus_1} implies Property 5 for $i+1$.
			\end{enumerate}
			
			In all three cases, $m_B \in [a_{i, k-1}, a_{i, k+2}] = [x_{i+1}, y_{i+1}]$, showing Property 3 for $i+1$.
		\end{description}   
		Thus, in all three cases, Properties 3-5 hold for $i+1$.
		By induction, they hold for all $i \in \{0, 1, \dots, n\}$. This completes the proof.
	\end{proof}
	
	\begin{proposition}
		\label{prop:alice-known-alpha-lower-bound}
		Let $v_A$ be an arbitrary Alice density and $\alpha \in [0, 1)$. Then 
		there exists a value density function $\tilde{v}_B = \tilde{v}_B(v_A)$ and strategy $\tilde{S}_B = \tilde{S}_B(v_A, \alpha)$ that ensures Bob's  regret is $O(T^{\alpha})$ while  Alice's Stackelberg regret is 
		$\Omega\left(T^{\frac{\alpha+1}{2}}\right)$.
	\end{proposition}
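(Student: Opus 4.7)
The plan is to implement the intuition sketched for Proposition~\ref{prop:exploiting_Bob_part1}: choose $\tilde{v}_B$ so that $\tilde{m}_B$ is bounded away from $m_A$, and have $\tilde{S}_B$ behave as if Bob's midpoint were shifted by $\varepsilon = \Theta(T^{(\alpha-1)/2})$ toward $m_A$ for up to $N = \Theta(T^{(\alpha+1)/2})$ rounds in which Alice cuts inside the ``deception region'' $(\tilde{m}_B-\varepsilon,\tilde{m}_B]$. Without loss of generality (reflecting the cake about $1/2$) I assume $m_A < 1/2$. Take $\tilde{v}_B \equiv 1$ so that $\tilde{m}_B = 1/2$; when $m_A$ is very close to $1/2$ a slight perturbation of $\tilde{v}_B$ preserves $\delta \leq \tilde{v}_B \leq \Delta$ and guarantees $\tilde{m}_B - m_A \geq c_0$ for an absolute constant $c_0 > 0$. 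Fix constants $c_1, c_2 > 0$ to be tuned, set $\varepsilon = c_1 T^{(\alpha-1)/2}$ and $N = c_2 T^{(\alpha+1)/2}$, and let $\tilde{S}_B$ maintain a counter $k$ initialized to $0$: on round $t$, if $a_t \in (\tilde{m}_B - \varepsilon, \tilde{m}_B]$ and $k < N$ then Bob plays $L$ (his less-preferred piece) and increments $k$, and otherwise Bob plays his truly preferred piece. Until the lying budget is exhausted, $\tilde{S}_B$ is indistinguishable to Alice from a truthful Bob whose midpoint is $\tilde{m}_B - \varepsilon$.

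The bound on Bob's regret is immediate: each lie happens at a cut within $\varepsilon$ of $\tilde{m}_B$, so the gap between Bob's two pieces is at most $2\Delta\varepsilon$, and summing over at most $N$ lies yields total regret at most $2\Delta c_1 c_2 T^{\alpha} = O(T^{\alpha})$, so $\tilde{S}_B \in \mathcal{B}^{\alpha}$. For the Alice lower bound I partition the $T$ rounds of an arbitrary Alice strategy by cut location into counts $n_L, n_M, n_R$ corresponding to $[0, \tilde{m}_B - \varepsilon]$, $(\tilde{m}_B - \varepsilon, \tilde{m}_B]$, and $(\tilde{m}_B, 1]$, and analyze her per-round regret: if $a_t \leq \tilde{m}_B - \varepsilon$, Bob truthfully plays $R$ and Alice's regret is $V_A([a_t, \tilde{m}_B]) \geq \delta\varepsilon$; if $a_t > \tilde{m}_B$, Bob truthfully plays $L$ and Alice's regret is at least $2u_A^* - 1 \geq 2\delta c_0$; during the first $\min(n_M, N)$ visits to the deception region, Bob lies and Alice loses $(2u_A^* - 1) - O(\Delta\varepsilon) = \Omega(c_0)$ per round for $T$ sufficiently large. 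A case split then finishes the proof. If $\min(n_M, N) + n_R \geq \frac{1}{2} c_2 T^{(\alpha+1)/2}$, the $\Omega(1)$-per-round buckets contribute $\Omega(T^{(\alpha+1)/2})$; otherwise $\min(n_M, N) < \frac{1}{2} c_2 T^{(\alpha+1)/2}$ forces $n_M < N$ (else $\min(n_M, N) = N = c_2 T^{(\alpha+1)/2}$ would exceed $\frac{1}{2} c_2 T^{(\alpha+1)/2}$), so $n_L = T - n_M - n_R \geq T - c_2 T^{(\alpha+1)/2} \geq T/2$ for large $T$, and the cheap bucket alone contributes at least $\frac{T}{2} \cdot \delta\varepsilon = \Omega(T^{(\alpha+1)/2})$.

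The conceptual obstacle is ensuring the $\Omega(1)$ per-round loss in the deception region and to the right of $\tilde{m}_B$ is uniform in $v_A$; this is exactly why $\tilde{v}_B$ must be tied to $v_A$ to enforce the midpoint gap $\tilde{m}_B - m_A \geq c_0$, so that the density bound $v_A \geq \delta$ yields $2u_A^* - 1 \geq 2\delta c_0$ independent of the instance. Everything else is arithmetic: tuning $c_1$ and $c_2$ so the Bob-regret bound and the Alice-regret case split come out with matching constants, and handling the boundary-tie cases (e.g.\ $a_t$ exactly at $\tilde{m}_B - \varepsilon$ or at $\tilde{m}_B$) that require separate bookkeeping but introduce no new ideas.
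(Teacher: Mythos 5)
Your proposal is correct and takes essentially the same approach as the paper: Bob acts as if his midpoint were shifted by $\Theta(T^{(\alpha-1)/2})$ toward $m_A$, lying for at most $\Theta(T^{(\alpha+1)/2})$ rounds when Alice cuts in the narrow deception region, then reverting to honesty. The paper's construction is slightly cleaner in that it defines $\tilde{v}_B$ directly as a two-piece piecewise-constant density with breakpoint at an arbitrary $y > m_A$ (avoiding your ad hoc perturbation when $m_A$ is near $1/2$), and its case split is coarser (just two cases on whether Alice cuts in the deception region at least $T^{(\alpha+1)/2}$ times), but the key ideas, the parameter choices, and the arithmetic are the same.
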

	
	
	\begin{proof}
		
		At a high level, the Bob we construct will behave as if his midpoint were at $m_B - T^{\frac{\alpha-1}{2}}$.
		If Alice calls his bluff, i.e. 
		cuts  close to $m_B$ ``enough'' times, then Bob reverts to being honest by actually selecting his  preferred piece.
		
		Formally, let $y$ be an arbitrary point such that $y > m_A$. Define the Bob value density $\tilde{v}_B$ as follows:
		\begin{align} \label{eq:v_tilde_B}
			\tilde{v}_B(x) = 
			\begin{cases}
				\frac{1}{2y} & \forall x \in [0,y] 
				\\
				\frac{1}{2(1-y)} & \forall x \in (y, 1] \,. 
			\end{cases}
		\end{align}
		
		Bob's value density $\tilde{v}_B$ is  bounded since $y$ is a fixed constant. 
		Moreover, Bob's midpoint $m_B$ is exactly at $y$. 
		Let $\tilde{S}_B$ be Bob's strategy as defined in Figure~\ref{alg:Bob_strategy_S_B_tilde}.
		
		\definecolor{mycolor}{rgb}{0.96,0.96,0.96}
		\begin{tcolorbox}[colback=mycolor,colframe=black]
			{\bf Bob strategy $\tilde{S}_B$:} \\ 
			
			$\;$   \textbf{Input:} $\alpha$.
			\smallskip

			$\;$  Initialize $c = 0$.  For $t \in [T]$:
			\begin{itemize}
				\item If $m_B < a_t \leq 1$ then play $b_t = L$.
				\item If $0 \leq a_t < m_B-T^{\frac{\alpha-1}{2}}$ then play $b_t = R$.
				\item If $m_B - T^{\frac{\alpha-1}{2}} \leq a_t \leq  m_B$ then: 
				
				$\; \; $ If $c \ge T^{\frac{\alpha+1}{2}}$ then play $b_t = R$; 
				Else, play $b_t = L$. 
				
				$\; \; $  Update $c \gets c+1$.
			\end{itemize}
			\captionof{figure}{Algorithm A.3}\label{alg:Bob_strategy_S_B_tilde}
		\end{tcolorbox}
		Strategy $\tilde{S}_B$ ensures that Bob takes his less-favorite piece at most $T^{\frac{\alpha+1}{2}}$ times, and this only happens when Alice cuts in the interval 
		\begin{align} 
			P = [m_B - T^{\frac{\alpha-1}{2}}, m_B]\,.
		\end{align} Since the density $\tilde{v}_B$ is  bounded from above by a constant, Bob gives up a value of at most $O(T^{\frac{\alpha-1}{2}})$ in each such round. Thus Bob's regret is $O(T^{\alpha})$.

		Now let us compute Alice's regret with respect to her Stackelberg value.
		Suppose that $T$ is sufficiently large, so that $m_B-T^{\frac{\alpha-1}{2}} > m_A$. There are two cases depending on whether Alice triggers the switch in Bob's strategy or not.
		\begin{itemize}
			\item \emph{Alice cuts in $P$ at least $T^{\frac{\alpha+1}{2}}$ times.} Consider the first $T^{\frac{\alpha+1}{2}}$ times Alice cuts in $P$. Whenever Alice cuts at some point $x \in P$, her utility will be  $V_A([x, 1])$ since Bob plays $L$. Therefore, her payoff in that round is maximized when $x$ is minimized, which occurs at the left-hand endpoint $x = m_B - T^{\frac{\alpha-1}{2}}$. However,
			\begin{align} 
				m_B - T^{\frac{\alpha-1}{2}} > m_A, \notag 
			\end{align}
			and so 
			\begin{align} V_A\left(\left[m_B - T^{\frac{\alpha-1}{2}}, 1\right]\right) < {1}/{2}\,.
			\end{align} 
			Then her Stackelberg regret over all the rounds in which she cuts in $P$ is at least 
			\begin{align*}
				T^{\frac{\alpha+1}{2}} \cdot \left(V_A([0, m_B]) - V_A\left(\left[m_B - T^{\frac{\alpha-1}{2}}, 1\right]\right)\right) &> T^{\frac{\alpha+1}{2}} \cdot \left(V_A([0, m_A]) + V_A([m_A, m_B]) - \frac{1}{2}\right) \\
				&= T^{\frac{\alpha+1}{2}} V_A([m_A, m_B]) \in \Omega\left(T^{\frac{\alpha+1}{2}}\right)\,.
			\end{align*}
			\item \emph{Alice cuts in $P$ fewer than $T^{\frac{\alpha+1}{2}}$ times}. In this case, we will show that Alice's payoff per round cannot be more than $V_A\left(\left[0, m_B - T^{\frac{\alpha-1}{2}}\right]\right)$. To see this, we consider two sub-cases. 
			
			If Alice cuts at a point $x < m_B - T^{\frac{\alpha-1}{2}}$, then Bob picks R and Alice's utility in that round is 
			\begin{align}  \label{eq:Alice_cuts_at_x_left_of_m_B_minus_epsilon}
				V_A([0, x]) < V_A\left(\left[0, m_B - T^{\frac{\alpha-1}{2}}\right]\right)\,.
			\end{align} 
			If Alice cuts at a point $x \geq m_B - T^{\frac{\alpha-1}{2}}$, then Bob picks L and Alice's utility in that round is
			\begin{align} \label{eq:Alice_cuts_at_x_right_of_m_B_minus_epsilon}
				V_A([x, 1]) < \frac{1}{2} < V_A\left(\left[0, m_B - T^{\frac{\alpha-1}{2}}\right]\right)\,.
			\end{align} 
			Combining \eqref{eq:Alice_cuts_at_x_left_of_m_B_minus_epsilon} and \eqref{eq:Alice_cuts_at_x_right_of_m_B_minus_epsilon}, we obtain that  in each round $t \in [T]$, Alice's utility is 
			\begin{align} 
				u_A^t \leq V_A\left(\left[0, m_B - T^{\frac{\alpha-1}{2}}\right]\right)\,.
			\end{align}
			
			Summing over all rounds $t \in [T]$, we obtain that Alice's  Stackelberg  regret is 
			\begin{align}
				\sum_{t=1}^{T} \left(u_A^* - u_A^t\right) &\geq  
				\sum_{t=1}^T \left(V_A([0, m_B]) - V_A\left(\left[0, m_B - T^{\frac{\alpha-1}{2}}\right]\right)\right) \\
				&= \sum_{t=1}^T V_A\left(\left[m_B - T^{\frac{\alpha-1}{2}}, m_B\right]\right) \\
				&\geq \sum_{t=1}^T \delta \cdot T^{\frac{\alpha-1}{2}} \in \Omega\left(T^{\frac{\alpha+1}{2}}\right)\,.
			\end{align}
		\end{itemize}
		Thus in both cases, Alice's Stackelberg regret is at least $\Omega\left(T^{\frac{\alpha+1}{2}}\right)$, which concludes the proof.
	\end{proof}
	
	\begin{lemma}
		\label{lem:unknown-alpha-alice-regret-lower-bound}
		Let $\alpha, \beta \in [0,1)$. Suppose Alice's density is $v_A$ and her strategy is $S_A$.
		There exists a Bob $\bob_1 = (v_{B,1}, S_{B,1})$ that depends on $v_A$ and a Bob $\bob_2 =  (v_{B,2}, S_{B,2})$ that depends on $v_A$ and $S_A$, such that 
		\begin{itemize} 
			\item $\bob_1$ has regret  $O(T^{\alpha})$  (i.e. strategy $S_{B,1}$ ensures that a player with density $v_{B,1}$ has regret $O(T^{\alpha})$)
			\item $\bob_2$ has regret  $O(T^{\beta})$; and   
			\item if $S_A$ ensures Alice Stackelberg regret of $O(T^{\beta})$ against $\bob_1$, then $S_A$ has regret at least ${T}/{6}$ against $\bob_2 $.
		\end{itemize}
		%
		%
	\end{lemma}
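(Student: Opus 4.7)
My plan is to build $\bob_1$ as a myopic (honest) player with a simple density, and $\bob_2$ as a ``mimic'' of $\bob_1$ with a substantially shifted midpoint, so that Alice's $S_A$---which works well against $\bob_1$---reaps far less than the new Stackelberg value when facing $\bob_2$.

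First, using the continuity of $V_A$ and the positive lower bound $\delta$ on $v_A$, I would choose midpoints $m_{B,1} < m_{B,2}$ both strictly greater than $m_A$ satisfying $V_A([0, m_{B,1}]) = 3/4$ and $V_A([0, m_{B,2}]) = 11/12$. The density $v_{B,1}$ would be taken to be bounded piecewise-constant with midpoint $m_{B,1}$ (similarly $v_{B,2}$), and $S_{B,1}$ would be the honest strategy (pick the preferred piece each round), which trivially has regret $0 \in O(T^\alpha)$. The strategy $S_{B,2}$ would be defined to respond at each round exactly as $S_{B,1}$ would respond to Alice's current cut; that is, $\bob_2$ pretends to have density $v_{B,1}$.

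Next, I would observe that because $S_A$ is a function of the observed history alone and $S_{B,2}$ produces the same response as $S_{B,1}$ on every cut, the trajectories induced by $(S_A,S_{B,1})$ and $(S_A,S_{B,2})$ coincide sample-path-by-sample-path. Consequently, Alice's realized payoff is identical against the two Bobs. Since her payoff against $\bob_1$ is at most $T \cdot u_A^*(m_{B,1}) = 3T/4$, while her Stackelberg value against $\bob_2$ equals $V_A([0, m_{B,2}]) = 11/12$, her Stackelberg regret against $\bob_2$ is at least $T \cdot (11/12 - 3/4) = T/6$.

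The main obstacle is verifying that $S_{B,2}$ itself has regret $O(T^\beta)$. When Alice's cut $a_t$ satisfies $a_t \le m_{B,1}$ or $a_t \ge m_{B,2}$, the honest responses for $v_{B,1}$ and $v_{B,2}$ coincide (both midpoints lie on the same side of $a_t$), so mimicking is free. The only rounds in which $\bob_2$ lies are those with $a_t \in (m_{B,1}, m_{B,2})$, each costing $\bob_2$ at most $1$ in regret. I would bound the number of such rounds using the hypothesis that $S_A$ has $O(T^\beta)$ Stackelberg regret against $\bob_1$: any cut $a_t \in (m_{B,1}, m_{B,2})$ forces $\bob_1$ to pick $L$ and leaves Alice with the piece $[a_t,1]$ of value $V_A([a_t,1]) < V_A([m_{B,1},1]) = 1/4$, so it contributes at least $3/4 - 1/4 = 1/2$ to Alice's Stackelberg regret against $\bob_1$. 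Hence there are $O(T^\beta)$ such rounds, so $\bob_2$'s cumulative regret is $O(T^\beta)$, completing the plan.
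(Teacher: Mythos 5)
Your indistinguishability idea is the right core — you correctly identify that two ``honest'' Bobs with different midpoints produce identical trajectories under $S_A$, and you correctly compute the $T/6$ Stackelberg‐regret gap. But there is a genuine gap in the second bullet: your $\bob_2$ \emph{unconditionally} mimics $\bob_1$, so its regret bound is only established under the hypothesis of the third bullet (that $S_A$ has $O(T^\beta)$ Stackelberg regret against $\bob_1$). The lemma requires ``$\bob_2$ has regret $O(T^\beta)$'' to hold on its own, against \emph{every} Alice strategy — this is what puts $\bob_2$ into $\mathcal{B}^\beta$ so that Proposition~\ref{prop:exploiting_Bob_part2}'s conclusion follows. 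An adversarial Alice who cuts inside $(m_{B,1}, m_{B,2})$ in every round gives your always-mimicking $\bob_2$ regret $\Theta(T)$, so your $\bob_2 \notin \mathcal{B}^\beta$. Your proof in effect flips the lemma's logical structure: you make the third bullet unconditional and the second bullet conditional.

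The paper closes this hole with a switch: $S_{B,2}$ mimics $S_{B,1}$ only until Alice has cut in the lie region $(x, y]$ more than $3rT^\beta$ times, after which $\bob_2$ reverts to picking its truly preferred piece. This caps the number of lies at $3rT^\beta$ against \emph{any} Alice, giving the unconditional $O(T^\beta)$ regret bound. Then, exactly as you argue, if $S_A$ has $O(T^\beta)$ Stackelberg regret against $\bob_1$ she cuts in $(x, y]$ fewer than $3rT^\beta$ times, so the switch never triggers and the indistinguishability you rely on still holds. Add this cap to your $\bob_2$ and your proof becomes essentially the paper's (your threshold constants $3/4$, $11/12$ versus the paper's $2/3$, $5/6$ are immaterial — both yield a $1/6$ per-round gap).
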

	\begin{proof}
		Let $x$ be the cake position such that $V_A([0, x]) = 2/3$, and let $y$ be the cake position such that $V_A([0, y]) = 5/6$. 
		The first Bob, $\bob_1$, will have a valuation function $v_{B, 1}$ that has midpoint $x$. 
		We define $\bob_1$'s strategy $S_{B, 1}$ so that it  truthfully picks his preferred piece, \ie
		\begin{align} \label{eq:definition_S_B_1}
			S_{B, 1}(A_t, B_{t-1}) = \begin{cases}
				L & \text{if} \;  a_t \in (x, 1] \\
				R & \text{if} \;  a_t \in [0, x]\,.
			\end{cases}
		\end{align}
		Against $\bob_1$, Alice's Stackelberg value is $2/3$. Suppose $S_A$ ensures Alice Stackelberg regret at most $r T^{\beta}$ against $\bob_1$ for some $r > 0$. 
		
		Then we define our second Bob, denoted  $\bob_2$, having a valuation function $v_{B, 2}$ which has a midpoint at $y$.
		Let $k(t)$ be the number of times Alice cuts in the interval $(x, y]$ in rounds $\{1, \ldots, t\}$. 
		Then $\bob_2$'s strategy will be defined as follows:
		\begin{align} \label{eq:definition_S_B_2}
			S_{B,2}(A_t, B_{t-1}) = \begin{cases}
				S_{B, 1}(A_t, B_{t-1}) & \text{if} \; k(t) \leq 3rT^{\beta} \\
				L &  \text{if} \;  k(t) > 3rT^{\beta} \; \mbox{and} \; a_t \in (y, 1] \\
				R &  \text{if} \;  k(t) >3rT^{\beta} \; \mbox{and} \; a_t \in [0, y]\,.
			\end{cases}
		\end{align}
		Intuitively, $S_{B, 2}$ switches to being honest after Alice cuts in $(x, y]$ sufficiently many times. 
		This transition gives $\bob_2$ a regret guarantee of $O(T^{\beta})$. 
		
		When Alice plays $S_A$ against $\bob_1$, her total payoff is 
		\begin{align} \label{eq:S_A_against_Bob1_ub}
			u_A(S_A, S_{B,1}) \geq {2}T/3 - rT^{\beta}\,.
		\end{align} 
		However, every round she cuts in $(x, y]$, her payoff is less than $1/3$. Therefore, against $\bob_1$, her total payoff can also be bounded by 
		\begin{align} \label{eq:S_A_against_Bob1_lb}
			u_A(S_A, S_{B,1}) \leq {2}T/{3} - k(T)/3\,. 
		\end{align}
		Combining inequalities \eqref{eq:S_A_against_Bob1_ub} and \eqref{eq:S_A_against_Bob1_lb}, we obtain 
		\begin{align} \label{eq:k_T_bound_under_S_A}
			k(T) &\leq 3rT^{\beta}.
		\end{align}
		By definition, strategy $S_{B,2}$ behaves the same as strategy $S_{B,1}$ when $k(T) \leq 3rT^{\beta}$. By \eqref{eq:k_T_bound_under_S_A}, we have  $k(T) \leq 3rT^{\beta}$ when Alice uses strategy  $S_A$. 
		Thus, if Alice uses strategy $S_A$, then $\bob_1$ and $\bob_2$ behave exactly the same way. 
		Therefore, Alice receives no more than $2/3$ per round against $\bob_2$, so her regret is at least $T/6 \in \Omega(T)$.
	\end{proof}
	
	If Bob's density is not lower bounded by any constant, then Theorem~\ref{thm:response_bounded_regret_Bob} fails as  shown in the next remark.
	
	\begin{remark}
		\label{no-bound-bob-beat-alice}
		Let $\alpha \in (0,1)$. 
		There exist value densities $v_A$ and $v_B$, where $v_A(x) \in [\delta, \Delta]$  $\forall x \in [0,1]$ and $v_B(x) \in (0, \Delta]$ $\forall x \in [0,1]$ such that Bob has a strategy $S_B$ which  
		guarantees his  regret is at most $T^{\alpha}$ and  Alice's Stackelberg  regret is at least $\Omega(T/\log T)$, no matter what strategy she uses.
	\end{remark}
	
	\begin{proof}
		The specific valuations will be defined in terms of cumulative valuations. Let Alice's valuation be:
		\[
		V_A([0, x]) = \begin{cases}
			\frac{1}{2}x & \text{if} \; x \in [0, 1/2] \\
			\frac{3}{2}x - \frac{1}{2} & \text{if} \; x \in (1/2, 1]
		\end{cases}
		\]
		Let Bob's valuation be:
		\begin{align}
			V_B([0, x]) = 
			\begin{cases}
				x & \text{if} \; x \in [0, 1/2] \\
				\frac{1}{2} + 2^{-\frac{1}{2x-1}} & \text{if} \; x \in (1/2, 1]
			\end{cases}\label{eq:02121107}    
		\end{align}
		Intuitively, Alice has a well-behaved piecewise uniform density with midpoint $m_A = {2}/{3}$. Bob's density is well-behaved for $x \leq {1}/{2}$, but the density rapidly approaches zero just to the right of $x={1}/{2}$.
		
		Fix an arbitrary $\alpha \in (0, 1)$. There exists a point $y$ on the cake such that $V_B([1/2, y]) = \frac{1}{2}T^{\alpha - 1}$. Define Bob's strategy $S_B$ as follows:
		\[
		S_B(A_{t}, B_{t-1}, v_B, T) = \begin{cases}
			R & \text{if} \; a_t < y \\
			L & \text{if} \; a_t \geq y\,.
		\end{cases}
		\]
		This strategy would be honest if Bob's midpoint were at $y$ rather than $1/2$. That means it only differs from his preferred piece when $a_t \in (1/2, y]$. The worst outcome for Bob occurs when  $a_t = y$. But by construction, even in a round where  Alice cuts at $y$, Bob only loses $T^{\alpha - 1}$ utility compared to picking his preferred piece. Since there are $T$ rounds, his overall regret is at most $T^{\alpha}$.
		
		Alice, on the other hand, cannot do very well compared to her Stackelberg value. Her Stackelberg value is $3/4$, achieved by cutting at Bob's midpoint of $1/2$ and receiving her preferred piece. But Bob prevents this payoff by pretending his midpoint is at $y$. 
		To obtain the exact location of $y$, note that
		\begin{align*}
			V_B([1/2, y]) = \frac{1}{2}T^{\alpha - 1}.
		\end{align*}
		Plugging~\eqref{eq:02121107}, this is equivalent to
		\begin{align*}
			2^{-\frac{1}{2y-1}} = \frac{1}{2} T^{\alpha - 1}.
		\end{align*}
		Solving for $y$, we have
		\begin{align*}
			y = \frac{1}{2} + \frac{1}{(2-2\alpha) \log_2 T + 2}\,.
		\end{align*}
		
		For sufficiently large $T$, we have $y \in  (m_B,m_A)$. Alice's best cut location in each round is then at $y$ itself, which gives her a per-round payoff of
		\[
		V_A([y, 1]) = \frac{3}{4} - \frac{3}{(4-4\alpha) \log_2 T + 4}\,.
		\]
		Adding this up over all $T$ rounds gives a total regret of at least, we obtain
		\[
		\frac{3T}{(4-4\alpha) \log_2 T + 4},
		\]
		which is $\Omega(T/ \log T)$ as required. This completes the proof. 
	\end{proof}
	
	
	\section{Appendix: Equitable payoffs} \label{app:safety_payoffs}
	
	This appendix has two parts. In Appendix \ref{app:Alice_enforcing_safety_payoffs}, we prove Theorem \ref{thm:Alice_safety_payoffs_summary}, which shows that Alice can enforce equitable payoffs. In Appendix \ref{app:Bob_enforcing_safety_payoffs}, we prove Theorem \ref{thm:safety_payoffs_summary}, which shows that Bob can enforce equitable payoffs.

	\subsection{Appendix: Alice enforcing equitable payoffs}
	\label{app:Alice_enforcing_safety_payoffs}
	
	In this section we prove Theorem \ref{thm:Alice_safety_payoffs_summary}, the statement of which is included next.
	
	\begin{customthm}{\ref{thm:Alice_safety_payoffs_summary}}[Alice enforcing equitable payoffs; formal]
		In both the sequential and simultaneous settings, Alice has a pure strategy $S_A$, such that for every  Bob strategy $S_B$:
		\begin{itemize}
			\item   on every trajectory of play, Alice's  average payoff is  at least  $1/2 - o(1)$, while Bob's average payoff is at most $1/2 + o(1)$. More precisely, for all $t \in \{3, \ldots, T\}$:
			\begin{align}
				\frac{u_B(1, t)}{t} &\leq \frac{1}{2} + \frac{5\Delta+11}{\ln(2t/5)} \label{ineq:bob-punishment-explicit} \\
				\frac{u_A(1, t)}{t} &\geq \frac{1}{2} - \frac{4}{\sqrt{t-1}} \label{ineq:bob-punishment-explicit-alice}, 
			\end{align} 
			recalling that $\Delta$ is the upper bound on the players' value densities. 
		\end{itemize}
		%
		%
		Moreover, even if Bob's value density is unbounded, his average payoff will still converge to ${1}/{2}$.
	\end{customthm}
	
	The proof of the theorem  is  deferred until additional definitions have been stated and helpful lemmas have been proved.
	We first define some notations.

	\begin{definition}[Set of valuations $\mathcal{W}_n$]
		\label{def:set_of_valuations_W_n}
		For each $n \in \mathbb{N}^*$, we define the following set of  non-decreasing piecewise linear  functions with $n$ pieces:
		\begin{align}
			\mathcal{W}_n & = \Bigl\{ f : [0,1] \to [0,1]  \mid f \mbox{ is  non-decreasing  with } f(0)=0, f(1)=1,  f(i/n) \cdot n \in \mathbb{Z}_{\ge 0}, \mbox{ and } \Bigr. \notag \\
			& \Bigl. \qquad f(x) = f\Bigl(\frac{i}{n}\Bigr) + \Bigl(x-\frac{i}{n}\Bigr)\Bigl(f\left(\frac{i+1}{n}\right) - f\left(\frac{i}{n}\right)\Bigr) \forall  i \in \{ 0, \ldots, n-1\} \, \forall  x \in \left[\frac{i}{n},\frac{i+1}{n} \right]  \Bigr\}  \,. \notag 
		\end{align}
	\end{definition}
	
	\begin{definition}[Set of functions $\mathcal{V}_n$] 
		\label{def:set_of_valuations_V_n}
		For each $n \in \mathbb{N}^*$, recall  $\mathcal{W}_n$ was given in  Definition~\ref{def:set_of_valuations_W_n} and define   $\mathcal{V}_n$ as the following set of functions:
		\begin{align}
			\mathcal{V}_n = 
			\begin{cases}
				\mathcal{W}_n & \text{ if } n \neq 2\\
				\mathcal{W}_2 \cup  \{f_A\} & \text{ if } n = 2, \mbox{ where } f_A : [0,1] \to [0,1]$ is the function $f_A(x) = V_A([0,x]).
			\end{cases} \notag
		\end{align}
	\end{definition}
	
	\begin{definition}[The set $\overline{\mathcal{V}}$]
		\label{def:rich_set_of_valuations} Let $
		\overline{\mathcal{V}} = \bigcup_{n=1}^{\infty} \left\{(n, V) : V \in \mathcal{V}_n\right\},
		$  
		where  $\mathcal{V}_n$ is given by   Definition~\ref{def:set_of_valuations_V_n}. 
	\end{definition}

	\begin{remark} \label{lem:v_n_increasing}
		By construction, for each $n \in \mathbb{N}^*$, every function $f \in \mathcal{V}_n$ is non-decreasing.
	\end{remark}
	
	For each $n \in \mathbb{N}^*$, $\mathcal{W}_n$ contains the nondecreasing piecewise linear functions through a grid with spacing $1/n$. For large $n$, then, $\mathcal{W}_n$ should contain approximations to any given function accurate to roughly $O(1/n)$.
	
	If Alice bases her strategy on limiting the payoff of the Bobs in each $\mathcal{W}_n$, then she will limit Bob's payoff but not necessarily guarantee herself a very good payoff. The inclusion of Alice's valuation function in $\overline{\mathcal{V}}$ rectifies this, allowing us to show a much tighter bound on Alice's payoff.
	
	To formalize the ability of the elements of the $\mathcal{V}_n$ to approximate arbitrary valuation functions, we prove the following three lemmas. The first of them (Lemma \ref{lem:arbitrary_vn_tightness_given_function_bounds}) is somewhat technical, but most directly shows the richness of the $\mathcal{V}_n$. It will be used to prove Lemmas \ref{Vn-arbitrary-function-approximation} and \ref{Vn-approximation-tightness}, which will be used directly to bound the payoff of an unbounded-density and bounded-density Bob, respectively.
	

	\begin{lemma}
		\label{lem:arbitrary_vn_tightness_given_function_bounds}
		Let $f : [0, 1] \to [0, 1]$ be  continuous and increasing  with $f(0)=0$ and $f(1) =1$. Suppose there exist $n \in \mathbb{N}^*$ and  $\varepsilon \in (0, \infty)$ such that 
		\begin{align}
			|f(x) - f(y)| \leq \varepsilon \qquad \forall x, y \in [0, 1] \mbox{ with }  |x-y| \leq {1}/{n} \,. \label{eq:lemma_assumed_absolute_value_f_x_minus_f_y_at_most_epsilon} 
		\end{align}
		Then there exists $V_n \in \mathcal{V}_n$, where  $\mathcal{V}_n$ is  the set of functions from  Definition~\ref{def:set_of_valuations_V_n}, such that\\
		$
		|f(x) - V_n(x)| \leq \varepsilon + {2}/{n} \; \; \forall x \in [0, 1].
		$
	\end{lemma}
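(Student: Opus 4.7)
The plan is to construct $V_n$ by discretizing $f$'s values at the grid points $\{i/n\}_{i=0}^n$, snapping them to the allowed values, and then applying the piecewise linear extension rule built into the definition of $\mathcal{W}_n$. Concretely, for each $i \in \{0,1,\ldots,n\}$ I would define
\[
V_n(i/n) \;:=\; \lfloor n \cdot f(i/n) \rfloor / n,
\]
and extend $V_n$ linearly on each subinterval $[i/n, (i+1)/n]$ as prescribed by the definition of $\mathcal{W}_n$.

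The first step is to verify $V_n \in \mathcal{V}_n$; since $n \ne 2$ is not required, it suffices to show $V_n \in \mathcal{W}_n$. The boundary conditions $V_n(0) = 0$ and $V_n(1) = 1$ follow from $f(0)=0$ and $f(1)=1$ (so $\lfloor 0 \rfloor = 0$ and $\lfloor n \rfloor = n$). The grid values $V_n(i/n)$ are nonnegative integer multiples of $1/n$ by construction, and monotonicity of $V_n$ at grid points follows from monotonicity of $f$ together with monotonicity of $\lfloor \cdot \rfloor$. The piecewise linear interpolation clause is satisfied by construction.

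For the error estimate, I would introduce the auxiliary piecewise linear function $g$ which interpolates the \emph{exact} values $f(i/n)$ on the same grid, and split the error via the triangle inequality
\[
|V_n(x) - f(x)| \;\le\; |V_n(x) - g(x)| \;+\; |g(x) - f(x)|.
\]
For the first term: at every grid point $|V_n(i/n) - g(i/n)| = |\lfloor n f(i/n)\rfloor/n - f(i/n)| \le 1/n$, and since $V_n - g$ is piecewise linear, this bound propagates to all of $[0,1]$. For the second term: on each subinterval $[i/n, (i+1)/n]$, the monotonicity of $f$ places $f(x) \in [f(i/n), f((i+1)/n)]$, while $g(x)$ lies in the same interval by construction; the diameter of this interval is $f((i+1)/n) - f(i/n) \le \varepsilon$ by the hypothesis \eqref{eq:lemma_assumed_absolute_value_f_x_minus_f_y_at_most_epsilon}. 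Combining the two bounds yields $|V_n(x) - f(x)| \le 1/n + \varepsilon \le 2/n + \varepsilon$, as required.

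There is no serious obstacle here: the argument is a routine discrete approximation, and the only point that requires any care is preserving monotonicity of the rounded values, which is why I use the floor function (a rounding-to-nearest scheme could break monotonicity when two adjacent grid values of $f$ straddle a midpoint between two multiples of $1/n$). The slack between the bound $1/n + \varepsilon$ that the argument actually produces and the stated bound $2/n + \varepsilon$ is harmless and likely leaves room for the authors' downstream uses of the lemma.
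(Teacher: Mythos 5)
Your proof is correct and follows essentially the same discretize-round-interpolate strategy as the paper; the paper rounds grid values to the nearest multiple of $1/n$ (error $1/(2n)$ per grid point) and confines $f(x)$ and $V_n(x)$ to a common interval $J$, whereas you use floor-rounding (error $<1/n$ per grid point) together with an auxiliary exact interpolant $g$ and a triangle-inequality split, which actually yields the slightly sharper bound $\varepsilon + 1/n$. One minor misstatement worth noting: your parenthetical claim that rounding-to-nearest could break monotonicity is incorrect --- any nearest-integer rounding with a fixed tie-breaking rule is monotone, which is why the paper's choice also works --- but this aside does not affect the validity of your argument.
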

	\begin{proof}
		For $x \in \mathbb{R}$, let  $\lfloor x \rceil$ denote the nearest integer to $x$, {breaking ties in favor of $\lceil  x \rceil$} when $x = \lfloor x \rfloor + 1/2$.
		
		Recall the set of functions  $\mathcal{W}_n$ from Definition~\ref{def:set_of_valuations_W_n}. Let  $V_n$  be the function in $\mathcal{W}_n$ such that:
		\begin{align}
			\label{eq:vn_rounded_from_f}
			V_n(i/n) = \frac{\lfloor n \cdot f(i/n) \rceil}{n} \qquad \forall i \in [n-1]\,.
		\end{align}
		
		%
		
		Then we claim  the function $V_n$ approximates well the function $f$ at the points $i/n$, that is: 
		\begin{align}
			\label{eq:vn_choice_within_1_over_2n}
			\left|V_n(i/n) - f(i/n) \right|  = \left|\frac{\lfloor n \cdot f(i/n) \rceil}{n} - f(i/n) \right| \leq \frac{1}{2n} \qquad  \forall i \in \{0, \dots, n\},
		\end{align}
		
		where for $i \in [n-1]$ the inequality in \eqref{eq:vn_choice_within_1_over_2n} follows from \eqref{eq:vn_rounded_from_f}, for $i=0$ it follows from $V_n(0)=0=f(0)$, and for $i=n$ it follows from $V_n(1)=1=f(1)$. 
		
		\medskip 
		
		Let $x \in [0, 1]$. 
		We show  three inequalities next:
		\begin{enumerate}
			\item By inequality \eqref{eq:lemma_assumed_absolute_value_f_x_minus_f_y_at_most_epsilon} from  the lemma statement with parameters $\lfloor  xn \rfloor /n$ and $\lceil xn \rceil / n$, we get 
			\begin{align} \label{eq:ineq_1_1_over_2n}
				f\left(\frac{\lceil xn \rceil}{n}\right) - f\left(\frac{\lfloor xn \rfloor}{n}\right) \leq \varepsilon \,. 
			\end{align}
			\item By inequality \eqref{eq:vn_choice_within_1_over_2n} {with $i=\lceil xn \rceil$}, we have 
			\begin{align}  \label{eq:ineq_2_1_over_2n}
				\left| V_n\left(\frac{\lceil xn \rceil}{n}\right) - f\left(\frac{\lceil xn \rceil}{n}\right) \right| \leq \frac{1}{2n}\,.
			\end{align}
			\item By inequality  \eqref{eq:vn_choice_within_1_over_2n} {with $i=\lfloor xn \rfloor$}, we have 
			\begin{align} \label{eq:ineq_3_epsilon_plus_1_over_n}
				\left| f\left(\frac{\lfloor xn \rfloor}{n}\right) - V_n\left(\frac{\lfloor xn \rfloor}{n}\right) \right| \leq \frac{1}{2n}\,. 
			\end{align}
		\end{enumerate}
		
		Summing up  inequalities \eqref{eq:ineq_1_1_over_2n}, \eqref{eq:ineq_2_1_over_2n}, \eqref{eq:ineq_3_epsilon_plus_1_over_n}  and applying the triangle inequality, we obtain
		\begin{align}
			\label{eq:vn_interval_width_under_epsilon_plus_stuff}
			\left| V_n \left(\frac{\lceil xn \rceil}{n}\right) - V_n\left(\frac{\lfloor xn \rfloor}{n}\right) \right| \leq  \varepsilon + \frac{1}{n} \,.
		\end{align}
		
		We obtain 
		\begin{align}
			V_n\left(\frac{\lfloor xn \rfloor}{n}\right) - \frac{1}{2n} & \leq f\left(\frac{\lfloor xn \rfloor}{n}\right) \explain{By \eqref{eq:vn_choice_within_1_over_2n} with $i = \lfloor xn \rfloor$} \\
			& \leq f(x) \explain{Since $f$ is non-decreasing} \\
			& \leq f\left(\frac{\lceil xn \rceil}{n}\right) \explain{Since $f$ is non-decreasing}  \\
			& \leq V_n \left(\frac{\lceil xn \rceil}{n}\right) + \frac{1}{2n} \,. \explain{By \eqref{eq:vn_choice_within_1_over_2n} with $i = \lceil  xn \rceil$}
		\end{align}
		Denoting $J = \Bigl[ V_n\bigl(\frac{\lfloor xn \rfloor}{n}\bigr) - {1}/({2n}), V_n \bigl(\frac{\lceil xn \rceil}{n}\bigr) + {1}/({2n})\Bigr]$, we conclude that $f(x) \in J$.
		Since the function $V_n$ is non-decreasing by Remark~\ref{lem:v_n_increasing}, we have 
		\begin{align}
			V_n\left(\frac{\lfloor xn \rfloor}{n}\right) - \frac{1}{2n} \leq V_n\left({ x}\right) - \frac{1}{2n} < V_n(x) < V_n(x) + \frac{1}{2n} \leq  V_n \left(\frac{\lceil xn \rceil}{n}\right) + \frac{1}{2n} \,.
		\end{align}
		Thus $V_n(x) \in J$. Since both $f(x) \in J$ and $V_n(x) \in J$,  we can bound  $|f(x) - V_n(x)|$ as follows:
		\begin{align}
			|f(x) - V_n(x)| &\leq |J| \notag \\
			&= \left(V_n \left(\frac{\lceil xn \rceil}{n}\right) + \frac{1}{2n}\right) - \left(V_n\left(\frac{\lfloor xn \rfloor}{n}\right) - \frac{1}{2n}\right) \notag \\
			&\leq \varepsilon + \frac{2}{n} \,. \explain{Since $V_n \bigl(\frac{\lceil xn \rceil}{n}\bigr) - V_n\bigl(\frac{\lfloor xn \rfloor}{n}\bigr) \leq \varepsilon + 1/n$ by \eqref{eq:vn_interval_width_under_epsilon_plus_stuff}}
		\end{align}
		Since this holds for all $x \in [0, 1]$, the function $V_n$ required by the lemma exists.
	\end{proof}
	
	\begin{lemma}
		\label{Vn-arbitrary-function-approximation}
		Let  $f : [0, 1] \to [0, 1]$ be continuous and  increasing  with $f(0)=0$ and $f(1)=1$. For each  $\varepsilon > 0$, there exists $n \in \mathbb{N}^*$ and a function $V_n \in \mathcal{V}_n$ such that 
		\begin{align} 
			|V_n(x) - f(x)| \leq \varepsilon \; \; \forall x \in [0, 1],
		\end{align} 
		recalling that the set of functions $\mathcal{V}_n$ is given by Definition~\ref{def:set_of_valuations_V_n}.
	\end{lemma}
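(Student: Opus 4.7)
The plan is to reduce this lemma to Lemma \ref{lem:arbitrary_vn_tightness_given_function_bounds}, using the uniform continuity of $f$ to supply the hypothesis of that lemma. Since $f$ is continuous on the compact interval $[0,1]$, it is uniformly continuous; concretely, for every $\varepsilon' > 0$ there exists $\delta > 0$ such that $|f(x) - f(y)| \leq \varepsilon'$ whenever $|x - y| \leq \delta$.

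Given the target accuracy $\varepsilon > 0$, I would first set $\varepsilon' = \varepsilon/2$ and invoke uniform continuity to obtain a corresponding $\delta > 0$. I would then choose $n \in \mathbb{N}^*$ large enough that both $1/n \leq \delta$ and $2/n \leq \varepsilon/2$ hold, e.g.\ $n = \max\bigl(\lceil 1/\delta\rceil,\ \lceil 4/\varepsilon\rceil\bigr)$. With this $n$, the hypothesis $|f(x)-f(y)| \leq \varepsilon'$ for all $|x-y|\leq 1/n$ is satisfied.

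Applying Lemma \ref{lem:arbitrary_vn_tightness_given_function_bounds} with this $n$ and $\varepsilon'$ produces a function $V_n \in \mathcal{V}_n$ with
\begin{align}
|V_n(x) - f(x)| \leq \varepsilon' + \frac{2}{n} \leq \frac{\varepsilon}{2} + \frac{\varepsilon}{2} = \varepsilon \qquad \forall x \in [0,1], \notag
\end{align}
which is exactly the desired conclusion.

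There is no real obstacle here: the entire technical content of approximating $f$ by a piecewise-linear grid function already lives in Lemma \ref{lem:arbitrary_vn_tightness_given_function_bounds}, and the current lemma is just the standard packaging of that result using uniform continuity to convert a pointwise-continuity statement on $[0,1]$ into a quantitative modulus bound, which is then balanced against the $2/n$ discretization error by choosing $n$ sufficiently large.
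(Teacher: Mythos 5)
Your proof is correct, and it takes a genuinely cleaner route than the paper's. You invoke the Heine--Cantor theorem directly to obtain a $\delta > 0$ with $|f(x)-f(y)| \leq \varepsilon/2$ whenever $|x-y| \leq \delta$, then pick $n$ large enough that $1/n \leq \delta$ and $2/n \leq \varepsilon/2$, and feed this into Lemma~\ref{lem:arbitrary_vn_tightness_given_function_bounds}. The paper instead derives a modulus of continuity by hand: it forms the inverse $f^{-1}$, defines $g(x) = f^{-1}(x+\varepsilon/4) - f^{-1}(x)$, applies the extreme value theorem to get a positive minimum $\delta^*$ of $g$, translates this back into the bound $f(x+\delta^*) - f(x) \leq \varepsilon/4$ on a subinterval, handles the boundary region $f(x) > 1 - \varepsilon/4$ separately, and only then chooses $n > \max(1/\delta^*, 4/\varepsilon)$. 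Both arguments produce the same hypothesis for Lemma~\ref{lem:arbitrary_vn_tightness_given_function_bounds} (a uniform bound on $|f(x)-f(y)|$ for $|x-y|\leq 1/n$); yours gets there in one cited step where the paper re-proves uniform continuity from scratch using strict monotonicity. Your version is shorter and more transparent; the paper's is entirely self-contained and tracks explicit constants, but gains nothing in generality since Heine--Cantor applies to any continuous function on a compact interval, not just strictly increasing ones.
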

	\begin{proof}
		Let $\epsilon > 0$.
		Since $f$ is continuous and increasing, its inverse $f^{-1}$ is also continuous and increasing. Since  $f(0)=0$ and $f(1)=1$, we have  $f^{-1}(0)=0$ and $f^{-1}(1)=1$. Consider a function  $g: \left[0, 1-{\varepsilon}/{4}\right] \to [0,1]$ defined as 
		\begin{align} \label{def:g}
			g(x) = f^{-1}\left(x+{\varepsilon}/{4}\right) - f^{-1}(x)\,.
		\end{align} 
		Since  $f^{-1}$ is continuous and bounded {over a closed interval}, so is the function $g$. Therefore, by the extreme value theorem,  $g$ attains a global minimum value $\delta^*$. Since $f^{-1}$ is strictly increasing, $g$ is never zero, so 
		\begin{align} \delta^* > 0 \,. \label{eq:delta_star_strictly_positive}
		\end{align} 
		
		Since we will analyze $g(f(x))$, it will be useful to define the set of values $x$ for which $g(f(x))$ is well defined, that is: 
		\begin{align}
			S_{\varepsilon} = \left \{  x \in [0,1]  \mid  f(x) \in [0,1-\varepsilon/4] \right \} \,.
		\end{align}
		Since $\delta^*$ is a global minimum of  $g$, we have 
		\begin{align} \label{eq:g_f_x_at_least_delta_star}
			g(f(x)) &\geq \delta^* \qquad \forall x \in S_{\varepsilon} \,.
		\end{align}
		Using the definition of $g$ (equation \eqref{def:g}) in \eqref{eq:g_f_x_at_least_delta_star} yields 
		\begin{align}
			g(f(x)) & = f^{-1}\left(f(x)+{\varepsilon}/
			{4}\right) - f^{-1}(f(x))  \geq \delta^* \qquad \forall x \in S_{\varepsilon} \,. \label{eq:rewriting_g_f_x_inequality_wrt_delta_star}
		\end{align}
		Since $f^{-1}(f(x)) = x$,  inequality \eqref{eq:rewriting_g_f_x_inequality_wrt_delta_star} yields $f^{-1}\left(f(x)+{\varepsilon}/{4}\right) - x  \geq \delta^*$, or equivalently, \begin{align}  \label{eq:f_to_minus_1_arg_f_plus_varepsilon}
			f^{-1}\left(f(x)+{\varepsilon}/{4}\right) \geq x + \delta^* \qquad \forall x \in S_{\varepsilon} \,.
		\end{align}
		Applying $f$ to both sides of \eqref{eq:f_to_minus_1_arg_f_plus_varepsilon} and using  monotonicity of $f$, we obtain  $f(x) +  {\varepsilon}/{4}   \geq f(x+ \delta^*)$ for all $ x \in S_{\varepsilon}$, or equivalently 
		\begin{align}
			f\left(x + \delta^*\right) - f(x) & \leq {\varepsilon}/{4} \qquad \forall x \in S_{\varepsilon}\,.  \label{eq:f_gap_smaller_than_epsilon_four}
		\end{align}
		Since $\varepsilon > 0$ and $\delta^* > 0$ by \eqref{eq:delta_star_strictly_positive}, there exists 
		$n \in \mathbb{N}$  such that 
		\begin{align} 
			n > 1/\delta^* \; \;  \mbox{and} \; \;  n > {4}/{\varepsilon}\,. \label{eq:choosing_n_larger_than_1_over_delta_and_2_over_epsilon}
		\end{align}
		
		Note that because the range of $f^{-1}$ is $[0, 1]$, the left side of \eqref{eq:f_to_minus_1_arg_f_plus_varepsilon} is at most $1$. Therefore, from $\eqref{eq:f_to_minus_1_arg_f_plus_varepsilon}$ and the fact that $\delta^* > 1/n$ by \eqref{eq:choosing_n_larger_than_1_over_delta_and_2_over_epsilon}:
		\begin{align}
			\label{eq:x_in_S_epsilon_bounded_by_1-delta}
			1 \geq x+\delta^* > x+1/n \qquad \forall x \in S_{\varepsilon}
		\end{align}
		
		Consider an arbitrary $x, y \in [0, 1]$ with $x \leq y \leq x+1/n$. We consider two cases:
		
		\begin{itemize}
			\item Case $x \in S_{\varepsilon}$: Since $f$ is strictly increasing and $x+\delta^* \leq 1$ by \eqref{eq:x_in_S_epsilon_bounded_by_1-delta}, we have:
			\begin{align}
				\label{eq:fy_less_than_f_x_plus_delta}
				f(y) \leq f(x+1/n) < f(x+\delta^*)
			\end{align}
			Subtracting $f(x)$ from both sides of \eqref{eq:fy_less_than_f_x_plus_delta} and applying \eqref{eq:f_gap_smaller_than_epsilon_four}, we obtain
			\begin{align}
				\label{eq:fx_fy_gap_when_x_in_S_epsilon}
				f(y) - f(x) < \varepsilon/4
			\end{align}
			\item Case $x \not \in S_{\varepsilon}$: Then $f(x) > 1-\varepsilon/4$. Since $f(y) \leq 1$, we have:
			\begin{align}
				f(y) - f(x) &< 1 - (1-\varepsilon/4) = \varepsilon/4\,.
			\end{align}
		\end{itemize}

		Combining  cases $x \in S_{\varepsilon}$ and $x \not \in S_{\varepsilon}$ gives 
		$ f(y) - f(x) < \varepsilon/4$ $\forall x,y \in [0,1] \mbox{ with } x \leq y \leq x+ 1/n\,.$ 
		Since $f$ is strictly increasing, we have $f(x) \leq f(y) $ when $x \leq y$, and so 
		\begin{align}
			|f(x) - f(y)| < {\varepsilon}/{4} \qquad \forall x,y \in [0,1] \mbox{ with } |x - y| \leq {1}/{n}\,.
		\end{align} 
		
		
		By  Lemma \ref{lem:arbitrary_vn_tightness_given_function_bounds}, there exists a function $V_n \in \mathcal{V}_n$ such that:
		\begin{align} \label{eq:small_distance_between_f_x_and_v_n_x}
			|f(x) - V_n(x)| &\leq {\varepsilon}/{4} + {2}/{n} \qquad \forall x \in [0, 1] \,. 
		\end{align}
		Since $ n > 4/\varepsilon$ by \eqref{eq:choosing_n_larger_than_1_over_delta_and_2_over_epsilon}, inequality  \eqref{eq:small_distance_between_f_x_and_v_n_x} gives 
		\begin{align} 
			|f(x) - V_n(x)| &\leq {\varepsilon}/{4} + {2}/{n}  < {\varepsilon}/{4} + {\varepsilon}/{2} 
			< \varepsilon \qquad \forall x \in [0,1] \,. 
		\end{align}
		This completes the proof.
	\end{proof}
	
	\begin{lemma}
		\label{Vn-approximation-tightness}
		Suppose $v_B$ is a  value density function for Bob with $v_B(x) \leq \Delta$ for some $\Delta > 0$ and all $x \in [0,1]$. 
		Then for all $n \in \mathbb{N}^*$, there exists a function $V_n \in \mathcal{V}_n$ such that 
		\[ \Bigl|V_n(x) - V_B([0, x]) \Bigr| \leq \frac{\Delta+2}{n} \qquad \forall x \in [0, 1] \,.
		\] 
	\end{lemma}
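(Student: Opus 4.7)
The plan is to reduce this directly to Lemma~\ref{lem:arbitrary_vn_tightness_given_function_bounds} by taking $f$ to be Bob's cumulative valuation function.

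Define $f : [0,1] \to [0,1]$ by $f(x) = V_B([0,x])$. Since $v_B$ is an integrable density with $V_B([0,1])=1$, the function $f$ is continuous with $f(0)=0$ and $f(1)=1$, and it is non-decreasing (strictly increasing under the global assumption $v_B \geq \delta > 0$ stated in Section~\ref{sec:model}, which is what Lemma~\ref{lem:arbitrary_vn_tightness_given_function_bounds} needs). The key step is to verify the modulus-of-continuity hypothesis of that lemma with $\varepsilon = \Delta/n$. For any $x, y \in [0,1]$ with $|x-y| \leq 1/n$, the upper bound on the density gives
\begin{align}
|f(x) - f(y)| = \left|\int_{\min(x,y)}^{\max(x,y)} v_B(t)\, dt\right| \leq \Delta \cdot |x-y| \leq \frac{\Delta}{n}\,. \notag
\end{align}

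Applying Lemma~\ref{lem:arbitrary_vn_tightness_given_function_bounds} with this $f$, this $n$, and $\varepsilon = \Delta/n$, we obtain a function $V_n \in \mathcal{V}_n$ such that
\begin{align}
|V_n(x) - V_B([0,x])| = |V_n(x) - f(x)| \leq \varepsilon + \frac{2}{n} = \frac{\Delta}{n} + \frac{2}{n} = \frac{\Delta+2}{n} \qquad \forall x \in [0,1]\,. \notag
\end{align}
This is the desired conclusion.

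The only thing to be careful about is whether $f$ qualifies as an input to Lemma~\ref{lem:arbitrary_vn_tightness_given_function_bounds}, which is stated for continuous increasing functions. Under the paper's standing assumption that densities are bounded below by $\delta>0$, $f$ is strictly increasing and the lemma applies verbatim; otherwise one notes that the proof of Lemma~\ref{lem:arbitrary_vn_tightness_given_function_bounds} uses only monotonicity, continuity, the normalization, and the modulus bound, so non-decreasing suffices. No routine calculation beyond the Lipschitz estimate above is needed, and no obstacle is expected.
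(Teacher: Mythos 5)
Your proposal is correct and follows essentially the same route as the paper: set $f(x)=V_B([0,x])$, use the density upper bound $\Delta$ to get the Lipschitz/modulus bound $|f(x)-f(y)|\le \Delta/n$ when $|x-y|\le 1/n$, and then invoke Lemma~\ref{lem:arbitrary_vn_tightness_given_function_bounds} with $\varepsilon=\Delta/n$ to conclude. Your additional remark checking that $f$ satisfies the continuity/monotonicity hypotheses of that lemma is a nice bit of care the paper leaves implicit.
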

	\begin{proof}
		Let  $n \in \mathbb{N}^*$. Since Bob's density is upper bounded by $\Delta$, we have 
		\begin{align}
			|V_B([0, x]) - V_B([0, y])| \leq \Delta|x-y| \qquad \forall x, y \in [0,1]\,.
		\end{align}
		
		When $|x-y| \leq {1}/{n}$, we get 
		$ |V_B([0, x]) - V_B([0, y])| \leq \Delta|x-y| 
		\leq {D}/{n} $.
		\medskip 
		
		By Lemma \ref{lem:arbitrary_vn_tightness_given_function_bounds} applied to the function $f: [0,1] \to [0,1]$ given by  $f(x) = V_B([0,x])$, there exists $V_n \in \mathcal{V}_n$ with 
		$
		|V_n(x) - V_B([0, x])| \leq {\Delta}/{n} + {2}/{n}$ for all $x \in [0, 1]\,.
		$
		This completes the proof.
	\end{proof}
	
	
	The set of functions $\mathcal{V}_n$ will be used to construct a strategy for Alice. The rate of growth of the size of the set $|\mathcal{V}_n|$ as a function of $n$ will influence  the error bounds on the utilities of the  players. Next we bound the size of the set  $\mathcal{V}_n$.
	
	\begin{lemma}
		\label{vn-size-bound}
		$|\mathcal{V}_n| \leq 4^{n-1}$   $\; \forall  n \in \mathbb{N}^*$. 
	\end{lemma}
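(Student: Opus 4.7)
The plan is to parametrize each function in $\mathcal{W}_n$ by a finite tuple of integers, count such tuples exactly using a stars-and-bars argument, and then bound the resulting binomial coefficient by $4^{n-1}$ via an elementary ratio estimate.

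First I would observe that, because each $V \in \mathcal{W}_n$ is piecewise linear on the grid $\{0, 1/n, 2/n, \ldots, 1\}$, the function $V$ is completely determined by the tuple $\bigl(V(0), V(1/n), \ldots, V(1)\bigr)$. Writing $k_i = n \cdot V(i/n)$, the conditions in Definition~\ref{def:set_of_valuations_W_n} translate into: $k_0 = 0$, $k_n = n$, each $k_i \in \{0, 1, \ldots, n\}$, and $k_0 \leq k_1 \leq \cdots \leq k_n$ (the monotonicity of the sequence follows from $V$ being non-decreasing on $[0,1]$). Conversely, any such sequence yields an element of $\mathcal{W}_n$ by linear interpolation. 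Hence $|\mathcal{W}_n|$ equals the number of non-decreasing integer sequences $0 \leq k_1 \leq \cdots \leq k_{n-1} \leq n$, which by a standard stars-and-bars argument (multisets of size $n-1$ drawn from $n+1$ symbols) is exactly $\binom{2n-1}{n-1}$.

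Next I would show $\binom{2n-1}{n-1} \leq 4^{n-1}$ by induction on $n$. The base case $n=1$ reads $\binom{1}{0} = 1 = 4^0$. For the inductive step, a direct computation gives
\[
\frac{\binom{2n+1}{n}}{\binom{2n-1}{n-1}} = \frac{2(2n+1)}{n+1},
\]
which is at most $4$ for every $n \geq 1$ (equivalent to the tautology $4n+2 \leq 4n+4$). Multiplying by the inductive hypothesis $\binom{2n-1}{n-1} \leq 4^{n-1}$ yields $\binom{2n+1}{n} \leq 4^n$, closing the induction. So $|\mathcal{W}_n| \leq 4^{n-1}$ for all $n \in \mathbb{N}^*$.

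Finally I would finish by separating the definition of $\mathcal{V}_n$ into cases. For $n \neq 2$, we have $\mathcal{V}_n = \mathcal{W}_n$ and the desired bound is immediate. For $n = 2$, since $|\mathcal{W}_2| = \binom{3}{1} = 3$, we get $|\mathcal{V}_2| \leq |\mathcal{W}_2| + 1 = 4 = 4^{1}$ regardless of whether $f_A$ happens to lie in $\mathcal{W}_2$. I do not anticipate any serious obstacle: the counting is a textbook stars-and-bars identity, and the exponential bound reduces to a one-line ratio estimate. The only point requiring care is handling the extra element $f_A$ in the $n=2$ case, which is absorbed cleanly by the fact that $\binom{3}{1} = 3 < 4$.
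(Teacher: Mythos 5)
Your proof is correct and follows essentially the same approach as the paper: both reduce the count of $\mathcal{W}_n$ to a stars-and-bars enumeration yielding $\binom{2n-1}{n-1}$ (the paper parametrizes by the piecewise-constant densities $d_i = n\bigl(V(i/n)-V((i-1)/n)\bigr)$ summing to $n$, you by the cumulative values $k_i = nV(i/n)$, which are equivalent encodings), and both then establish $\binom{2n-1}{n-1} \leq 4^{n-1}$ by the same inductive ratio estimate before handling the extra $f_A$ in the $n=2$ case. Your handling of the $n=2$ case is in fact slightly more careful, since you note the bound holds regardless of whether $f_A$ already lies in $\mathcal{W}_2$.
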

	
	\begin{proof}
		We first estimate  the size of each set $\mathcal{W}_n$, and then will infer the bound for the size of $\mathcal{V}_n$. Consider the density function corresponding to any particular $V \in \mathcal{W}_n$. Because $V$ is piecewise linear, its density is piecewise constant. Each $V$ is then uniquely determined by a sequence $d_1, d_2, \dots, d_n$, where $d_i$ is the value density between $\frac{i-1}{n}$ and $\frac{i}{n}$. Each $d_i$ must be a non-negative integer, because each of these intervals has width $1/n$ and sees $V$ rise by an integer multiple of $1/n$. More strongly, because $V(0)=0$ and $V(1)=1$, we must have the next relation between the $d_i$'s:
		\begin{align}
			& \sum_{i=1}^n \frac{d_i}{n} = 1 \iff   
			\sum_{i=1}^n d_i = n\,.
		\end{align}
		
		Thus, the size $|\mathcal{W}_n|$ is the number of possible partitions of $n$ into $n$ parts with nonnegative integer sizes.
		The size of $\mathcal{W}_n$ can then be counted by a standard combinatorics technique. Represent each choice of $d_1, d_2, \dots, d_n$ with a sequence of $n$ ``stars" and $n-1$ ``bars": $d_1$ stars, then a bar, then $d_2$ stars, then another bar, and so on. Each choice of $d_1, \dots, d_n$ corresponds to a unique arrangement of stars and bars. The opposite is also true: given an arrangement, the number of stars between each pair of bars can be read off as $d_1, \dots, d_n$. The size of $\mathcal{V}_n$ is then the number of arrangements, which is $\binom{2n-1}{n}$. 
		
		The upper bound can then be shown by induction. As a base case, $\binom{2 \cdot 1 - 1}{1} = \binom{1}{1} = 1 \leq 4^{1-1}$.
		
		Now assume $\binom{2n-1}{n} \leq 4^{n-1}$ for an arbitrary $n \geq 1$. We have 
		\begin{align*}
			\binom{2(n+1)-1}{n+1} &= \frac{(2n+1)!}{(n+1)! \cdot n!} = \frac{(2n-1)! \cdot 2n \cdot (2n+1)}{n! \cdot (n-1)! \cdot n(n+1)} = \binom{2n-1}{n} \cdot \frac{2n(2n+1)}{n(n+1)} \\
			&\leq 4^{n-1} \cdot 4\frac{n+\frac{1}{2}}{n+1} \explain{By the inductive hypothesis} \\
			&\leq 4^{n-1} \cdot 4 = 4^{(n+1)-1}
		\end{align*}
		So by induction, the bound holds for all $n \geq 1$.
		
		Because $\mathcal{V}_n = \mathcal{W}_n$ for all $n \neq 2$, the only case left to verify is $n=2$. By calculation, $|\mathcal{W}_2| = \binom{3}{2} = 3$, so including the extra function makes $|\mathcal{V}_2| = 4 = 4^{2-1}$.
	\end{proof}
	
	Some other technical lemmas are necessary. The first two relate to the following function, which acts like an infinite-dimensional inner product.
	
	\begin{definition}
		Let $\mathcal{X}$ be the collection of all functions $X:\overline{\mathcal{V}} \to [-1,1]$.
		For functions $X, Y : \overline{\mathcal{V}} \to \left[-1, 1\right]$, define a function $P:\mathcal{X} \times \mathcal{X} \to \R$ as follows:
		\[
		P(X, Y) = \sum_{n=1}^{\infty} \frac{1}{2^n \left|\mathcal{V}_n\right|} \sum_{V \in \mathcal{V}_n} X(n, V) Y(n, V)\,.
		\]
	\end{definition}
	
	\begin{lemma}
		\label{any-bob-function-inner-product}
		The function $P$ has the properties of an inner product. In particular, for all functions $X, Y, Z : \overline{\mathcal{V}} \to \left[-1, 1\right]$ and all constants $c \in [-1, 1]$, the following holds:
		\begin{enumerate}[(1)]
			\item $P(X, Y)$ exists, and the infinite sum converges absolutely
			\item $P(X, Y) = P(Y, X)$
			\item $|P(X, Y)| \leq 1$
			\item $P(cX, Y) = cP(X, Y)$
			\item $P(X+Z, Y) = P(X, Y) + P(Z, Y)$, assuming $X+Z$ is in the domain of $P$
			\item $P(X, X) = 0$ if $X(n, V) = 0$ for all $n$ and all $V$, and $P(X, X) > 0$ otherwise
		\end{enumerate}
	\end{lemma}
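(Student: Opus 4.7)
The plan is to verify the six properties in essentially the order they are listed, exploiting the fact that every $X(n,V)$ lies in $[-1,1]$ so that the double sum is dominated term-by-term by a convergent geometric series.

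First I would handle property (1). For each fixed $n$, the inner sum has at most $|\mathcal{V}_n|$ terms, each bounded in absolute value by $|X(n,V)Y(n,V)| \le 1$, so the $n$-th summand of $P(X,Y)$ is bounded in absolute value by $\frac{1}{2^n|\mathcal{V}_n|} \cdot |\mathcal{V}_n| = 2^{-n}$. Summing over $n$ gives a bound of $\sum_{n\ge 1} 2^{-n} = 1$, which is a convergent majorant, so the Weierstrass $M$-test yields absolute convergence of the double series and establishes that $P(X,Y)$ is well defined. The same computation simultaneously proves property (3), since $|P(X,Y)| \le \sum_n 2^{-n} = 1$.

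Properties (2), (4), and (5) are then immediate from the algebraic form of $P$. Symmetry follows because $X(n,V)Y(n,V) = Y(n,V)X(n,V)$ inside the sum. For scalar multiplication and additivity, absolute convergence (from part 1) lets me interchange the sum with the scalar factor or split the sum termwise; the only subtlety is the domain condition that $cX$ and $X+Z$ must themselves map into $[-1,1]$, which the hypothesis $c \in [-1,1]$ and the assumption ``$X+Z$ is in the domain of $P$'' take care of.

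Finally, for property (6), specializing to $Y = X$ gives $P(X,X) = \sum_{n=1}^\infty \frac{1}{2^n|\mathcal{V}_n|}\sum_{V \in \mathcal{V}_n} X(n,V)^2$, which is a sum of nonnegative terms. Therefore $P(X,X) \ge 0$, with equality forcing every $X(n,V)^2 = 0$, i.e.\ $X \equiv 0$ on $\overline{\mathcal{V}}$; conversely if some $X(n_0, V_0) \ne 0$, the corresponding term $\frac{X(n_0,V_0)^2}{2^{n_0}|\mathcal{V}_{n_0}|}$ is strictly positive and all other terms are nonnegative, giving $P(X,X) > 0$. I do not anticipate a real obstacle here: the main ``work'' is simply writing down the $M$-test bound carefully enough that the subsequent linearity manipulations are rigorously justified, which relies on the uniform bound $|X|,|Y| \le 1$ baked into the definition of $\mathcal{X}$ and the finiteness of each $\mathcal{V}_n$ (which in turn is guaranteed by Lemma~\ref{vn-size-bound}).
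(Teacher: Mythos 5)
Your proposal is correct and takes essentially the same approach as the paper: bound each $n$-th summand by $2^{-n}$ via $|X(n,V)Y(n,V)|\le 1$ to get absolute convergence and the bound in (3) simultaneously, derive (2), (4), (5) by algebraic manipulation justified by absolute convergence, and observe that $P(X,X)$ is a sum of nonnegative terms for (6).
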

	
	\begin{proof}
		We separately prove each of the properties.
		
		For the first property, note that the individual terms go to zero
		\begin{align*}
			\lim_{n \to \infty} \left|\frac{1}{2^n |\mathcal{V}_n|} \sum_{V \in \mathcal{V}_n} X(n, V)Y(n, V)\right| &\leq \lim_{n \to \infty} \frac{1}{2^n |\mathcal{V}_n|} \sum_{V \in \mathcal{V}_n} \left|X(n, V)Y(n, V)\right| \\
			&\leq \lim_{n \to \infty} \frac{1}{2^n |\mathcal{V}_n|} \sum_{V \in \mathcal{V}_n} 1 \explain{By the bounds on $X$ and $Y$} \\
			&= \lim_{n \to \infty} \frac{1}{2^n |\mathcal{V}_n|} |\mathcal{V}_n| = \lim_{n \to \infty} \frac{1}{2^n} \\
			&= 0\,.
		\end{align*}
		Using the above upper bound for individual terms,  the sum of the absolute values does not diverge as follows
		\begin{align*}
			\sum_{n=1}^{\infty} \left|\frac{1}{2^n |\mathcal{V}_n|} \sum_{V \in \mathcal{V}_n} X(n, V)Y(n, V)\right| &\leq \sum_{n=1}^{\infty} \frac{1}{2^n}  = 1\,.
		\end{align*}
		So $P(X, Y)$ exists and the infinite sum converges absolutely.

		The existence is enough for property (2), which follows directly from the product $X(n, V)Y(n, V)$ being commutative. This calculation directly verifies property (3).

		The absolute convergence allows for the linear operations in properties (4) and (5) to factor through the sum.
		
		First, for property (4) we obtain
		\begin{align*}
			P(cX, Y) &= \sum_{n=1}^{\infty} \frac{1}{2^n |\mathcal{V}_n|} \sum_{V \in \mathcal{V}_n} cX(n, V)Y(n, V) \\
			&= c\sum_{n=1}^{\infty} \frac{1}{2^n |\mathcal{V}_n|} \sum_{V \in \mathcal{V}_n} X(n, V)Y(n, V) \\
			&= cP(A, B)\,.
		\end{align*}
		For property (5) observe that:
		\begin{align*}
			P(X+Z, Y) &= \sum_{n=1}^{\infty} \frac{1}{2^n |\mathcal{V}_n|} \sum_{V \in \mathcal{V}_n} (X(n, V) + Z(n, V)) Y(n, V) \\
			&= \sum_{n=1}^{\infty} \frac{1}{2^n |\mathcal{V}_n|} \sum_{V \in \mathcal{V}_n} X(n, V)Y(n, V) + Z(n, V)Y(n, V) \\
			&= \sum_{n=1}^{\infty} \frac{1}{2^n |\mathcal{V}_n|} \sum_{V \in \mathcal{V}_n} X(n, V)Y(n, V) + \sum_{n=1}^{\infty} \frac{1}{2^n |\mathcal{V}_n|} \sum_{V \in \mathcal{V}_n} Z(n, V)Y(n, V) \\
			&= P(X, Y) + P(Z, Y)\,.
		\end{align*}
		For property (6), observe that the expression can be rewritten as:
		\[
		P(X, X) = \sum_{n=1}^{\infty} \frac{1}{2^n |\mathcal{V}_n|} \sum_{V \in \mathcal{V}_n} X(n, V)^2\,.
		\]
		This is a sum of squares, which will be zero if all the included $X(n, V)$ are zero and positive otherwise.
	\end{proof}
	
	\begin{lemma}
		\label{any-bob-function-continuity}
		For each $x \in [0, 1]$, let $G_x : \overline{\mathcal{V}} \to [-1/2, -1/2]$ be a function defined by $G_x(n, V) = V(x)-\frac{1}{2}$. Then, for any $Z : \overline{\mathcal{V}} \to \left[-1, 1\right]$, the following function is continuous in $x$:
		\[
		P(G_x, Z) = \sum_{n=1}^{\infty} \frac{1}{2^n |\mathcal{V}_n|} \sum_{V \in \mathcal{V}_n} \left(V(x) - \frac{1}{2}\right)Z(n, V).
		\]
	\end{lemma}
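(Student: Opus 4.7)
The plan is to write $P(G_x, Z)$ as an infinite series of continuous functions of $x$ and to apply the uniform limit theorem (a Weierstrass $M$-test style argument). The key structural facts are that each $V \in \mathcal{V}_n$ is piecewise linear on $[0,1]$ (hence continuous in $x$), that $V(x) \in [0,1]$ so $V(x) - 1/2 \in [-1/2, 1/2]$, and that $|Z(n,V)| \leq 1$ by hypothesis. These together will let us bound the tail of the series uniformly in $x$.

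First I would define, for each $N \in \mathbb{N}^*$, the partial sum
\[
S_N(x) = \sum_{n=1}^{N} \frac{1}{2^n |\mathcal{V}_n|} \sum_{V \in \mathcal{V}_n} \left( V(x) - \tfrac{1}{2} \right) Z(n,V).
\]
Since each $\mathcal{V}_n$ is finite (by Lemma~\ref{vn-size-bound}) and each $V \in \mathcal{V}_n$ is continuous in $x$, $S_N$ is a finite sum of continuous functions and therefore continuous on $[0,1]$. Next I would bound the tail: for every $x \in [0,1]$,
\[
\left| P(G_x, Z) - S_N(x) \right| \;\leq\; \sum_{n=N+1}^{\infty} \frac{1}{2^n |\mathcal{V}_n|} \sum_{V \in \mathcal{V}_n} \left| V(x) - \tfrac{1}{2} \right| \cdot |Z(n,V)| \;\leq\; \sum_{n=N+1}^{\infty} \frac{1}{2^{n+1}} \;=\; \frac{1}{2^{N+1}},
\]
where the middle inequality uses $|V(x) - 1/2| \leq 1/2$ and $|Z(n,V)| \leq 1$. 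This bound is independent of $x$, so $S_N \to P(G_\cdot, Z)$ uniformly on $[0,1]$.

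Finally, since the uniform limit of continuous functions is continuous, $P(G_x, Z)$ is continuous in $x$, as required. I do not anticipate a significant obstacle here: the only thing to be careful about is that the dominating series $\sum 2^{-(n+1)}$ does not depend on $x$ or on the particular $Z$, which is exactly what the uniform bounds $|V(x)-1/2| \leq 1/2$ and $|Z(n,V)| \leq 1$ provide. The finiteness of $|\mathcal{V}_n|$ from Lemma~\ref{vn-size-bound} is what lets the inner sum be controlled by $|\mathcal{V}_n|$, which then cancels the $|\mathcal{V}_n|$ in the denominator.
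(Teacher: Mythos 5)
Your proof is correct, and it takes a genuinely different (and arguably simpler) route than the paper's. The paper splits off the special element $f_A$ from $\mathcal{V}_2$, works with the remaining sum $P'$ over the piecewise-linear families $\mathcal{W}_n$, establishes the Lipschitz bound $|V(y)-V(x)| \leq n|y-x|$ for $V \in \mathcal{W}_n$, and then runs a direct $\varepsilon$-$\delta$ argument using $\sum_n n/2^n = 2$. You instead appeal to the Weierstrass $M$-test / uniform limit theorem: the $N$th partial sum $S_N$ is a finite sum of continuous functions hence continuous, the tail is bounded by $2^{-(N+1)}$ uniformly in $x$, and the uniform limit of continuous functions is continuous. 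Your route sidesteps the Lipschitz estimate entirely and does not require separating out $f_A$; it only needs each $V \in \mathcal{V}_n$ to be continuous. The paper's approach is more explicit and additionally gives a quantitative modulus of continuity for $P'$, which your argument does not produce.

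One small inaccuracy to fix: you justify the continuity of each $V$ by saying ``each $V \in \mathcal{V}_n$ is piecewise linear.'' That holds for every member of $\mathcal{W}_n$, but $\mathcal{V}_2$ also contains the function $f_A(x) = V_A([0,x])$, which is not piecewise linear in general. It is still continuous (indeed absolutely continuous, being the integral of the bounded density $v_A$), so your argument goes through once you invoke that reason for $f_A$ specifically rather than relying on piecewise linearity across all of $\mathcal{V}_n$.
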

	
	\begin{proof}
		Alice's valuation function in $\mathcal{V}_2$ needs to be handled separately. Accordingly, write:
		\begin{align*}
			P(G_x, Z) &= \sum_{n=1}^{\infty} \frac{1}{2^n |\mathcal{V}_n|} \sum_{V \in V_n} \left(V(x) - \frac{1}{2}\right)Z(n, V) \\
			&= \frac{1}{2^2 |\mathcal{V}_2|} \left(V_A([0, x]) - \frac{1}{2}\right)Z(n, V_A) + \sum_{n=1}^{\infty} \frac{1}{2^n|\mathcal{V}_n|} \sum_{V \in \mathcal{W}_n} \left(V(x) - \frac{1}{2}\right)Z(n, V)\,.
		\end{align*}
		As long as each of these two parts is continuous, the sum will be too. The first part is continuous because $V_A([0, x])$ is continuous.
		To prove the second part, for notational simplicity, define
		\[
		P'(G_x, Z) = \sum_{n=1}^{\infty} \frac{1}{2^n|\mathcal{V}_n|} \sum_{V \in \mathcal{W}_n} \left(V(x) - \frac{1}{2}\right)Z(n, V)
		\]
		
		The continuity of $P'$ will be shown by directly appealing to the definition of a limit. First, note that each $V \in \mathcal{W}_n$ is made of $n$ linear segments, each of width $\frac{1}{n}$ and height at most $1$. 
		Therefore, if $V \in \mathcal{W}_n$, for all $x, y \in [0, 1]$:
		\begin{align}
			|V(y) - V(x)| \leq |y-x|n    \label{ineq:10232156}
		\end{align}
		Fix an arbitrary $\varepsilon > 0$ and $x \in [0, 1]$. Choose $\delta = \varepsilon/2$. For any $y$ such that $|y-x| < \delta$:
		\begin{align*}
			\left|P'(G_y, Z) - P'(G_x, Z)\right| &= \left|P'(G_y - G_x, Z)\right| \explain{By property (5)}\\
			&= \left| \sum_{n=1}^{\infty} \frac{1}{2^n |\mathcal{V}_n|} \sum_{V \in \mathcal{W}_n} (V(y) - V(x))Z(n, V)\right| \\
			&\leq \sum_{n=1}^{\infty} \frac{1}{2^n |\mathcal{V}_n|} \sum_{V \in \mathcal{W}_n} |V(y) - V(x)| \cdot |Z(n, V)| \explain{Triangle inequality} \\
			&\leq \sum_{n=1}^{\infty} \frac{1}{2^n |\mathcal{V}_n|} \sum_{V \in \mathcal{W}_n} n|y-x| \cdot 1 \explain{By ineq~\eqref{ineq:10232156} and $Z \le 1$} \\
			&\leq |y-x| \sum_{n=1}^{\infty} \frac{n}{2^n} \explain{Since $|\mathcal{W}_n| \le |\mathcal{V}_n|$}\\
			&= |y-x| \cdot 2 \\
			&< 2\delta \explain{By choice of $\delta$} \\
			&= \varepsilon \,.
		\end{align*}
		Therefore, for all $x \in [0, 1]$:
		\[
		\lim_{y \to x} P'(G_y, Z) = P'(G_x, Z),
		\]
		which is the definition of $P'(G_x, Z)$ being continuous in $x$. 
		This concludes that $P(G_x, Z)$ is continuous in $x$.
	\end{proof}
	
	Finally, the last one is a strengthening of Lemma 1 from \cite{blackwell1956}, under a stronger hypotheses.
	
	\begin{lemma}
		\label{any-bob-delta-limit}
		Suppose a sequence of nonnegative values $\delta_1, \delta_2, \dots$ satisfies, for all $t \geq 1$:
		\begin{align*}
			\delta_{t+1} &\leq \frac{1}{(t+1)^2} + \left(\frac{t-1}{t+1}\right)\delta_t\,.
		\end{align*}
		Then $\lim_{t \to \infty} \delta_t = 0$. In particular, for all $t \geq 2$:
		\[
		\delta_t \leq \frac{1}{t(t-1)} \sum_{i=2}^t \frac{i-1}{i} \leq \frac{1}{t}\,.
		\]
	\end{lemma}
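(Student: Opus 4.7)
The plan is to linearize the recurrence via a substitution that turns the multiplicative contraction factor $(t-1)/(t+1)$ into something that telescopes. Specifically, I define
\[
a_t = t(t-1)\delta_t \qquad \text{for } t \geq 2.
\]
Multiplying the hypothesis $\delta_{t+1} \leq \frac{1}{(t+1)^2} + \frac{t-1}{t+1}\delta_t$ by $(t+1)t$, the factor $(t+1)t \cdot \frac{t-1}{t+1} = t(t-1)$ reproduces $a_t$ exactly, and the inhomogeneous term becomes $(t+1)t/(t+1)^2 = t/(t+1)$. Hence the recurrence reduces to the additive form
\[
a_{t+1} \leq a_t + \frac{t}{t+1}, \qquad t \geq 2.
\]

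Next I pin down the base case. Setting $t=1$ in the hypothesis kills the $\delta_1$ contribution (since $t-1=0$), giving $\delta_2 \leq 1/4$, i.e.\ $a_2 = 2\delta_2 \leq 1/2$. Iterating the additive recurrence from $t=2$ upward and reindexing $i = s+1$,
\[
a_t \leq a_2 + \sum_{s=2}^{t-1}\frac{s}{s+1} \leq \frac{1}{2} + \sum_{s=2}^{t-1}\frac{s}{s+1} = \sum_{i=2}^{t}\frac{i-1}{i},
\]
where the last equality uses $\frac{1}{2} = \frac{2-1}{2}$ to absorb the base term into the sum. Dividing by $t(t-1)$ yields the first claimed bound
\[
\delta_t \leq \frac{1}{t(t-1)}\sum_{i=2}^{t}\frac{i-1}{i}.
\]

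For the second inequality and the limit, I observe that each term $\frac{i-1}{i} < 1$, so the sum of $t-1$ such terms is at most $t-1$, giving $\delta_t \leq \frac{t-1}{t(t-1)} = \frac{1}{t}$. This immediately implies $\delta_t \to 0$. I do not expect any real obstacle: the only mildly subtle step is guessing the substitution $a_t = t(t-1)\delta_t$, which is motivated by matching the contraction factor $\frac{t-1}{t+1}$ to a ratio of consecutive values of a simple polynomial. Once that is in hand, the remainder is a straightforward telescoping argument.
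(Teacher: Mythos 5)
Your proof is correct and arrives at exactly the same explicit bound as the paper, namely $\delta_t \leq \frac{1}{t(t-1)}\sum_{i=2}^t \frac{i-1}{i}$. The paper proves this by direct induction on $t$: it plugs the inductive hypothesis into the recurrence and verifies the algebra recombines into the $(t+1)$ version of the bound. You instead introduce the substitution $a_t = t(t-1)\delta_t$, which converts the multiplicative factor $\frac{t-1}{t+1}$ into an exact additive recurrence $a_{t+1} \leq a_t + \frac{t}{t+1}$ that telescopes. The two arguments are mathematically equivalent---the paper's inductive step is precisely the same algebraic identity, just not normalized---but your change of variables makes the bound come out of a computation rather than a verified guess, which is a nicer way to present it. One tiny bookkeeping point: when you write $a_t \leq a_2 + \sum_{s=2}^{t-1}\frac{s}{s+1}$ you are implicitly assuming $t \geq 3$ (for $t=2$ the sum is empty and the bound reduces to $a_2 \leq a_2$), which is fine since the $t=2$ case was established directly; it would be worth saying so explicitly.
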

	
	\begin{proof}
		The latter bound will be shown by induction. As a base case, consider $t=2$. The condition for $t=1$ gives:
		\begin{align}
			\delta_2 &\leq \frac{1}{(1+1)^2} + \left(\frac{1-1}{1+1}\right) \delta_1 = \frac{1}{4} = \frac{1}{2(2-1)} \sum_{i=2}^2 \frac{i-1}{i} \,.
		\end{align}
		So the base case holds. Now suppose the conclusion holds for some $t \geq 2$. 
		Assume that the claim holds for $\delta_t$.
		We can expand the upper bound of $\delta_{t+1}$ as follows.
		\begin{align}
			\delta_{t+1} &\leq \frac{1}{(t+1)^2} + \left(\frac{t-1}{t+1}\right) \delta_t 
			\\
			&\leq \frac{1}{(t+1)^2} + \left(\frac{t-1}{t+1}\right) \frac{1}{t(t-1)} \sum_{i=2}^t \frac{i-1}{i} \explain{{By the induction hypothesis}}
			\\
			&= \frac{1}{(t+1)^2} + \frac{1}{t(t+1)} \sum_{i=2}^t \frac{i-1}{i} \\
			&= \frac{1}{(t+1)t} \sum_{i=2}^{t+1} \frac{i-1}{i} \,. \label{eq:rhs_sum_series_1_over_t_stuff}
		\end{align}
		
		Now it suffices to show that the last term in \eqref{eq:rhs_sum_series_1_over_t_stuff} is upper bounded by $1/t$, which follows from the following inequality:
		\begin{align}
			\frac{1}{t(t-1)} \sum_{i=2}^t \frac{i-1}{i} &\leq \frac{1}{t(t-1)} \sum_{i=2}^t 1 = \frac{1}{t(t-1)} \cdot (t-1) 
			= \frac{1}{t}\,.
		\end{align}
		Thus $\delta_{t+1} \leq 1/t$, which  completes the proof.
	\end{proof}
	
	Now the groundwork is laid to prove Theorem \ref{thm:Alice_safety_payoffs_summary}. The idea is to limit every strategy and every valuation in every $\mathcal{V}_n$ to an average payoff of $1/2$, which is sufficient to limit Bob equipped with arbitrary valuation as well.
	
	\begin{proof}[Proof of Theorem~\ref{thm:Alice_safety_payoffs_summary}]
		We first define the Alice's strategy in an analytical manner, and then prove that it is well-defined.
		Then we will show that this strategy will force Bob's payoff to be at most $1/2$ on average, by showing both that Bob's valuation can be well-approximated by functions in $\overline{\mathcal{V}}$ and that such valuation functions are limited to a payoff of $1/2$ on average.
		Finally, we establish explicit convergence rates for Bob's payoff if his valuation is bounded, as well as the convergence of Alice's payoff to $1/2$.
		
		\paragraph{Defining Alice's strategy $S_A$}
		Consider Alice's decision in round $T$. In each round $t < T$, let the payoff to Bob whose cumulative valuation function is $V$ be $u_{t,V}$. 
		For each $x \in [0, 1]$, 
		let $G_x:\overline{\mathcal{V}} \to [-1, 1]$ be a function defined as $G_x(n, V) = V(x) - 1/2$.
		Let $U_t(n, V) = u_{t, V} - 1/2$  for $t=1, \dots, T-1$, and let $\overline{U}_t(n, V) = \sum_{i=1}^t U_i(n, V)/t$. Let $W_t(n, V) = \max \{0, \overline{U}_t(n, V)\}$ for $t=1, \dots, T-1$. 
		In round $T$, Alice's strategy $S_A$ is to cut at a point $x$ such that $P(G_x, W_{T-1}) = 0$. 
		
		We then show that $S_A$ is well-defined.
		It suffices to prove that such an $x$ always exists.
		To this end, observe that
		\begin{align*}
			P(G_0, W_{T-1}) &= \sum_{n=1}^{\infty} \frac{1}{2^n |\mathcal{V}_n|} \sum_{V \in \mathcal{V}_n} \left(V(0) - \frac{1}{2}\right)W_{T-1}(n, V) \\
			&= \sum_{n=1}^{\infty} \frac{1}{2^n|\mathcal{V}_n|} \sum_{V \in \mathcal{V}_n} -\frac{1}{2}W_{T-1}(n, V) \\
			&\leq \sum_{n=1}^{\infty} \frac{1}{2^n|\mathcal{V}_n|} \sum_{V \in \mathcal{V}_n} 0 \explain{Since $W_{T-1}(\cdot)$ is nonnegative} \\
			&= 0\,.
		\end{align*}
		Using similar algebra for $P(G_1, W_{T-1})$, we obtain
		\begin{align*}
			P(G_1, W_{T-1}) &= \sum_{n=1}^{\infty} \frac{1}{2^n |\mathcal{V}_n|} \sum_{V \in \mathcal{V}_n} \left(V(1) - \frac{1}{2}\right)W_{T-1}(n, V) \\
			&= \sum_{n=1}^{\infty} \frac{1}{2^n|\mathcal{V}_n|} \sum_{V \in \mathcal{V}_n} \frac{1}{2}W_{T-1}(n, V) \\
			&\geq \sum_{n=1}^{\infty} \frac{1}{2^n|\mathcal{V}_n|} \sum_{V \in \mathcal{V}_n} 0 \explain{Since $W_{T-1}(\cdot)$ is nonnegative} \\
			&= 0
		\end{align*}
		By Lemma \ref{any-bob-function-continuity}, $P(G_x, W_{T-1})$ is continuous in $x$, so by the Intermediate Value Theorem there exists $x$ such that $P(G_x, W_{T-1}) = 0$.
		
		\paragraph{Bounding Bob's payoff} Define
		\begin{align*}
			\mathcal{S} = \left\{X : -\frac{1}{2} \leq X(n, V) \leq 0 \text{ for all $(n, V) \in \overline{\mathcal{V}}$}\right\} .
		\end{align*}
		For each round $t$, let $\delta_t$ be the distance from $\mathcal{S}$ to $\overline{U}_t$, defined as:
		\begin{align} \label{def:delta_t}
			\delta_t = \inf_{X \in S} P(\overline{U}_t-X, \overline{U}_t-X).
		\end{align}
		For $t < T$, let $Y_t: \overline{\mathcal{V}} \to [-1/2, 1/2]$ be the function defined by:
		\begin{align*}
			Y_t(n, V) = \min\{0, \overline{U}_t(n, V)\}
		\end{align*}
		
		By Claim~\ref{claim:upper_bound_on_delta_t}, we have $\delta_t \leq 1/t$ for all $t \geq 2$.
		
		Now we will show that this strategy guarantees that, for any Bob's valuation function $v_B$ and any strategy $S_B$ Bob employs, his average payoff is at most ${1}/{2}$. Consider an arbitrary $\varepsilon > 0$. It will be shown that, for some sufficiently large $T$, Bob's average payoff up to any point after round $T$ is at most ${1}/{2} + \varepsilon$.
		
		In the general case, this convergence can be established from Lemma \ref{Vn-arbitrary-function-approximation}. 
		Consider $N, T \in \mathbb{N}^*$ such that 
		\begin{itemize} 
			\item  $N$ is  such that there exists $V' \in \mathcal{V}_N$ with $|V'(x) - V_B([0, x])| < {\varepsilon}/{2}$ for all $x \in [0, 1]$. 
			\item  $T$ is sufficiently large so that 
			\begin{align}
				\delta_t < \frac{\varepsilon^2}{4 \cdot 2^N |\mathcal{V}_N|} \qquad \forall t \geq T\,. \label{ineq:02100924}
			\end{align}
			For example, taking $T = \bigl\lceil \frac{4 \cdot 2^N |\mathcal{V}_N|}{\varepsilon^2}\bigr\rceil$ would suffice. 
		\end{itemize}
		We will show that for each $t \geq T$, we have $\overline{U}_t(N, V') <{\varepsilon}/{2}$. 
		This trivially holds if $\overline{U}_t(N, V') \leq 0$, so assume $\overline{U}_t(N, V') > 0$. We then obtain
		\begin{align}
			\overline{U}_t(N, V') &= \sqrt{2^N |\mathcal{V}_N| \cdot \frac{1}{2^N |\mathcal{V}_N|}\overline{U}_t(N, V')^2} 
			\nonumber
			\\
			&= \sqrt{2^N |\mathcal{V}_N| \cdot \frac{1}{2^N |\mathcal{V}_N|}\left(\overline{U}_t(N, V') - Y_t(N, V')\right)^2},
			\label{eq:U_bar_t_n_v_prime_identity}
		\end{align}
		where \eqref{eq:U_bar_t_n_v_prime_identity} follows from the fact that $Y_t(N, V') = \min \{0, \overline{U}_t(N, V')\} = 0$.
		Using \eqref{eq:U_bar_t_n_v_prime_identity}, we can upper bound $ \overline{U}_t(N, V')$ as follows:
		\begin{align}
			\overline{U}_t(N, V') &\leq  \sqrt{2^N |\mathcal{V}_N| \sum_{n=1}^{\infty} \frac{1}{2^n |\mathcal{V}_n|} \sum_{V \in \mathcal{V}_n} \left(\overline{U}_t(n, V) - Y_t(n, V)\right)^2}  \notag \\ 
			&= 
			\sqrt{2^N |\mathcal{V}_N| P(\overline{U}_t - Y_t, \overline{U}_t - Y_t)} \notag \\
			&= \sqrt{2^N |\mathcal{V}_N| \delta_t} \label{eq:ut-less-than-sqrt-2n-vn-deltat} \\
			&< \sqrt{2^N |\mathcal{V}_N| \cdot \frac{\varepsilon^2}{4 \cdot 2^N |\mathcal{V}_N|}} = \frac{\varepsilon}{2}\,.\explain{By \eqref{ineq:02100924}}
		\end{align}

		Therefore, we have $\overline{U}_t(N, V') < {\varepsilon}/{2}$ for all $t \geq T$. Translating back into payoffs, this gives a Bob whose cumulative valuation function is $V'$ an average payoff of less than ${1}/{2} + {\varepsilon}/{2}$. By the choice of $V'$, we have 
		\begin{align}
			|V'(x) - V_B([0, x])| < {\varepsilon}/{2} \qquad \mbox{for all} \; x \in [0, 1] \,.
		\end{align} so the average payoff to a Bob whose valuation function is $v_B$ is less than ${1}/{2} + \varepsilon$. Since this construction works for all $\varepsilon > 0$, Bob's average payoff satisfies the following inequality as required:
		\[
		\frac{u_B(1, t)}{t} \leq \frac{1}{2} + o(1)\,.
		\]
		
		\paragraph{Explicit bounds} It remains to prove inequality~\eqref{ineq:bob-punishment-explicit} when Bob's density is upper bounded by $\Delta$, and prove Alice's explicit payoff in inequality~\eqref{ineq:bob-punishment-explicit-alice}.
		Choose an integer $N$ large enough that
		\begin{align*}
			\frac{\Delta+2}{N} < \frac{\varepsilon}{2}\, .
		\end{align*}
		For instance, taking $N=\left\lceil 2(\Delta+2)/\varepsilon\right\rceil$ is sufficient.
		By Lemma \ref{Vn-approximation-tightness}, there exists $V' \in \mathcal{V}_N$ such that $|V'(x) - V_B([0, x])| < (\Delta+2)/N$ for all $x \in [0, 1]$. Choose $T$ in exactly the same way as the general case, i.e. $T = \lceil 4 \cdot 2^N |\mathcal{V}_N|/\varepsilon^2 \rceil$. By exactly the same algebra as the general case, we can conclude that $\overline{U}_t(N, V') < \eps/2$ for all $t \geq T$. By the choice of $V'$, we have
		\begin{align}
			|V'(x) - V_B([0, x])| < {\varepsilon}/{2} \qquad \mbox{for all} \; x \in [0, 1] \, ,
		\end{align}
		so the average payoff to Bob is at most $\frac{1}{2} + \varepsilon$.
		
		In this case, Bob's average payoff will be within $\varepsilon$ of $1/2$ by time 
		\begin{align}
			T_{\varepsilon} = \left\lceil \frac{4 \cdot 2^N |\mathcal{V}_N|}{\varepsilon^2}\right\rceil \, , \qquad \mbox{where} \quad N = \left\lceil 2(\Delta+2)/\varepsilon\right\rceil \,.
		\end{align}  
		Hence, for such $T_\eps$, we have
		\begin{align*}
			T_{\varepsilon} &\leq \left\lceil \frac{4 \cdot 2^N \cdot 4^{N-1}}{\varepsilon^2}\right\rceil.
		\end{align*}
		Since $\lceil x \rceil \le 2x$ for $x \ge 1/2$, this implies
		\begin{align*}
			T_{\varepsilon} &\leq 2\frac{8^N}{\varepsilon^2}.
		\end{align*}
		Due to our choice of $N$, we can further obtain
		\begin{align} \label{eq:ub_t_epsilon_involved_delta}
			T_{\varepsilon} &\leq 2\frac{8^{\frac{2(\Delta+2)}{\varepsilon}+1}}{\varepsilon^2}.
		\end{align}
		Taking the natural logarithm of both sides in \eqref{eq:ub_t_epsilon_involved_delta}, we have
		\begin{align*}
			\ln (T_{\varepsilon}) &\leq \ln(16) + \frac{2(\Delta+2)}{\varepsilon}\ln(8) + 2\ln \left(\frac{1}{\varepsilon}\right).
		\end{align*}
		Using the fact that $\ln(x) \le x -1 $ for every $x >0$, it follows that
		\begin{align}
			\ln (T_{\varepsilon}) &\leq \ln(16) + \frac{2(\Delta+2)}{\varepsilon}\ln(8) + 2\left(\frac{1}{\varepsilon}-1\right). \label{eq:will_be_rearranged_ub_tepsilon}
		\end{align}
		Rearranging \eqref{eq:will_be_rearranged_ub_tepsilon}, we finally obtain
		\begin{align*}
			\varepsilon &\leq \frac{2\ln(8)(\Delta+2) + 2}{\ln(T_{\varepsilon}) - \ln(16) + 2}.
		\end{align*}
		Note that this requires the step of dividing by $\ln(T_{\varepsilon}) - \ln(16) + 2$ and it needs this quantity to be positive, which is indeed positive for $T \geq 3$. 
		Rounding the terms gives the desired regret bound for Bob.
		
		To provide Alice's payoff bound in inequality~\eqref{ineq:bob-punishment-explicit-alice}, consider a hypothetical Bob whose valuation function $\tilde{v}_B$ is exactly the same as Alice's valuation function $v_A$. For all rounds $\tau$, the payoff $\tilde{u}_B^{\tau}$ to this Bob then satisfies:
		\begin{align}
			u_A^{\tau} + \tilde{u}_B^{\tau} &= V_A([0, a_{\tau}]) + V_A([a_{\tau}, 1]) \explain{One player gets $V_A([0, a_{\tau}])$ and the other gets $V_A([a_{\tau}, 1])$} \\
			&= 1
		\end{align}
		For any $t$, summing $u_A^{\tau} + \tilde{u}_B^{\tau}$ over $\tau \in [t]$ and dividing by $t$ then gives:
		\begin{align}
			\frac{u_A(1, t)}{t} + \frac{\tilde{u}_B(1, t)}{t} = 1 \label{eq:bob-punishment-players-total-payoff-equals-one}
		\end{align}
		So it suffices to upper-bound this particular Bob's payoff to lower-bound Alice's payoff.
		\medskip 
		
		Choose an arbitrary $\varepsilon > 0$.
		Let $T_{\varepsilon} = \bigl\lceil \frac{2^2 |\mathcal{V}_2|}{\varepsilon^2} \bigr\rceil$, ensuring $\delta_t < \frac{\varepsilon^2}{2^2 |\mathcal{V}_2|}$ for all $t \geq T_{\varepsilon}$. 
		By construction, Alice's valuation function $V_A([0, x]) \in \mathcal{V}_2$. Taking $N=2$ and $V' = V_A$, we can then use identical algebra as the general-Bob case up to \eqref{eq:ut-less-than-sqrt-2n-vn-deltat} to conclude that, for $t \geq T_{\varepsilon}$:
		\begin{align}
			\overline{U}_t(2, V_A) &\leq \sqrt{2^2 |\mathcal{V}_2| \delta_t} \explain{Copying over \eqref{eq:ut-less-than-sqrt-2n-vn-deltat}} \\
			&< \sqrt{2^2 |\mathcal{V}_2| \cdot \frac{\varepsilon^2}{2^2 |\mathcal{V}_2|}} = \varepsilon
		\end{align}
		So this Bob's payoff is upper-bounded by $1/2 + \varepsilon$, and so by \eqref{eq:bob-punishment-players-total-payoff-equals-one} Alice's payoff is lower-bounded by $1/2 - \varepsilon$.
		To solve for $\varepsilon$, observe that $T_{\varepsilon} \leq {16}/{\varepsilon^2} + 1$.
		Solving this bound on $T_{\eps}$ for $\eps$ gives
		\begin{align*}
			\varepsilon &\leq \frac{4}{\sqrt{T_{\varepsilon} - 1}}\, ,
		\end{align*}
		and it finishes the proof.
	\end{proof}
	We finally provide the claims and their proofs used throughout the main proof.
	\begin{claim}\label{cl:y-minimum}
		In the setting of Theorem~\ref{thm:Alice_safety_payoffs_summary}, if $\overline{U}_t \notin \mathcal{S}$, then $\argmin_{X\in \mathcal{S}}P(\overline{U}_t - X, \overline{U}_t - X)$ exists. Furthermore:        
		\begin{align*}
			\argmin_{X\in \mathcal{S}}P(\overline{U}_t - X, \overline{U}_t - X) = Y_t = \min\{0, \overline{U}_t(n, V)\}   
		\end{align*}
	\end{claim}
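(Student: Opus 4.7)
The plan is to exploit the fact that $P(\overline{U}_t - X, \overline{U}_t - X)$ is a weighted sum of squares whose constraints on $X$ are separable across the coordinates $(n,V) \in \overline{\mathcal{V}}$. Concretely, by definition of $P$,
\begin{align*}
P(\overline{U}_t - X, \overline{U}_t - X) = \sum_{n=1}^{\infty} \frac{1}{2^n |\mathcal{V}_n|} \sum_{V \in \mathcal{V}_n} \bigl(\overline{U}_t(n,V) - X(n,V)\bigr)^2,
\end{align*}
which is an absolutely convergent series of nonnegative terms (by Lemma~\ref{any-bob-function-inner-product}). The constraint $X \in \mathcal{S}$ is the coordinate-wise constraint $X(n,V) \in [-1/2,0]$, with no coupling between different $(n,V)$. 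Hence the infimum over $\mathcal{S}$ can be computed by minimizing each summand separately, and the function that attains the coordinate-wise minima is a candidate for the global minimizer (provided it lies in $\mathcal{S}$).

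For each fixed $(n,V)$, the task is to minimize $\bigl(\overline{U}_t(n,V) - c\bigr)^2$ over $c \in [-1/2,0]$. Since per-round payoffs lie in $[0,1]$, the average payoff $u_{t,V}/t$ lies in $[0,1]$ and therefore $\overline{U}_t(n,V) = u_{t,V}/t - 1/2 \in [-1/2,1/2]$. The minimizer over $[-1/2,0]$ of the squared distance to a point in $[-1/2,1/2]$ is given by Euclidean projection: if $\overline{U}_t(n,V) \leq 0$, the minimizer is $\overline{U}_t(n,V)$ itself (yielding zero); if $\overline{U}_t(n,V) > 0$, the minimizer is $0$. In both cases the minimizer equals $\min\{0, \overline{U}_t(n,V)\} = Y_t(n,V)$.

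Because each coordinate minimizer lies in $[-1/2,0]$, the function $Y_t$ belongs to $\mathcal{S}$, so plugging it in yields a feasible point that simultaneously attains each coordinate's minimum; consequently it attains the infimum of the whole series. This establishes both the existence of the argmin and the identification $\argmin_{X \in \mathcal{S}} P(\overline{U}_t - X, \overline{U}_t - X) = Y_t$. I expect no serious obstacle here: the only thing to be careful about is swapping ``global minimization'' with ``term-by-term minimization'' across the infinite sum, but this is legitimized by the nonnegativity of the summands and the separability of the constraint set, and the hypothesis $\overline{U}_t \notin \mathcal{S}$ is not even needed for the argument (it merely guarantees that the minimum value is strictly positive, which is not asserted by the claim).
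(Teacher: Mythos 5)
Your proof is correct and takes essentially the same approach as the paper: both treat $P(\overline{U}_t - X, \overline{U}_t - X)$ as a weighted sum of squares with a separable constraint set and minimize coordinate-wise, obtaining $Y_t$ as the pointwise projection of $\overline{U}_t$ onto $[-1/2,0]$. Your explicit observation that $\overline{U}_t(n,V) \in [-1/2,1/2]$ (so that the projection onto $[-1/2,0]$ is indeed $\min\{0,\overline{U}_t(n,V)\}$ rather than being clipped at $-1/2$) is a small detail the paper leaves implicit when asserting $Y_t \in \mathcal{S}$, and your remark that the hypothesis $\overline{U}_t \notin \mathcal{S}$ is unnecessary for this particular claim is also accurate.
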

	
	\begin{proof}[Proof of the claim]
		
		Let $Y_t = \min\{0, \overline{U}_t(n, V)\} \in \mathcal{S}$. For any $X \in \mathcal{S}$, 
		\begin{align}
			P(\overline{U}_t-X, \overline{U}_t-X) 
			&= 
			\sum_{n=1}^{\infty} \frac{1}{2^n |\mathcal{V}_n|} \sum_{V \in \mathcal{V}_n} \left(\overline{U}_t(n, V) - X(n, V)\right)^2
			\nonumber
			\\
			&= 
			\sum_{n=1}^{\infty} \frac{1}{2^n |\mathcal{V}_n|} \left(\sum_{\substack{V \in V_n \\ \overline{U}_t(n, V) > 0}}\left(\overline{U}_t(n, V) - X(n, V)\right)^2 + \sum_{\substack{V \in V_n \\ \overline{U}_t(n, V) \leq 0}} \left(\overline{U}_t(n, V) - X(n, V)\right)^2\right) \label{ineq:10251639}
		\end{align}
		
		Since $X \in \mathcal{S}$, we have that $X(n,V) \le 0$ for every $n$ and $V$, due to our construction of $\cS$.
		
		Therefore, replacing them with $0$ brings them closer to any positive value and so decreases the first sum of squares. The second sum of squares is nonnegative, so it can be reduced by replacing it with $0$. Applying these simplifications, we obtain
		\begin{align*}
			\sum_{n=1}^{\infty} \frac{1}{2^n |\mathcal{V}_n|}& \left(\sum_{\substack{V \in V_n \\ \overline{U}_t(n, V) > 0}}\left(\overline{U}_t(n, V) - X(n, V)\right)^2 + \sum_{\substack{V \in V_n \\ \overline{U}_t(n, V) \leq 0}} \left(\overline{U}_t(n, V) - X(n, V)\right)^2\right) \\
			\geq& 
			\sum_{n=1}^{\infty} \frac{1}{2^n |\mathcal{V}_n|} \left(\sum_{\substack{V \in V_n \\\overline{U}_t(n, V) > 0}} \left(\overline{U}_t(n, V) - 0\right)^2 + 0\right) 
			\\
			=& 
			D(\overline{U}_t - Y_t, \overline{U}_t - Y_t), \explain{Only the $\overline{U}(n, V) > 0$ terms remain}
		\end{align*}
		and it proves the claim.
	\end{proof}
	
	\begin{claim}\label{cl:util-properties}
		In the setting of Theorem~\ref{thm:Alice_safety_payoffs_summary}, the following properties hold when $\overline{U}_t \notin \mathcal{S}$.
		\begin{enumerate}
			\item $P(Y_t, W_t) = 0$
			\item $P(U_{t+1}, W_t) = 0$
			\item $P(\overline{U}_t, W_t) > 0$
			\item $P(X, W_t) \leq 0$ for all $X \in \mathcal{S}$
		\end{enumerate}
	\end{claim}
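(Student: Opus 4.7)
The plan is to prove the four properties in order, using the pointwise structure of $Y_t$ and $W_t$ together with the inner-product-like behavior of $P$ (Lemma~\ref{any-bob-function-inner-product}) and the definition of Alice's strategy. The only property requiring genuine input from Alice's play is property (2); properties (1), (3), (4) are essentially algebraic consequences of the decomposition $\overline{U}_t = Y_t + W_t$.

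\textbf{Property (1).} For each $(n,V) \in \overline{\mathcal{V}}$, exactly one of $Y_t(n,V)$ and $W_t(n,V)$ is zero, since $Y_t = \min\{0, \overline{U}_t\}$ and $W_t = \max\{0, \overline{U}_t\}$. Hence $Y_t(n,V)\,W_t(n,V) = 0$ for every $(n,V)$, and $P(Y_t, W_t) = 0$ term by term.

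\textbf{Property (2).} This is where Alice's strategy enters. Let $a_{t+1}$ be Alice's cut point in round $t+1$. For any hypothetical Bob with cumulative valuation $V$, the payoff $u_{t+1,V}$ is either $V(a_{t+1})$ or $1 - V(a_{t+1})$, depending on whether the actual Bob chose $L$ or $R$. In the first case $U_{t+1}(n,V) = V(a_{t+1}) - \tfrac{1}{2} = G_{a_{t+1}}(n,V)$; in the second case $U_{t+1}(n,V) = -G_{a_{t+1}}(n,V)$. In either case $U_{t+1} = \pm G_{a_{t+1}}$, and by the definition of Alice's strategy (applied at round $t+1$ using $W_t$ in place of $W_{T-1}$), $P(G_{a_{t+1}}, W_t) = 0$. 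Linearity (property (4) of Lemma~\ref{any-bob-function-inner-product}) then gives $P(U_{t+1}, W_t) = 0$.

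\textbf{Property (3).} Pointwise, $Y_t(n,V) + W_t(n,V) = \overline{U}_t(n,V)$, so by linearity (property (5) of the inner product) and property (1) just proved,
\begin{align*}
P(\overline{U}_t, W_t) \;=\; P(Y_t, W_t) + P(W_t, W_t) \;=\; P(W_t, W_t).
\end{align*}
By property (6) of Lemma~\ref{any-bob-function-inner-product}, $P(W_t, W_t) > 0$ unless $W_t \equiv 0$. But $W_t \equiv 0$ would mean $\overline{U}_t(n,V) \leq 0$ for every $(n,V)$, and since $U_i(n,V) \in [-\tfrac{1}{2}, \tfrac{1}{2}]$ pointwise we also get $\overline{U}_t(n,V) \geq -\tfrac{1}{2}$, forcing $\overline{U}_t \in \mathcal{S}$, contradicting the hypothesis. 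Hence $P(\overline{U}_t, W_t) > 0$.

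\textbf{Property (4).} For any $X \in \mathcal{S}$ we have $X(n,V) \leq 0$ and $W_t(n,V) \geq 0$ for every $(n,V)$, so every term in the sum defining $P(X, W_t)$ is non-positive, yielding $P(X, W_t) \leq 0$. The main obstacle in the whole chain is identifying the correct sign relationship in property (2); once one notices that $U_{t+1}$ is literally $\pm G_{a_{t+1}}$, Alice's defining condition $P(G_{a_{t+1}}, W_t) = 0$ does the work, and the remaining properties follow from the orthogonal decomposition $\overline{U}_t = Y_t + W_t$.
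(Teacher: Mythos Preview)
The proposal is correct and follows essentially the same approach as the paper: property (1) via pointwise disjointness of $Y_t$ and $W_t$, property (2) via $U_{t+1} = \pm G_{a_{t+1}}$ and Alice's defining equation, and property (4) via the sign pattern of terms. Your argument for property (3) via the decomposition $P(\overline{U}_t, W_t) = P(W_t, W_t)$ and positive-definiteness is a clean variant of the paper's direct expansion of the sum, but the underlying content is the same.
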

	\begin{proof}[Proof of the claim]
		All of these properties can be explained by expanding the dot product $P$ and referring to the strategy $S_A$. The first one is the most straightforward: $Y_t$ and $W_t$ are never nonzero in the same coordinate, so $P(Y_t, W_t) = 0$. For the second one, by definition, in round $t+1$ Alice cuts at a point $x$ such that $P(G_x, W_t) = 0$. If Bob selects the left piece, $U_{t+1} = G_x$. If Bob instead selects the right piece:
		\begin{align*}
			U_{t+1}(n, V) &= (1 - V(x)) - \frac{1}{2} \\
			&= \frac{1}{2} - V(x) \\
			&= -G_x(n, V)\,.
		\end{align*}
		So by property 4 of Lemma \ref{any-bob-function-inner-product}, we have $P(U_{t+1}, W_t) = -P(G_x, W_t) = 0$. Because $P$ is linear as per property 5 in Lemma \ref{any-bob-function-inner-product}, any mixed strategy over these outcomes must also satisfy $P(U_{t+1}, W_t) = 0$.
		
		For part (3) of the lemma, expanding the dot product gives:
		\begin{align*}
			P(X, W_t) &= \sum_{n=1}^{\infty} \frac{1}{2^n |\mathcal{V}_n|} \sum_{V \in \mathcal{V}_n} X(n, V) \cdot \max\{0, \overline{U}_t(n, V)\} \\
			&= \sum_{n=1}^{\infty} \frac{1}{2^n|\mathcal{V}_n|} \sum_{V \in \mathcal{V}_n, \overline{U}_t(n, V) > 0} X(n, V) \overline{U}_t(n, V)\,.
		\end{align*}
		As there exists $(n, V)$ such that $\overline{U}_t(n, V) > 0$, this sum is strictly positive when $X=\overline{U}_t(n, V)$, and so $P(\overline{U}_t, W_t) > 0$. Similarly, for $X \in \mathcal{S}$, we have  $X(n, V) \leq 0$ for all $(n, V)$, so $P(X, W_t) \leq 0$.
	\end{proof}
	
	\begin{claim} \label{claim:upper_bound_on_delta_t}
		In the setting of Theorem~\ref{thm:Alice_safety_payoffs_summary}, the sequence $\{\delta_t\}_{t = 1}^{\infty}$ defined in \eqref{def:delta_t} satisfies the inequality $\delta_t \leq 1/t$ for all $t \geq 2$.
	\end{claim}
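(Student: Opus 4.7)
The plan is to verify the hypothesis of Lemma~\ref{any-bob-delta-limit} by establishing
\begin{align*}
\delta_{t+1} \leq \frac{1}{(t+1)^2} + \frac{t-1}{t+1}\delta_t
\end{align*}
for all $t \geq 1$, which then immediately gives $\delta_t \leq 1/t$ for $t \geq 2$. This is a Blackwell-style argument: each round, we expand $\overline{U}_{t+1}$ around the current nearest point of $\mathcal{S}$ to $\overline{U}_t$ and exploit the orthogonality properties of $U_{t+1}$ that are guaranteed by Alice's choice of cut.

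If $\overline{U}_t \in \mathcal{S}$, then $\delta_t = 0$, and taking $X = \overline{U}_t$ in the infimum defining $\delta_{t+1}$ gives $\delta_{t+1} \leq P(\overline{U}_{t+1} - \overline{U}_t, \overline{U}_{t+1} - \overline{U}_t) = \frac{1}{(t+1)^2} P(U_{t+1} - \overline{U}_t, U_{t+1} - \overline{U}_t) \leq \frac{1}{(t+1)^2}$, since each coordinate of $U_{t+1}$ and $\overline{U}_t$ lies in $[-1/2, 1/2]$. Otherwise, Claim~\ref{cl:y-minimum} identifies $Y_t$ as the infimum point, so $\delta_t = P(W_t, W_t)$ with $\overline{U}_t = W_t + Y_t$, and since $Y_t \in \mathcal{S}$ we get $\delta_{t+1} \leq P(\overline{U}_{t+1} - Y_t, \overline{U}_{t+1} - Y_t)$.

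The main calculation substitutes $\overline{U}_{t+1} = \frac{t \overline{U}_t + U_{t+1}}{t+1}$ to write $\overline{U}_{t+1} - Y_t = \frac{t W_t + (U_{t+1} - Y_t)}{t+1}$, then expands the inner product by the bilinearity given in Lemma~\ref{any-bob-function-inner-product}. The cross term $P(W_t, U_{t+1} - Y_t)$ vanishes because parts 1 and 2 of Claim~\ref{cl:util-properties} give $P(W_t, Y_t) = 0 = P(W_t, U_{t+1})$; this is precisely where Alice's strategy (cutting at $x$ so that $P(G_x, W_t) = 0$) earns its keep.

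The hard part---and the place where a loose coordinatewise bound $\leq 1$ would fail to yield the right rate---is controlling $P(U_{t+1} - Y_t, U_{t+1} - Y_t)$. I would decompose $U_{t+1} - Y_t = (U_{t+1} - \overline{U}_t) + W_t$ and expand a second time; using $P(U_{t+1}, W_t) = 0$ and $P(\overline{U}_t, W_t) = P(W_t + Y_t, W_t) = \delta_t$, the cross term becomes $-\delta_t$, so
\begin{align*}
P(U_{t+1} - Y_t, U_{t+1} - Y_t) = P(U_{t+1} - \overline{U}_t, U_{t+1} - \overline{U}_t) - \delta_t \leq 1 - \delta_t,
\end{align*}
where the final inequality again uses that $|U_{t+1}(n,V)|, |\overline{U}_t(n,V)| \leq 1/2$. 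Assembling the pieces yields $\delta_{t+1} \leq \frac{t^2 \delta_t + (1-\delta_t)}{(t+1)^2} = \frac{1 + (t^2-1)\delta_t}{(t+1)^2}$, which is exactly the target recurrence, and Lemma~\ref{any-bob-delta-limit} closes the claim.
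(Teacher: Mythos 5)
Your proposal is correct and follows essentially the same Blackwell-style argument as the paper: identify $Y_t$ as the projection via Claim~\ref{cl:y-minimum}, exploit the orthogonality facts $P(U_{t+1}, W_t) = 0$ and $P(Y_t, W_t) = 0$ from Claim~\ref{cl:util-properties}, bound the single-step term using Property (3) of Lemma~\ref{any-bob-function-inner-product}, and close with Lemma~\ref{any-bob-delta-limit}. The only cosmetic difference is the bookkeeping of the decomposition: the paper expands $P$ around $(\overline{U}_{t+1} - \overline{U}_t) + (\overline{U}_t - Y_t)$ and extracts the cross term $-\frac{2\delta_t}{t+1}$ directly, while you expand around $\frac{tW_t + (U_{t+1} - Y_t)}{t+1}$, kill the cross term, and then do a second expansion on $P(U_{t+1}-Y_t, U_{t+1}-Y_t)$ to surface $-\delta_t$; these are algebraically equivalent routes to the identical recurrence.
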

	\begin{proof}
		
		We will show that our construction of $\delta_t$ satisfies the recursion formula defined in Lemma~\ref{any-bob-delta-limit}.
		We first focus on the case such that $\delta_t > 0$, which is equivalent to $\overline{U}_t \notin \mathcal{S}$, and then prove that it also holds for $\delta_t = 0$.
		Given that $\delta_t > 0$, we observe that
		
		\begin{align}
			\delta_{t+1} = \inf_{X \in \mathcal{S}} P(\overline{U}_{t+1} - X, \overline{U}_{t+1} - X)
			&\leq P(\overline{U}_{t+1} - Y_t, \overline{U}_{t+1} - Y_t) \,.
			\label{eq:delta_t_plus_1_upper_bound}
		\end{align}
		By rewriting the right hand side of \eqref{eq:delta_t_plus_1_upper_bound}, we obtain 
		\begin{align}
			\delta_{t+1} 
			&\leq  P((\overline{U}_{t+1} - \overline{U}_t) + (\overline{U}_t - Y_t), (\overline{U}_{t+1} - \overline{U}_t) + (\overline{U}_t - Y_t)) \label{eq:delta_t_plus_1_less_than_or_equal_to_P_of_stuff} 
		\end{align}
		Distributing over $P$ in the expression on the right hand side of \eqref{eq:delta_t_plus_1_less_than_or_equal_to_P_of_stuff}, we get that \eqref{eq:delta_t_plus_1_less_than_or_equal_to_P_of_stuff} is equivalent to 
		\begin{align}
			\delta_{t+1}        & \leq   P(\overline{U}_{t+1} - \overline{U}_t, \overline{U}_{t+1} - \overline{U}_t) + 2P(\overline{U}_{t+1} - \overline{U}_t, \overline{U}_t - Y_t) + P(\overline{U}_t - Y_t, \overline{U}_t - Y_t) \notag  \\
			& =  P(\overline{U}_{t+1} - \overline{U}_t, \overline{U}_{t+1} - \overline{U}_t) + 2P(\overline{U}_{t+1} - \overline{U}_t, \overline{U}_t - Y_t) + \delta_t, \label{ineq:0208-delta}
		\end{align}
		where the inequality follows from Claim ~\ref{cl:y-minimum}.
		
		Noting that $\overline{U}_{t+1} - \overline{U}_t = (U_{t+1} - \overline{U}_t)/(t+1)$, we have
		\begin{align*}
			P(\overline{U}_{t+1} - \overline{U}_t, \overline{U}_t - Y_t)
			&= 
			\frac{1}{t+1} P(U_{t+1} - \overline{U}_t, \overline{U}_t - Y_t) 
			\\
			&= 
			\frac{1}{t+1}P((U_{t+1} - Y_t) + (Y_t - \overline{U}_t), \overline{U}_t - Y_t) 
			\\
			&= 
			\frac{1}{t+1}\left(P(U_{t+1} - Y_t, \overline{U}_t - Y_t) + P(Y_t - \overline{U}_t, \overline{U}_t - Y_t)\right). 
		\end{align*}
		By using $\overline{U}_t = W_t + Y_t$, we can expand it by
		\begin{align}
			P(\overline{U}_{t+1} - \overline{U}_t, \overline{U}_t - Y_t) 
			&= 
			\frac{1}{t+1} \left(P(U_{t+1}, W_t) - P(Y_t, W_t) - P(Y_t - \overline{U}_t, \overline{U}_t - Y_t)\right).
			\nonumber
			\\
			&=
			\frac{1}{t+1} \left(P(U_{t+1}, W_t) - P(Y_t, W_t) - \delta_t\right)
			\explain{By Claim~\ref{cl:y-minimum}}
			\\
			&=
			\frac{1}{t+1} \left(0 - 0 - \delta_t\right)
			\explain{By Claim~\ref{cl:util-properties}}
			\\
			&=
			-\frac{1}{t+1} \cdot \delta_t.\label{ineq:0208-negative}
		\end{align}
		
		Further, observe that
		\begin{align}
			P(\overline{U}_{t+1} - \overline{U}_t, \overline{U}_{t+1} - \overline{U}_t) &= \frac{1}{(t+1)^2} P(U_{t+1} - \overline{U}_t, U_{t+1} - \overline{U}_t) \notag
			\\
			&\leq \frac{1}{(t+1)^2}, \label{ineq:11090047}
		\end{align}
		where the inequality follows from Property (3) in Lemma~\ref{any-bob-function-inner-product}.
		
		Putting~\eqref{ineq:0208-delta},~\eqref{ineq:0208-negative} and~\eqref{ineq:11090047} together, we obtain the following inequality for $\delta_t > 0$:
		\begin{align*}
			\delta_{t+1} &\leq P(\overline{U}_{t+1} - \overline{U}_t, \overline{U}_{t+1} - \overline{U}_t) + 2P(\overline{U}_{t+1} - \overline{U}_t, \overline{U}_t - Y_t) + \delta_t \\
			&\leq \frac{1}{(t+1)^2} - \frac{2}{t+1}\delta_t + \delta_t \\
			&= \frac{1}{(t+1)^2} + \left(1-\frac{2}{t+1}\right)\delta_t\,.
		\end{align*}
		Now, we show that the same inequality holds for the case such that $\delta_t =0$ too. Since $\overline{U}_t \in \mathcal{S}$, we obtain
		
		\begin{align*}
			\delta_{t+1} 
			&= 
			\inf_{X \in \mathcal{S}} P(\overline{U}_{t+1} - X, \overline{U}_{t+1} - X) \\
			&\leq
			P(\overline{U}_{t+1} - \overline{U}_t, \overline{U}_{t+1} - \overline{U}_t) 
			\\
			&\leq 
			\frac{1}{(t+1)^2} 
			\explain{By \eqref{ineq:11090047}}
			\\
			&= 
			\frac{1}{(t+1)^2} + \left(1-\frac{2}{t+1}\right) \delta_t.
		\end{align*}
		By Lemma \ref{any-bob-delta-limit}, we therefore have that $\delta_t \leq 1/t$ for $t \geq 2$.
	\end{proof}

	\subsection{Appendix: Bob enforcing equitable payoffs}
	\label{app:Bob_enforcing_safety_payoffs}
	
	In this section, we prove Theorem \ref{thm:safety_payoffs_summary}, which shows how Bob can enforce equitable payoffs.

	\begin{customthm}{\ref{thm:safety_payoffs_summary}}[Bob enforcing equitable payoffs; formal]
		\phantom{s}
		\phantom{s}
		\begin{itemize}
			\item \emph{In the sequential setting:}  Bob has a pure strategy $S_B$, such that for every Alice strategy $S_A$, on every trajectory of play, Bob's average payoff is at least $1/2 - o(1)$, while Alice's average payoff is at most  $1/2+o(1)$. More precisely,
			\begin{align}
				\frac{u_B}{T} \geq \frac{1}{2} - \frac{1}{\sqrt{T}} \qquad \mbox{and} \qquad  \frac{u_A}{T} \leq \frac{1}{2} + \left(\frac{\Delta}{2\delta}+2\right) \frac{1}{\sqrt{T}}, \notag 
			\end{align}
			recalling that $\delta$ and $\Delta$ are, respectively, the lower and upper bounds on the players' value densities. 
			\item \emph{In the simultaneous setting:}  Bob has a mixed strategy $S_B$, such that for every Alice strategy $S_A$,  both players have average payoff $1/2$ in expectation. 
		\end{itemize}  
	\end{customthm}

	\begin{proof}[Proof of Theorem \ref{thm:safety_payoffs_summary}]
		Bob's strategy for the simultaneous setting follows from Proposition \ref{prop:simultaneous-bob-equitable-payoffs}. His strategy for the sequential setting follows from Proposition \ref{bob-fix-half-payoff}.
	\end{proof}
	
	\begin{proposition}
		\label{prop:simultaneous-bob-equitable-payoffs}
		In the simultaneous setting, Bob has a mixed strategy $S_B$ such that, for every Alice strategy $S_A$:
		\[
		\frac{\mathbb{E} [u_A]}{T} = \frac{\mathbb{E} [u_B]}{T} = \frac{1}{2}\,.
		\]
	\end{proposition}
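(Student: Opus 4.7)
The plan is to have Bob play the trivial strategy that, in each round $t$, independently flips a fair coin and plays $b_t = L$ or $b_t = R$ each with probability $1/2$, regardless of the history. I will then show that, against any (possibly mixed) Alice strategy $S_A$, each player's expected per-round payoff is exactly $1/2$, from which $\mathbb{E}[u_A]/T = \mathbb{E}[u_B]/T = 1/2$ follows immediately.

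First I will fix an arbitrary $S_A$ and work round-by-round, conditioning on the history $(A_{t-1}, B_{t-1})$ together with Alice's private randomness at the start of round $t$. The defining feature of the simultaneous model (cf.\ the strategy definitions in Section~\ref{sec:prelim}) is that Alice's round-$t$ action $a_t$ is a function of this information alone and does not see $b_t$, while Bob's fresh coin flip is, by construction, independent of the history and of Alice's randomness. Consequently, conditional on $a_t$, Bob still picks each of $L, R$ with probability $1/2$.

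Next I will compute the conditional round-$t$ payoffs. When $b_t = L$, Bob receives $V_B([0, a_t])$ and Alice receives $V_A([a_t, 1])$; when $b_t = R$, the two pieces swap. Since each player's density integrates to $1$ over $[0,1]$,
\[
\mathbb{E}[u_B^t \mid a_t] = \tfrac{1}{2} V_B([0, a_t]) + \tfrac{1}{2} V_B([a_t, 1]) = \tfrac{1}{2}, \qquad \mathbb{E}[u_A^t \mid a_t] = \tfrac{1}{2} V_A([a_t, 1]) + \tfrac{1}{2} V_A([0, a_t]) = \tfrac{1}{2}.
\]
Removing the conditioning by the tower property yields $\mathbb{E}[u_i^t] = 1/2$ for $i \in \{A, B\}$, and summing over $t = 1, \ldots, T$ gives the claimed averages.

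The only subtlety worth spelling out is the conditional independence of $a_t$ and $b_t$ within the same round: in the simultaneous setting Alice cannot couple her cut to Bob's coin flip, which is exactly what the signatures of $S_A^t$ and $S_B^t$ in Section~\ref{sec:prelim} encode. Everything else is routine bookkeeping, so there is no serious obstacle here; this proposition is the ``easy half'' that motivates the more delicate derandomization used in the sequential setting.
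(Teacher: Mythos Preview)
Your proposal is correct and matches the paper's proof essentially line for line: both have Bob flip a fair coin each round, compute $\mathbb{E}[u_i^t \mid a_t] = \tfrac{1}{2}V_i([0,a_t]) + \tfrac{1}{2}V_i([a_t,1]) = \tfrac{1}{2}$, and sum over rounds. Your explicit remark on the conditional independence of $a_t$ and $b_t$ is a nice clarification that the paper leaves implicit, but otherwise the arguments are identical.
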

	\begin{proof}
		Bob's strategy is very simple: in each round, randomly pick $L$ or $R$ with equal probability.
		
		To analyze the expected payoffs, consider an arbitrary player $i \in \{A, B\}$ and arbitrary round $t$. Bob is equally likely to pick $L$ or $R$ in round $t$, so each player is equally likely to receive $[0, a_t]$ or $[a_t, 1]$ in round $t$. Therefore, their expected payoff is:
		\begin{align}
			\mathbb{E} [u_i^t] &= \frac{1}{2}V_i([0, a_t]) + \frac{1}{2}V_i([a_t, 1]) \notag \\
			&= \frac{1}{2}V_i([0, 1]) \explain{Since valuations are additive} \\
			&= \frac{1}{2} \,. \label{eq:bob-random-choice-average-payoff-one-half}
		\end{align}
		
		Summing \eqref{eq:bob-random-choice-average-payoff-one-half} over  the $T$ rounds gives the desired expected payoffs for each player.
	\end{proof}
	
	\begin{proposition}
		\label{bob-fix-half-payoff}
		Bob has a pure strategy $S_B$, such that for every Alice strategy  $S_A$, the cumulative utilities in the sequential game are bounded by:
		\begin{align} 
			u_A(S_A, S_B) \leq T/2 + \left(\frac{\Delta}{2\delta}+2\right) \cdot \sqrt{T} \qquad \mbox{and} \qquad u_B(S_A, S_B) \geq  T/2 -   \sqrt{T} \,. \notag 
		\end{align}
	\end{proposition}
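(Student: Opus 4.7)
The plan is to derandomize the simultaneous-setting strategy from Proposition~\ref{prop:simultaneous-bob-equitable-payoffs} by pairing up rounds locally. Let $k = \lceil \sqrt{T}\, \rceil$ and partition the cake into intervals $I_1, \ldots, I_k$ with $V_B(I_i) = 1/k$ for every $i$ (uniqueness follows from $v_B \ge \delta$). Bob maintains one counter $c_i$ per interval, initially $0$. When Alice cuts at $a_t$, Bob identifies the interval $I_i$ with $a_t \in I_i$, increments $c_i$, and plays $L$ if $c_i$ is odd and $R$ if $c_i$ is even. Note that Bob's action depends only on past play and on $a_t$, so this is a well-defined pure strategy in the sequential setting.

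The key step is a per-pair analysis inside each interval. Group the rounds in which Alice cuts in $I_i$ into consecutive pairs $(t_1, t_2)$ where Bob plays $L$ at $t_1$ and $R$ at $t_2$, leaving at most one leftover round per interval when the number of cuts in $I_i$ is odd. For such a pair, Bob's combined payoff is
\begin{align*}
V_B([0,a_{t_1}]) + V_B([a_{t_2},1]) \;=\; 1 + \bigl(V_B([0,a_{t_1}]) - V_B([0,a_{t_2}])\bigr),
\end{align*}
and the bracketed term lies in $[-V_B(I_i), V_B(I_i)] = [-1/k, 1/k]$ since $a_{t_1}, a_{t_2} \in I_i$. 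The same computation gives Alice combined payoff $1 + (V_A([0,a_{t_2}]) - V_A([0,a_{t_1}]))$ on that pair, hence at most $1 + V_A(I_i)$. Using the density bounds, $|I_i| \le 1/(\delta k)$ and so $V_A(I_i) \le \Delta/(\delta k) \le \Delta/(\delta \sqrt{T})$.

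Summing over pairs and handling the leftovers: the total number of complete pairs is at least $\tfrac{1}{2}(T - k) \ge T/2 - \sqrt{T}/2$, and there are at most $k \le \sqrt{T}+1$ leftover rounds. For Bob, each pair contributes at least $1 - 1/\sqrt{T}$ and each leftover round contributes a nonnegative payoff, which (after multiplying and collecting lower-order terms) yields $u_B \ge T/2 - \sqrt{T}$. For Alice, each of the at most $T/2$ pairs contributes at most $1 + \Delta/(\delta\sqrt{T})$, each of the at most $\sqrt{T}+1$ leftover rounds contributes at most $1$, giving $u_A \le T/2 + \tfrac{\Delta}{2\delta}\sqrt{T} + O(\sqrt{T})$, which can be arranged to match the stated constant $\tfrac{\Delta}{2\delta} + 2$.

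The main obstacle will be the careful accounting of the leftover (unpaired) rounds together with the rounding inherent in $k = \lceil \sqrt{T}\, \rceil$ so that the constants come out exactly as $1$ and $\tfrac{\Delta}{2\delta} + 2$ rather than something slightly larger; a secondary subtlety is that Alice is adaptive and may try to exploit Bob's deterministic alternation by clustering cuts to the extremes of a single $I_i$, which is precisely why partitioning by Bob's valuation (not Alice's) and bounding $V_A(I_i)$ via $\Delta/\delta$ is needed.
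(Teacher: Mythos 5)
Your proposal is correct and takes essentially the same route as the paper: Bob partitions the cake into $\lceil\sqrt{T}\,\rceil$ intervals of equal Bob-value, alternates deterministically within each interval, and the payoffs are bounded by pairing consecutive cuts in the same interval and using $V_A(I_i)\leq(\Delta/\delta)V_B(I_i)$ plus a count of leftover unpaired rounds. The paper's writeup handles the rounding and constants slightly more carefully (e.g.\ bounding the number of complete pairs from below by $(T-P)/2$ for Bob and using $P\leq 2\sqrt{T}$ for Alice), but the decomposition, the per-pair estimate, and the role of the density bounds are identical to yours.
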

	\begin{proof}
		Bob devises his strategy $S_B$ by considering a division of the cake into {$P = \lceil \sqrt{T} \rceil$}
		{consecutive intervals} $I_1, \ldots, I_P$  of equal value to him. That is, Bob chooses points $0 = z_0 \leq z_1 \ldots \leq z_P = 1$ such that  
		\begin{align}
			I_j = 
			\begin{cases} 
				[z_{j-1}, z_{j}) & \text{ if } 1 \leq j \leq P-1; \\
				[z_{P-1}, z_{P}] & \text{ if } j = P;
			\end{cases}
			\qquad
			\mbox{and}
			\qquad V_B(I_j) = 1/P \; \; \forall j \in [P]\,.
		\end{align}
		
		An illustration  of the division into intervals used by Bob can be seen in Figure~\ref{fig:division_by_Bob_into_P_intervals}.
		\begin{figure}[h!] 
			\centering 
			\includegraphics[scale=0.5]{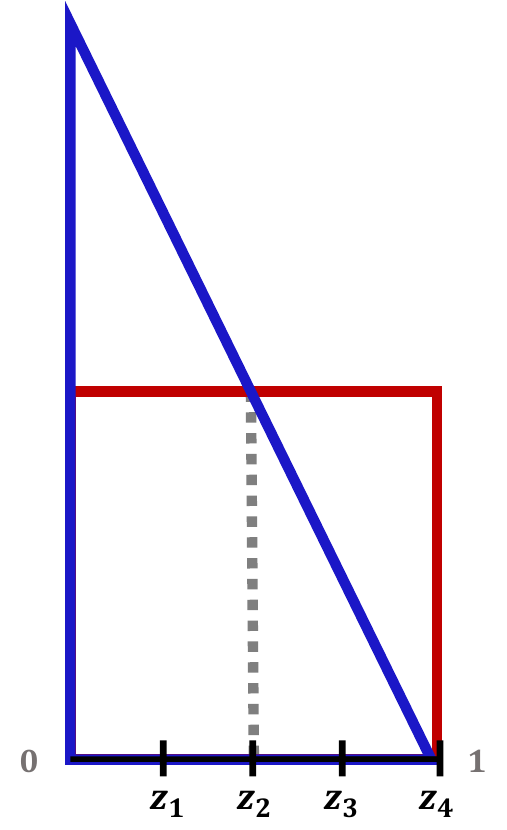}
			\caption{Example of a Bob density and the discretization used by Bob when $T=10$. The number of intervals is $P = \lceil \sqrt{T} \rceil = 4$. The intervals are $I_j = [z_{j-1},z_j)$ for $j \in [3]$ and $I_4 = [z_3, z_4]$, with $V_B(I_j) = 1/4$ $\forall j \in [4]$.}
			\label{fig:division_by_Bob_into_P_intervals}
		\end{figure}
		
		Bob's strategy $S_B$ is defined as follows. Bob keeps a counter $j$   associated with each interval $I_j$, such that the value of the counter at time $t$, denoted   $c_{j,t}$, represents how many times Alice has cut inside the interval $I_j$ in the first $t$ rounds (including round $t$). For each time $t \in [T]$: 
		\begin{itemize}
			\item Let $I_j$ be the interval that contains Alice's cut at time $t$ (that is, $a_t \in I_j$), for $j \in [P]$.
			\item If $c_{j,t}$ is even then Bob plays L;  if $c_{j,t}$ is odd, then Bob plays R. 
		\end{itemize}
		Informally,  Bob alternates between L and R inside each  interval $I_j$. 
		We argue that this Bob strategy ensures his   payoff is at least $1/2 - o(1)$ per round, while Alice cannot get more than $1/2 + o(1)$ per round.

		For each $i \in [P]$ and $j \in [c_{i,T}]$, let $r_{i,j}$ be the first time when the number of cuts in $I_i$ reached $j$. That is,
		\begin{align}
			r_{i,j} = \min \{t \in \mathbb{N} \mid  c_{i,t} = j \; \mbox{and} \; a_t \in I_i \}  \,. \notag 
		\end{align}
		
		By definition of the sequence $\{r_{i,\ell}\}$, for each  $i,j \in \mathbb{N}$ with $2j  \leq c_{i,T}$:
		\begin{itemize}
			\item Alice cut in the interval $I_i$ in both rounds  $r_{i, 2j-1}$ and $r_{i, 2j}$ (meaning  $a_{r_{i, 2j-1}}, a_{r_{i, 2j}} \in I_i$);
			\item  Bob played different actions in the rounds $r_{i, 2j-1}$ and $r_{i, 2j}$.
		\end{itemize}  
		We view rounds $r_{i,2j-1}$ and $r_{i,2j}$ as a pair.   
		For each $i \in [P]$, there is at most one round $r_{i,j}$ that does not have a pair, namely round  $r_{i, c_{i, T}}$: the last round Alice cut in $I_i$. However, this loss only represents at most $P$ rounds in total, which will translate to a sub-linear loss for either Alice's or Bob's utility estimates.
		
		Now we can bound the cumulative utility of each player.

		\begin{description}    
			\item[{Bob's payoff}.]
			For each $i,j \in \mathbb{N}$ with $2j  \leq c_{i,T}$,  Bob's payoff  across the two rounds $r_{i,2j-1}$ and $r_{i,2j}$ is bounded by 
			\begin{align}
				u_B^{r_{i, 2j-1}} + u_B^{r_{i, 2j}} \geq 1 - V_B(I_i) = 1 - \frac{1}{P}\,. \label{eq:Bob_payoff_in_rounds_pair}
			\end{align}

			Since the rounds $r_{i,j}$ with $i \in [P]$ and $2j \leq c_{i,T}$ represent a subset of the total set of rounds $[T]$, Bob's cumulative payoff is at least 
			\begin{align}
				\sum_{t = 1}^{T} u_B^t  & \geq  \sum_{i=1}^{P} \left( \sum_{j \in \mathbb{N}: 2j \leq c_{i,T}} u_B^{r_{i, 2j-1}} + u_B^{r_{i, 2j}} \right) \notag   \\
				& \geq   \sum_{i=1}^{P} \left( \sum_{j \in \mathbb{N}: 2j \leq c_{i,T}} \left( 1 - \frac{1}{P} \right) \right) \explain{By \eqref{eq:Bob_payoff_in_rounds_pair}} \\
				& \geq   \left(1 - \frac{1}{P}\right) \frac{T-P}{2} \explain{{Since the sum is over at least $\frac{T-P}{2}$ pairs of rounds}}
			\end{align}
			
			Since $P = \lceil \sqrt{T} \rceil \geq \sqrt{T}$, we get 
			\begin{align}
				u_B = \sum_{t=1}^T u_B^t & \geq \left(1 - \frac{1}{P}\right) \frac{T-P}{2}  = \frac{T}{2} - \frac{\lceil \sqrt{T} \rceil}{2} - \frac{T}{2\lceil \sqrt{T} \rceil} + \frac{1}{2} \notag \\
				& \geq \frac{T}{2} - \frac{\sqrt{T}+1}{2} - \frac{\sqrt{T}}{2} + \frac{1}{2} \notag \\
				&= \frac{T}{2} - \sqrt{T} \,.
			\end{align}
			
			\item[{Alice's payoff.}]
			For each $i,j \in \mathbb{N}$ with $2j  \leq c_{i,T}$, Alice's payoff  across rounds
			$r_{i,2j-1}$ and $r_{i,2j}$ is 
			\begin{align}
				u_A^{r_{i, 2j-1}} + u_A^{r_{i, 2j}} \leq 1 + V_A(I_i) \leq 1 + \frac{\Delta}{\delta} V_B(I_i) = 1 + \frac{\Delta}{\delta} \cdot \frac{1}{P}\,. \label{eq:Alice_payoff_in_rounds_pair}
			\end{align}
			
			There is at most one round without a pair for each interval $I_i$, namely round $r_{i,c_{i,T}}$: the last time Alice cut in $I_i$. For each such round without a pair, we upper bound Alice's payoff by $1$. Then we can upper bound Alice's cumulative utility  by 
			\begin{align}
				\sum_{t = 1}^{T} u_A^t  
				& \leq  
				P + \sum_{i=1}^{P} \left( \sum_{j \in \mathbb{N}: 2j \leq c_{i,T}} u_A^{r_{i, 2j-1}} + u_A^{r_{i, 2j}} \right) \notag   
				\\
				& \leq   
				P + \sum_{i=1}^{P} \left( \sum_{j \in \mathbb{N}: 2j \leq c_{i,T}} \left( 1 + \frac{\Delta}{\delta}\cdot \frac{1}{P} \right) \right) 
				\explain{By \eqref{eq:Alice_payoff_in_rounds_pair}} 
				\\
				& \leq   
				P + \frac{T}{2} \left(1 + \frac{\Delta}{P\delta}\right) \explain{{Since the sum is over at most ${T}/{2}$ pairs}}
				\\
				&= 
				\frac{T}{2} + \frac{T}{P} \cdot \frac{\Delta}{2\delta} + P \notag \\
				&\leq \frac{T}{2} + \sqrt{T} \left(\frac{\Delta}{2\delta} + 2\right) \explain{$P = \lceil \sqrt{T} \rceil \leq 2\sqrt{T}$} \notag
			\end{align}
		\end{description}    
		This completes the proof.
	\end{proof}
	

	\section{Appendix: Fictitious play}\label{app:fictitious_play}
	In this section, we analyze the  fictitious play dynamics from Definition~\ref{def:fictitious_play_dynamic} and prove Theorem~\ref{thm_dynamic}.  
	We first prove the main theorem using lemmas that will be presented later, and then prove the required lemmas.
	
	\begin{customthm}{\ref{thm_dynamic}}
		When both Alice and Bob run fictitious play, regardless of tie-breaking rules, their average payoff will converge to $1/2$ at a rate of $O(1/\sqrt{T})$. Formally:
		\begin{align}
			& \left| \frac{u_A}{T} - \frac{1}{2} \right| \leq \frac{2\sqrt{10}}{\sqrt{T}} \qquad \mbox{and} \qquad  \left| \frac{u_B}{T} - \frac{1}{2} \right| \leq \frac{\sqrt{10}}{\sqrt{T}} \qquad  \forall T  \geq 5\,.
		\end{align}
	\end{customthm}
	
	\begin{proof}
		The bounds on Bob's payoff follow immediately from Lemma \ref{lem:fictitious-play-bob-payoff-bound}, which states that
		\begin{align} \label{eq:Bob_payoff_fictitious}
			\frac{T}{2} - \sqrt{10T} \leq \sum_{t=1}^T u_B^t \leq \frac{T}{2} + \sqrt{10T}\,.
		\end{align}
		
		By Lemma \ref{lem:fictitious-play-total-payoff-bound}, 
		\begin{align} \label{eq:welfare_fictitious}
			T - \sqrt{10T} \leq \sum_{t=1}^T \left( u_A^t + u_B^t \right) \leq T + \sqrt{10T}.
		\end{align}
		
		Subtracting equation \eqref{eq:Bob_payoff_fictitious} from \eqref{eq:welfare_fictitious} gives
		\begin{align}
			\frac{T}{2} - 2\sqrt{10T} \leq \sum_{t=1}^T u_A^t \leq \frac{T}{2} + 2\sqrt{10T}\,.
		\end{align}
		Since $u_A = \sum_{t=1}^T u_A^t$ and $u_B = \sum_{t=1}^T u_B^t$,       both players' utilities are bounded as required, which completes the proof. 
	\end{proof}

	To prove the required lemmas, we first introduce several notations.
	For ease for exposition, we often use round $t=0$ as a fake round in which nothing actually happens and all the defined quantities are zero.
	
	\begin{definition}
		For $t \in \{0, 1, \dots, T\}$:
		\begin{itemize}
			\item Let $r_t$ be the number of rounds in which Bob has picked $R$, up to and including round $t$. 
			\item Let $\ell_t$ be the number of rounds in which  Bob has picked $L$ up to and including round $t$.
			\item Let $\alpha_t = r_t - \ell_t$. 
			\item Let $\beta_t = \sum_{i=1}^t (2V_B([0, a_i]) - 1)$. 
			\item Let $\rho_t = |\alpha_t| + |\beta_t|$, which we call the  {\bf \em radius}.
		\end{itemize}
	\end{definition}
	We remark that in the fake round of $t = 0$, all these variables have the value zero.
	Alice's action under fictitious play entirely depends on the variable $\alpha_t$ while Bob's action entirely depends on  $\beta_t$.
	Overall, we will bound the  payoffs using the growth of the radius $\rho_t$ over $t = 0,1,\ldots, T$. 
	To this end, we introduce two lemmas that will play an essential role throughout the proof.
	
	First, the following lemma formally argues that the action taken by each player is guided by their corresponding variable $\alpha$ and $\beta_t$, respectively. 
	
	\begin{lemma}
		\label{lem:fictitious-basic-dynamics}
		Let $t \in \{0, 1, \ldots, T-1\}$. Then the following hold for round $t+1$:
		\begin{itemize}
			\item Alice's action with respect to $\alpha_t$:
			\begin{itemize}
				\item If $\alpha_t > 0$, Alice will cut at 1.
				\item If $\alpha_t < 0$, Alice will cut at 0.
				\item If $\alpha_t = 0$, any action would incur the same payoff, so she could cut anywhere.
			\end{itemize}
			\item Bob's action with respect to $\beta_t$:
			\begin{itemize}
				\item If $\beta_t > 0$, Bob will pick $L$.
				\item If $\beta_t < 0$, Bob will pick $R$.
				\item If $\beta_t = 0$, any action would incur the same payoff, so he could pick either $L$ or $R$.
			\end{itemize}
		\end{itemize}
	\end{lemma}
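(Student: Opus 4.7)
The plan is to unpack the definition of fictitious play and reduce each player's best-response computation to a single linear expression in the relevant statistic. I would treat Alice and Bob in parallel, since in each case the opponent's action space is effectively one-dimensional (Bob has two actions; Alice's action only matters through the scalar $V_A([0,a_t])$), so the expected payoff collapses to an affine function of the decision variable.

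First I would handle Alice. Bob's empirical distribution at the start of round $t+1$ places mass $\ell_t/t$ on $L$ and $r_t/t$ on $R$. If Alice cuts at $x$, then conditional on Bob picking $L$ she receives the right piece (worth $1 - V_A([0,x])$), and conditional on $R$ she receives the left piece (worth $V_A([0,x])$). Her expected payoff is therefore
\[
\frac{\ell_t}{t}\bigl(1 - V_A([0,x])\bigr) + \frac{r_t}{t}V_A([0,x]) \;=\; \frac{\ell_t}{t} + V_A([0,x])\cdot\frac{\alpha_t}{t}.
\]
Since $V_A([0,\cdot])$ is continuous and ranges over $[0,1]$, reading off the sign of $\alpha_t$ gives the three cases: positive $\alpha_t$ forces maximization by driving $V_A([0,x])$ to $1$ (cut at $x=1$), negative $\alpha_t$ forces cutting at $x=0$, and $\alpha_t=0$ makes the payoff identically equal to $\ell_t/t$, so every cut is a best response.

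For Bob, I would similarly compute that picking $L$ yields expected payoff $\frac{1}{t}\sum_{i=1}^t V_B([0,a_i])$ and picking $R$ yields $1 - \frac{1}{t}\sum_{i=1}^t V_B([0,a_i])$. Subtracting,
\[
\mathrm{Payoff}(L) - \mathrm{Payoff}(R) \;=\; \frac{1}{t}\sum_{i=1}^t\bigl(2V_B([0,a_i]) - 1\bigr) \;=\; \frac{\beta_t}{t},
\]
so Bob strictly prefers $L$ when $\beta_t>0$, strictly prefers $R$ when $\beta_t<0$, and is genuinely indifferent (equal payoffs) when $\beta_t=0$.

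The main obstacle is minor bookkeeping around edge cases rather than any substantive difficulty. I need to confirm the statement is vacuous when $t=0$, where the empirical distribution is undefined over an empty history: here $\alpha_0 = \beta_0 = 0$ by convention, and Definition \ref{def:fictitious_play_dynamic} already permits an arbitrary first-round action, so the indifferent branch applies without further argument. I should also stress that in the $\alpha_t = 0$ and $\beta_t = 0$ cases, the indifference is a true equality of payoffs across all feasible actions, so the tie-breaking freedom asserted in the lemma is justified independently of any specific rule.
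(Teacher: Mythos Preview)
Your proposal is correct and follows essentially the same approach as the paper: compute each player's expected payoff against the empirical distribution, reduce it to an affine function of the decision variable with coefficient $\alpha_t/t$ (for Alice) or $\beta_t/t$ (for Bob), and read off the best response from the sign. The paper's proof is identical in structure, including the separate handling of the $t=0$ case via the convention $\alpha_0=\beta_0=0$.
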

	
	\begin{proof}
		First, consider Alice's choice. For $t = 0$, there is no history, so any choice has the same value to her; accordingly, $\alpha_0 = 0$. 
		For $t \geq 1$, the expected value she assigns to any particular cut location $x$ is:
		\begin{align}
			\frac{1}{t}\left(r_t V_A([0, x]) + \ell_t V_A([x, 1])\right) &= \frac{1}{t}\left(r_t V_A([0, x]) + \ell_t (1-V_A([0, x])\right) \notag \\
			&= \frac{r_t - \ell_t}{t} V_A([0, x]) + \frac{\ell_t}{t} \notag \\
			\label{eq:fictitious-play-alice-choice-value-based-on-alpha}
			&= \frac{\alpha_t}{t} V_A([0, x]) + \frac{\ell_t}{t}\,.
		\end{align}
		
		From \eqref{eq:fictitious-play-alice-choice-value-based-on-alpha}, we get that Alice's cut decision is entirely based on $\alpha_t$. If $\alpha_t < 0$, she will minimize $V_A([0, x])$, which means she cuts at 0. If $\alpha_t > 0$, she will maximize $V_A([0, x])$, which means she cuts at 1. If $\alpha_t = 0$, then her choice of $x$ doesn't affect her expected value.
		
		Now consider Bob's choice. For $t = 0$, there is no history, so Bob is indifferent between $L$ and $R$; accordingly, $\beta_0 = 0$. For $t \geq 1$, the expected value he assigns to choosing $L$ is:
		\begin{align} \label{eq:e_ell}
			E_L =   \frac{1}{t}\sum_{i=1}^t V_B([0, a_i])\,.
		\end{align}
		The expected value he assigns to choosing $R$ is:
		\begin{align} \label{eq:e_right}
			E_R =   \frac{1}{t} \sum_{i=1}^t V_B([a_i, 1])\,.
		\end{align}
		
		Combining \eqref{eq:e_ell} and \eqref{eq:e_right},  the difference between his expected value for choosing $L$ and $R$ is:
		\begin{align}
			E_L - E_R &=  \left(\frac{1}{t}\sum_{i=1}^t V_B([0, a_i])\right) - \left(\frac{1}{t} \sum_{i=1}^t V_B([a_i, 1])\right) = \frac{1}{t} \sum_{i=1}^t (2V_B([0, a_i]) - 1) =  \frac{\beta_t}{t} \,.
			\label{eq:fictitious-play-bob-choice-value-based-on-beta}
		\end{align}
		
		From \eqref{eq:fictitious-play-bob-choice-value-based-on-beta}, we get that Bob's decision is entirely based on $\beta_t$. If $\beta_t < 0$, he values $R$ more than $L$, so he picks $R$. If $\beta_t > 0$, he values $L$ more than $R$, so he picks $L$. If $\beta_t = 0$, he values each equally.
	\end{proof}
	
	The following lemma further describes the evolution of $\alpha_t$ and $\beta_t$ given the actions taken by the players.

	\begin{lemma}
		\label{lem:fictitious-play-actions-affect-alpha-beta}
		For each round $t \in [T]$: 
		\begin{itemize}
			\item Alice's action affects $\beta_t$ as follows:
			\begin{itemize}
				\item If Alice cuts at $0$ in round $t$, then $\beta_t = \beta_{t-1} - 1$.
				\item If Alice cuts at $1$ in round $t$, then $\beta_t = \beta_{t-1} + 1$.
				\item If she cuts at $x \in (0,1)$, then $|\beta_t - \beta_{t-1}| < 1$.
			\end{itemize}
			\item Bob's action affects $\alpha_t$ as follows:
			\begin{itemize}
				\item If Bob picks $L$ in round $t$, then $\alpha_t = \alpha_{t-1} - 1$.
				\item If Bob picks $R$ in round $t$, then $\alpha_t = \alpha_{t-1} + 1$.
			\end{itemize}
		\end{itemize}
	\end{lemma}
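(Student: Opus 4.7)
The proof plan is to unpack the definitions of $\alpha_t$ and $\beta_t$ and observe that each of them changes by a single, easy-to-identify telescoping term from round $t-1$ to round $t$. Both halves of the lemma then reduce to reading off that term as a function of the action taken in round $t$.

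For the statement about $\beta_t$, I would start from the definition
\begin{align*}
\beta_t \;=\; \sum_{i=1}^{t} \bigl(2 V_B([0, a_i]) - 1\bigr),
\end{align*}
which immediately gives the one-step recursion $\beta_t - \beta_{t-1} = 2 V_B([0, a_t]) - 1$. Substituting $a_t = 0$ yields $V_B([0,0]) = 0$ and hence $\beta_t - \beta_{t-1} = -1$; substituting $a_t = 1$ yields $V_B([0,1]) = 1$ and hence $\beta_t - \beta_{t-1} = +1$. For $a_t \in (0,1)$, the assumption that Bob's density satisfies $v_B(x) \geq \delta > 0$ for all $x$ (stated in Section~\ref{sec:model}) implies $V_B([0, a_t]) \in (0, 1)$ strictly, so $2 V_B([0, a_t]) - 1 \in (-1, 1)$ strictly, which gives $|\beta_t - \beta_{t-1}| < 1$.

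For the statement about $\alpha_t$, I would similarly use the definition $\alpha_t = r_t - \ell_t$ together with the way $r_t$ and $\ell_t$ are updated round by round. If Bob picks $L$ in round $t$, then $\ell_t = \ell_{t-1} + 1$ and $r_t = r_{t-1}$, so $\alpha_t - \alpha_{t-1} = -1$. If Bob picks $R$, then $r_t = r_{t-1} + 1$ and $\ell_t = \ell_{t-1}$, so $\alpha_t - \alpha_{t-1} = +1$. These two cases cover all possibilities since Bob always picks exactly one of $L$ or $R$ in each round.

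There is no real obstacle here; the only subtle point is the strict inequality in the interior case for $\beta_t$, which relies on the lower bound $\delta > 0$ on $v_B$ to rule out $V_B([0, a_t]) \in \{0, 1\}$ when $a_t \in (0,1)$. This is already part of our standing assumption, so the proof is essentially a two-line computation for each bullet.
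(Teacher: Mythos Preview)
Your proposal is correct and follows essentially the same approach as the paper: both compute the one-step telescoping difference $\beta_t-\beta_{t-1}=2V_B([0,a_t])-1$ and $\alpha_t-\alpha_{t-1}=(r_t-r_{t-1})-(\ell_t-\ell_{t-1})$, then read off the cases. The only cosmetic difference is that for $a_t\in(0,1)$ you explicitly invoke the lower bound $\delta>0$ on $v_B$ to get $V_B([0,a_t])\in(0,1)$, whereas the paper phrases it as ``there is at least some cake on each side of $a_t$''; these are the same observation.
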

	
	\begin{proof}
		First, we consider the impact of Alice's cut point $a_t$ on $\beta_t$. Explicitly writing out the difference $\beta_t - \beta_{t-1}$ gives:
		\begin{align}
			\beta_t - \beta_{t-1} &= \sum_{i=1}^t (2V_B([0, a_i]) - 1) - \sum_{i=1}^{t-1} (2V_B([0, a_i]) - 1) = 2V_B([0, a_t]) - 1\,. \label{eq:fictitious-beta-change-equals-two-vb-minus-one}
		\end{align}
		
		If $a_t = 0$, then $V_B([0, a_t]) = V_B([0, 0]) = 0$, so by \eqref{eq:fictitious-beta-change-equals-two-vb-minus-one} we have $\beta_t - \beta_{t-1} = -1$ as desired.
		
		If $a_t = 1$, then $V_B([0, a_t]) = V_B([0, 1]) = 1$, so by \eqref{eq:fictitious-beta-change-equals-two-vb-minus-one} we have $\beta_t - \beta_{t-1} = 1$.
		
		If $a_t \in (0, 1)$, then there is at least some cake on each side of $a_t$, so we have $V_B([0, a_t]) \in (0, 1)$. By \eqref{eq:fictitious-beta-change-equals-two-vb-minus-one}, we have $\beta_t - \beta_{t-1} \in (-1, 1)$, so $|\beta_t - \beta_{t-1}| < 1$ as stated by the lemma.
		
		Second, we consider the impact of Bob's choice on $\alpha_t$. Explicitly writing out the difference $\alpha_t - \alpha_{t-1}$, we obtain the following:
		\begin{align}
			\alpha_t - \alpha_{t-1} &= (r_t - \ell_t) - (r_{t-1} - \ell_{t-1}) = (r_t - r_{t-1}) - (\ell_t - \ell_{t-1}) = \begin{cases} 1 & b_t = R \\ -1 & b_t = L \end{cases}
		\end{align}
		
		This completes the proof.
	\end{proof}
	
	Let us  elaborate more the dynamics of each player based on Lemma~\ref{lem:fictitious-basic-dynamics} and~\ref{lem:fictitious-play-actions-affect-alpha-beta}.
	If either of $\alpha_t$ or $\beta_t$ is exactly 0, the corresponding player will use their tie-breaking rules. Ignoring these cases, these choices lead to movement through $(\alpha_t, \beta_t)$ space that spirals counter-clockwise around the origin at exactly $45$ degree angles. 
	Figure~\ref{fig:fictitious-dynamic}-(a) describes the overall dynamics of the variables $(\alpha_t, \beta_t)$.
	
	\begin{figure}[h!]
		\centering
		\subfigure[Overall dynamics]{
			\includegraphics[scale=0.6]{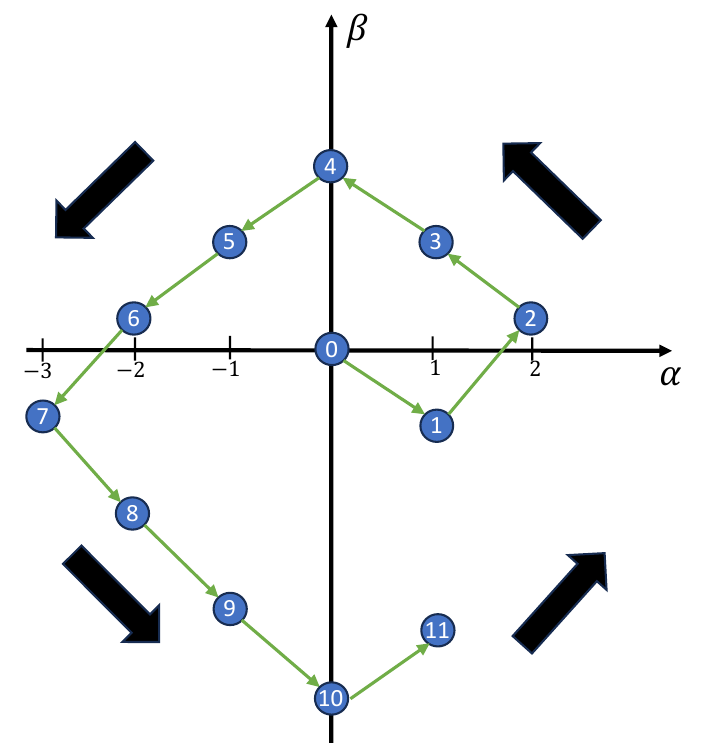}
		}\qquad  
		\subfigure[Reflected dynamics by Lemma~\ref{lem:fictitious-symmetry}]{
			\includegraphics[scale=0.6]{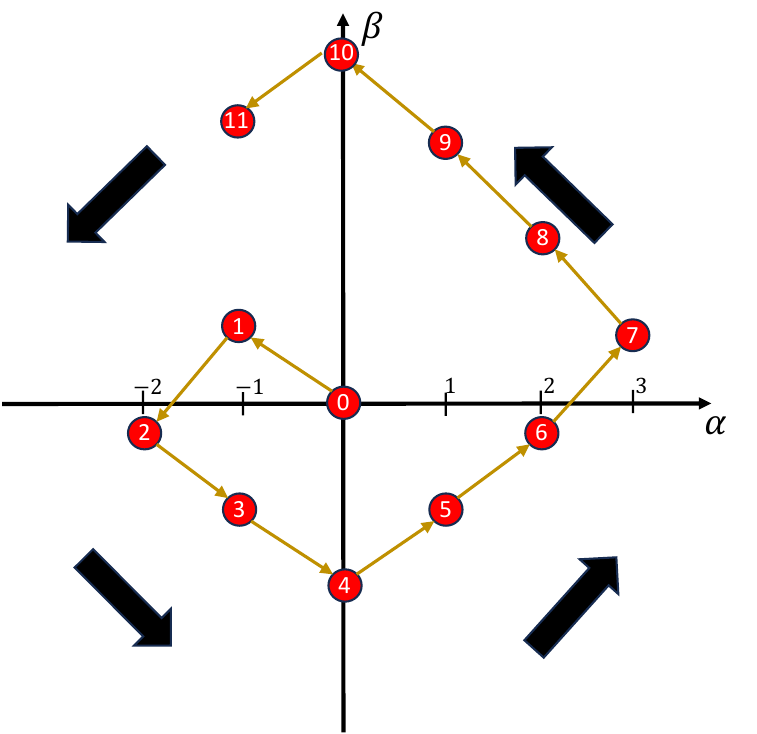}
		}
		\caption{
			Figure (a) represents the overall illustration of the dynamics of the action quantities $\alpha_t$ and $\beta_t$ for $t=0,1,\ldots, T$. The $x$-axis denotes the quantity of $\alpha_t$ and the $y$-axis denotes that of $\beta_t$. Blue circles represent the sequence of points in the plane, where the number inside the circle denotes the index $t$.
			Note that $\alpha_t$ only takes integer values, while $\beta_t$ possibly takes noninteger values at some rounds.
			Figure (b) also depicts an overall dynamics implemented by another pair of tiebreaking rules guaranteed by Lemma~\ref{lem:fictitious-symmetry}. Note that each point is reflected with respect to the origin point. Importantly, Lemma~\ref{lem:fictitious-symmetry} guarantees that $\rho_t$ and $u_B^t$ for $t=0,1,\ldots, T$ remain exactly the same for both dynamics.
		}\label{fig:fictitious-dynamic}
	\end{figure}

	
	
	
	The following lemma shows a symmetry that will help reduce the number of cases in the subsequent analysis. 
	Figure~\ref{fig:fictitious-dynamic}-(b) depicts the rotational symmetry of fictitious play dynamics in the $\alpha$-$\beta$-plane shown by the lemma.
	Specifically, the symmetry will allow us to assume $\alpha_t \geq 0$ without loss of generality when analyzing $\rho_t$ and $u_B^t$.
	\begin{lemma}\label{lem:fictitious-symmetry}
		Consider an arbitrary pair of tie-breaking rules for both players. Consider the resulting sequence for the variables $\alpha_t, \beta_t$ and $u_B^t$  for $t =0,\ldots, T$.
		Then, there exists another choice of tie-breaking rules that would result in the sequence of variables $\tilde{\alpha}_t$, $\tilde{\beta}_t$, and $\tilde{u}_B^t$ for $t =0,\ldots, T$ such that
		\begin{align} \label{eq:alpha_tilde_beta_tilde_u_B_tilde}
			\tilde{\alpha}_t = -\alpha_t; \; \;  
			\tilde{\beta}_t = -\beta_t; \; \;  
			\tilde{u}_B^t = u_B^t \,.
		\end{align}
	\end{lemma}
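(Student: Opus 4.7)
The plan is to construct the alternative tie-breaking rules explicitly and then verify the claimed identities in \eqref{eq:alpha_tilde_beta_tilde_u_B_tilde} by induction on $t$. Concretely, given the original trajectory $(a_t, b_t)_{t \in [T]}$, I will define a new trajectory $(\tilde a_t, \tilde b_t)_{t \in [T]}$ as follows. For each round $t$, let $\tilde a_t$ be the unique point such that $V_B([0, \tilde a_t]) = V_B([a_t, 1])$; this is well-defined since $V_B(\cdot)$ is continuous and strictly increasing in its upper endpoint (because $v_B \geq \delta > 0$). Let $\tilde b_t$ be the opposite of $b_t$, i.e.\ $\tilde b_t = L$ if $b_t = R$ and $\tilde b_t = R$ if $b_t = L$. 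The claim is that this alternative trajectory is consistent with a valid run of fictitious play (under a suitable choice of tie-breaking rules) and that it satisfies \eqref{eq:alpha_tilde_beta_tilde_u_B_tilde}.

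The base case $t=0$ holds trivially since every quantity vanishes. For the inductive step, suppose $\tilde\alpha_t = -\alpha_t$ and $\tilde\beta_t = -\beta_t$. I first check that $\tilde b_{t+1}$ is consistent with Lemma~\ref{lem:fictitious-basic-dynamics}: if $\beta_t > 0$ then the original Bob was forced to play $L$, and $\tilde\beta_t = -\beta_t < 0$ forces the new Bob to play $R = \tilde b_{t+1}$, as required; the case $\beta_t < 0$ is symmetric; the case $\beta_t = 0$ is a tie and we simply declare the new tie-breaking rule to pick the opposite choice. Since $\tilde b_{t+1}$ is opposite to $b_{t+1}$, Lemma~\ref{lem:fictitious-play-actions-affect-alpha-beta} yields $\tilde\alpha_{t+1} - \tilde\alpha_t = -(\alpha_{t+1} - \alpha_t)$, and combining with $\tilde\alpha_t = -\alpha_t$ gives $\tilde\alpha_{t+1} = -\alpha_{t+1}$.

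Next, I verify $\tilde a_{t+1}$ is consistent with Lemma~\ref{lem:fictitious-basic-dynamics} for Alice. If $\alpha_t > 0$, the original Alice cuts at $a_{t+1} = 1$, and by definition $V_B([0, \tilde a_{t+1}]) = V_B([1,1]) = 0$, forcing $\tilde a_{t+1} = 0$; this matches what $\tilde\alpha_t = -\alpha_t < 0$ requires of the new Alice. The case $\alpha_t < 0$ is symmetric, and when $\alpha_t = 0$ both Alices are indifferent so tie-breaking is free. To get $\tilde\beta_{t+1} = -\beta_{t+1}$, note that
\begin{align*}
\tilde\beta_{t+1} - \tilde\beta_t = 2V_B([0,\tilde a_{t+1}]) - 1 = 2V_B([a_{t+1}, 1]) - 1 = 1 - 2V_B([0, a_{t+1}]) = -(\beta_{t+1}-\beta_t),
\end{align*}
and then $\tilde\beta_t = -\beta_t$ gives the desired identity. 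Finally, $\tilde u_B^{t+1} = u_B^{t+1}$ follows directly: if $b_{t+1} = L$ then $u_B^{t+1} = V_B([0, a_{t+1}])$, while $\tilde b_{t+1} = R$ gives $\tilde u_B^{t+1} = V_B([\tilde a_{t+1}, 1]) = 1 - V_B([0, \tilde a_{t+1}]) = 1 - V_B([a_{t+1}, 1]) = V_B([0, a_{t+1}])$; the case $b_{t+1}=R$ is analogous.

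The main obstacle I anticipate is just being careful about the definition of $\tilde a_{t+1}$ in boundary cases (to confirm the forced cuts at $0$ and $1$ are mapped to one another by the $V_B$-reflection, which uses $V_B([0,1]) = 1$) and making clear that whenever Alice or Bob is indifferent under the original dynamics the corresponding player is also indifferent under the new dynamics, so the ``new tie-breaking rule'' is a genuine freedom and not a contradiction of the best-response requirement. Everything else is a bookkeeping induction using Lemmas~\ref{lem:fictitious-basic-dynamics} and \ref{lem:fictitious-play-actions-affect-alpha-beta}.
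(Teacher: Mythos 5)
Your proposal is correct and follows essentially the same approach as the paper's proof: you construct the reflected trajectory by mapping $a_t \mapsto \tilde a_t$ via the $V_B$-reflection and flipping Bob's choice, then verify by induction on $t$ (using Lemmas~\ref{lem:fictitious-basic-dynamics} and~\ref{lem:fictitious-play-actions-affect-alpha-beta}) both that this is a legitimate fictitious-play trajectory under some tie-breaking rule and that it satisfies the three identities. This matches the paper's argument, including the same case analysis on the sign of $\alpha_t$ and $\beta_t$ and the same reflection formula for $\tilde a_{t+1}$ (your definition $V_B([0,\tilde a_t]) = V_B([a_t,1])$ is equivalent to the paper's $V_B([0,a_t]) = V_B([a_t',1])$ since $V_B([0,1]) = 1$).
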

	
	\begin{proof}
		The proof will proceed by induction.
		
		\paragraph{Base case} All of $\alpha_0$, $\tilde{\alpha}_0$, $\beta_0$, $\tilde{\beta}_0$, $u_B^0$, and $\tilde{u}_B^0$ are zero, so  \eqref{eq:alpha_tilde_beta_tilde_u_B_tilde}  trivially holds.
		
		\paragraph{Inductive hypothesis} Assume that \eqref{eq:alpha_tilde_beta_tilde_u_B_tilde} holds  for some $t \geq 0$. 
		
		\paragraph{Inductive step} We show that \eqref{eq:alpha_tilde_beta_tilde_u_B_tilde} holds for $t+1$. Suppose that, under the original tie-breaking rules, Alice cut at $a_{t+1}$ and Bob picked $b_{t+1} \in \{L, R\}$ in round $t+1$.
		First, we analyze for Alice. Let $a_{t+1}' \in [0,1]$ be the unique point for which 
		\begin{align}
			V_B([0, a_{t+1}]) = V_B([a'_{t+1}, 1])\,.
		\end{align}
		The point $a_{t+1}'$ is uniquely defined since Bob's density is strictly positive. 
		
		We show that there exists another choice of  tie-breaking rules under which, starting from $\tilde{\alpha}_t$ and $\tilde{\beta}_t$, Alice cuts at $a'_{t+1}$ in round $t+1$. We split into cases based on $\alpha_t$:
		\begin{description}
			\item[$\; \; \; (\alpha_t > 0)$:] Then $a_{t+1} = 1$ by Lemma \ref{lem:fictitious-basic-dynamics}. Since $V_B([0, a_{t+1}]) = V_B([0, 1])$, we have $a'_{t+1} = 0$ by definition. By the inductive hypothesis, we have $\tilde{\alpha}_t = -\alpha_t$, and so $\tilde{\alpha}_t < 0$. Then, regardless of tie-breaking rules, Alice cuts at $0 = a'_{t+1}$.
			\item[$\; \; \;  (\alpha_t < 0)$:] Then $a_{t+1} = 0$ by Lemma~\ref{lem:fictitious-basic-dynamics}. Since $V_B([0, a_{t+1}] = 0 = V_B([1, 1])$, we have $a'_{t+1} = 1$ by definition. By the inductive hypothesis, $\tilde{\alpha}_t = -\alpha_t$, and so $\tilde{\alpha}_t > 0$. Then, regardless of tie-breaking rules, Alice cuts at $1 = a'_{t+1}$.
			\item[$\; \; \;  (\alpha_t = 0)$:] Then Alice can break ties any way she likes by Lemma~\ref{lem:fictitious-basic-dynamics}. However, $\tilde{\alpha}_t = -\alpha_t = 0$, so  any cut point is a valid choice for her new tie-breaking rule. In particular, she can cut at $a'_{t+1}$.
		\end{description}
		
		Second, we analyze for Bob. Let $b'_{t+1} \in \{L, R\}$ be such that $b'_{t+1} \neq b_{t+1}$. We show that there exist tie-breaking rules under which, starting from $\tilde{\alpha}_t$ and $\tilde{\beta}_t$, Bob chooses $b'_{t+1}$. We split into cases based on $\beta_t$:
		\begin{description}
			\item[$\; \; \;  (\beta_t > 0)$:] Then $b_{t+1} = L$ by Lemma~\ref{lem:fictitious-basic-dynamics}; therefore, $b'_{t+1} = R$. By the inductive hypothesis, $\tilde{\beta}_t = -\beta_t$, so $\tilde{\beta}_t  < 0$. Then, regardless of tie-breaking rules, Bob picks $R = b'_{t+1}$.
			\item[$\; \; \;  (\beta_t < 0)$:] Then $b_{t+1} = R$ by Lemma~\ref{lem:fictitious-basic-dynamics}; therefore, $b'_{t+1} = L$. By the inductive hypothesis, $\tilde{\beta}_t = -\beta_t$, so $\tilde{\beta}_t > 0$. Then, regardless of tie-breaking rules, Bob picks $L = b'_{t+1}$.
			\item[$\; \; \;  (\beta_t = 0)$:] Then Bob can break ties any way he likes by Lemma~\ref{lem:fictitious-basic-dynamics}. However, $\tilde{\beta}_t = -\beta_t = 0$, so either $L$ or $R$ is a valid choice for his new tie-breaking rule. In particular, he can choose $b'_{t+1}$.
		\end{description}
		
		Finally, we show that these opposite choices have exactly the desired effect on $\tilde{\alpha}_{t+1}$, $\tilde{\beta}_{t+1}$, and $\tilde{u}_B^{t+1}$. Covering each in turn:
		\begin{itemize}
			\item Since Bob picks the opposite side under the trajectory associated with $\tilde{\alpha}$ and $\tilde{\beta}$, the change from $\alpha_t$ to $\alpha_{t+1}$ is in the opposite direction as the change from $\tilde{\alpha}_t$ to $\tilde{\alpha}_{t+1}$ by Lemma \ref{lem:fictitious-play-actions-affect-alpha-beta}. By the inductive hypothesis, we have $\tilde{\alpha}_t = -\alpha_t$, and so  $\tilde{\alpha}_{t+1} = -\alpha_{t+1}$.
			\item Since Alice picks the mirror image of her cut point under the trajectory associated with $\tilde{\alpha}$ and $\tilde{\beta}$, the change from $\beta_t$ to $\beta_{t+1}$ is exactly opposite to the change from $\tilde{\beta}_t$ to $\tilde{\beta}_{t+1}$. Specifically,
			\begin{align}
				\beta_{t+1} = \beta_t + \Bigl(  2V_B([0, a_{t+1}]) - 1\Bigr), \label{eq:beta_change_diff_trajectory}
			\end{align}
			while 
			\begin{align}
				\tilde{\beta}_{t+1}  & = \tilde{\beta}_t + \Bigl(     2V_B([0, a'_{t+1}]) - 1 \Bigr) \notag \\
				& = \tilde{\beta}_t + 2(1-V_B([a'_{t+1}, 1])) - 1  \notag \\
				& = \tilde{\beta}_t + 2(1-V_B([0, a_{t+1}])) - 1  \notag \\ 
				& = \tilde{\beta}_t - \Bigl(2V_B([0, a_{t+1}])-1\Bigr)\,. \label{eq:tilde_beta_change_diff_trajectory}
			\end{align}
			
			By the inductive hypothesis, we have $\tilde{\beta}_t = -\beta_t$. Using  equations \eqref{eq:beta_change_diff_trajectory} and \eqref{eq:tilde_beta_change_diff_trajectory}, we obtain $\tilde{\beta}_{t+1} = -\beta_{t+1}$.
			\item Under Alice's new cut point $a'_{t+1}$, Bob's valuation of the left and right sides of the cake swap, \ie $V_B([0,a'_{t+1}]) = V_B([a_{t+1},1])$. But he also chooses the opposite side, so he gets exactly the same payoff as under the original tie-breaking rules in round $t+1$. Therefore, $\tilde{u}_B^{t+1} = u_B^{t+1}$.
		\end{itemize}
		By induction, the claim holds for all $t$, which completes the proof.
	\end{proof}

	\begin{figure}[h!]
		\centering
		\subfigure[Axis-crossing rounds]{
			\includegraphics[scale=0.6]{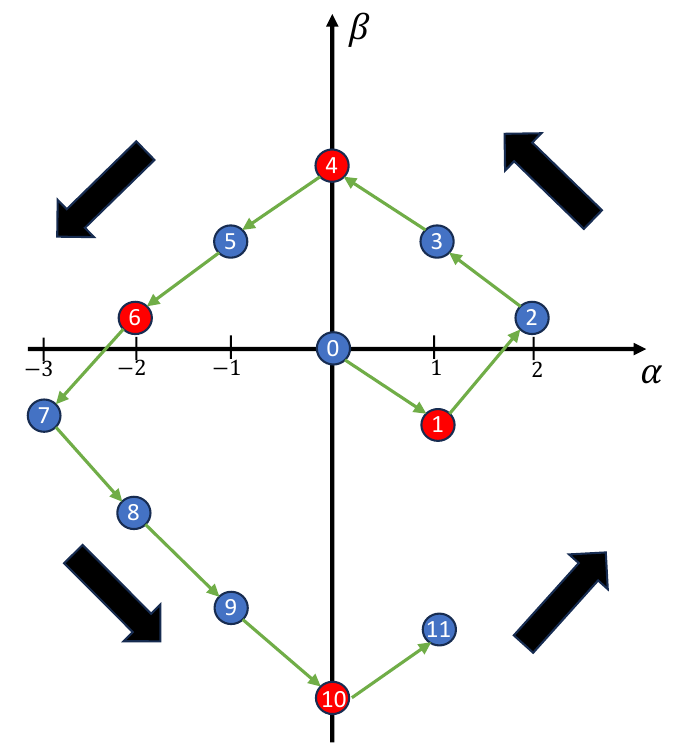}
		}\qquad
		\subfigure[Non-decreasing radius $\rho_t$ over rounds]{
			\includegraphics[scale=0.6]{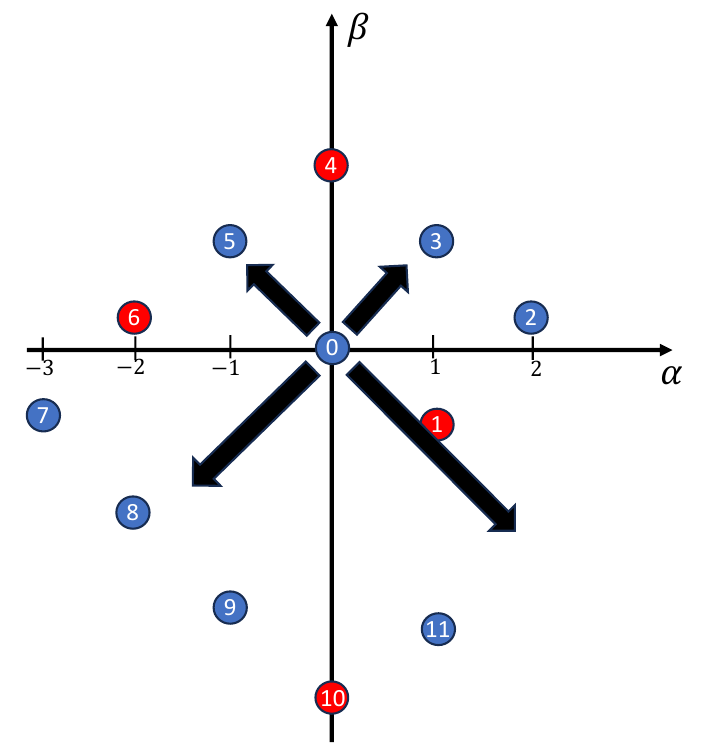}
		}
		\caption{
			Overall dynamics of $\alpha_t$ and $\beta_t$ with non axis-crossing rounds (\textcolor{blue}{blue} circles) and axis-crossing rounds (\textcolor{red}{red} circles).
			Figure (b) shows that the radius $\rho_t = |\alpha_t| + |\beta_t|$ is nondecreasing in $t$ as shown by Lemma~\ref{lem:fictitions-phase-space-radius-nondecreasing}.
			In particular, $\rho_t$ remains the same for non axis-crossing rounds  but possibly increases for axis-crossing rounds.
		}
		\label{fig:fictitious-dynamic-axis-rho}
	\end{figure}
	
	In addition, it is helpful to distinguish between the rounds that cross an axis in the $\alpha$-$\beta$ plane and those that do not.
	The following formalizes the definition of such rounds, which are depicted in Figure~\ref{fig:fictitious-dynamic-axis-rho}-(a).
	
	\begin{definition}
		An \emph{axis-crossing round} is a round $t$ where at least one of the following occurs:
		\begin{itemize}
			\item $\alpha_t = 0$
			\item $\beta_{t+1} > 0$, but $\beta_t \leq 0$
			\item $\beta_{t+1} < 0$, but $\beta_t \geq 0$.
		\end{itemize}
	\end{definition}
	
	Importantly, we will show that $\rho_t$ can strictly increase only if the current round is axis-crossing, while it is non-decreasing over the entire game.
	The following lemma formalizes this observation. We  provide an example in Figure~\ref{fig:fictitious-dynamic-axis-rho}-(b).
	\begin{lemma}
		\label{lem:fictitions-phase-space-radius-nondecreasing}
		Let $t \in \{0, 1,\ldots, T\}$. The radius $\rho_t$ satisfies  the following properties:
		\begin{itemize}
			\item[(a)] $\rho_t = \rho_{t+1}$ if $t$ is   not axis-crossing
			\item[(b)] $\rho_t \leq \rho_{t+1} \leq 2+ \rho_t$ if  $t$ is axis-crossing
			\item[(c)] $\rho_0 = 0$
			\item[(d)] $\rho_t \geq 1$ for $t \geq 1$.
		\end{itemize}
		In particular, the radius $\rho_t$ is non-decreasing in $t$.
	\end{lemma}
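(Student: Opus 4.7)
The plan is to prove the four properties by case analysis on the signs of $\alpha_t$ and $\beta_t$, invoking Lemma \ref{lem:fictitious-basic-dynamics} and Lemma \ref{lem:fictitious-play-actions-affect-alpha-beta} to control the transition from round $t$ to round $t+1$, and exploiting Lemma \ref{lem:fictitious-symmetry} to halve the bookkeeping. Property (c) is immediate from the convention $\alpha_0=\beta_0=0$. The monotonicity statement at the end of the lemma will follow trivially once (a) and (b) are proved, and property (d) then drops out from the single-round observation that Bob has made exactly one choice by round $1$, so $|\alpha_1|=1$ and therefore $\rho_t \geq \rho_1 \geq 1$ for all $t \geq 1$.

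For (a), the non-axis-crossing hypothesis forces $\alpha_t \neq 0$, so Alice's cut in round $t+1$ is pinned to $0$ or $1$ and $\beta_{t+1} = \beta_t \pm 1$ by Lemma \ref{lem:fictitious-play-actions-affect-alpha-beta}. The hypothesis also rules out $\beta_t = 0$ (with $\alpha_t \neq 0$ this would force $\beta_{t+1} = \pm 1$, triggering a crossing) and more generally pins $\beta_t$ and $\beta_{t+1}$ to lie weakly on the same side of zero. I then split into the four sign patterns of $(\alpha_t, \beta_t)$ and plug in the updates; in every case one of $|\alpha|, |\beta|$ increases by $1$ while the other decreases by $1$, yielding $\rho_{t+1} = \rho_t$. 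Lemma \ref{lem:fictitious-symmetry} further collapses this to the two sub-cases with $\alpha_t > 0$.

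For (b), the upper bound $\rho_{t+1} \leq \rho_t + 2$ is immediate since, by Lemma \ref{lem:fictitious-play-actions-affect-alpha-beta}, both $|\alpha|$ and $|\beta|$ change by at most $1$ per round. The main obstacle is the lower bound $\rho_{t+1} \geq \rho_t$ at a crossing, and I handle the two flavors separately. If $\alpha_t = 0$, then $|\alpha_{t+1}| = 1$ regardless of Bob's action, and the triangle inequality $|\beta_{t+1}| \geq |\beta_t| - 1$ gives $\rho_{t+1} \geq 1 + |\beta_t| - 1 = \rho_t$. If instead the crossing is in $\beta$, say $\beta_t \leq 0 < \beta_{t+1}$, then the step-size bound forces $\alpha_t > 0$ (an Alice cut at $0$ could only decrease $\beta$) and $\beta_t \in (-1, 0]$; in that regime Bob picks $R$, and a short calculation gives $\rho_{t+1} - \rho_t = 2(1+\beta_t) \geq 0$, with the tie-break at $\beta_t=0$ yielding $\rho_{t+1}-\rho_t \in \{0, 2\}$. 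Lemma \ref{lem:fictitious-symmetry} dispenses with the mirror crossing $\beta_{t+1} < 0 \leq \beta_t$.

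Combining (a), (b), (c) produces the non-decreasing conclusion in one line, from which (d) follows as described. The only real obstacle is keeping the axis-crossing case analysis in (b) honest about boundary values of $\beta_t$; the symmetry lemma is the key tool for avoiding redundant computations.
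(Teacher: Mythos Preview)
Your proposal is correct and follows essentially the same approach as the paper's proof: both perform a case analysis on the signs of $\alpha_t$ and $\beta_t$, invoke Lemmas \ref{lem:fictitious-basic-dynamics} and \ref{lem:fictitious-play-actions-affect-alpha-beta} for the dynamics, and use Lemma \ref{lem:fictitious-symmetry} to reduce to $\alpha_t \geq 0$; properties (c) and (d) are handled identically. The only difference is organizational---you separate (a) and (b) into distinct case analyses, while the paper treats them in a single unified case split on $(\alpha_t,\beta_t)$---but the computations (e.g.\ your $\rho_{t+1}-\rho_t = 2(1+\beta_t)$ at a $\beta$-crossing matches the paper's $\alpha_t \geq 1$, $-1<\beta_t<0$ case) are the same.
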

	\begin{proof}
		First, we will show (a) and (b). Consider an arbitrary round $t \in \{0, \ldots, T\}$. By Lemma \ref{lem:fictitious-play-actions-affect-alpha-beta}, we have $|\alpha_{t+1} - \alpha_t| \leq 1$ and $|\beta_{t+1} - \beta_t| \leq 1$, so by the triangle inequality
		\begin{align}
			\Bigl|\rho_{t+1} - \rho_t\Bigr| = \Bigl| |\alpha_{t+1}| + |\beta_{t+1}|  - |\alpha_{t}| - |\beta_{t}| \Bigr| \leq 2\,.
		\end{align} Therefore, $\rho_{t+1} \leq 2+ \rho_t$, as required by (b). Thus for (a) and (b) it remains  to show that $\rho_{t+1} \geq \rho_t$ $\forall t \in \{0, \ldots, T\}$ and that  $\rho_{t+1} = \rho_t$ if $t$ is not axis-crossing.
		
		
		By Lemma~\ref{lem:fictitious-symmetry}, it suffices to consider $\alpha_t \geq 0$. 
		More precisely, this is because if $\alpha_t \le 0$, then there exists a tie-breaking rule with associated $\tilde{\alpha}_t$ satisfying $\tilde{\alpha}_t  = -\alpha_t$ for every $t=0,1,\ldots, T$, so $\tilde{\alpha}_t \geq 0$. 
		Crucially, the lemma ensures  the sequences $\tilde{\alpha}_t$ and $\alpha_t$ have the same radius $\rho_t$.
		
		\medskip 
		
		We consider a few cases based on $\alpha_t$ and $\beta_t$, considering only $\alpha_t \geq 0$. 
		Since $\alpha_t$ is an integer, also divide into   $\alpha_t \geq 1$ and   $\alpha_t = 0$:
		
		\begin{description}
			\item[$\;\;$ $(\alpha_t \geq 1$ and $\beta_t > 0)$:] By Lemma \ref{lem:fictitious-basic-dynamics} Alice will cut at 1 and Bob will pick L. Then by Lemma \ref{lem:fictitious-play-actions-affect-alpha-beta}, we have $\alpha_{t+1} = \alpha_t - 1$ and $\beta_{t+1} = \beta_t + 1$. Therefore:
			\begin{align}
				\rho_{t+1} = |\alpha_{t+1}| + |\beta_{t+1}| & = \alpha_t - 1 + \beta_t + 1 \explain{Since $\alpha_{t+1} \geq 0$} \\
				&= |\alpha_t| + |\beta_t| \explain{Because $\alpha_t, \beta_t \geq 0$} \\
				&= \rho_t\,.
			\end{align}
			\item[$\;\;$ $(\alpha_t \geq 1$ and $\beta_t = 0)$:] Then Alice will cut at 1 and Bob will pick whichever piece he likes. Therefore  $\alpha_{t+1} = \alpha_t \pm 1$ and $\beta_{t+1} = \beta_t + 1 = 1$. Since $\beta_t \leq 0$ but $\beta_{t+1} > 0$, round $t$ is axis-crossing. What remains  is to show that $\rho_{t+1} \geq \rho_t$ in this case:
			\begin{align}
				\rho_{t+1} = |\alpha_{t+1}| + |\beta_{t+1}| & \geq |\alpha_t| - 1 + 1 \explain{Because $|x \pm y| \geq |x| - |y|$ for all $x, y$} \\
				&= |\alpha_t| + |\beta_t| \explain{Since $\beta_t = 0$} \\
				&= \rho_t \,.
			\end{align}
			\item[$\;\;$ $(\alpha_t \geq 1$ and $-1 < \beta_t < 0)$:] Then Alice will cut at 1 and Bob will pick R. Therefore, $\alpha_{t+1} = \alpha_t + 1$ and $\beta_{t+1} = \beta_t + 1$. Since $\beta_t < 0$ but $\beta_{t+1} > 0$, round $t$ is axis-crossing. What remains  is to show that 
			$\rho_{t+1} \geq \rho_t$:
			\begin{align}
				\rho_{t+1} = |\alpha_{t+1}| + |\beta_{t+1}| &= \alpha_t + 1 + |\beta_t + 1| \notag \\
				&\geq \alpha_t + 1 + |\beta_t| - 1 \explain{Since $|x+y| \geq |x|-|y|$ for all $x, y$} \\
				&= \rho_t \,.
			\end{align}
			
			\item[$\;\;$ $(\alpha_t \geq 1$ and $\beta_t \leq -1)$:] Then Alice will cut at 1 and Bob will pick R. Therefore, $\alpha_{t+1} = \alpha_t + 1$ and $\beta_{t+1} = \beta_t + 1$. In this case, $\rho_{t+1} = \rho_t$:
			\begin{align}
				\rho_{t+1} = |\alpha_{t+1}| + |\beta_{t+1}| 
				&= \alpha_t + 1 - (\beta_t + 1) \explain{Because $\beta_t + 1 \leq 0$} \\
				&= |\alpha_t| + |\beta_t| \explain{Since $\beta_t \leq 0$} \\
				&= \rho_t \,.
			\end{align}
			
			\item[$\;\;$ $(\alpha_t = 0$ and $\beta_t \geq 0)$:] Then $t$ is an axis-crossing round. Bob could pick either L or R, but either way $|\alpha_{t+1}| = 1$. Alice could cut anywhere, but since $|\beta_{t+1} - \beta_t| \leq 1$ we can still conclude $|\beta_t| - |\beta_{t+1}| \leq 1$. Therefore:
			\begin{align}
				\rho_{t+1} = |\alpha_{t+1}| + |\beta_{t+1}| & \geq 1 + |\beta_t| - 1 \notag \\
				&= |\alpha_t| + |\beta_t| \explain{Since $\alpha_t = 0$} \\
				&= \rho_t \,.
			\end{align}
			\item[$\;\;$ $(\alpha_t = 0$ and $\beta_t < 0)$:] Then we can use the symmetry of Lemma \ref{lem:fictitious-symmetry} to consider $\beta_t > 0$ instead, which has already been covered.
		\end{description}
		
		In all cases, properties (a) and (b) must hold.
		
		\medskip 
		
		Now we show that the properties (c) and (d) hold. 
		Property (c) follows  from $\alpha_0 = \beta_0 = 0$. Because $\rho_{t+1} \geq \rho_t$ for all $t$, property  (d) would follow from showing $\rho_1 \geq 1$, which can be seen true from the following inequality:
		\begin{align}
			\rho_1 = |\alpha_1| + |\beta_1| 
			&= 1 + |\beta_1| \explain{Since $\alpha_0 = 0$,  so $\alpha_1 = \pm 1$} \\
			&\geq 1 \,.
		\end{align}
		This finishes the proof.
	\end{proof}
	
	\begin{lemma}
		\label{lem:fictitious-play-axis-crossing-spacing}
		Suppose $t-1$ is an axis-crossing round and $\tau > t-1$ is the next axis-crossing round after $t-1$. 
		Then, there exists at least $\rho_t -2$  and at most $\rho_t$ rounds between them, \ie
		\begin{align}\label{ineq:02070208}
			\rho_t - 2 \leq \tau - t \leq \rho_t\,.
		\end{align}
	\end{lemma}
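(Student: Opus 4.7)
The plan is to prove both inequalities in~\eqref{ineq:02070208} by analyzing the trajectory of $(\alpha_s, \beta_s)$ between the two consecutive axis-crossings at rounds $t-1$ and $\tau$. By Lemma~\ref{lem:fictitious-symmetry}, I may assume without loss of generality that $\alpha_t \geq 0$. If $\alpha_t = 0$ then round $t$ is itself axis-crossing so $\tau = t$; a short case check (using that $\alpha_{t-1}=0$ would force $|\alpha_t|=1\neq 0$, so the only way to have $\alpha_t = 0$ is via a $\beta$-sign change at round $t-1$, which together with tie-breaking forces $|\beta_t| \leq 1$) shows that $\rho_t \leq 2$ in this case, and both bounds hold trivially.

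In the remaining cases we have $\alpha_t \geq 1$, and between rounds $t$ and $\tau$ neither coordinate strictly changes sign. By Lemmas~\ref{lem:fictitious-basic-dynamics} and~\ref{lem:fictitious-play-actions-affect-alpha-beta}, in this interior Alice always cuts at $0$ or $1$ and Bob always plays a fixed side, so $\alpha$ and $\beta$ each move by exactly $\pm 1$ every round. In \emph{quadrant I} ($\beta_t > 0$), I would check that $\alpha_{s+1}=\alpha_s-1$ and $\beta_{s+1}=\beta_s+1$, so the trajectory hits $\alpha=0$ at round $t+\alpha_t$, giving $\tau - t = \alpha_t = \rho_t - \beta_t$. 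In \emph{quadrant IV} ($\beta_t < 0$), I would check that $\alpha_{s+1}=\alpha_s+1$ and $\beta_{s+1}=\beta_s+1$, so the first round $s$ with $\beta_s \leq 0$ and $\beta_{s+1}>0$ occurs at round $t+\lfloor|\beta_t|\rfloor$ if $|\beta_t|$ is non-integer and at $t+|\beta_t|$ if $|\beta_t|$ is an integer; in either case $\tau - t \geq \lfloor|\beta_t|\rfloor$.

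The key input for the lower bound is a ``small-coordinate'' property at round $t$: because $t-1$ is axis-crossing, either $\alpha_{t-1} = 0$ (which forces $|\alpha_t|=1$ since Bob shifts $\alpha$ by $\pm 1$), or $\beta$ changed sign between $t-1$ and $t$ (which forces $|\beta_{t-1}| < 1$ and hence $|\beta_t| \leq 1$ since Alice's cut changes $\beta$ by at most $1$). In quadrant I, each sub-case yields $\beta_t \leq 1$, so $\tau - t = \rho_t - \beta_t \geq \rho_t - 1$. In quadrant IV the $\beta$-sign-change type of crossing is incompatible with $\alpha_t \geq 1$ (since Alice cutting at $0$ requires $\alpha_{t-1} \leq -1$, giving $\alpha_t \leq 0$), so necessarily $\alpha_{t-1}=0$ and $\alpha_t = 1$; thus $\tau - t \geq \lfloor|\beta_t|\rfloor \geq |\beta_t| - 1 = \rho_t - 2$. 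The upper bound $\tau - t \leq \rho_t$ follows immediately from $\tau - t = \alpha_t \leq \rho_t$ in quadrant I and $\tau - t \leq |\beta_t| < \rho_t$ in quadrant IV.

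The main obstacle is the careful bookkeeping of which coordinate is the ``small'' one at round $t$, together with the handling of tie-breaking when $\alpha_{t-1}=0$ or $\beta_{t-1}=0$ and the non-integer values $\beta_s$ may take, which introduce several near-duplicate sub-cases that all have to line up.
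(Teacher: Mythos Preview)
Your argument is correct and follows essentially the same route as the paper: reduce by the symmetry lemma, track the $(\alpha,\beta)$ trajectory between consecutive axis-crossings using that each coordinate moves by exactly $\pm 1$, and exploit that the crossing at $t-1$ forces one of $|\alpha_t|,|\beta_t|$ to be at most $1$ for the lower bound. The only difference is organizational---you case on the quadrant of $(\alpha_t,\beta_t)$ under the normalization $\alpha_t\ge 0$, whereas the paper cases on the type of crossing at $t-1$ under the normalization $\alpha_{t-1}\ge 0$; just note that your split omits the boundary case $\alpha_t\ge 1,\ \beta_t=0$ (there round $t$ is itself axis-crossing and necessarily $\alpha_t=1$), and the implication ``$|\beta_{t-1}|<1$ hence $|\beta_t|\le 1$'' should be argued directly from the sign change plus $|\beta_t-\beta_{t-1}|\le 1$.
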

	
	\begin{proof}
		By Lemma \ref{lem:fictitious-symmetry}, we can assume $\alpha_{t-1} \geq 0$. Then it suffices to  consider only four types that the axis-crossing round $t-1$ could have: 
		\begin{enumerate}[(i)]
			\item  $\alpha_{t-1} \geq 1$, $\beta_{t-1} \leq 0$, and $\beta_t > 0$; 
			\item $\alpha_{t-1} \geq 1$, $\beta_{t-1} \geq 0$, and $\beta_t < 0$; 
			\item $\alpha_{t-1} = 0$ and $\beta_{t-1} \geq 1$;  
			\item  $\alpha_{t-1} =  0$ and $0 \leq \beta_{t-1} < 1$. 
		\end{enumerate}
		We show separately for each of the types \textbf{(i)}-\textbf{(iv)}.
		\begin{description}
			\item[\bf  (i) $\alpha_{t-1} \geq 1$, $\beta_{t-1} \leq 0$, and $\beta_t > 0$.]
			Because $\alpha_{t-1} > 0$, Alice will cut at 1 in round $t$ by Lemma~\ref{lem:fictitious-basic-dynamics}, so $\beta_t = \beta_{t-1} + 1$. Since $\beta_{t-1}$ and $\beta_t$ have different signs, it must be that $-1 < \beta_{t-1} \leq 0$ and $0 < \beta_t \leq 1$. 
			As long as $\alpha$ remains positive, Alice will keep cutting at 1 and increasing $\beta$ by Lemma~\ref{lem:fictitious-basic-dynamics} and Lemma~\ref{lem:fictitious-play-actions-affect-alpha-beta}, so the next axis-crossing round $\tau$ cannot be one where $\beta$ changes sign.
			Thus, it must be the one satisfying $\alpha_{\tau} = 0$. Until then, Bob will keep picking L, so $\alpha$ will decrease by 1 every round. Therefore, this implies that $\tau = t + \alpha_t$. To show $\tau - t \leq \rho_t$ as required by ~\eqref{ineq:02070208}, we have
			\begin{align*}
				\tau - t = \alpha_t 
				\leq |\alpha_t| + |\beta_t| \le \rho_t.
			\end{align*}
			
			To prove $\rho_t -2 \le \tau - t$, observe that $\alpha_t \geq \alpha_{t-1} - 1$. Since $\alpha_{t-1} \geq 1$, we have $\alpha_t \geq 0$. Thus 
			\begin{align}
				\tau - t &= \alpha_t \notag \\
				&\geq |\alpha_t| + |\beta_t| - 1 \explain{Since $0 < \beta_t \leq 1$ and $\alpha_t \geq 0$} \\
				& = \rho_t - 1 > \rho_t - 2\,.
			\end{align}
			
			\item[\bf (ii) $\alpha_{t-1} \geq 1$, $\beta_{t-1} \geq 0$, and $\beta_t < 0$.]
			Because $\alpha_{t-1} > 0$, Alice will cut at $1$ in round $t$, so $\beta_t = \beta_{t-1} + 1$. But then $\beta_t > \beta_{t-1} \geq 0$, contradicting $\beta_t < 0$. Therefore, this case cannot happen.
			
			\item[\bf (iii) $\alpha_{t-1} = 0$ and $\beta_{t-1} \geq 1$.]
			By Lemma~\ref{lem:fictitious-basic-dynamics}, Alice can cut wherever she likes, but Bob will pick L. Wherever Alice cuts, we will have $\alpha_t = -1$ and $\beta_t \geq 0$. In order to return to $\alpha = 0$, Bob must start picking R, but he cannot do so until $\beta \leq 0$ by Lemma~\ref{lem:fictitious-basic-dynamics}. Therefore, the next axis-crossing round $\tau$ will be the one where $\beta_{\tau+1} < 0$ and $\beta_{\tau} \geq 0$. Until then, Alice will keep cutting at 0, so $\beta$ will decrease by 1 every round. Therefore, $\tau = t + \lfloor \beta_t \rfloor$, and this implies that
			\begin{align*}
				\tau - t &= \lfloor \beta_t \rfloor 
				\leq |\alpha_t| +|\beta_t| = \rho_t \,.
			\end{align*}
			
			Again to show $\rho_t -2 \le \tau - t$, observe that
			\begin{align*}
				\tau - t &= \lfloor \beta_t \rfloor \\
				&\geq |\alpha_t| - 1 + |\beta_t| - 1 \explain{Since $\alpha_t = -1$ and $\beta_t \geq 0$} \\
				&= \rho_t - 2\,.
			\end{align*}
			
			\item[\bf  (iv) $\alpha_{t-1} = 0$ and $0 \leq \beta_{t-1} < 1$.]
			Under these constraints, we have $\rho_{t-1} = |\alpha_{t-1}| + |\beta_{t-1}| < 1$, so  part (d) of Lemma~\ref{lem:fictitions-phase-space-radius-nondecreasing} implies that  $t-1 = 0$. Therefore, we have $\alpha_{t-1} = \beta_{t-1} = 0$. Accounting for all possible choices Alice and Bob can make, it must be the case that $\alpha_t = \pm 1$ and $-1 \leq \beta_t \leq 1$, which implies that $1 \leq \rho_t \leq 2$. 
			Thus it suffices to show that $\tau - t \leq 1$. We have a few cases:
			\begin{itemize}
				\item If $\alpha_t = 1$ and $\beta_t > 0$, then Bob will pick L in round $t+1$ by Lemma~\ref{lem:fictitious-basic-dynamics}. Therefore, $\alpha_{t+1} = 0$, so $\tau = t + 1$.
				\item If $\alpha_t = 1$ and $-1 < \beta_t \leq 0$, then Alice will cut at 1 in round $t+1$ by Lemma~\ref{lem:fictitious-basic-dynamics}. Therefore, $\beta_{t+1} > 0$, so round $t$ is axis-crossing and $\tau - t = 0$.
				\item If $\alpha_t = 1$ and $\beta_t = -1$, then Alice will cut at 1 and Bob will pick R in round $t+1$ by Lemma~\ref{lem:fictitious-basic-dynamics}. That will lead to $\alpha_{t+1} = 2$ and $\beta_{t+1} = 0$, so round $t+1$ is axis-crossing. Therefore, $\tau - t = 1$.
				\item If $\alpha_t = -1$, we can reduce to the $\alpha_t = 1$ case by Lemma~\ref{lem:fictitious-symmetry}.
			\end{itemize}
		\end{description}
		Thus for all types \textbf{(i)}-\textbf{(iv)}, it follows that $\rho_t - 2 \leq \tau - t \leq \rho_t$ as desired.
	\end{proof}
	
	The next two lemmas show different conditions under which $\rho$ must increase. The first (Lemma \ref{lem:fictitious-play-fractional-beta-increases-rho}) is more technical in nature, while the second (Lemma \ref{lem:fictitious-play-alice-middle-cut-bound}) is key to bounding the players' total payoff.
	
	\begin{lemma}
		\label{lem:fictitious-play-fractional-beta-increases-rho}
		Suppose there exists a round $t$ such that   $\beta_t \not \in \mathbb{Z}$. Let $\tau > t$ be the first round after $t$ such that $\beta_t \beta_{\tau} \leq 0$, \ie $\beta_{\tau}$ is zero or has the opposite sign of $\beta_{t}$. Then the following inequality holds:
		\[
		\rho_{\tau} \geq \lfloor \rho_t \rfloor + 1.
		\]
	\end{lemma}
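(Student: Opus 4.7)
I would first invoke Lemma \ref{lem:fictitious-symmetry} to assume without loss of generality that $\beta_t > 0$; the conclusion is invariant under the reflection $(\alpha,\beta)\mapsto(-\alpha,-\beta)$ since $\rho$ is preserved and the hypothesis transforms accordingly. A second application of the symmetry (applied if needed to an appropriately chosen later snapshot, or simply by a parallel case analysis) lets me also assume $\alpha_t\geq 0$. Write $\beta_t = m + f$ with $m\in\mathbb{Z}_{\geq 0}$ and $f\in(0,1)$, which is possible precisely because $\beta_t$ is a positive non-integer, and set $a = \alpha_t$. Then $\lfloor \rho_t\rfloor + 1 = a + m + 1$, so this is the quantity that must be matched by $\rho_\tau$. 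Since $\beta_s>0$ for $t\leq s < \tau$ by definition of $\tau$, Lemma \ref{lem:fictitious-basic-dynamics} forces Bob to pick $L$ in every round from $t+1$ through $\tau$.

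Next I would split the trajectory from round $t$ to $\tau$ into three phases based on the sign of $\alpha_s$, using Lemmas \ref{lem:fictitious-basic-dynamics} and \ref{lem:fictitious-play-actions-affect-alpha-beta}. In phase (i), while $\alpha_s>0$, Alice cuts at $1$ and Bob picks $L$, so $\alpha$ decreases by $1$ and $\beta$ increases by $1$ each round; this phase lasts exactly $a$ rounds and ends at round $t+a$ with $\alpha_{t+a}=0$ and $\beta_{t+a}=m+f+a$. Phase (ii) is the single tiebreak round $t+a+1$: Alice can cut anywhere (since $\alpha_{t+a}=0$), while Bob picks $L$ (since $\beta_{t+a}>0$). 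Hence $\alpha_{t+a+1}=-1$ and $\beta_{t+a+1}=m+f+a+\delta$ for some $\delta\in[-1,1]$ determined by Alice's (a priori unknown) tiebreak rule. In phase (iii), which applies once $\alpha<0$ and $\beta>0$, Alice cuts at $0$ and Bob still picks $L$, so $\alpha$ and $\beta$ each decrease by $1$ per round; this continues until the first round with $\beta\leq 0$, which by definition is $\tau$.

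Setting $\gamma = \beta_{t+a+1} = m+f+a+\delta$, I would exactly count phase (iii). If $\gamma>0$, phase (iii) takes $k^\star=\lceil\gamma\rceil$ more rounds, giving $\tau=t+a+1+k^\star$, $\alpha_\tau=-1-k^\star$, $\beta_\tau=\gamma-k^\star\in(-1,0]$, and therefore
\[
\rho_\tau = (1+k^\star) + (k^\star-\gamma) = 1 + 2k^\star - \gamma.
\]
A short subcase analysis according to the sign of $f+\delta$ then verifies $\rho_\tau\geq a+m+1$: when $\delta\geq -f$ we have $k^\star=a+m+1$, yielding $\rho_\tau = m+a+3-f-\delta\geq a+m+1$ since $\delta\leq 1$ and $f<1$; when $\delta<-f$ we have $k^\star=a+m$, yielding $\rho_\tau=1+m+a-f-\delta\geq a+m+1$. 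The integer-$\gamma$ boundary cases ($\delta=-f$ or $\delta=1-f$) are checked directly. The remaining edge case $\gamma\leq 0$ forces $\tau=t+a+1$, which by $\delta\geq -1$ can happen only when $a=m=0$; there $\lfloor\rho_t\rfloor+1=1$ and $\rho_\tau=1+|\gamma|\geq 1$ holds trivially.

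\textbf{Main obstacle.} The delicate point is the tiebreak round: Alice's unconstrained choice of cut in round $t+a+1$ can move $\beta$ by any amount in $[-1,1]$ and alter the fractional part of $\beta$ arbitrarily. The sub-case split based on $\delta$ versus $-f$ is what converts the non-integrality assumption $f\in(0,1)$ into the promotion of the bound from $\lfloor\rho_t\rfloor$ (which would follow from monotonicity of $\rho$ alone) to $\lfloor\rho_t\rfloor+1$; without $f>0$ the inequalities in the two subcases would degenerate.
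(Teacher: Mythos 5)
Your approach — a forward phase-by-phase trajectory computation from round $t$ to round $\tau$ — is genuinely different from the paper's, which is a purely local argument at $\tau$: the paper splits on whether $\beta_\tau \in \mathbb{Z}$ and, in the fractional case, examines only the single step from $\tau-1$ to $\tau$ (deducing $\alpha_{\tau-1}\leq 0$ and $0<\beta_{\tau-1}<1$, whence $\rho_\tau \geq 1 + \lfloor \rho_{\tau-1}\rfloor$), while in the integer case it just combines monotonicity of $\rho$ with the observation that $\rho_\tau \in \mathbb{Z}$ but $\rho_t \notin \mathbb{Z}$. That local argument sidesteps all the bookkeeping with $a, m, f, \delta$ and uses Lemma~\ref{lem:fictitions-phase-space-radius-nondecreasing} once, rather than tracking the whole spiral arc.

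There are two gaps in your write-up that would need to be filled before it is a complete proof. First, the reduction to $\alpha_t \geq 0$ is not supported by Lemma~\ref{lem:fictitious-symmetry}: that lemma only reflects $(\alpha,\beta)\mapsto(-\alpha,-\beta)$ simultaneously, so once you have normalized $\beta_t>0$ you have exhausted the symmetry — you cannot afterward independently flip the sign of $\alpha_t$, and the ``later snapshot'' idea cannot help because when $\alpha_t<0$ and $\beta_t>0$ both coordinates decrease by $1$ each round and $\alpha$ never reaches $0$ before $\tau$. The missing case $\alpha_t<0$ does have to be worked out; happily it is the simplest one (there are no phases (i) or (ii), and a direct count gives $\tau=t+m+1$, $\rho_\tau = a+m+2-f > a+m+1$). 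Second, your subcase analysis of $k^\star=\lceil\gamma\rceil$ is not exhaustive: with $\gamma = m+a+f+\delta$ you can have $k^\star = a+m$ (when $f+\delta=0$), $a+m+1$ (when $0<f+\delta\leq 1$), or $a+m+2$ (when $1<f+\delta<2$, i.e.\ $\delta>1-f$). Your statement ``when $\delta\geq -f$ we have $k^\star = a+m+1$'' silently omits the range $\delta\in(1-f,1]$; that range does satisfy the conclusion ($\rho_\tau = a+m+5-f-\delta > a+m+1$), but it needs to be listed. With those two repairs your trajectory argument goes through.
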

	\begin{proof}
		Again by Lemma \ref{lem:fictitious-symmetry}, we can assume without loss of generality that $\beta_t \geq 0$. Since $\beta_t \not \in \mathbb{Z}$, we can further assume $\beta_t > 0$.
		
		Suppose that $\beta_{\tau} \not \in \mathbb{Z}$. Because $\tau > t$ is the first round after $t$ with $\beta_{\tau} \leq 0$, 
		we must have $\beta_{\tau-1} > 0$. Since $\beta_{\tau} \not \in \mathbb{Z}$, we also have $\beta_{\tau} < 0$.  
		Combining $|\beta_{\tau} - \beta_{\tau-1}| \leq 1$ and  $\beta_{\tau} < 0$ yields  $0 < \beta_{\tau-1} < 1$. Since $\beta_{\tau-1} > 0$, Bob picked L in round $\tau$ by Lemma~\ref{lem:fictitious-basic-dynamics}, so we have $\alpha_{\tau} = \alpha_{\tau-1} - 1$. As $\beta_{\tau} < \beta_{\tau-1}$, Alice must  have not cut at 1 in round $\tau$ by Lemma~\ref{lem:fictitious-play-actions-affect-alpha-beta}. This implies that $\alpha_{\tau-1} \leq 0$. 
		Finally, we obtain
		\begin{align*}
			\rho_{\tau} = |\alpha_{\tau}| + |\beta_{\tau}|
			&\geq |\alpha_{\tau-1} - 1| + 0 \explain{Since $\alpha_\tau =\alpha_{\tau-1} - 1$}\\
			&= 1 - \alpha_{\tau-1} \explain{Since $\alpha_{\tau-1} \le 0$} \\
			&= 1 + |\alpha_{\tau-1}| + \lfloor |\beta_{\tau-1}| \rfloor \explain{Since  $\alpha_{\tau-1} \le 0$ and $0 < \beta_{\tau-1} < 1$}\\
			&= 1 + \lfloor \rho_{\tau-1} \rfloor \explain{Since $\alpha_{\tau-1} \in \mathbb{Z}$} \\
			&\geq 1 + \lfloor \rho_t \rfloor \,. \explain{By Lemma \ref{lem:fictitions-phase-space-radius-nondecreasing}}
		\end{align*}
		
		On the other hand, suppose $\beta_{\tau} \in \mathbb{Z}$. By Lemma \ref{lem:fictitions-phase-space-radius-nondecreasing}, we have $\rho_{\tau} \geq \rho_t$. Since $\beta_t \not \in \mathbb{Z}$ but $\alpha_t \in \mathbb{Z}$ and $\alpha_{\tau} \in \mathbb{Z}$, we have  $\rho_{\tau} \in \mathbb{Z}$ but $\rho_t \not \in \mathbb{Z}$. Therefore, $\lfloor \rho_t \rfloor < \rho_t \leq \rho_{\tau}$. Since both $\lfloor \rho_t \rfloor \in \mathbb{Z}$ and $\rho_{\tau} \in \mathbb{Z}$, we have $\rho_{\tau} \geq \lfloor \rho_t \rfloor + 1$.
	\end{proof}
	
	\begin{lemma}
		\label{lem:fictitious-play-alice-middle-cut-bound}
		Let $t$ be a round in which Alice cuts at  $a_t \in (0,1)$. 
		Let  $\tau-1$ be the first axis-crossing round strictly after $t-1$. Then 
		$
		\rho_{\tau} \geq \lfloor \rho_{t-1} \rfloor + 1.
		$
	\end{lemma}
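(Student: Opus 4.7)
The plan is to leverage the two preceding lemmas in this section as much as possible. First, I would observe that Alice cutting at $a_t\in(0,1)$ forces $\alpha_{t-1}=0$ via Lemma \ref{lem:fictitious-basic-dynamics} (otherwise she would cut at $0$ or $1$), so $\rho_{t-1}=|\beta_{t-1}|$. By Lemma \ref{lem:fictitious-symmetry} I may assume $\beta_{t-1}\ge 0$, and the degenerate subcase $\beta_{t-1}=0$ forces $t-1=0$ via parts (c)--(d) of Lemma \ref{lem:fictitions-phase-space-radius-nondecreasing}, so the desired bound reduces to $\rho_\tau\ge 1$, which holds for every $\tau\ge 1$ by the same lemma.

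For $\beta_{t-1}>0$, Lemma \ref{lem:fictitious-basic-dynamics} gives that Bob plays $L$ in round $t$, so $\alpha_t=-1$, and Lemma \ref{lem:fictitious-play-actions-affect-alpha-beta} gives $\beta_t=\beta_{t-1}+\delta$ with $\delta=2V_B([0,a_t])-1\in(-1,1)$. If $\beta_{t-1}<1$, then $\lfloor\beta_{t-1}\rfloor=0$ and again the bound reduces to $\rho_\tau\ge 1$. So the substantive case is $\beta_{t-1}\ge 1$, which forces $\beta_t>0$. Here I would trace the dynamics: as long as $\alpha<0$ and $\beta>0$, Alice cuts at $0$ (decreasing $\beta$ by $1$) and Bob picks $L$ (decreasing $\alpha$ by $1$). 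A direct check confirms that no round in $\{t,\ldots,t+\lceil\beta_t\rceil-2\}$ is axis-crossing (neither $\alpha=0$ nor a strict sign change of $\beta$ occurs), so the first axis-crossing round after $t-1$ lies at or after $t+\lceil\beta_t\rceil-1$.

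I would then split on whether $\beta_t\in\mathbb{Z}$.
\begin{itemize}
\item If $\beta_t\notin\mathbb{Z}$, then $\beta_{t+\lceil\beta_t\rceil}<0$ strictly, so round $t+\lceil\beta_t\rceil-1$ is the first axis-crossing strictly after $t-1$, and this $\tau=t+\lceil\beta_t\rceil$ coincides with the round described in Lemma \ref{lem:fictitious-play-fractional-beta-increases-rho} applied at round $t$ (since $\beta_t\notin\mathbb{Z}$ and $\beta_t\beta_\tau\le 0$ for the first time at $\tau$). Invoking that lemma yields $\rho_\tau\ge\lfloor\rho_t\rfloor+1$, and then monotonicity of the floor on $\rho_{t-1}\le\rho_t$ (Lemma \ref{lem:fictitions-phase-space-radius-nondecreasing}) finishes the bound.
\item If $\beta_t\in\mathbb{Z}$, then $\beta$ reaches $0$ exactly at round $t+\beta_t$ with $\alpha_{t+\beta_t}=-(\beta_t+1)$, and in the next round Alice still cuts at $0$ so $\beta_{t+\beta_t+1}=-1$; hence round $t+\beta_t$ is axis-crossing and $\tau=t+\beta_t+1$. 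Regardless of Bob's tie-break when $\beta=0$, one gets $|\alpha_\tau|\in\{\beta_t,\beta_t+2\}$ and $|\beta_\tau|=1$, so $\rho_\tau\ge\beta_t+1$. Since $\beta_{t-1}=\beta_t-\delta<\beta_t+1$, we get $\lfloor\beta_{t-1}\rfloor\le\beta_t$ and conclude.
\end{itemize}

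The main obstacle is the integer-$\beta_t$ case, which falls outside the hypothesis of Lemma \ref{lem:fictitious-play-fractional-beta-increases-rho} (that lemma's proof genuinely relies on a strict sign flip). There I need to manually track the one-step tie-break at $\beta=0$ and observe that \emph{both} possible tie-breaks leave $|\alpha_\tau|$ large enough to carry the bound. Secondarily, I must be careful that $\rho_{t-1}\le\rho_t$ combined with the identity $\lfloor x\rfloor\le\lfloor y\rfloor$ whenever $x\le y$ is used cleanly to convert the floor bound on $\rho_t$ from Lemma \ref{lem:fictitious-play-fractional-beta-increases-rho} into the desired floor bound on $\rho_{t-1}$.
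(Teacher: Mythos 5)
Your proof is correct, and it uses the same core ingredients as the paper's: the deduction $\alpha_{t-1}=0$ from Lemma~\ref{lem:fictitious-basic-dynamics}, the symmetry reduction via Lemma~\ref{lem:fictitious-symmetry}, the degenerate $\beta_{t-1}=0$ case via Lemma~\ref{lem:fictitions-phase-space-radius-nondecreasing}, and an appeal to the fractional-$\beta$ Lemma~\ref{lem:fictitious-play-fractional-beta-increases-rho} together with monotonicity of $\rho$ and of the floor function. The genuine difference is in the case decomposition for the main argument. The paper splits on three cases: if $\beta_{t-1}\notin\mathbb{Z}$ it applies the fractional lemma at $t-1$; else if $\beta_t\notin\mathbb{Z}$ it applies the fractional lemma at $t$; else both are integers, and since $|\beta_t-\beta_{t-1}|<1$ forces $\beta_t=\beta_{t-1}$ while $|\alpha_t|$ jumps from $0$ to $1$, one gets $\rho_t=\rho_{t-1}+1$ and monotonicity finishes. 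The paper also never identifies $\tau$ explicitly; it only bounds the first non-positive-$\beta$ round $s$ by $\tau$ and uses $\rho_\tau\ge\rho_s$. You instead split only on whether $\beta_t\in\mathbb{Z}$: the non-integer case coincides with the paper's middle case, but you replace the paper's other two cases by explicitly tracing the deterministic $\alpha$-$\beta$ dynamics up to the first axis-crossing round, computing $\tau=t+\beta_t+1$, $|\alpha_\tau|\in\{\beta_t,\beta_t+2\}$, and $|\beta_\tau|=1$, hence $\rho_\tau\ge\beta_t+1\ge\lfloor\beta_{t-1}\rfloor+1$. This unifies the paper's last two cases into a single elementary computation, giving a cleaner two-way split at the cost of more dynamics bookkeeping and of checking the Bob tie-break at $\beta=0$ by hand (which you do correctly). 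One minor remark: your subcase $0<\beta_{t-1}<1$ is vacuous, since the paper's preliminary observation that $\rho_{t-1}>0$ forces $t-1\ge 1$ and hence $\rho_{t-1}\ge 1$ already rules it out; handling it anyway is harmless.
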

	
	\begin{proof}
		By Lemma~\ref{lem:fictitious-basic-dynamics}, Alice will cut at 0 or 1 if $\alpha_{t-1} \neq 0$.
		As Alice does not cut at $0$ or $1$ in round $t$, this implies that $\alpha_{t-1} = 0$.
		First, if $\beta_{t-1} = 0$, then  Lemma \ref{lem:fictitions-phase-space-radius-nondecreasing} implies  $t-1=0$, so $\rho_{\tau} \geq 1 = \lfloor \rho_{t-1} \rfloor + 1$ as required.
		
		Otherwise,  by Lemma \ref{lem:fictitious-symmetry}, we can assume without loss of generality that $\beta_{t-1} > 0$. Because $\rho_{t-1} \geq |\beta_{t-1}| > 0$,   Lemma \ref{lem:fictitions-phase-space-radius-nondecreasing} implies  $t-1 \geq 1$. 
		By part (d) of Lemma \ref{lem:fictitions-phase-space-radius-nondecreasing}, we further have 
		\begin{align*}
			1 \leq \rho_{t-1}  = |\alpha_{t-1}| + |\beta_{t-1}| = \beta_{t-1}  \,. 
		\end{align*}
		Since Alice does not cut at 0 in round $t$, we have $\beta_t > \beta_{t-1} - 1$ by Lemma~\ref{lem:fictitious-basic-dynamics}, which implies that $\beta_t > 0$. 
		As $\beta_{t-1} > 0$, Bob picked L in round $t$ by Lemma~\ref{lem:fictitious-basic-dynamics}, so we have $\alpha_t < 0$. Again by Lemma~\ref{lem:fictitious-basic-dynamics}, for $\alpha$ to return to 0, Bob would have to start picking R, but he won't until $\beta$ stops being positive. Therefore, the next axis-crossing round after $t$ will be one where $\beta$ crosses into being non-positive from positive, so $\beta_{\tau} < 0$.
		
		Let $s > t-1$ be the first round after $t-1$ such that $\beta_s \leq 0$. Because $\beta_{\tau} < 0$, we have $s \leq \tau$. Because $\beta_t > 0$, we have $s > t$.
		
		If $\beta_{t-1} \not \in \mathbb{Z}$, then applying Lemma \ref{lem:fictitious-play-fractional-beta-increases-rho} to $t-1$ yields $\rho_s \geq \lfloor \rho_{t-1} \rfloor + 1$. By Lemma \ref{lem:fictitions-phase-space-radius-nondecreasing}, we have $\rho_{\tau} \geq \rho_s$, and so $\rho_{\tau} \geq \rho_s \geq \lfloor \rho_{t-1} \rfloor + 1$ as required.
		
		If $\beta_{t} \not \in \mathbb{Z}$, then applying Lemma \ref{lem:fictitious-play-fractional-beta-increases-rho} to $t$ yields $\rho_s \geq \lfloor \rho_t \rfloor + 1$. By Lemma \ref{lem:fictitions-phase-space-radius-nondecreasing}, we have $\rho_{\tau} \geq \rho_s$ and $\rho_t \geq \rho_{t-1}$, and so $\rho_{\tau} \geq \rho_s \geq \lfloor \rho_t \rfloor + 1 \geq \lfloor \rho_{t-1} \rfloor + 1$ as required.
		
		Else, we have $\beta_{t-1}, \beta_t \in \mathbb{Z}$. Since Alice did not cut at 0 or 1 in round $t$, by Lemma~\ref{lem:fictitious-play-actions-affect-alpha-beta} we have $|\beta_{t} - \beta_{t-1}| < 1$. Since $\beta_t, \beta_{t-1} \in \mathbb{Z}$, we get $\beta_{t-1} = \beta_t$. 
		Moreover, since $\alpha_{t-1} = 0$, we have $|\alpha_t| = |\alpha_{t-1} \pm 1| = 1$.
		This implies that
		\begin{align}
			\rho_{\tau} &\geq \rho_t \explain{By Lemma \ref{lem:fictitions-phase-space-radius-nondecreasing}} \\
			&= \rho_{t-1} + 1 \explain{Since $\beta_t = \beta_{t-1}$ and  $|\alpha_t| = |\alpha_{t-1}| + 1$} \\
			&\geq \lfloor \rho_{t-1} \rfloor + 1\,. \notag
		\end{align}
		This completes the proof.
	\end{proof}
	
	The following lemma bounds Bob's total payoff using the radius $\rho_t$.
	\begin{lemma}
		\label{lem:fictitious-bob-payoff-bounded-by-rho}
		For every round $t \geq 0$, the following inequalities hold:
		\begin{align}
			-\rho_t \leq \sum_{i=1}^t (2u_B^i - 1) \leq \rho_t.
			\label{ineq:02021757}
		\end{align}
	\end{lemma}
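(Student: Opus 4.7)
The plan is to prove the strictly stronger two-sided bound
\begin{align*}
-|\alpha_t| \;\leq\; S_t \;\leq\; |\beta_t|,
\end{align*}
where $S_t := \sum_{i=1}^t (2u_B^i - 1)$, from which the lemma follows at once since $|\alpha_t|, |\beta_t| \leq \rho_t$. Let $\varepsilon_j = +1$ when Bob picks $L$ in round $j$ and $\varepsilon_j = -1$ otherwise. A direct computation of Bob's payoff gives $2u_B^j - 1 = \varepsilon_j(\beta_j - \beta_{j-1})$, and Lemma~\ref{lem:fictitious-play-actions-affect-alpha-beta} says $\alpha_j - \alpha_{j-1} = -\varepsilon_j$. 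Both halves of the strengthened bound will follow by telescoping a single per-round inequality.

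For the upper bound I would show $2u_B^j - 1 \leq |\beta_j| - |\beta_{j-1}|$ in each round $j$. By Lemma~\ref{lem:fictitious-basic-dynamics}, Bob picks $L$ exactly when $\beta_{j-1} > 0$ and $R$ exactly when $\beta_{j-1} < 0$, so away from the tie $\beta_{j-1}=0$ the sign of $\varepsilon_j$ agrees with that of $\beta_{j-1}$. A short case split on whether $\beta_j$ preserves the sign of $\beta_{j-1}$ then verifies the inequality: equality holds in the same-sign case, strict slack of $2|\beta_j|$ appears at a sign change, and the tie case reduces to the trivial $\varepsilon_j \beta_j \leq |\beta_j|$. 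Telescoping from $j=1$ to $t$ with $\beta_0=0$ yields $S_t \leq |\beta_t|$.

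For the lower bound I would show $2u_B^j - 1 \geq |\alpha_{j-1}| - |\alpha_j|$ in each round $j$. Because $\alpha$ is integer-valued and changes by $\pm 1$, the quantity $|\alpha_j| - |\alpha_{j-1}|$ lies in $\{-1, +1\}$; when it equals $+1$ the claim reduces to $2u_B^j - 1 \geq -1$, which is immediate. The only delicate case is $|\alpha_j| = |\alpha_{j-1}| - 1$, in which $\alpha$ moves strictly toward zero. This forces $\alpha_{j-1} \neq 0$ and $\varepsilon_j = \mathrm{sign}(\alpha_{j-1})$, and by Lemma~\ref{lem:fictitious-basic-dynamics} Alice cuts at $1$ when $\alpha_{j-1}>0$ and at $0$ when $\alpha_{j-1}<0$; in either case Bob's chosen piece is the entire cake, so $u_B^j = 1$ and $2u_B^j - 1 = 1 = |\alpha_{j-1}| - |\alpha_j|$. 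Telescoping from $j=1$ to $t$ with $\alpha_0=0$ yields $S_t \geq -|\alpha_t|$.

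The main obstacle will be a careful enumeration of the degenerate rounds where $\alpha_{j-1}=0$ (Alice ties) or $\beta_{j-1}=0$ (Bob ties), but these reduce to the two direct checks mentioned above: an Alice tie always gives $|\alpha_j| - |\alpha_{j-1}| = +1$, trivializing the lower-bound per-round claim, while a Bob tie always gives $\varepsilon_j \beta_j \leq |\beta_j|$ for either choice of $\varepsilon_j$. Combining the two telescoped bounds yields $|S_t| \leq \max(|\alpha_t|, |\beta_t|) \leq \rho_t$, which is the desired inequality.
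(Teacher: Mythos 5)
Your proof is correct and establishes the same strengthened claim as the paper: your $-|\alpha_t| \leq S_t \leq |\beta_t|$ (with $S_t = \sum_{i=1}^t (2u_B^i - 1)$) is precisely the paper's inductive invariant $0 \leq |\alpha_t| + S_t \leq \rho_t$, rearranged. The route differs. The paper maintains the two-sided invariant by a coupled induction, invoking Lemma~\ref{lem:fictitious-symmetry} to reduce to $\alpha_t \geq 0$ and then splitting on the values of $\alpha_t$ and $\beta_t$. You instead break the invariant into two decoupled per-round telescoping inequalities, $2u_B^j - 1 \leq |\beta_j| - |\beta_{j-1}|$ and $2u_B^j - 1 \geq |\alpha_{j-1}| - |\alpha_j|$, handle both signs directly, and sum from $j=1$ to $t$, bypassing the symmetry lemma entirely. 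The underlying facts are the same in both proofs --- each rests on Lemma~\ref{lem:fictitious-basic-dynamics}, Lemma~\ref{lem:fictitious-play-actions-affect-alpha-beta}, and the observation that $|\alpha|$ can strictly drop only in a round where Alice cuts at an endpoint and Bob takes the whole cake (forcing $u_B^j = 1$) --- but your decoupled telescoping form makes the per-round mechanism more transparent and avoids carrying a sandwich invariant through an induction. Both approaches are valid.
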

	
	\begin{proof}
		We will first show the following stronger set of inequalities by induction on $t$:
		\begin{equation}
			\label{eq:fictitious-bob-payoff-bounded-by-rho-induction}
			0 \leq |\alpha_t| + \sum_{i=1}^t (2u_B^i - 1) \leq \rho_t.
		\end{equation}
		As a base case, consider $t=0$. Before anything has happened, all three sides of \eqref{eq:fictitious-bob-payoff-bounded-by-rho-induction} are zero, so the inequalities trivially hold.
		
		Now assume that \eqref{eq:fictitious-bob-payoff-bounded-by-rho-induction} holds for some $t \geq 0$, and we will prove that it still holds for round $t+1$. 
		We consider three cases separately in what follows, depending on the values of $\alpha_t$ and $\beta_t$. 
		Note that again by Lemma \ref{lem:fictitious-symmetry}, it suffices to only consider cases where $\alpha_t \geq 0$.
		
		\begin{itemize}
			\item If $\alpha_t > 0$, Alice will cut at 1 in round $t+1$ by Lemma~\ref{lem:fictitious-basic-dynamics}. If Bob picks $L$, then he will receive a payoff of 1 and $\alpha_t$ will decrease by 1 due to Lemma~\ref{lem:fictitious-play-actions-affect-alpha-beta}. If Bob picks $R$, then he will receive a payoff of 0 and $\alpha_t$ will increase by 1 due to Lemma~\ref{lem:fictitious-play-actions-affect-alpha-beta}. In either case, the changes to $|\alpha_t| + \sum_{i=1}^t (2u_B(i) - 1)$ cancel out, which implies that
			\begin{equation}
				\label{eq:fictitious-alpha-t-positive-inductive-step}
				|\alpha_t| + \sum_{i=1}^t (2u_B^i - 1) = |\alpha_{t+1}| + \sum_{i=1}^{t+1} (2u_B^i - 1).
			\end{equation}
			
			Further, by Lemma~\ref{lem:fictitions-phase-space-radius-nondecreasing} we have $\rho_{t+1} \ge \rho_t$.
			Together with the induction hypothesis \eqref{eq:fictitious-bob-payoff-bounded-by-rho-induction},
			this concludes
			\begin{equation*}
				0 \leq |\alpha_{t+1}| + \sum_{i=1}^{t+1} (2u_B^i - 1) \leq \rho_{t+1},
			\end{equation*}
			and thus the induction holds for the first case.
			
			\item If $\alpha_t = 0$ and $\beta_t \geq 0$, then regardless of whether Bob picks L or R in round $t+1$ we have $|\alpha_{t+1}| = |0 \pm 1| = 1$ as $\alpha$ necessarily changes by $1$.
			Therefore, we have
			\begin{align}
				\left(|\alpha_{t+1}| + \sum_{i=1}^{t+1} (2u_B^i - 1)\right) - \left(|\alpha_t| + \sum_{i=1}^t (2u_B^i - 1)\right) &= 1 + 2u_B^{t+1} - 1 \notag \\
				\label{eq:fictitious-bob-alpha-zero-case-payoff-change}
				&= 2u_B^{t+1}
			\end{align}
			
			Also, the change in $\rho$ can be bounded as follows:
			\begin{align}
				\rho_{t+1} - \rho_t &= |\alpha_{t+1}| + |\beta_{t+1}| - |\alpha_t| - |\beta_t| \notag \\
				&= 1 + \left|\sum_{i=1}^{t+1} (2V_B([0, a_i]) - 1)\right| - 0 - \beta_t \explain{$\beta_t \geq 0$} \\
				&= 1 + \left|(2V_B([0, a_{t+1}]) - 1) + \beta_t \right| - \beta_t \notag \\
				&\geq 1 + 2V_B([0, a_{t+1}]) - 1 \explain{Removing the absolute value} \\
				\label{eq:fictitious-bob-alpha-zero-case-radius-weak-change}
				&= 2V_B([0, a_{t+1}])
			\end{align}
			
			If Bob picked $L$ in round $t+1$, this is exactly the same as \eqref{eq:fictitious-bob-alpha-zero-case-payoff-change}. If Bob picked $R$, then by Lemma~\ref{lem:fictitious-basic-dynamics} and the assumption that $\beta_t \geq 0$ we must have $\beta_t = 0$. The change in the radius can be bounded as follows:
			\begin{align}
				\rho_{t+1} - \rho_t &= |\alpha_{t+1}| + |\beta_{t+1}| - |\alpha_t| - |\beta_t| \notag \\
				&= 1 + |2V_B([0, a_{t+1}]) - 1| - 0 - 0 \explain{$\beta_t = 0$} \\
				&= 1 + |1 - 2V_B([a_{t+1}, 1])| \explain{$V_B([0, a_{t+1}]) + V_B([a_{t+1}, 1]) = 1$} \\
				&= 1 + \left|-\left(1 - 2V_B([a_{t+1}, 1])\right)\right| \notag \\
				&\geq 1 - 1 + 2V_B([a_{t+1}, 1]) \explain{Removing the absolute value} \\
				&= 2u_B^{t+1} \explain{Since Bob picked $R$}
			\end{align}
			
			In either case, the radius increased by at least as much as the middle of \eqref{eq:fictitious-bob-payoff-bounded-by-rho-induction}. 
			More precisely, by the induction hypothesis \eqref{eq:fictitious-bob-payoff-bounded-by-rho-induction}, we obtain
			\begin{align*}
				|\alpha_{t+1}| + \sum_{i=1}^t (2u_B^i-1)
				&= |\alpha_t| + \sum_{i=1}^t (2u_B^i-1) + 2u_B^{t+1}-1 + |\alpha_{t+1}| - |\alpha_t|
				\\
				&\ge 0 + 2u_B^{t+1} \explain{$\alpha_t =0$ and $|\alpha_{t+1}| = 1$}
				\\
				&\ge 0. \explain{$u_B^{t+1} \ge 0$}
			\end{align*}
			Note further that
			\begin{align*}
				|\alpha_{t+1}| + \sum_{i=1}^t (2u_B^i-1)
				&= |\alpha_t| + \sum_{i=1}^t (2u_B^i-1) + 2u_B^{t+1}-1 + |\alpha_{t+1}| - |\alpha_t|
				\\
				&\le \rho_t + 2u_B^{t+1} \explain{By the induction hypothesis}
				\\
				&= \rho_{t+1}. \explain{$\rho_{t+1}-\rho_t = 2u_B^{t+1}$}
			\end{align*}
			This finishes the proof of the induction for the second case.
			
			\item If $\alpha_t = 0$ and $\beta_t < 0$, then by Lemma \ref{lem:fictitious-symmetry} we can consider $\alpha_t = 0$ and $\beta_t > 0$ instead, which has already been shown above.
		\end{itemize}
		
		In all cases, the inductive step holds. Therefore, by induction principle, \eqref{eq:fictitious-bob-payoff-bounded-by-rho-induction} holds for all $t$. Subtracting $|\alpha_t|$ from all three sides of it, we obtain
		\begin{equation*}
			-|\alpha_t| \leq \sum_{i=1}^t (2u_B^i - 1) \leq |\beta_t|.
		\end{equation*}
		Since $\rho_t = |\alpha_t| + |\beta_t|$, this immediately implies the desired bounds.
	\end{proof}

	Now that we have bounds on the relevant events in terms of $\alpha$, $\beta$, and $\rho$, we can bound them as functions of $T$ to obtain our final result.
	
	\begin{lemma}
		\label{lem:fictitious-play-n-squared-rho-increases}
		Let $n \geq 7$. Let $t_1, t_2, \dots, t_n$ be a sequence of rounds such that for all $i \in [n-1]$ the following inequality holds:
		\begin{align}
			\rho_{t_{i+1}} \geq \left\lfloor \rho_{t_i} \right\rfloor + 1.\label{ineq:02070842}
		\end{align}
		Then $t_n - t_1 > \frac{1}{10}n^2$.
	\end{lemma}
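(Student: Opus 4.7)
The plan is to combine the recurrence \eqref{ineq:02070842} with the structural results already established about axis-crossings. Since $\lfloor x\rfloor + 1 > x$ for every real $x$, iterating the hypothesis yields both $\lfloor \rho_{t_i}\rfloor \geq i-1$ (so in particular $\rho_{t_n} \geq n-1$) and the strict inequality $\rho_{t_{i+1}} > \rho_{t_i}$. Because $\rho$ changes only at axis-crossing rounds by Lemma~\ref{lem:fictitions-phase-space-radius-nondecreasing}(a), each of the $n-1$ disjoint intervals $(t_i, t_{i+1}]$ must contain at least one axis-crossing. Hence $(t_1, t_n]$ contains at least $K \geq n-1$ axis-crossings $a_1 < a_2 < \cdots < a_K$. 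Write $R_j := \rho_{a_j+1}$ for the value of $\rho$ on the plateau just after the $j$-th axis-crossing.

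Lemma~\ref{lem:fictitious-play-axis-crossing-spacing} applied to each pair of consecutive axis-crossings gives $a_{j+1} - a_j \geq R_j - 1$, while part (b) of Lemma~\ref{lem:fictitions-phase-space-radius-nondecreasing} yields $R_{j+1} \leq R_j + 2$. Since $\rho$ is constant on $[a_K+1, t_n]$, we have $R_K = \rho_{t_n} \geq n-1$; iterating the bound $R_j \geq R_{j+1} - 2$ backwards produces $R_j \geq \max\{1,\, n-1-2(K-j)\}$. Combined with $a_1 - t_1 \geq 1$,
\begin{align*}
t_n - t_1 \;\geq\; (a_1-t_1) + \sum_{j=1}^{K-1}(a_{j+1}-a_j) \;\geq\; 1 + \sum_{j=1}^{K-1}(R_j - 1) \;\geq\; 1 + \sum_{\ell=1}^{K-1}\max\{0,\,n-2-2\ell\},
\end{align*}
where $\ell = K-j$. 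The inner sum is independent of $K$ once $K \geq n-1$ (any additional terms are zero), and a direct computation split on parity evaluates it to $(n-4)(n-2)/4$ when $n$ is even and $(n-3)^2/4$ when $n$ is odd.

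The remaining step is routine algebra verifying that $1 + (n-4)(n-2)/4 > n^2/10$ for even $n \geq 8$ and $1 + (n-3)^2/4 > n^2/10$ for odd $n \geq 7$. The main obstacle is precisely this constant-tracking in the small cases $n \in \{7, 8\}$, where the quadratic estimate is tight and the extra $+1$ from $a_1 - t_1 \geq 1$ is essential to clear the threshold $n^2/10$; for $n \geq 9$ the quadratic term alone dominates with ample slack. A minor boundary subtlety in the definition of $R_K$ when $a_K = t_n = T$ can be sidestepped by using $R_{K-1}$ instead, which only improves the bound.
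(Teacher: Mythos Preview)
Your argument is essentially sound, but there is a small indexing slip and the bookkeeping differs from the paper's in a way worth noting.

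\textbf{The indexing slip.} Lemma~\ref{lem:fictitions-phase-space-radius-nondecreasing}(a) says $\rho_s=\rho_{s+1}$ when $s$ is not axis-crossing, so a strict increase $\rho_{t_{i+1}}>\rho_{t_i}$ forces an axis-crossing round in $[t_i,t_{i+1})$, not in $(t_i,t_{i+1}]$. Consequently your claim $a_1-t_1\geq 1$ can fail (when $t_1$ itself is axis-crossing). The fix is immediate: with the correct half-open intervals you get $a_K\leq t_n-1$, hence $t_n-a_K\geq 1$, and that supplies the same ``$+1$'' you need in the displayed chain. Everything downstream, including the parity computation and the tight checks at $n\in\{7,8\}$, then goes through.

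\textbf{Comparison with the paper.} The paper also reduces to axis-crossings and Lemma~\ref{lem:fictitious-play-axis-crossing-spacing}, but its bookkeeping is simpler. It selects \emph{one} axis-crossing $c_i\in[t_i,t_{i+1})$ per interval and uses the forward bound $\rho_{c_i+1}\geq \rho_{t_i}\geq i-1$ directly (monotonicity of $\rho$ plus the already-established $\rho_{t_i}\geq i-1$). That yields $c_{i+1}-c_i\geq i-2$, summing to $\tfrac12 n^2-\tfrac72 n+5$, and then $t_n-t_1>c_{n-1}-c_1$. Your route instead enumerates \emph{all} axis-crossings in the window and iterates the growth bound $R_{j+1}\leq R_j+2$ backward from $R_K=\rho_{t_n}\geq n-1$. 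Both are valid; the paper's forward approach avoids the dependence on $K$, the parity split, and the boundary caveat about $a_K=t_n$, and it delivers a sharper intermediate constant (e.g.\ $9$ versus your $7$ at $n=8$). Your approach has the mild conceptual advantage of making explicit that the bound depends only on the terminal radius, not on how the axis-crossings are distributed among the $[t_i,t_{i+1})$.
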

	
	
	\begin{proof}
		First, we will show by induction that, for $i \in [n]$, we have $\rho_{t_i} \geq i-1$.
		For the base case, $\rho_{t_1} \geq \rho_0 = 0$, so the inequality trivially holds.
		
		Now assume $\rho_{t_i} \geq i-1$ for some $i \in [n-1]$ and we will prove that the induction step holds for the case $i+1$. 
		By~\eqref{ineq:02070842}, observe that
		\begin{align*}
			\rho_{t_{i+1}} &\geq \left\lfloor \rho_{t_i} \right\rfloor + 1 \\
			&\geq \lfloor i-1 \rfloor + 1 \\
			&= (i+1)-1.
		\end{align*}
		Thus, by the induction principle we have
		\begin{equation}
			\label{eq:fictitious-play-rho-t-i-greater-than-i}
			\rho_{t_i} \geq i-1 \text{ for $i = 1, \dots, n$}
		\end{equation}
		
		Now consider an arbitrary $i \in [n-1]$. Note that $\rho_{t_{i+1}} \geq \left\lfloor \rho_{t_i} \right\rfloor + 1$ implies $\rho_{t_{i+1}} > \rho_{t_i}$. Therefore, by Lemma \ref{lem:fictitions-phase-space-radius-nondecreasing}, there must be an axis-crossing round $c_i$ among the interval $[t_i, t_{i+1})$.
		This is because if this is not true, we have $\rho_{t_{i+1}} = \rho_{t_i}$ which contradicts $\rho_{t_{i+1}} > \rho_{t_i}$.
		Repeating the same argument for each $i$, we conclude that there exists a sequence of rounds $c_1, c_2, \dots, c_{n-1}$ each of which is axis-crossing, and satisfies the following inequality for every $i \in [n-1]$:
		\begin{equation}
			\label{eq:fictitious-play-ci-between-ti}
			t_i \leq c_i < t_{i+1}
		\end{equation}
		In addition, for any $i \in [n-1]$, we observe that
		\begin{align*}
			c_{i+1} - (c_i+1) &\geq \rho_{c_{i}+1} - 2 \explain{By Lemma \ref{lem:fictitious-play-axis-crossing-spacing}, with $\tau = c_{i+1}$ and $t=c_{i}+1$} \\
			&\geq \rho_{t_i} - 2 \explain{By \eqref{eq:fictitious-play-ci-between-ti}} \\
			&\geq i - 3 \explain{By \eqref{eq:fictitious-play-rho-t-i-greater-than-i}}
		\end{align*}
		Slightly rearranging, we obtain
		\begin{align}
			\label{eq:fictitious-play-c-gap-large}
			c_{i+1} - c_i \geq i-2 \; \; \forall i \in [n-1]\,.
		\end{align}
		
		Combining~\eqref{eq:fictitious-play-ci-between-ti} and~\eqref{eq:fictitious-play-c-gap-large}, we finally obtain
		\begin{align*}
			t_n - t_1 &> c_{n-1} - c_1 \explain{By \eqref{eq:fictitious-play-ci-between-ti}} \\
			&= \sum_{i=1}^{n-2} (c_{i+1} - c_i) \\
			&\geq \sum_{i=1}^{n-2} (i-2) \explain{By \eqref{eq:fictitious-play-c-gap-large}} \\
			&= \frac{1}{2}n^2 - \frac{7}{2}n + 5.
		\end{align*}
		For $n\geq 7$, we have $\frac{1}{2}n^2 - \frac{7}{2}n + 5 \geq  \frac{1}{10}n^2$,\footnote{We omit the elementary calculus.} and so
		$t_n - t_1 > n^2/10$, as required.
	\end{proof}
	
	The following lemma will finally be combined with Lemma~\ref{lem:fictitious-bob-payoff-bounded-by-rho} to obtain the desired bound for Bob's payoff.
	\begin{lemma}
		\label{lem:fictitious-play-rho-T-upper-bound}
		For $T \geq 5$, the final radius $\rho_T$ satisfies the following:
		\[
		\rho_T \leq 2\sqrt{10T}.
		\]
	\end{lemma}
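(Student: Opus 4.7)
The plan is to combine Lemma~\ref{lem:fictitious-play-n-squared-rho-increases} with a greedy construction that simultaneously controls how slowly such a sequence can grow in the time index (a lower bound on $t_n$) and how much $\rho$ can accumulate across the sequence (an upper bound on $\rho_{t_n}$). Set $t_1 = 0$ and recursively let $t_{i+1}$ be the smallest round $> t_i$ satisfying $\rho_{t_{i+1}} \geq \lfloor \rho_{t_i} \rfloor + 1$. Let $n$ be the largest index with $t_n \leq T$. By maximality, every round $t \in (t_n, T]$ has $\rho_t < \lfloor \rho_{t_n} \rfloor + 1$, so $\rho_T < \lfloor \rho_{t_n} \rfloor + 1$ (and in the degenerate case $t_n = T$, the same bound follows from the inductive step below).

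The first key step is to prove by induction on $i$ that $\lfloor \rho_{t_i} \rfloor \leq 2(i-1)$. The base case is $\rho_{t_1} = \rho_0 = 0$. For the inductive step, the minimality of $t_{i+1}$ forces $\rho_{t_{i+1}-1} < \lfloor \rho_{t_i} \rfloor + 1$, and part~(b) of Lemma~\ref{lem:fictitions-phase-space-radius-nondecreasing} bounds the per-round jump by $2$, so $\rho_{t_{i+1}} \leq \rho_{t_{i+1}-1} + 2 < \lfloor \rho_{t_i} \rfloor + 3$. Since the left side, once floored, is an integer strictly less than the integer $\lfloor \rho_{t_i} \rfloor + 3$, we conclude $\lfloor \rho_{t_{i+1}} \rfloor \leq \lfloor \rho_{t_i} \rfloor + 2 \leq 2i$.

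The second key step is to control $n$ using Lemma~\ref{lem:fictitious-play-n-squared-rho-increases}. If $n \geq 7$, the lemma gives $t_n - t_1 > n^2/10$; since $t_1 = 0$ and $t_n \leq T$, this yields $n < \sqrt{10T}$. Combining with the inductive bound, $\rho_T < \lfloor \rho_{t_n} \rfloor + 1 \leq 2(n-1) + 1 = 2n - 1 < 2\sqrt{10T}$. The remaining case $n \leq 6$ is handled directly: $\rho_T < 2n - 1 \leq 11$, and for $T \geq 5$ one has $2\sqrt{10T} \geq 2\sqrt{50} > 11$, so the bound still holds.

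The main obstacle is obtaining an upper bound on $\rho_{t_n}$, because the greedy construction only naturally supplies the lower bound $\rho_{t_i} \geq i-1$. The inductive bound $\lfloor \rho_{t_i} \rfloor \leq 2(i-1)$ is what closes the loop, and it relies crucially on coupling the greedy minimality with the per-round jump bound of $2$ in Lemma~\ref{lem:fictitions-phase-space-radius-nondecreasing}(b); any weaker per-round control would inflate the resulting constant and break the match with the $\sqrt{10T}$ factor produced by Lemma~\ref{lem:fictitious-play-n-squared-rho-increases}.
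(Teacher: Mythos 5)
Your proof is correct and follows essentially the same route as the paper: set up the greedy sequence $t_1 = 0 < t_2 < \dots < t_n$, invoke Lemma~\ref{lem:fictitious-play-n-squared-rho-increases} to get $n < \sqrt{10T}$, and then argue that $\rho_{t_n}$ (and hence $\rho_T$) grows by at most roughly $2$ per step of the sequence. One small point of difference worth noting: the paper telescopes $\rho_{t_n} = \sum_{i=1}^{n-1}(\rho_{t_{i+1}}-\rho_{t_i})$ and asserts each summand is $\le 2$ ``by Lemma~\ref{lem:fictitions-phase-space-radius-nondecreasing},'' but that lemma only controls the per-round jump, and $t_{i+1}-t_i$ may exceed $1$; your version---inducting on $\lfloor\rho_{t_i}\rfloor$ by combining the greedy minimality (so $\rho_{t_{i+1}-1} < \lfloor\rho_{t_i}\rfloor+1$) with the per-round bound, and then flooring---closes that small gap cleanly and reaches the same $\rho_T \le 2n$ conclusion.
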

	
	\begin{proof}
		Let $t_1=0$, and for $i \geq 2$ recursively define  $t_i$ be the first round after $t_{i-1}$ satisfying $\rho_{t_i} \geq \lfloor \rho_{t_{i-1}} \rfloor + 1$.
		Let $n$ be the last index of such $t_i$ given the time horizon $T$.
		
		As an immediate corollary of Lemma \ref{lem:fictitious-play-n-squared-rho-increases}, we have
		\begin{equation*}
			n \leq \max\{7, \sqrt{10T}\}.
		\end{equation*}
		For $T \ge 5$, this implies that
		\begin{equation}
			\label{eq:fictitious-play-n-less-than-sqrt-10T}
			n \leq \sqrt{10T}\,.
		\end{equation}
		
		By Lemma \ref{lem:fictitions-phase-space-radius-nondecreasing}, we also know that $\rho$ can increase by at most $2$ per round. 
		Furthermore, since $t_n$ is the last element in the sequence $\{t_i\}_{i \in [n]}$, we have $\rho_T < \lfloor \rho_{t_n} \rfloor + 1$. Relaxing this slightly, we have
		\begin{equation}
			\label{eq:fictitious-play-rho-T-less-than-rho-t-n-plus-2}
			\rho_T \leq \rho_{t_n}+2\,.
		\end{equation}
		
		Therefore, we obtain the following inequalities:
		\begin{align}
			\rho_T &\leq 2+\rho_{t_n} \explain{By \eqref{eq:fictitious-play-rho-T-less-than-rho-t-n-plus-2}} \\
			&= 2 + \sum_{i=1}^{n-1} (\rho_{i+1} - \rho_i) \notag \\
			&\leq 2 + 2(n-1) \explain{By Lemma \ref{lem:fictitions-phase-space-radius-nondecreasing}}\\
			\label{eq:fictitious-play-rho-T-less-than-2n}
			&= 2n \,.
		\end{align}
		
		Putting \eqref{eq:fictitious-play-n-less-than-sqrt-10T} and \eqref{eq:fictitious-play-rho-T-less-than-2n} together, we obtain
		\begin{align*}
			\rho_T &\leq 2\sqrt{10T},
		\end{align*}
		which finishes the proof of the lemma.
	\end{proof}
	
	Using the above bound on the radius together with Lemma~\ref{lem:fictitious-bob-payoff-bounded-by-rho}, Bob's payoff can now be bounded as follows.
	\begin{lemma}
		\label{lem:fictitious-play-bob-payoff-bound}
		For $T \geq 5$, Bob's total payoff satisfies:
		\[
		\frac{T}{2} - \sqrt{10T} \leq \sum_{t=1}^T u_B^t \leq \frac{T}{2} + \sqrt{10T}\,.
		\]
	\end{lemma}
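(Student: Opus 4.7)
The plan is to combine the two preceding lemmas directly. By Lemma \ref{lem:fictitious-bob-payoff-bounded-by-rho} applied at $t = T$, we have
\[
-\rho_T \;\leq\; \sum_{t=1}^T (2u_B^t - 1) \;\leq\; \rho_T,
\]
which, after adding $T$ to each side and dividing by $2$, rearranges to
\[
\frac{T - \rho_T}{2} \;\leq\; \sum_{t=1}^T u_B^t \;\leq\; \frac{T + \rho_T}{2}.
\]

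Next, since we are assuming $T \geq 5$, Lemma \ref{lem:fictitious-play-rho-T-upper-bound} yields $\rho_T \leq 2\sqrt{10T}$. Substituting this bound into both sides gives
\[
\frac{T}{2} - \sqrt{10T} \;\leq\; \sum_{t=1}^T u_B^t \;\leq\; \frac{T}{2} + \sqrt{10T},
\]
which is exactly the desired conclusion.

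There is no real obstacle here: all the structural work has been done in the lemmas establishing that Bob's cumulative payoff relative to $T/2$ is controlled by the radius $\rho_T$, and that the radius itself grows no faster than $O(\sqrt{T})$ under the fictitious play dynamics. The proof is thus a two-line combination of Lemmas \ref{lem:fictitious-bob-payoff-bounded-by-rho} and \ref{lem:fictitious-play-rho-T-upper-bound}, together with elementary rearrangement.
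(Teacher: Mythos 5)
Your proof is correct and follows the same two-step route as the paper: apply Lemma~\ref{lem:fictitious-bob-payoff-bounded-by-rho} at $t=T$ to bound $\sum_t u_B^t$ by $T/2 \pm \rho_T/2$, then substitute the bound $\rho_T \leq 2\sqrt{10T}$ from Lemma~\ref{lem:fictitious-play-rho-T-upper-bound}.
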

	
	\begin{proof}
		We start from the inequality~\eqref{ineq:02021757}.
		Halving all three sides of~~\eqref{ineq:02021757} for $t=T$ and adding $T/2$, we obtain
		\[
		\frac{T}{2} - \frac{1}{2}\rho_T \leq \sum_{t=1}^T u_B^t \leq \frac{T}{2} + \frac{1}{2}\rho_T.
		\]
		Using the upper bound $\rho_T \leq 2\sqrt{10T}$ from Lemma~\ref{lem:fictitious-play-rho-T-upper-bound} gives the desired bounds.
	\end{proof}

	Combined with Lemma~\ref{lem:fictitious-play-bob-payoff-bound}, Alice's payoff can eventually be bounded using the following lemma, which effectively bounds the summation of Alice and Bob's total payoff.
	\begin{lemma}
		\label{lem:fictitious-play-total-payoff-bound}
		For $T \geq 5$, the summation of total payoff to Alice and Bob satisfies the following:
		\[
		T - \sqrt{10T} \leq \sum_{t=1}^T \big(u_A^t + u_B^t\big) \leq T + \sqrt{10T}\,.
		\]
	\end{lemma}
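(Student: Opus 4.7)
The plan is to decompose $\sum_{t=1}^T (u_A^t + u_B^t)$ based on where Alice cuts in each round. If Alice cuts at a boundary point $a_t \in \{0, 1\}$, then one of the two pieces is empty and the other is the whole cake, so $u_A^t + u_B^t = 1$ regardless of Bob's choice. If instead Alice cuts in the interior $a_t \in (0, 1)$, then a quick calculation gives $u_A^t + u_B^t = 1 \pm \bigl(V_B([0, a_t]) - V_A([0, a_t])\bigr)$, so $|u_A^t + u_B^t - 1| \leq 1$. Letting $K$ denote the number of rounds with interior cuts, these two observations yield
\begin{align*}
\left| \sum_{t=1}^T \bigl(u_A^t + u_B^t\bigr) - T \right| \leq K,
\end{align*}
so it suffices to show $K \leq \sqrt{10T}$ whenever $T \geq 5$.

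By Lemma~\ref{lem:fictitious-basic-dynamics}, an interior cut at round $t$ can happen only when $\alpha_{t-1} = 0$, which makes round $t - 1$ axis-crossing. Enumerate the interior-cut rounds as $t_1 < t_2 < \cdots < t_K$ and, for each $i$, let $\tau_i - 1$ denote the first axis-crossing round strictly after $t_i - 1$. Lemma~\ref{lem:fictitious-play-alice-middle-cut-bound} gives $\rho_{\tau_i} \geq \lfloor \rho_{t_i - 1} \rfloor + 1$. Since $t_{i+1} - 1$ is itself an axis-crossing round strictly after $t_i - 1$, the ``first-after'' choice of $\tau_i$ forces $t_{i+1} \geq \tau_i$, and monotonicity of $\rho$ (Lemma~\ref{lem:fictitions-phase-space-radius-nondecreasing}) then gives $\rho_{t_{i+1}} \geq \rho_{\tau_i} \geq \lfloor \rho_{t_i - 1} \rfloor + 1$. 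Coupling this with the axis-crossing spacing bound of Lemma~\ref{lem:fictitious-play-axis-crossing-spacing}, namely $\tau_i - t_i \geq \rho_{t_i} - 2$, yields the spacing estimate $t_{i+1} - t_i \geq \rho_{t_i} - 2$.

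Iterating these inequalities shows that $\rho_{t_i}$ grows roughly linearly in $i$, so the sum of successive spacings $t_K - t_1$ grows like $\Omega(K^2)$. Concretely, I will construct a chain of $K$ indices in $[0, T]$ satisfying the hypothesis of Lemma~\ref{lem:fictitious-play-n-squared-rho-increases}, so that lemma yields $t_K - t_1 > K^2/10$ whenever $K \geq 7$, and hence $K \leq \sqrt{10T}$. The residual case $K \leq 7$ is handled directly via $K \leq 7 \leq \sqrt{50} \leq \sqrt{10T}$ for $T \geq 5$. Combining with the decomposition above yields the two-sided bound claimed by the lemma.

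The main technical obstacle is aligning the inequality from Lemma~\ref{lem:fictitious-play-alice-middle-cut-bound} with the precise chain hypothesis of Lemma~\ref{lem:fictitious-play-n-squared-rho-increases}: the former compares $\rho_{\tau_i}$ to $\lfloor \rho_{t_i - 1} \rfloor$, while the latter requires $\rho_{s_{i+1}} \geq \lfloor \rho_{s_i} \rfloor + 1$ on a single sequence of indices. This is delicate when $\tau_i - 1 = t_{i+1} - 1$, i.e., when the first axis-crossing strictly after $t_i - 1$ is itself triggered by the next interior cut; in that case, the $\pm 2$ jump permitted at axis-crossing rounds by Lemma~\ref{lem:fictitions-phase-space-radius-nondecreasing}(b) must be absorbed carefully. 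This index alignment is where most of the care in the formal proof will go.
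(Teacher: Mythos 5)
Your decomposition into interior and boundary cuts and the overall chain/radius strategy match the paper's proof, but there is a real gap at the index-alignment step you flag. You derive $\rho_{t_{i+1}} \geq \rho_{\tau_i} \geq \lfloor \rho_{t_i - 1} \rfloor + 1$ and propose to feed the sequence $(t_i)$ into Lemma~\ref{lem:fictitious-play-n-squared-rho-increases}. That lemma, however, requires $\rho_{t_{i+1}} \geq \lfloor \rho_{t_i} \rfloor + 1$ on a single sequence of indices, and your inequality compares against $\lfloor \rho_{t_i - 1} \rfloor$ rather than $\lfloor \rho_{t_i} \rfloor$. Since $t_i - 1$ is an axis-crossing round, Lemma~\ref{lem:fictitions-phase-space-radius-nondecreasing}(b) allows $\rho_{t_i}$ to exceed $\rho_{t_i - 1}$ by up to $2$, so $\lfloor \rho_{t_i} \rfloor$ can be strictly larger than $\lfloor \rho_{t_i - 1} \rfloor$ and the hypothesis of the chain lemma is not verified for $(t_i)$.

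The fix is to run the chain on the shifted indices $s_i := t_i - 1$, which requires $\rho_{t_{i+1}-1} \geq \lfloor \rho_{t_i - 1} \rfloor + 1$, and hence $\tau_i \leq t_{i+1} - 1$ rather than the $\tau_i \leq t_{i+1}$ you argue. This strictly-earlier bound is not automatic: it rests on an observation you have not used, namely that between two rounds with $\alpha = 0$ (here $t_i - 1$ and $t_{i+1} - 1$) Bob must have chosen both $L$ and $R$ at least once, since $\alpha$ leaves zero and returns to it; by Lemma~\ref{lem:fictitious-basic-dynamics} this forces $\beta$ to change sign strictly in between, producing an axis-crossing round strictly before $t_{i+1} - 1$, so $\tau_i - 1 < t_{i+1} - 1$. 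With that, monotonicity gives $\rho_{t_{i+1}-1} \geq \rho_{\tau_i} \geq \lfloor \rho_{t_i - 1} \rfloor + 1$ and Lemma~\ref{lem:fictitious-play-n-squared-rho-increases} applies directly to $(t_i - 1)_{i=1}^K$; in particular your worried case $\tau_i - 1 = t_{i+1} - 1$ never occurs. Also note your parenthetical appeal to Lemma~\ref{lem:fictitious-play-axis-crossing-spacing} for a spacing estimate $t_{i+1} - t_i \geq \rho_{t_i} - 2$ is unnecessary: Lemma~\ref{lem:fictitious-play-n-squared-rho-increases} already internalizes the spacing argument once the chain hypothesis is met.
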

	
	
	\begin{proof}
		Given the time horizon $T$, let $s_1, s_2, \dots, s_k$ be the rounds in which Alice cuts at a point other than $0$ or $1$, where $k$ denotes the number of such rounds.
		These are the only rounds in which the total payoff $u_A^t + u_B^t$ is not necessarily $1$.
		The total payoff in these rounds can be bounded as
		\begin{align*}
			0 \le u_A^{s_i} + u_B^{s_i} \le 2 \qquad  \forall i \in [k]\,.
		\end{align*}
		Thus, we obtain
		\begin{align*}
			T- k \le \sum_{t=1}^T u_A^t + u_B^t \le T+k.
		\end{align*}
		Therefore, it suffices to prove that $k  \le \sqrt{10 T}$.
		
		If $k < 7$, the proof follows as $k < 7 \le \sqrt{10T}$.
		
		Otherwise, we have $k \ge 7$.
		For each $s_i$, let $\tau_i$ be the round after the next axis-crossing round after $s_i-1$.
		Consider $s_i$ for an arbitrary $i \in [k]$.
		Due to Lemma~\ref{lem:fictitious-basic-dynamics}, if $\alpha_{s_i -1 } \neq 0$ then Alice cuts at $0$ or $1$ in round $s_i$, which contradicts our definition of round $s_i$.
		Thus we have $\alpha_{s_i -1} = 0$.
		This argument holds for arbitrary $i \in [k]$, so both $s_i-1$ and $s_{i+1}-1$ are axis-crossing rounds. 
		Moreover, since both $\alpha_{s_i-1}=0$ and $\alpha_{s_{i+1}-1} = 0$, there must have been some rounds between $s_i-1$ and $s_{i+1}-1$ where Bob picked $L$ and some where he picked $R$. Therefore, $\beta$ must have changed sign at least once, so there is another axis-crossing round in between $s_i-1$ and $s_{i+1}-1$ where $\beta$ changed sign.
		This implies that for every $i=1,\ldots, k-1$, the next axis crossing round $\tau_i$ after $s_i-1$ should exist at least before $s_{i+1}-1$, \ie 
		\begin{equation}
			\label{eq:fictitious-play-alice-middle-times-apart}
			\tau_i \leq s_{i+1}-1\,.
		\end{equation}
		Due to the monotonicity of $\rho$ by Lemma~\ref{lem:fictitions-phase-space-radius-nondecreasing}, for any $i \in \{1, \dots, k-1\}$ we have
		\begin{align*}
			\rho_{s_{i+1}-1} &\geq \rho_{\tau_i} \explain{By \eqref{eq:fictitious-play-alice-middle-times-apart}} \\
			&\geq \left\lfloor \rho_{s_i-1} \right\rfloor + 1\,. \explain{By Lemma \ref{lem:fictitious-play-alice-middle-cut-bound}}
		\end{align*}
		By Lemma \ref{lem:fictitious-play-n-squared-rho-increases}, we have:
		\begin{equation}
			\label{eq:fictitious-play-tk-minus-to-unsimplified}
			(s_k-1) - (s_1-1) > \frac{1}{10}k^2 \,.
		\end{equation}
		Re-arranging \eqref{eq:fictitious-play-tk-minus-to-unsimplified} and using the fact that $s_k - s_1 \leq T$, we obtain
		$k < \sqrt{10T}$.
		This finishes the proof of the case $k \ge 7$, which completes the lemma.
	\end{proof}

\end{document}